\def\doi{9(2:04)2013}
\def\shortrightarrowfill@{\arrowfill@\relbar\relbar\shortrightarrow}
\newcommand{\ort}{\mathpalette{\overarrow@\shortrightarrowfill@}}
\def\shortleftarrowfill@{\arrowfill@\relbar\relbar\shortleftarrow}
\newcommand{\olft}{\mathpalette{\overarrow@\shortleftarrowfill@}}
\def\shortleftrightarrowfill@{\arrowfill@\relbar\relbar\leftrightarrow}
\newcommand{\olftrt}{\mathpalette{\overarrow@\shortleftrightarrowfill@}}
\def\CC{\bf}
\newcommand{\myeat}[1]{}
\newcommand{\kw}[1]{{\ensuremath {{\mathsf{#1}}}}\xspace}
\newcommand{\subf}{\kw{sub}}
\newcommand{\olet}{\kw{Let}}
\newcommand{\obe}{\kw{be}}
\newcommand{\oin}{\kw{in}}
\def\CIRC{\LTLcircle}
\def\CIRCM{\LTLcircleminus}
\def\DIAMOND{\LTLdiamond}
\def\DIAMONDM{\LTLdiamondminus}
\def\BOX{\LTLsquare}
\def\TL{\mathrm{TL}[\DIAMOND,\DIAMONDM]}
\def\TLlet{\mathrm{TL[\DIAMOND,\DIAMONDM]_{\kw Let}}}
\def\UTL{\mathrm{UTL}}
\def\UTLlet{\mathrm{UTL_{\kw Let}}}
\def\LTL{\mathrm{LTL}}
\def\LTLlet{\mathrm{LTL_{\kw Let}}}
\def\FOtwo{\mathrm{FO^2}}
\def\FOtwoLT{\mathrm{FO^2[<]}}
\def\FOtwoLTL{\mathrm{FO^2[LTL]}}
\def\FOtwoLET{\mathrm{FO^2_{\kw Let}}}
\def\FOtwoLTlet{\mathrm{FO^2[<]_{\kw Let}}}
\def\FOtwoLTLlet{\mathrm{FO^2[LTL]_{\kw Let}}}
\def\FO2{{\rm FO}^2}
\def\UTL{{\mathrm{UTL}}}
\def\LTL{{\mathrm{LTL}}}
\def\vs{\boldsymbol{s}}
\def\vt{\boldsymbol{t}}
\def\vu{\boldsymbol{u}}
\def\~{\sim}
\def\->{\rightarrow}
\newif\ifpdf
\newcommand{\fotwo}{\mathrm{FO^2}}
\newcommand{\myparagraph}[1]{{\bf #1.}}
\def\cal{\mathcal}
\def\M{\mathcal{M}}
\def\order{\alpha}
\begin{document}

\title[Two Variable vs. Linear Temporal Logic in Model Checking and Games]{Two Variable vs. Linear Temporal Logic in Model Checking and Games\rsuper*}

\author[M.~Benedikt]{Michael Benedikt}	

\address{Department of Computer Science, University of Oxford, United Kingdom}	
\email{\{michael.benedikt, rastislav.lenhardt, jbw\}@cs.ox.ac.uk}

\author[R.~Lenhardt]{Rastislav Lenhardt}	
\address{\vskip-6 pt}	

\author[J.~Worrell]{James Worrell}	
\address{\vskip-6 pt}	

\keywords{Finite Model Theory, Verification, Automata}

\ACMCCS{[{\bf Theory of computation}]: Logic---Verification by model
  checking; Formal languages and automata theory; Models of
  computation---Abstract machines}

\subjclass{F.4.1 [Mathematical Logic and Formal Languages]:
  Computational Logic; F.4.3 [Mathematical Logic and Formal
    Languages]: Classes defined by Grammars or Automata; F.1.1
  [Computation by Abstract Devices]: Automata}

\titlecomment{{\lsuper*}This includes material presented in Concur
  2011 and QEST 2012 extended abstracts}

\begin{abstract}
Model checking linear-time properties expressed in first-order logic
has non-elementary complexity, and thus various restricted logical
languages are employed.  In this paper we consider two such restricted
specification logics, linear temporal logic (LTL) and two-variable
first-order logic ($\FOtwo$).  LTL is more expressive but $\FOtwo$ can
be more succinct, and hence it is not clear which should be easier to
verify.  We take a comprehensive look at the issue, giving a
comparison of verification problems for $\FOtwo$, LTL, and various
sub-logics thereof across a wide range of models.  In particular, we
look at unary temporal logic (UTL), a subset of LTL that is
expressively equivalent to $\FOtwo$; we also consider the stutter-free
fragment of $\FOtwo$, obtained by omitting the successor relation, and
the expressively equivalent fragment of UTL, obtained by omitting the
next and previous connectives.

We give three logic-to-automata translations which can be used to give
upper bounds for $\FOtwo$ and UTL and various sub-logics.  We apply these
to get new bounds for both non-deterministic systems (hierarchical and
recursive state machines, games) and for probabilistic systems (Markov
chains, recursive Markov chains, and Markov decision processes). We
couple these with matching lower-bound arguments.

Next, we look at combining $\FOtwo$ verification techniques with those
for LTL. We present here a language that subsumes both $\FOtwo$ and
LTL, and inherits the model checking properties of both languages.
Our results give both a unified approach to understanding the
behaviour of $\FOtwo$ and LTL, along with a nearly comprehensive
picture of the complexity of verification for these logics and their
sublogics.
\end{abstract}

\maketitle

\section{Introduction}
The complexity of verification problems clearly depends on the
specification language for describing properties.  Arguably the most
important such language is \emph{Linear Temporal Logic} (LTL).  LTL
has a simple syntax, one can verify LTL properties over Kripke
structures in polynomial space, and one can check satisfiability also
in polynomial space.  Moreover, Kamp~\cite{phd-kamp} has shown that LTL
has the same expressiveness as first-order logic over words.  For
example, the first-order property ``after we are born, we live until
we die'':
$$ \forall x\mathrm{~} (born(x) \rightarrow \exists y \ge x\mathrm{~}
die(y) \wedge \forall z\mathrm{~} (x\le z<y \rightarrow live(z) ))$$
is expressed in LTL by the formula $\BOX (born \rightarrow
live \mathrel{\mathcal{U}} die)$.

In contrast with LTL, model checking first-order queries has
non-elementary complexity~\cite{phd_stockmeyer}---thus LTL could be
thought of as a tractable syntactic fragment of FO.  Another approach
to obtaining tractability within first-order logic is by maintaining
first-order syntax, but restricting to two-variable formulas.  The
resulting specification language $\FO2$ has also been shown to have
dramatically lower complexity than full first-order logic.  In
particular, Etessami, Vardi and Wilke~\cite{fo2_utl} showed that
satisfiability for $\FO2$ is NEXPTIME-complete and that $\FO2$ is
strictly less expressive than FO (and thus less expressive than LTL
also).  Indeed, \cite{fo2_utl} shows that $\FO2$ has the same
expressive power as \emph{Unary Temporal Logic} (UTL): the fragment of
LTL with only the unary operators ``previous'', ``next'', ``sometime
in the past'', ``sometime in the future''. Consider the example above.
We have shown that it
can be expressed in  LTL, but it is easy to show that
it cannot be expressed in UTL, and therefore cannot be expressed
in $\FO2$.

Although $\FO2$ is less expressive than LTL, there are some properties
that are significantly easier to express in $\FO2$ than in LTL.
Consider the property that two $n$-bit identifiers agree:
$$\exists x \, \exists y \, (x < y \wedge \bigwedge_{1\le i \le n}
b_i(x) \leftrightarrow b_i(y)) \, .$$ 
It is easy to show that there is an exponential blow-up in
transforming the above $\FO2$ formula into an equivalent LTL
formula. We thus have three languages $\UTL$, $\LTL$ and $\FO2$, with
$\UTL$ and $\FO2$ equally expressive, $\LTL$ more expressive, and with
$\FO2$ incomparable in succinctness with LTL.

Are verification tasks easier to perform in LTL, or in $\FO2$? This is
the main question we address in this paper. There are well-known examples
of problems that are easier in LTL than in $\FO2$: in particular
satisfiability, which is PSPACE-complete for LTL and NEXPTIME-complete
for $\FO2$~\cite{fo2_utl}.  We will show that there are also tasks
where $\FO2$ is more tractable than LTL.

Our main contribution is a uniform approach to the verification of
$\FO2$ via automata.  We show that translations to the appropriate
automata can give optimal bounds for verification of $\FO2$ on both
non-deterministic and probabilistic structures.  We also show that
such translations allow us to understand the verification of the
fragment of $\fotwo$ formed by removing the successor relation from
the signature, denoted $\FOtwoLT$.  It turns out, somewhat
surprisingly, that for this fragment we can get the same complexity
upper bounds for verification as for the simplest temporal
logic---$\TL$.  For our translations from $\FOtwoLT$ to automata, we
make use of a key result from Weis~\cite{WeisPhd}, showing that models
of $\FOtwoLT$ formulas realise only a polynomial number of types.  We
extend this ``few types'' result from finite to infinite words and use
it to characterise the structure of automata for $\FOtwoLT$.

The outcome of our translations is a comprehensive analysis of the
complexity of $\FOtwo$ and UTL verification problems, together with
those for the respective stutter-free fragments $\FOtwoLT$ and
$\TL$. We begin with model checking problems for Kripke structures and
for recursive state machines (RSMs), which we compare to known results
for LTL on these models. We then turn to two-player games, considering
the complexity of the problem of determining which player has a
strategy to ensure that a given formula is satisfied.  We then move
from non-deterministic systems to probabilistic systems.  We start
with Markov chains and recursive Markov chains, the analogs of Kripke
structures and RSMs in the probabilistic case. Finally we consider
one-player stochastic games, looking at the question of whether the
player can devise a strategy that is winning with a given probability.

Towards the end of the paper, we consider extensions of $\FOtwo$, and
in particular how $\FOtwo$ verification techniques can be combined
with those for Linear Temporal Logic (LTL).  We present here a
language that we denote $\FOtwoLTL$, subsuming both $\FOtwo$ and
LTL. We show that the complexity of verification problems for
$\FOtwoLTL$ can be attacked by our automata-theoretic methods, and
indeed reduces to verification of $\FOtwo$ and LTL individually. As a
result we show that the worst-case complexity of probabilistic
verification, as well as non-deterministic verification, for
$\FOtwoLTL$ is (roughly speaking) the maximum of the complexity for
$\fotwo$ and LTL.

This paper expands on results presented in two conference
papers, \cite{BLW, BLWqest}.

{\bf Organization:} Section \ref{sec:prelims} contains preliminaries,
while Section \ref{sec:infrastructure} gives fundamental results on
the model theory of $\FO2$ and its relation to $\UTL$ that will be
used in the remainder of the paper.  
%
Section \ref{sec:trans} presents the logic-to-automata
translations used in our upper bounds. The first is a translation of a
given $\UTL$ formula to a large disjoint union of B\"{u}chi automata
with certain structural restrictions.  This can also be used to give a
translation from a given $\FO2$ formula to an (still larger) union of
B\"{u}chi automata.  The second does something similar for $\FOtwoLT$
formulas.  The last translation maps $\FOtwoLT$ and $\FO2$ formulas to
deterministic parity automata, which is useful for certain problems
involving games.

Section~\ref{sec:nondet} gives upper and lower bounds for
non-deterministic systems, while Section~\ref{sec:prob} is concerned with
probabilistic systems. In Section~\ref{sec:fo2ltl} we consider model
checking of $\FOtwoLTL$, which subsumes both $\FOtwo$ and LTL, and
finally in Section~\ref{sec:let} we consider the impact of extending
all the previous logics with \emph{let definitions}.

\section{Logic, Automata and Complexity Classes} \label{sec:def}
\label{sec:prelims}
We consider a first-order signature with set of unary predicates
$\mathcal{P}=\{P_1,\ldots,P_m\}$ and binary predicates $<$ (less than) and
$\mathrm{suc}$ (successor). Fixing two distinct variables $x$ and $y$,
we denote by $\FOtwo$ the set of first-order formulas over the above
signature involving only the variables $x$ and $y$.  We denote by
$\FOtwoLT$ the sublogic in which the binary predicate $\mathrm{suc}$
is not used.  We write $\varphi(x)$ for a formula in which only the
variable $x$ occurs free.

In this paper we are interested in interpretations of $\FOtwo$ on
infinite words.  An $\omega$-word $u=u_0u_1\ldots$ over the powerset
alphabet $\Sigma = 2^{\mathcal{P}}$ represents a first-order structure
extending $\langle \mathbb{N},<,\mathrm{suc} \rangle$, in which
predicate $P_i$ is interpreted by the set $\{ n \in \mathbb{N} : P_i
\in u_n\}$ and the binary predicates $<$ and $\mathrm{suc}$ have the
obvious interpretations.

We also consider Linear Temporal Logic
$\LTL$ on $\omega$-words.  The formulas of $\LTL$ are built from
atomic propositions using Boolean connectives and the temporal
operators $\CIRC$ (\emph{next}), $\CIRCM$ (\emph{previously}),
$\DIAMOND$ (\emph{eventually}), $\DIAMONDM$~(\emph{sometime in the
past}), $\mathcal{U}$ (\emph{until}), and $\mathcal{S}$
(\emph{since}).  Formally, {$\LTL$} is defined by the following
grammar:
\begin{eqnarray*}
 \varphi &::=& P_i \, \mid \, \varphi \wedge \varphi \,
                    \mid \, \neg \varphi \, 
		     \mid \, \varphi \mathrel{\mathcal{U}} \varphi \,
   		    \mid \, \varphi \mathrel{\mathcal{S}} \varphi \,
                    \mid \, \mathop{\DIAMOND} \varphi \, \mid 
                     \, \mathop{\DIAMONDM} \varphi \,
                    \mid \, \mathop{\CIRC} \varphi \,
                    \mid \, \mathop{\CIRCM} \varphi \, ,
\end{eqnarray*}
where $P_0, P_1, \ldots$ are propositional variables.  Unary temporal
logic ($\UTL$) denotes the subset without $\mathcal{U}$ and
$\mathcal{S}$, while $\TL$ denotes the
\emph{stutter-free} subset of $\UTL$ without $\CIRC$ and $\CIRCM$.
We use $\BOX \varphi$ as an abbreviation for $\neg \DIAMOND \neg \varphi$.

Let $(u, i)$ be the suffix $u_iu_{i+1}\ldots$ of $\omega$-word $u$. We define the semantics of $\LTL$ inductively
on the structure of the formulas as follows:
\begin{enumerate}[(1)]
\item $(u, i) \models P_k$ iff atomic prop. $P_k$ holds at position $i$ of $u$
\item $(u, i) \models \varphi_1 \wedge \varphi_2$ iff $(u, i) \models \varphi_1$ and $(u, i) \models \varphi_2$
\item $(u, i) \models \neg \varphi$ iff it is not the case that $(u, i) \models \varphi$
\item $(u, i) \models \mathop{\CIRC} \varphi$ iff $(u, i+1) \models \varphi$
\item $(u, i) \models \mathop{\CIRCM} \varphi$ iff $(u, i-1) \models \varphi$
\item $(u, i) \models \varphi_1 \mathrel{\mathcal{U}} \varphi_2$ iff $\exists j\ge i$ s.t. $(u, j) \models \varphi_2$ and $\forall k$, $i\le k < j$ we have $(u, k) \models \varphi_1$
\item $(u, i) \models \varphi_1 \mathrel{\mathcal{S}} \varphi_2$ iff $\exists j\le i$ s.t. $(u, j) \models \varphi_2$ and $\forall k$, $j < k \le i$ we have $(u, k) \models \varphi_1$
\item $(u, i) \models \mathop{\DIAMOND} \varphi$ iff $(u, i) \models \text{true} \mathrel{\mathcal{U}} \varphi$
\item $(u, i) \models \mathop{\DIAMONDM} \varphi$ iff $(u, i) \models \text{true} \mathrel{\mathcal{S}} \varphi$
\end{enumerate}

It is well known that over $\omega$-words $\LTL$ has the same
expressiveness as first-order logic, and $\UTL$ has the same
expressiveness as $\FOtwo$.  Moreover, while $\FOtwo$ is less
expressive than $\LTL$, it can be exponentially more
succinct~\cite{fo2_utl} -- for concrete examples of these facts,
 see the introduction.  

\begin{figure}[h!]
\begin{center}
\scalebox{0.9}{

\includegraphics{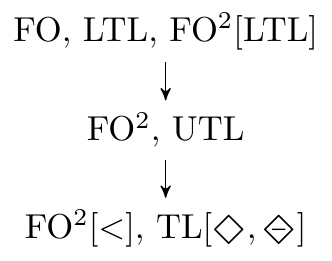}

}
\end{center}
\caption{Expressiveness Diagram}
\label{fig:express}
\end{figure}

We can combine the succinctness of $\FOtwo$ and the expressiveness of
$\LTL$ by extending the former with the temporal operators
$\mathcal{U}$ and $\mathcal{S}$ (applied to formulas with at most one
free variable).  We call the resulting logic $\FOtwoLTL$.  The syntax
of $\FOtwoLTL$ divides formulas into two syntactic classes:
\emph{temporal formulas} and \emph{first-order formulas}.  Temporal
formulas are given by the grammar
\begin{align*}
\varphi & ::=  P_i \, \mid \, \varphi \wedge \varphi \, \mid \, \neg \varphi \, \mid \, \varphi \mathrel{\mathcal{U}} \varphi \, \mid \,
\varphi \mathrel{\mathcal{S}} \varphi \, \mid \, \psi \, ,
\end{align*}
where $P_i$ is an atomic proposition and $\psi$ is a first-order
formula with one free variable.  First-order formulas are given by the
grammar
\begin{align*}
\psi & ::= \varphi(x) \, \mid \, x<y \, \mid \, \mathrm{suc}(x,y) \, \mid \,
\psi \wedge \psi \, \mid \, \neg \psi \, \mid \, \exists x \, \psi \, ,
\end{align*}
where $\varphi$ is a temporal formula.  Here the first-order formula
$\varphi(x)$ asserts that the temporal formula $\varphi$ holds at
position $x$.  The temporal operators $\CIRC$, $\CIRCM$, $\DIAMOND$
and $\DIAMONDM$ can all be introduced as derived operators.
An example of $\FOtwoLTL$ formula is:
\[b_0 \mathrel{\mathcal{U}} (\exists y \, (y < x \wedge \bigwedge_{1\le i \le n} b_i(x) \leftrightarrow b_i(y))) \, .\]
The relative expressiveness of the logics defined thus far is
illustrated in Figure~\ref{fig:express}.

Finally, we consider an extension of $\FOtwoLTL$ with \emph{let
  definitions}.  We inductively define the formulas and the unary
predicate subformulas that occur \emph{free} in such a formula.  The
atomic formulas of $\FOtwoLTLlet$ are as in $\FOtwoLTL$, with the
formula $P(x)$ occurring freely in itself.  The constructors include
all those of $\FOtwoLTL$, with the set of free subformula occurrences
being preserved by all of these constructors.

There is one new formula constructor  of the form:
\begin{eqnarray*}
\varphi & ::= & \olet ~ P_i(x) ~ \obe ~ \varphi_1(x) ~ \oin ~ \varphi_2
\end{eqnarray*}
where $P_i$ is a unary predicate, $\varphi_1(x)$ is an $\FOtwoLTLlet$
formula in which $x$ is the only free variable and no occurrence of
predicate $P_i$ is free, and $\varphi_2$ is an arbitrary
$\FOtwoLTLlet$ formula.  A subformula $P_j(z)$ occurs freely in
$\varphi(x)$ iff it occurs freely in $\varphi_1(x)$ or it occurs
freely in $\varphi_2$ and the predicate is not $P_i$.

The semantics of $\FOtwoLTLlet$ is defined via a translation function
$T$ to $\FOtwoLTL$, with the only non-trivial rule being:
$$T(\olet ~  P_i(x) ~ \obe ~ \varphi_1(x) ~ \oin ~ \varphi_2)  ::= $$
$$T(\varphi_2[P_i(x) \mapsto T(\varphi_1), P_i(y) \mapsto
  T(\varphi_1)[x \mapsto y]])$$ where $T(\varphi_1)[x \mapsto y]$
denotes the formula obtained by substituting variable $y$ for all free
occurrences of $x$ in $T(\varphi_1)$, and $T(\varphi_2[P_i(x) \mapsto
  T(\varphi_1), P_i(y) \mapsto T(\varphi_1)[x \mapsto y]])$ denotes
substitution of any free occurrence of the form $P_i(x)$ in
$T(\varphi_1)$ and every occurrence of $P_i(y)$ by $T(\varphi_1)[x
  \mapsto y]$.  We let $\UTLlet$ be the extension of $\UTL$ by the
operator above, and similarly define $\TLlet$, $\FOtwoLTlet$, etc.

For $\varphi$ a temporal logic formula or an $\FOtwo$ formula with one
free variable, we denote by $L(\varphi)$ the set $\{ w \in
\Sigma^\omega : (w,0) \models \varphi\}$ of infinite words that
satisfy $\varphi$ at the initial position. The quantifier depth of an
$\FOtwo$ formula $\varphi$ is denoted $\mathit{qdp}(\varphi)$ and the
operator depth of a UTL formula $\varphi$ is denoted
$\mathit{odp}(\varphi)$.  In either case the length of the formula is
denoted $|\varphi|$.

The notion of a subformula of an $\FOtwoLTL$ formula is defined as
usual. For an $\FOtwoLTLlet$ formula $\varphi$, let $\subf(\varphi)$
denote the set of subformulas of the equivalent $\FOtwoLTL$ formula
$T(\varphi)$, where $T$ is the translation function defined above.

\begin{lem}
\label{lem:temp_closure}
Given an $\LTLlet$ formula $\varphi$, $|\subf(\varphi)|$
is linear in $|\varphi|$.
\end{lem}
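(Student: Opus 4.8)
The plan is to exploit that $\subf(\varphi)$ is a \emph{set} of subformulas of the fully expanded formula $T(\varphi)$, so that the many copies of an expanded definition produced by nested lets are counted only once. Note first that the syntax tree of $T(\varphi)$ may well be exponential in $|\varphi|$: a single let duplicates its body $T(\varphi_1)$ at every free occurrence of the defined predicate, and iterating this squares the size at each level. The whole point of the lemma is that, even though the tree blows up, the \emph{distinct} subformulas do not, because all duplicated copies share the same subformula set. I would capture this through a substitution lemma whose essential feature is that it bounds the subformula count \emph{additively} rather than multiplicatively.

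Concretely, the crux is the following claim about plain substitution on $\LTL$ formulas (no lets remaining): for formulas $\chi,\psi$ and a proposition $P$, writing $\chi[P\mapsto\psi]$ for the result of replacing every occurrence of $P$ by $\psi$, one has
\[
\subf(\chi[P\mapsto\psi]) \subseteq \{\,\sigma[P\mapsto\psi] : \sigma\in\subf(\chi)\,\}\cup\subf(\psi),
\]
and therefore $|\subf(\chi[P\mapsto\psi])|\le|\subf(\chi)|+|\subf(\psi)|$. I would prove the inclusion by inspecting the syntax tree of $\chi[P\mapsto\psi]$, which arises from that of $\chi$ by grafting a copy of the tree of $\psi$ at each $P$-leaf. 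Any subtree is rooted either at a node inherited from $\chi$, in which case it equals $\sigma[P\mapsto\psi]$ for the corresponding $\sigma\in\subf(\chi)$, or strictly inside one of the grafted copies, in which case it is a subformula of $\psi$. The decisive observation is that all grafted copies are \emph{identical}, so they jointly contribute the single set $\subf(\psi)$ no matter how often $P$ occurs; and the map $\sigma\mapsto\sigma[P\mapsto\psi]$ has image of size at most $|\subf(\chi)|$. This additivity is exactly the sharing that defeats the exponential blow-up.

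With this in hand I would establish $|\subf(T(\varphi))|\le|\varphi|$ by structural induction on the $\LTLlet$ formula $\varphi$. For an atomic proposition $T(\varphi)=\varphi$ and $|\subf(T(\varphi))|=1$. For a Boolean or temporal operator the translation $T$ commutes with the top connective, so $\subf(T(\varphi))$ only adds the new root to the union of the subformula sets of the immediate subformulas, giving $|\subf(T(\varphi))|\le 1+\sum_j|\subf(T(\varphi_j))|\le 1+\sum_j|\varphi_j|=|\varphi|$. The sole interesting case is the let constructor $\varphi=\olet~P_i(x)~\obe~\varphi_1~\oin~\varphi_2$, for which $T(\varphi)=T(\varphi_2)[P_i\mapsto T(\varphi_1)]$; there is no $P_i(y)$ occurrence to substitute, as $\LTL$ formulas use $P_i$ propositionally, with no first-order variable $y$. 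The substitution lemma and the induction hypothesis then give $|\subf(T(\varphi))|\le|\subf(T(\varphi_2))|+|\subf(T(\varphi_1))|\le|\varphi_2|+|\varphi_1|\le|\varphi|$, the final step because the let construct contributes at least the extra nodes for its keyword and bound predicate.

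The main obstacle, and really the entire content of the argument, is isolating this additive substitution lemma and recognising that passing to the \emph{set} of distinct subformulas collapses the exponentially many duplicated copies generated by nested lets; once it is in place the induction is routine. It is worth noting that the restriction to $\LTL$ is what keeps the bound linear: in the full two-variable setting a single let expands $P_i(x)$ and $P_i(y)$ into two copies of $T(\varphi_1)$, one of them renamed, whose subformula sets are disjoint, so the substitution step acquires a factor of two on the $\varphi_1$ term and nested lets can force exponentially many distinct subformulas. Thus the linear bound is special to the $\LTL$ (equivalently, single-variable) fragment.
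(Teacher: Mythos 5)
Your proposal is correct and follows essentially the same route as the paper: the paper's proof consists precisely of the additive inequality $|\subf(\varphi)| \leq |\subf(\varphi_1)| + |\subf(\varphi_2)|$ for the let constructor followed by structural induction, which is exactly your skeleton. The only difference is that you actually justify that inequality via the grafting/sharing argument (all substituted copies of $T(\varphi_1)$ are identical, so they contribute $\subf(T(\varphi_1))$ only once), whereas the paper asserts it without proof; your added remark that the bound fails for $\FOtwoLTlet$ because $P_i(x)$ and $P_i(y)$ expand to two distinct copies is also consistent with the paper's Lemma~\ref{lem:FO2long}.
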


\begin{proof}
Notice that if $\varphi = \olet ~P_i(x)~ \obe ~\varphi_1(x)~ \oin
~\varphi_2(x)$, then $|\subf(\varphi)| \leq |\subf(\varphi_1)|+
|\subf(\varphi_2)|$.  Then by structural induction it holds that for a
$\LTLlet$-formula $\varphi$, $\subf(\varphi)$ has size at most
$|\varphi|$.
\end{proof}

\myparagraph{B\"uchi Automata}
Our results will be obtained via transforming formulas to automata that accept  $\omega$-words.
We will be most concerned with \emph{generalised B\"uchi automata} (GBA).
A GBA $A$ is a 
tuple $(\Sigma, S, S_0,  \Delta, \cal{F})$ with alphabet $\Sigma$, set of states 
$S$, set of initial states $S_0 \subseteq S$, transition function 
$\Delta$ and set of sets of final states $\cal{F}$. The accepting condition is that for 
each $F \in \cal{F}$ there is a state $s \in F$ which is visited
infinitely often.  We can have labels either on states or on
transitions, but both models are equivalent.  For more details,
see \cite{lics1986-VW}. We will consider two important classes of
B\"uchi automata: the automaton $A$ is said to be \emph{deterministic
in the limit} if all states reachable from accepting states are
deterministic; $A$ is \emph{unambiguous} if for each state s each word
is accepted along at most one run that starts at $s$.

\myparagraph{Deterministic Parity Automata} For some model checking
problems, we will need to work with deterministic automata. In
particular, we will use deterministic parity automata. A deterministic
parity automaton $A$ is a tuple $(\Sigma, S, s_0, \Delta, Pr)$ with
alphabet $\Sigma$, set of states $S$, an initial state $s_0 \in S$,
transition function $\Delta$ and a priority function $Pr$ mapping each
state to a natural number. The transition function $\Delta$ maps each
state and symbol of the alphabet exactly to one new state. A run of
such an automaton on input $\omega$-word induces an infinite sequence
of priorities. The acceptance condition is that the highest infinitely
often occurring priority in this sequence is even.

\myparagraph{Complexity Classes} Our complexity bounds involve
\emph{counting classes}.  \#P is the class of functions $f$ for which
there is a non-deterministic polynomial-time Turing Machine $T$ such
that $f(x)$ is the number of accepting computation paths of $T$ on
input $x$.  A complete problem for \#P is \#SAT, the problem of
counting the number of satisfying assignments of a given boolean
formula.  We will be considering computations of probabilities, not
integers, so our problems will technically not be in {\#P}; but some
of them will have representations computable in the related class
$FP^{\#P}$, and will be $\#P$-hard. For brevity, we will sometimes
abuse notation by saying that such probability computation problems
are $\#P$-complete.  The class of functions \#EXP is defined
analogously to \#P, except with $T$ a non-deterministic
exponential-time machine.  We will deal with a decision version of
\#EXP, PEXP, the set of problems solvable by nondeterministic Turing
machine in exponential time, where the acceptance condition is that
more than a half of computation paths accept ~\cite{pexp}.

{\bf Notation:} In our complexity bounds, we will often write
$\mathit{poly}$ to denote a fixed but arbitrary polynomial.

\section{$\FOtwo$ model theory and succinctness} \label{sec:infrastructure}
We now discuss the model theory of $\FO2$, summarizing and slightly
extending the material presented in Etessami, Vardi, and Wilke
~\cite{fo2_utl} and in Weis and Immerman \cite{wi}.

Recall that we will consider strings over alphabet $\Sigma =
2^{\mathcal{P}}$, where $\mathcal{P}$ is the set of unary predicates
appearing in the input $\FO2[<]$ formula.  We start by recalling the
small-model property of $\FO2$ that underlies the NEXPTIME
satisfiability result of Etessami, Vardi, and Wilke~\cite{fo2_utl}, it
is also implicit in Theorem 6.2 of \cite{wi}.

The \emph{domain} of a word $u \in \Sigma^* \cup \Sigma^\omega$ is the
set $\mathrm{dom}(u) = \{ i \in \mathbb{N} : 0 \leq i < |u| \}$ of
positions in $u$.  The \emph{range} of $u$ is the set $\mathrm{ran}(u)
= \{ u_i : i \in \mathrm{dom}(u) \}$ of letters occurring in $u$.
Write also $\mathrm{inf}(u)$ for the set of letters that occur
infinitely often in~$u$.

Given a finite or infinite word
$u \in \Sigma^* \cup \Sigma^\omega$, a position
$i \in \mathrm{dom}(u)$, and $k \in \mathbb{N}$, we define
the \emph{$k$-type of $u$ at position $i$} to be the set of $\FO2[<]$
formulas
\[ \tau_k(u,i) = \{ \varphi(x) : \mathrm{qdp}(\varphi)=k \mbox{ and }
                                 (u,i) \models \varphi \} \, .\]

Given $u,v \in \Sigma^* \cup \Sigma^\omega$ and positions $i \in
\mathrm{dom}(u)$ and $j \in \mathrm{dom}(v)$, write $(u,i) \sim_k
(v,j)$ if and only if $\tau_k(u,i)=\tau_k(v,j)$.  Furthermore, we
write $u \sim_k v$ for two strings $u$, $v \in \Sigma^* \cup
\Sigma^\omega$ if for all $\FO2[<]$-formulas $\varphi(x)$ of
quantifier depth at most $k$ we have $(u,0) \models \varphi$ iff
$(v,0) \models \varphi$.

The small model property of \cite{fo2_utl} can then be stated as follows:

\begin{prop}[\cite{fo2_utl}]
\label{shortWord}
Let $\Sigma=2^{\mathcal{P}}$. Then (i)~For any string $u \in
\Sigma^*$ and positive integer $k$ there exists $v \in \Sigma^*$ such
that $u \sim_k v$ and $|v| \in 2^{O(|\mathcal{P}|k)}$; (ii)~for any
infinite string $u \in \Sigma^\omega$ and positive integer $k$ there
are finite strings $v$ and $w$, with $|v|,|w| \in
2^{O(|\mathcal{P}|k)}$, such that $u \sim_k vw^{\omega}$.
\end{prop}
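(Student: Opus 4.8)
The plan is to characterise $\sim_k$ by a two-pebble Ehrenfeucht--Fra\"{\i}ss\'{e} game and then run a contraction argument whose length bound comes from a monotonicity property of \emph{type profiles}. Recall the standard $\FOtwoLT$ game: $(u,i)\sim_k(v,j)$ iff Duplicator wins the $k$-round game on $u,v$ with two pebbles, starting from the configuration with a pebble on $i$ and the matching pebble on $j$; in each round Spoiler moves one pebble (reusing a variable) and Duplicator must keep the induced map a partial isomorphism for the unary predicates and $<$. I would first record this equivalence (it is folklore, and implicit in~\cite{fo2_utl,wi}), since it turns the statement into a purely combinatorial claim about preserving winning strategies under deletion of positions, and likewise for the $vw^\omega$ shape in part~(ii).

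The heart of the argument is a \emph{profile analysis} of the game. For a position $i$ of $u$ define the left profile $\lambda_k(i)=\{\tau_{k-1}(u,j):j<i\}$ and right profile $\rho_k(i)=\{\tau_{k-1}(u,j):j>i\}$, the sets of rank-$(k-1)$ $1$-types realised strictly before and after $i$. Examining Spoiler's possible first moves (to the left of, onto, or to the right of $i$) and tracking the at most three regions cut out by the two pebbles in the continuation, I would show that $\tau_k(u,i)$ is governed by $u_i$ together with $\lambda_k(i)$ and $\rho_k(i)$, the two-pebble continuation being handled by a region-refined invariant. The decisive point is \emph{monotonicity}: along the word $\lambda_k(\cdot)$ only grows and $\rho_k(\cdot)$ only shrinks, since $\lambda_k(i+1)=\lambda_k(i)\cup\{\tau_{k-1}(u,i)\}$ and dually for $\rho_k$. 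Writing $t(k)$ for the number of distinct rank-$k$ types appearing in one word, the profile pair therefore changes at most $2\,t(k-1)$ times, yielding $t(k)\le |\mathcal{P}|$-free recursion $t(k)\le |\Sigma|\,(2\,t(k-1)+1)$ with $t(0)=|\Sigma|=2^{|\mathcal{P}|}$, hence $t(k)\in 2^{O(|\mathcal{P}|k)}$; it is this monotonicity that keeps the bound singly exponential rather than a tower.

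For part~(i) I would use monotonicity to cut $u$ into at most $2\,t(k-1)+1$ maximal blocks on which the profile pair is constant; inside a block the rank-$k$ type depends only on the letter, so each block can be contracted to $O(|\Sigma|)$ representative positions, with an EF argument showing that Duplicator survives deletions inside a homogeneous block by a pigeonhole response, so that the contracted subword $v$ still satisfies $u\sim_k v$ and keeps position~$0$. The genuinely delicate point is that the profiles are themselves defined through rank-$(k-1)$ types, so the retained positions must keep their rank-$(k-1)$ types after deletion. I would meet this by strengthening the induction hypothesis to preserve the lower-rank types of all retained positions and by \emph{sharing} the lower-level witnesses across blocks, so that the length obeys $\ell(k)\le \ell(k-1)+O(t(k-1))$ rather than a multiplicative recursion; summing over levels then gives $|v|\in 2^{O(|\mathcal{P}|k)}$. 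Controlling this accounting so that the exponent stays linear in $k$ (rather than the $k^2$ produced by a naive multiplicative recursion) is the main obstacle.

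For part~(ii) I would exploit the same monotonicity in the limit. The non-decreasing chain $\lambda_k(\cdot)$ stabilises to its maximum $\Lambda$ from some position on, and beyond a finite prefix only letters of $\mathrm{inf}(u)$ and only the infinitely-often-occurring rank-$(k-1)$ types appear, so the right profile is eventually constantly equal to the set $R$ of such recurrent types; thus the tail of $u$ is profile-homogeneous. I would then build $w$ as a short word over $\mathrm{inf}(u)$ whose periodic repetition realises each type of $R$ infinitely often, so that $w^\omega$ has right profile $R$, and $v$ as a contracted prefix realising the transient types and establishing left profile $\Lambda$ by its end; both are shortened as in part~(i), giving $|v|,|w|\in 2^{O(|\mathcal{P}|k)}$. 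Finally I would verify $u\sim_k vw^\omega$ by a game argument, matching Spoiler's moves through the profile equivalence, with the recursion ensuring that the rank-$(k-1)$ types inside $w^\omega$ and across the $v$/$w^\omega$ seam are correct. As in~(i), the principal difficulty is guaranteeing that the periodic part genuinely realises the required lower-rank types without the sizes blowing up, which the monotonicity-driven, witness-sharing contraction is designed to control.
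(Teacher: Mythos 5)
Your proposal follows essentially the same route as the paper's: your left/right profiles are precisely the inductive characterisation of $k$-types by letter plus the sets of rank-$(k-1)$ types on each side (Proposition~\ref{prop:ind-char}); your monotonicity observation is the paper's count of ``boundary points'', giving at most $|\Sigma|(2|\Sigma|+2)^k$ rank-$k$ types along a word (Proposition~\ref{prop:rep}); and your treatment of part~(ii) is the tail-stabilisation argument of Proposition~\ref{prop:bound} combined with composition of $\sim_k$ under concatenation (Proposition~\ref{prop:comp}). The one place you genuinely diverge --- and create work for yourself --- is the contraction step in part~(i). Because you contract level by level inside profile-homogeneous blocks, you must preserve the rank-$(k-1)$ types of retained positions, which forces the ``witness sharing'' and the additive recursion $\ell(k)\le \ell(k-1)+O(t(k-1))$ that you rightly flag as the delicate point. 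The paper sidesteps this: it invokes a single collapse lemma (Proposition~\ref{prop:collapse}) stating that deleting the segment between two positions of equal rank-$k$ type preserves the $\sim_k$-class of the whole word, and then iterates ``collapse until all positions have distinct $k$-types''; the surviving word has length at most the number of rank-$k$ types, which your own monotone-profile count already bounds by $2^{O(|\mathcal{P}|k)}$, so the feared $k^2$ in the exponent never arises. You should either prove that rank-$k$ collapse lemma once and apply it at the top level, or be prepared to make the witness-sharing bookkeeping fully precise --- as written it is the only step of your plan that is not yet a proof. A smaller simplification for part~(ii): you do not need $w$ to explicitly realise every recurrent rank-$(k-1)$ type; it suffices that $w$ lists the letters of $\mathrm{inf}(u)$, since in the stabilised tail the type at a position is determined by its letter (this is exactly what Proposition~\ref{prop:bound} establishes, and it is how Lemma~\ref{lem:parity} obtains $|w|\le|\Sigma|$).
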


For completeness, we give a constructive proof of Proposition
\ref{shortWord}, which will be used in one of our translations of
$\FO2$ to automata.  This is Lemma~\ref{lem:parity} at the end of this
section.  For this it is convenient to use the following inductive
characterisation of $\sim_k$, which is proven in \cite{fo2_utl} by a
straightforward induction:
\begin{prop}[\cite{fo2_utl}]
Let $u,v \in \Sigma^* \cup \Sigma^\omega$.  Then
$\tau_k(u,i)=\tau_k(v,j)$ if and only if (i)~$u_i=v_j$,
(ii)~$\{ \tau_{k-1}(u,i') : i' < i \}=\{ \tau_{k-1}(v,j') : j' < j \}$, and
(iii)~$\{ \tau_{k-1}(u,i') : i' > i \}=\{ \tau_{k-1}(v,j') : j' > j \}$.
\label{prop:ind-char}
\end{prop}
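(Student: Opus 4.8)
The plan is to prove both implications by induction on $k$, with the base case $k=0$ immediate: a formula $\varphi(x)$ of quantifier depth $0$ is a Boolean combination of the atoms $P_\ell(x)$, $x=x$ and $x<x$, so $\tau_0(u,i)$ records exactly the letter $u_i$, and equivalence with condition (i) is clear (conditions (ii),(iii) being vacuous at level $0$). Before starting I would record the standard fact that, over the fixed finite signature, there are only finitely many $\FOtwoLT$ formulas of quantifier depth at most $k-1$ up to logical equivalence; consequently each realised $(k-1)$-type $t$ is captured by a single \emph{characteristic formula} $\chi_t(x)$ with $\mathrm{qdp}(\chi_t)=k-1$ such that $(w,p)\models\chi_t$ iff $\tau_{k-1}(w,p)=t$, obtained as the conjunction of a complete set of inequivalent depth-$(k-1)$ formulas taken with the signs dictated by $t$.

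For the forward direction I would argue as follows. Condition (i) holds because $P_\ell(x)$ has quantifier depth $0\le k$ and hence lies in $\tau_k$. For (ii), given a $(k-1)$-type $t$ realised at some $i'<i$ in $u$, the formula $\exists y\,(y<x\wedge\chi_t(y))$ — where $\chi_t(y)$ is $\chi_t$ with the roles of $x$ and $y$ exchanged — has quantifier depth $k$ and free variable $x$, and is satisfied by $(u,i)$; since $\tau_k(u,i)=\tau_k(v,j)$ it is also satisfied by $(v,j)$, yielding $j'<j$ with $\tau_{k-1}(v,j')=t$. The reverse inclusion, and condition (iii) using $\exists y\,(x<y\wedge\chi_t(y))$, follow symmetrically.

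The converse is the substantive direction, and the main obstacle is the quantifier case, where one must transfer a \emph{binary} subformula $\psi(x,y)$ across the two structures even though conditions (i)--(iii) only speak about single positions. To handle this I would isolate a Lemma: whenever $\tau_m(u,i)=\tau_m(v,j)$, $\tau_m(u,i')=\tau_m(v,j')$, and the order relation between $i,i'$ (one of $<,=,>$) agrees with that between $j,j'$, the pairs $(u,i,i')$ and $(v,j,j')$ satisfy the same formulas $\psi(x,y)$ of quantifier depth at most $m$. The Lemma is proved by induction on $\psi$: atoms are settled by the two single-position type equalities together with the order relation, Boolean cases are immediate, and the crucial point in the quantifier step is that a subformula $\exists y\,\rho(x,y)$ depends only on the surviving anchor $x$, since rebinding $y$ discards the old value. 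Thus a witness lying to the left (resp. right) of $i$ is matched by one to the left (resp. right) of $j$ using the equality of left (resp. right) $(m-1)$-type sets supplied by the already-established forward direction applied to $\tau_m(u,i)=\tau_m(v,j)$, and the matched pair is then handled by the Lemma at level $m-1$.

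Finally, for the converse of the Proposition itself I would assume (i)--(iii) at level $k$. These imply the corresponding conditions at level $k-1$, since each $(k-1)$-type determines its $(k-2)$-type, so equal type sets project to equal type sets; the induction hypothesis then gives $\tau_{k-1}(u,i)=\tau_{k-1}(v,j)$. Given $\exists y\,\psi(x,y)$ of quantifier depth $\le k$ satisfied at $(u,i)$ via a witness $i'$, I split on whether $i'<i$, $i'=i$, or $i'>i$; using (ii) in the first case, the equality $\tau_{k-1}(u,i)=\tau_{k-1}(v,j)$ with $j'=j$ in the second, and (iii) in the third, I produce a matching $j'$ with $\tau_{k-1}(u,i')=\tau_{k-1}(v,j')$ and the same order relation, and the Lemma at level $k-1$ then yields $(v,j,j')\models\psi$, hence $(v,j)\models\exists y\,\psi$. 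Atomic and Boolean cases being routine, this closes the induction. The only delicate bookkeeping is well-foundedness of the nested inductions, which holds because the Lemma at level $m$ invokes only the forward direction of the Proposition at level $m$ and the Lemma at level $m-1$.
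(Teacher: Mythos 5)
Your proof is correct: the paper itself gives no argument for this proposition, simply citing \cite{fo2_utl} and noting that it follows ``by a straightforward induction,'' and your Ehrenfeucht--Fra\"{\i}ss\'e-style induction (forward direction via characteristic formulas of $(k-1)$-types, converse via the two-position transfer lemma) is exactly that standard induction, with the circularity between the Proposition and the Lemma correctly broken by levels. The only cosmetic point is that the paper defines $\tau_k$ using quantifier depth exactly $k$ while you work with depth at most $k$; these conventions coincide up to padding with vacuous quantifiers, so nothing is affected.
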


The next proposition states that we can collapse any two positions in
a string that have the same $k$-type without affecting the $k$-type of
the string.

\begin{prop}[\cite{fo2_utl}]
Let $u \in \Sigma^* \cup \Sigma^\omega$ and let $i < j$ be such that
$(u,i) \sim_k (u,j)$.  Writing $u = u_1\ldots u_ju'$, we have $u \sim_k
u_1\ldots u_iu'$.
\label{prop:collapse}
\end{prop}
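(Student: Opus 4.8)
The plan is to establish a statement stronger than the one asserted, namely that the collapse preserves $k$-types at \emph{every} retained position, not merely at the initial one; the proposition then follows by specialisation. Write $v = u_1\cdots u_i u'$ for the collapsed word, so that $v$ arises from $u$ by deleting the positions $i+1,\dots,j$, and call a position $p$ of $u$ \emph{retained} if $p\le i$ or $p>j$. The retained positions are in letter-preserving, order-preserving bijection with $\mathrm{dom}(v)$ via the shift $f(p)=p$ for $p\le i$ and $f(p)=p-(j-i)$ for $p>j$. I would prove, by induction on $k$, that if $(u,i)\sim_k(u,j)$ then $\tau_k(u,p)=\tau_k(v,f(p))$ for every retained position $p$. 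Since the initial position is at most $i$, hence retained and fixed by $f$, instantiating at that position yields equality of $k$-types at the start, i.e.\ $u\sim_k v$. A preliminary observation that makes the induction run is that $\sim_k$ is downward closed: by Proposition~\ref{prop:ind-char} the $k$-type of a position determines its $(k-1)$-type, so $(u,i)\sim_k(u,j)$ entails $(u,i)\sim_{k'}(u,j)$ for all $k'\le k$, supplying the induction hypothesis at level $k-1$.

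The base case $k=0$ is immediate, as $\tau_0$ depends only on the letter and $f$ preserves letters. For the inductive step I would verify the three conditions of Proposition~\ref{prop:ind-char} for the pair $(u,p)$ and $(v,f(p))$. Condition~(i) holds since $f$ preserves letters. Conditions~(ii) and~(iii) ask that the sets of $(k-1)$-types occurring strictly before (resp.\ after) $p$ in $u$ coincide with those occurring strictly before (resp.\ after) $f(p)$ in $v$. Applying the induction hypothesis at level $k-1$, the $(k-1)$-types realised at the retained positions of $u$ match exactly those realised at the corresponding positions of $v$; since $f$ is order-preserving, the $v$-side type set equals the $u$-side type set \emph{restricted to retained positions}. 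Hence the only possible discrepancy is caused by the \emph{deleted} positions $i+1,\dots,j$, whose types are present on the $u$-side but carry no representative on the $v$-side. Everything thus reduces to showing that every $(k-1)$-type realised at a deleted position already reappears at a retained position lying on the appropriate side of $p$.

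This reappearance is precisely what the hypothesis supplies, and is the crux of the argument. From $(u,i)\sim_k(u,j)$ and Proposition~\ref{prop:ind-char} we obtain equality of the left $(k-1)$-type sets at $i$ and $j$ and of the right $(k-1)$-type sets at $i$ and $j$. The left equality forces every $(k-1)$-type occurring at a position in $\{i,\dots,j-1\}$ to already occur strictly before $i$, while the right equality forces every $(k-1)$-type occurring at a position in $\{i+1,\dots,j\}$ to already occur strictly after $j$; the boundary position $j$ is handled separately by $\tau_{k-1}(u,j)=\tau_{k-1}(u,i)$, which places its type at the retained position $i$. Now, for a retained $p\le i$ only condition~(iii) is non-trivial, and the deleted types reappear after $j$, hence at retained positions after $p$; for a retained $p>j$ only condition~(ii) is non-trivial, and the deleted types reappear before $i$ (together with the boundary case at $i$), hence at retained positions before $p$. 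In either case conditions~(i)--(iii) hold, so $\tau_k(u,p)=\tau_k(v,f(p))$, completing the induction.

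The argument is insensitive to whether $u$ is finite or infinite, since Proposition~\ref{prop:ind-char} and the set-equalities it yields are stated for $\Sigma^*\cup\Sigma^\omega$ and the right-hand type sets remain well-defined sets even when infinite. I expect the main obstacle to be the bookkeeping in the last paragraph: correctly matching which of the two hypothesis conditions (left versus right) disposes of the deleted block for a retained position on each side of the collapse, and confirming that the reappearing occurrences genuinely fall on the required side of $p$ rather than merely somewhere in the word.
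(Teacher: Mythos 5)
Your proof is correct. Note that the paper itself does not prove Proposition~\ref{prop:collapse}; it is imported from Etessami--Vardi--Wilke without argument, so there is no in-paper proof to compare against. Your strengthened induction --- transferring $(k)$-types at \emph{every} retained position via the order- and letter-preserving bijection, reducing conditions (ii) and (iii) of Proposition~\ref{prop:ind-char} to the reappearance of each deleted position's $(k-1)$-type on the correct side, and discharging that reappearance from the left/right type-set equalities at $i$ and $j$ (with the boundary position $j$ handled by $\tau_{k-1}(u,j)=\tau_{k-1}(u,i)$) --- is exactly the standard argument for this lemma, and your case analysis ($p\le i$ needs only the right-hand condition, $p>j$ only the left-hand one) is complete. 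The only points worth tightening in a written version are the two appeals to downward closure of $\sim_k$ (once to supply the induction hypothesis at level $k-1$, once to pass from equality of $k$-types at the initial position to $u\sim_k v$, which quantifies over depths \emph{at most} $k$); both are routine, e.g.\ by padding a depth-$(k-1)$ formula with a vacuous quantifier.
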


From these two propositions it follows that every finite string is
equivalent under $\sim_k$ to a string of length exponential in $k$ and
$|\mathcal{P}|$.
\begin{prop}
Given a nonnegative integer $k$, for all strings $u \in \Sigma^*$ there
exists a string $v \in \Sigma^*$ such that $u \sim_k v$ and $|v|$ is
bounded by $2^{O(|\mathcal{P}|k)}$.
\label{prop:rep}
\end{prop}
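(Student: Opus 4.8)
The plan is to obtain $v$ from $u$ by repeatedly applying the collapsing operation of Proposition~\ref{prop:collapse}, and to bound the length of the result by the number of distinct $k$-types that any single string can realise. Concretely, whenever the current string has two positions $i<j$ with $\tau_k(\cdot,i)=\tau_k(\cdot,j)$, I would apply Proposition~\ref{prop:collapse} to delete the block $i+1,\ldots,j$; this strictly shortens the string and, by transitivity of $\sim_k$, keeps it $\sim_k$-equivalent to $u$. The process terminates, and the final string $v$ has pairwise distinct $k$-types at its positions, so $|v|$ is at most the number of distinct $k$-types occurring in $v$. Thus it suffices to prove that the number $P_k$ of distinct $k$-types realisable within a single string is bounded by $2^{O(|\mathcal{P}|k)}$.

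I would bound $P_k$ by induction on $k$ using the inductive characterisation of Proposition~\ref{prop:ind-char}. For the base case, a $0$-type $\tau_0(u,i)$ is determined by the Boolean combination of unary predicates holding at $i$, i.e.\ by the letter $u_i$, so $P_0\le|\Sigma|=2^{|\mathcal{P}|}$. For the inductive step, fix a string $u$ and, for each position $i$, record the triple consisting of the letter $u_i$, the set $B_i=\{\tau_{k-1}(u,i'):i'<i\}$ of $(k-1)$-types occurring strictly before $i$, and the set $A_i=\{\tau_{k-1}(u,i'):i'>i\}$ of those occurring strictly after $i$. By Proposition~\ref{prop:ind-char} the $k$-type $\tau_k(u,i)$ is determined by this triple, so it is enough to count the distinct triples.

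The crucial observation is monotonicity. As $i$ increases, $B_i$ is non-decreasing and $A_i$ is non-increasing, each being a subset of the set of $(k-1)$-types realised in $u$, of which there are at most $P_{k-1}$. Hence $B_i$ takes at most $P_{k-1}+1$ values and changes at most $P_{k-1}$ times, and likewise for $A_i$; since every change of the pair $(B_i,A_i)$ strictly enlarges $B_i$ or strictly shrinks $A_i$, at most $2P_{k-1}+1$ distinct pairs occur along $u$. Combining with the $|\Sigma|$ choices for the letter gives $P_k\le 2^{|\mathcal{P}|}(2P_{k-1}+1)$, and unwinding this linear recurrence from $P_0\le 2^{|\mathcal{P}|}$ yields $P_k\le 2^{O(|\mathcal{P}|k)}$, as required.

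I expect the main obstacle to be exactly this last counting step: the naive bound obtained by blindly applying Proposition~\ref{prop:ind-char} (a $k$-type is a letter together with two arbitrary sets of $(k-1)$-types) is doubly exponential in $P_{k-1}$ and would give a tower of exponentials in $k$. The whole point is to exploit that the ``before'' and ``after'' type-sets vary monotonically along a fixed word, which collapses the count of realisable pairs from exponential to linear in $P_{k-1}$ and thereby keeps the recurrence single-exponential.
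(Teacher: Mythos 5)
Your proof is correct and follows essentially the same route as the paper's: collapse repeated $k$-types via Proposition~\ref{prop:collapse} and bound the number of realisable $k$-types by induction using Proposition~\ref{prop:ind-char}. Your monotonicity argument for the pairs $(B_i,A_i)$ is just a rephrasing of the paper's count via ``boundary points'' (first/last occurrences of each $(k-1)$-type), and it yields the same single-exponential recurrence.
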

\begin{proof}
We prove by induction on $k$ that the set $\{ \tau_k(u,i) : i \in
\mathrm{dom}(u)\}$ of $k$-types occurring along $u$ has size at most
$|\Sigma|(2|\Sigma|+2)^k$.

The base case $k=0$ is clear.

For the induction step, assume that the number of $(k-1)$-types
occurring along $u$ is at most $|\Sigma|(2|\Sigma|+2)^{k-1}$.  Define
a \emph{boundary point} in $u$ to be the position of the first or last
occurrence of a given $(k-1)$-type.  Then there are at most
$2|\Sigma|(2|\Sigma|+2)^{k-1}$ boundary points.  But by
Proposition~\ref{prop:ind-char} the $k$-type at a given position $i$
in $u$ is determined by $u_i$, the set of boundary points strictly
less than $i$, and the set of boundary points strictly greater than
$i$.  Thus the number of $k$-types along $u$ is at most
\begin{gather}
(|\Sigma|+1)2|\Sigma|(2|\Sigma|+2)^{k-1} = |\Sigma|(2|\Sigma|+2)^k \, .
\label{eq:bound10}
\end{gather}

By Proposition~\ref{prop:collapse}, given any string $v$ in which
there are two distinct positions with the same $k$-type there exists a
shorter string $w$ with $v \sim_k w$.  From the
bound~(\ref{eq:bound10}) on the number of boundary points, we conclude
that there exists a string $v$ such that $u \sim_k v$ and $|v| \leq
|\Sigma|(2|\Sigma|+2)^k \in 2^{O(|\mathcal{P}|k)}$.
\end{proof}

The relation $\sim_k$ is also easy to compute:

\begin{prop}
Given $u,v \in \Sigma^*$ of length at most $h$
we can compute whether $u \sim_k v$ in time at most 
$h2^{O(|\mathcal{P}|k)}$.
\label{prop:compute}
\end{prop}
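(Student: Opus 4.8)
The plan is to compute, level by level on the quantifier depth $\ell = 0, 1, \ldots, k$, a canonical integer identifier for the $\ell$-type $\tau_\ell(w,i)$ of every position $i$ of both words $w \in \{u,v\}$, using the inductive characterisation of Proposition~\ref{prop:ind-char}. Once the $k$-type identifiers are in hand, $u \sim_k v$ holds iff position $0$ of $u$ and position $0$ of $v$ receive the same identifier: by definition $u \sim_k v$ is agreement at position $0$ on all $\FOtwoLT$ formulas of quantifier depth at most $k$, which is equivalent to $(u,0)\sim_k (v,0)$, i.e.\ $\tau_k(u,0)=\tau_k(v,0)$ (a formula of depth below $k$ being equivalent to one of depth exactly $k$ via dummy quantification over the nonempty domain).

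For the base case $\ell = 0$, the $0$-type $\tau_0(w,i)$ is a Boolean type over the unary predicates and is therefore determined by the letter $w_i \in \Sigma$, so I assign identifiers to the distinct letters occurring in $u$ and $v$. For the inductive step, assume identifiers have been assigned \emph{globally} across both words to the distinct $(\ell-1)$-types. By Proposition~\ref{prop:ind-char}, $\tau_\ell(w,i)$ is determined by the triple consisting of $w_i$, the set $B_i = \{\tau_{\ell-1}(w,i') : i' < i\}$ of $(\ell-1)$-type identifiers occurring strictly before $i$ in $w$, and the set $A_i = \{\tau_{\ell-1}(w,i') : i' > i\}$ of those occurring strictly after. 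Since the proof of Proposition~\ref{prop:rep} shows that each word contains at most $|\Sigma|(2|\Sigma|+2)^{\ell-1} \in 2^{O(|\mathcal{P}|k)}$ distinct $(\ell-1)$-types, I represent $B_i$ and $A_i$ as bit-vectors of length $N$, where $N \in 2^{O(|\mathcal{P}|k)}$. The vectors $B_i$ (resp.\ $A_i$) for all positions of one word are obtained by a single left-to-right (resp.\ right-to-left) scan maintaining a running union, at cost $O(hN)$ per word. It is essential that identifiers are global across $u$ and $v$, so that a type occurring in both words receives one identifier and the final comparison of the position-$0$ identifiers is meaningful, while each $B_i, A_i$ still ranges only over positions of the word containing $i$.

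The one point requiring care is that the canonicalisation at each level — grouping the $\leq 2h$ triples into equivalence classes and issuing fresh $\ell$-type identifiers — must run in time linear in $h$. A comparison sort would cost an extra $\log h$ factor and break the bound; instead I radix-sort the triples, viewing each as a binary string of length $O(N)$ (the two bit-vectors together with an $O(|\mathcal{P}|)$-bit encoding of the letter). Radix-sorting $2h$ binary strings of length $O(N)$ costs $O(hN)$, after which equal adjacent triples, hence equal $\ell$-types, receive a common identifier. This is the main obstacle: keeping both the bit-vector length and the grouping linear in $h$ depends respectively on the ``few types'' count from Proposition~\ref{prop:rep} and on radix rather than general-purpose sorting.

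Summing, each of the $k+1$ levels costs $O(hN)$, for a total of $O(khN)$. Since $N \in 2^{O(|\mathcal{P}|k)}$ and also $k \in 2^{O(|\mathcal{P}|k)}$, this is bounded by $h\, 2^{O(|\mathcal{P}|k)}$, as required.
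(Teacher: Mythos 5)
Your proposal is correct and follows essentially the same route as the paper: $2(k+1)$ leftward/rightward passes computing the sets of lower-rank types before and after each position via Proposition~\ref{prop:ind-char}, with the cost controlled by the few-types bound from Proposition~\ref{prop:rep}. The only difference is that you make explicit the canonicalisation of types at each level (bit-vectors plus radix sort to stay linear in $h$), a bookkeeping step the paper's proof leaves implicit in its ``at most quadratic in the number of types'' accounting.
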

\begin{proof}
For $m=0,1,\ldots,k$ we successively pass along $u$ labelling each
position $i$ with its $m$-type $\tau_m(u,i)$.  Each rank $m$ requires
two passes: we pass leftward through $u$ computing the set of
$(m-1)$-types to the left of each position, then we pass rightward
computing the set of $(m-1)$-types to the right of each position.
This requires $2k$ passes, with each pass taking time linear in $h$
and at most quadratic in the number of $k$-types that occur along
$u$. The bound now follows using the estimate of the number of types
given in Proposition \ref{prop:rep}.
\end{proof}

Combining Propositions~\ref{prop:rep} and~\ref{prop:compute} we get:
\begin{cor}
Given $k$ there exists a set $\mathrm{Rep}_k(\Sigma) \subseteq
\Sigma^*$ of \emph{representative strings} such that each $v \in
\mathit{Rep_k}(\Sigma)$ has $|v| \leq |\Sigma|(2|\Sigma|+2)^k$ and for
each string $u \in \Sigma^*$ there exists a unique $v \in
\mathrm{Rep}_k(\Sigma)$ such that $u \sim_k v$.  Moreover
$\mathrm{Rep}_k(\Sigma)$ can be computed from $k$ in time
$2^{2^{O(|\mathcal{P}|k)}}$.
\label{corl:rep}
\end{cor}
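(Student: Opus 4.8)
The plan is to build $\mathrm{Rep}_k(\Sigma)$ by a brute-force enumeration over the finitely many short strings, using Proposition~\ref{prop:rep} to guarantee that every $\sim_k$-class is represented among them and Proposition~\ref{prop:compute} to decide $\sim_k$ effectively. Set $N := |\Sigma|(2|\Sigma|+2)^k$, the length bound furnished by Proposition~\ref{prop:rep}. First I would record that $N \in 2^{O(|\mathcal{P}|k)}$, since $|\Sigma| = 2^{|\mathcal{P}|}$ and so $\log_2 N = O(|\mathcal{P}|k)$.

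For the construction itself, I would enumerate all strings over $\Sigma$ of length at most $N$ in a fixed canonical order (say shortlex), processing them one at a time while maintaining the set $\mathrm{Rep}_k(\Sigma)$ built so far. When a string $v$ is processed, I test $v \sim_k w$ against each $w$ already placed in $\mathrm{Rep}_k(\Sigma)$ using the algorithm of Proposition~\ref{prop:compute}, and add $v$ only if no such $w$ exists. By construction $\mathrm{Rep}_k(\Sigma)$ then meets each $\sim_k$-class of short strings in exactly one element, which gives both the uniqueness clause and the length bound $|v| \le N$ immediately.

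To verify the ``for each $u \in \Sigma^*$'' clause for arbitrary $u$, I would invoke Proposition~\ref{prop:rep} to obtain some $v$ with $|v| \le N$ and $u \sim_k v$; since $v$ is enumerated, it is $\sim_k$-equivalent to its chosen representative $v' \in \mathrm{Rep}_k(\Sigma)$, and transitivity of $\sim_k$ yields $u \sim_k v'$. Uniqueness of such a $v'$ follows because $\sim_k$ is an equivalence relation and $\mathrm{Rep}_k(\Sigma)$ intersects each class at most once.

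The only point demanding care is the running time, and that is where I expect the bookkeeping to be delicate rather than deep. The number of strings of length at most $N$ is at most $|\Sigma|^{N+1} = 2^{(N+1)|\mathcal{P}|}$; since $(N+1)|\mathcal{P}| \in 2^{O(|\mathcal{P}|k)}$, this count is already $2^{2^{O(|\mathcal{P}|k)}}$. Each processed string is compared against at most that many representatives, and each comparison, on strings of length $\le N = 2^{O(|\mathcal{P}|k)}$, costs only $N \cdot 2^{O(|\mathcal{P}|k)} = 2^{O(|\mathcal{P}|k)}$ by Proposition~\ref{prop:compute}. The total is thus the product of two doubly-exponential factors and one singly-exponential factor, and since squaring a quantity of the form $2^{2^{O(|\mathcal{P}|k)}}$ merely doubles the inner exponent, the class $2^{2^{O(|\mathcal{P}|k)}}$ is closed under this product. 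The main thing to confirm is precisely this collapse of iterated exponentials: that the doubly-exponential string count dominates and absorbs the comparison cost without escaping the stated bound $2^{2^{O(|\mathcal{P}|k)}}$.
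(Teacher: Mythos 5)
Your proposal is correct and follows essentially the same route as the paper, which derives the corollary precisely by ``combining Propositions~\ref{prop:rep} and~\ref{prop:compute}'': enumerate the strings of length at most $|\Sigma|(2|\Sigma|+2)^k$, keep one representative per $\sim_k$-class using the equivalence test of Proposition~\ref{prop:compute}, and appeal to Proposition~\ref{prop:rep} plus transitivity for arbitrary $u$. Your bookkeeping of the doubly-exponential running time is also sound, so nothing further is needed.
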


The following result is classical, and can be proven using games.

\begin{prop}
Given $u,v \in \Sigma^*$ and $u',v' \in \Sigma^\omega$, for all $k$ if
$u \sim_k v$ and $u' \sim_k v'$ then $uu' \sim_k vv'$.
\label{prop:comp}
\end{prop}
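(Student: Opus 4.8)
The plan is to argue through the two-pebble Ehrenfeucht--Fra\"iss\'e (EF) game for $\FO2[<]$. Recall that $(u,i)\sim_k(v,j)$ holds precisely when Duplicator wins the $k$-round game played with two pebbles, named $x$ and $y$, on the two word-structures, starting from the configuration that puts the $x$-pebble on $i$ in $u$ and on $j$ in $v$; in each round Spoiler moves one pebble in one of the two structures and Duplicator must move the pebble of the same name in the other, and Duplicator loses the moment the correspondence between the (at most two) pebbled positions fails to be a partial isomorphism for the unary predicates and for $<$. This is simply the operational reading of the inductive characterisation in Proposition~\ref{prop:ind-char}. In this language $u\sim_k v$ says that the configuration with the $x$-pebble on position $0$ of both words is winning for Duplicator with $k$ rounds to go, so our task is to show that the same initial configuration on $uu'$ and $vv'$ is winning.

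I would fix Duplicator winning strategies $\sigma$ on $(u,v)$ and $\sigma'$ on $(u',v')$ and let Duplicator play on $(uu',vv')$ by \emph{answering in the same block} in which Spoiler plays: a move into the $u$- or $v$-block is answered by $\sigma$, a move into the $u'$- or $v'$-block by $\sigma'$. The invariant to maintain is that the restriction of the running configuration to the two left blocks is a Duplicator-winning configuration of the $(u,v)$-game, and its restriction to the two right blocks is a winning configuration of the $(u',v')$-game, each with at least as many rounds remaining as the main game. Two elementary monotonicity facts make this work, and I would record them first: deleting a pebble from a winning configuration leaves a winning configuration (fewer constraints only help Duplicator), and a configuration winning with $r+1$ rounds is winning with $r$ rounds. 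Initially the invariant holds, since the left restriction is the $x$-on-$0$ configuration of $(u,v)$, winning because $u\sim_k v$, while the right restriction is empty and hence winning by the deletion fact applied to the $x$-on-$0$ configuration of $(u',v')$.

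The verification that the composite strategy never loses splits into a within-block and a cross-block check. Inside a single block, agreement of letters and of the $<$-order between pebbled positions is exactly what $\sigma$ and $\sigma'$ preserve. Across blocks the comparison is automatic: every left-block position precedes every right-block position in both $uu'$ and $vv'$, and because Duplicator always answers in the block where Spoiler moved, corresponding pebbles occupy corresponding blocks, so a left-versus-right comparison yields the same verdict on the two sides. The step I expect to be the main obstacle is a pebble \emph{migrating} across the boundary: when Spoiler carries a pebble from one block into the other, the destination restriction acquires a freshly placed pebble, to which the relevant sub-strategy can respond because a winning configuration survives an arbitrary Spoiler move, while the source restriction loses a pebble and remains winning precisely by the deletion fact above. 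This is what tames the reuse of only two pebbles, which is the one feature distinguishing the argument from the usual additivity of EF games. Carrying the invariant through all $k$ rounds leaves a global partial isomorphism at the end, so Duplicator wins the game on $uu'$ and $vv'$ and therefore $uu'\sim_k vv'$.
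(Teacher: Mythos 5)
Your proof is correct and takes exactly the route the paper intends: the paper states Proposition~\ref{prop:comp} without proof, remarking only that it is classical and provable using games, and your block-wise composition of Duplicator strategies in the two-pebble Ehrenfeucht--Fra\"{\i}ss\'{e} game is the standard such argument. The two monotonicity facts you isolate (pebble deletion and round reduction) are precisely what is needed to handle pebble reuse across the block boundary, so I see no gap.
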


From the above we infer that the equivalence class of an infinite
string under $\sim_k$ is determined by a prefix of the string and the
set of letters appearing infinitely often within it.
\begin{prop}
Fix $k \in \mathbb{N}$.  Given $u=u_0u_1\ldots \in \Sigma^\omega$, there
exists $N \in \mathbb{N}$ such that for all $n \geq N$ and any 
word $w \in \Sigma^\omega$ with
$\mathrm{inf}(w)=\mathrm{ran}(w)=\mathrm{inf}(u)$ it holds that
$u \sim_k u_0u_1\ldots u_{n}w$.
\label{prop:bound}
\end{prop}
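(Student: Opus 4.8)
The plan is to prove, by induction on $k$, a statement considerably stronger than the proposition, which tracks not merely the $k$-type at position $0$ but the whole type structure of the spliced word. Fix $u$ and put $I = \mathrm{inf}(u)$. Because only finitely many distinct $m$-types exist over $\Sigma$ (this is the finiteness underlying Proposition~\ref{prop:rep}), for each $m$ there is a threshold $M_m$ such that every $m$-type occurring anywhere in $u$ already occurs at some position $\le M_m$, while every $m$-type occurring at a position $>M_m$ occurs infinitely often in $u$. Writing $\Theta_m$ for the set of $m$-types occurring infinitely often in $u$ and $A_m$ for the set of all $m$-types occurring in $u$, this means that for all $n \ge M_m$ we have $\{\tau_m(u,i) : i \le n\} = A_m$ and $\{\tau_m(u,i) : i > n\} = \Theta_m$.

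I would then establish: for each $k$ there is $N_k$ (I take $N_k = \max(N_{k-1},M_{k-1})$) such that for all $n \ge N_k$ and all $w$ with $\mathrm{inf}(w)=\mathrm{ran}(w)=I$, writing $v = u_0\cdots u_n w$, the following three clauses hold: (A)~$\tau_k(u,i)=\tau_k(v,i)$ for every $i \le n$; (B)~$\tau_k(v,j) \in \Theta_k$ for every $j > n$; and (C)~every type of $\Theta_k$ occurs infinitely often among the positions $j > n$ of $v$. The proposition is exactly the case $i=0$ of clause~(A). The base case $k=0$ is immediate, since $\tau_0(x,i)$ is just the letter $x_i$: clause~(A) holds because $u$ and $v$ agree on positions $\le n$, and clauses~(B),(C) hold because the tail of $v$ is $w$, whose range and $\mathrm{inf}$-set both equal $I = \Theta_0$.

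The engine of the inductive step is the inductive characterisation of Proposition~\ref{prop:ind-char}, which says that $\tau_k(x,p)$ is determined by the triple consisting of the letter $x_p$, the set of $(k-1)$-types strictly to the left of $p$, and the set of $(k-1)$-types strictly to the right of $p$. For clause~(A) at a position $i \le n$ I compare these triples for $(u,i)$ and $(v,i)$: the letters agree; the left $(k-1)$-type sets agree termwise by clause~(A) of the induction hypothesis; and the right $(k-1)$-type sets agree because the contribution of positions in $(i,n]$ agrees termwise by~(A), while the contribution of positions $>n$ equals $\Theta_{k-1}$ on both sides --- for $u$ by the choice $n \ge M_{k-1}$, and for $v$ by clauses~(B) and~(C) of the induction hypothesis.

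For clauses~(B),(C) I analyse a tail position $j > n$ of $v$ through the same triple. Its letter $v_j$ lies in $\mathrm{ran}(w)=I$; its left $(k-1)$-type set equals $A_{k-1}$, since it already contains all prefix types (by~(A) of the induction hypothesis together with $n \ge M_{k-1}$) and the remaining tail contributions lie in $\Theta_{k-1}\subseteq A_{k-1}$; and its right $(k-1)$-type set equals $\Theta_{k-1}$, since by~(B),(C) of the induction hypothesis the positions $>n$ realise exactly the types of $\Theta_{k-1}$, each infinitely often. But the triple $(c,A_{k-1},\Theta_{k-1})$ with $c\in I$ is precisely the triple realised by any position $p>M_{k-1}$ of $u$ with $u_p=c$; such $p$ exists because $c\in\mathrm{inf}(u)$, and $\tau_k(u,p)\in\Theta_k$. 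Hence $\tau_k(v,j)=\tau_k(u,p)\in\Theta_k$, which is~(B); and since every letter of $I$ recurs infinitely often in $w$, every type of $\Theta_k$ is realised infinitely often in the tail of $v$, which is~(C). The main obstacle, and the reason the obvious route via the concatenation lemma (Proposition~\ref{prop:comp}) fails, lies exactly at the splice point: the genuine suffix $u_{n+1}u_{n+2}\cdots$ and the word $w$ need not be $\sim_k$-equivalent --- they may not even share a first letter --- so one cannot simply glue a common prefix onto equivalent tails. The strengthened invariant sidesteps this by recording only the \emph{set} of types realised beyond the splice rather than the tail word itself, and it is this that makes the induction self-sustaining.
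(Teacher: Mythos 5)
Your proof is correct and follows essentially the same route as the paper: both strengthen the statement to an invariant about all positions (termwise agreement of types on the prefix, plus control of the types realised beyond the splice point) and prove it by induction on quantifier rank via the characterisation of Proposition~\ref{prop:ind-char}. The only difference is bookkeeping --- the paper phrases the tail clause as ``positions $i,j$ beyond the threshold with $u_i=v_j$ have equal $s$-types'' and fixes $N$ via nested windows $n_0<\cdots<n_k$ each covering $\mathrm{inf}(u)$, whereas you track the sets $A_m$ and $\Theta_m$ of realised and infinitely-realised $m$-types directly.
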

\begin{proof}
Define a strictly increasing sequence of integers $n_0 < n_1 <
\ldots < n_k$ inductively as follows.

Let $n_0$ be such that for all $i \geq n_0$ letter $u_i$ occurs
infinitely often in $u$.  For $0 < s \leq k$ let $n_{s}$ be such that
$\mathrm{ran}(u_{n_{s-1}}\ldots u_{n_s})=
\mathrm{inf}(u)$.
Now define $N:=n_k$.

Let $n \geq N$ and let $v:=u_0u_1\ldots u_{n}w$ for some $w$ such that
$\mathrm{inf}(w)=\mathrm{ran}(w)=\mathrm{inf}(u)$.
We claim that for all $0 \leq s \leq k$:
\begin{enumerate}[(1)]
\item if $i \leq n_s$ then $\tau_s(u,i)=\tau_s(v,i)$;
\item if $i,j > n_s$ then $\tau_s(u,i)=\tau_s(v,j)$ if $u_i = v_j$.
\end{enumerate}
This claim entails the proposition.  We prove the claim by induction
on $s$.  The base case $s=0$ is obvious.

The induction step for Clause 1 is as follows.  Suppose that $i \leq
n_s$; we must show that $\tau_s(u,i)=\tau_s(v,i)$.  Certainly
$u_i=v_i$ since $u$ and $v$ agree in the first $N$ letters.  Similarly
for all $j < i$ we have $\tau_{s-1}(u,j)=\tau_{s-1}(u,j)$ by Parts 1
and 2 of the induction hypothesis.  Now for all $j > i$ there exists
$j' > i$ such that $u_j=v_{j'}$ and hence by Part 2 of the induction
hypothesis $\tau_{s-1}(u,j)=\tau_{s-1}(v,j')$.  We conclude that
$\tau_s(u,i)=\tau_s(v,i)$ by Proposition~\ref{prop:ind-char}.

The induction step for Clause 2 is as follows.  Suppose that $i,j >
n_s$ and $u_i=v_j$; we must show that $\tau_s(u,i)=\tau_s(v,j)$.  We
will again use Proposition~\ref{prop:ind-char}.  Certainly for all $i'
> i$ there exists $j' > j$ such that $u_{i'}=v_{j'}$ and hence
$\tau_{s-1}(u,i')=\tau_{s-1}(v,j')$.  Now let $i' < i$.  If $i' \leq
n_{s}$ then $i' < j$, $u_{i'}=v_{i'}$ and hence
$\tau_{s-1}(u,i')=\tau_{s-1}(v,i')$.  Otherwise suppose $n_s < i' <
i$.  By definition of $n_{s}$ there exists $j'$, $n_{s-1} < j' \leq
n_s$ such that $u_{i'}=v_{j'}$.  Then
$\tau_{s-1}(u,i')=\tau_{s-1}(u,j')$ by Clause 2 of the induction
hypothesis.
\end{proof}

Combining Proposition \ref{prop:comp} and Proposition \ref{prop:bound}, we complete
the proof of Proposition \ref{shortWord}, giving a slight strengthening of the conclusion
for infinite words.

\begin{lem}
For any string $u \in \Sigma^\omega$ and positive integer $k$ there
exists $v \in \Sigma^*$ with $|v| \in 2^{O(|\mathcal{P}|k)}$ such that $v \sim_k
u'$ for infinitely many prefixes $u'$ of $u$, and $u \sim_k
vw^\omega$, where $w$ is a list of the letters occurring infinitely
often in $u$.
\label{lem:parity}
\end{lem}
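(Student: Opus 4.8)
The plan is to combine the two already-proven ingredients—Proposition~\ref{prop:rep} (short representatives for finite words under $\sim_k$) and Proposition~\ref{prop:bound} (the equivalence class of an infinite word is determined by a prefix together with the infinitely-occurring letters)—with the compositionality property of Proposition~\ref{prop:comp}. The target statement strengthens the infinite-word half of Proposition~\ref{shortWord} by additionally requiring that the finite representative $v$ be $\sim_k$-equivalent to infinitely many prefixes of $u$, not merely that $vw^\omega \sim_k u$.

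First I would apply Proposition~\ref{prop:bound} to obtain a threshold $N$ such that $u \sim_k u_0\ldots u_n w'$ for every $n \geq N$ and every $w'$ with $\mathrm{inf}(w')=\mathrm{ran}(w')=\mathrm{inf}(u)$; taking $w$ to be a fixed finite listing of the letters in $\mathrm{inf}(u)$, the word $w^\omega$ satisfies this condition, so $u \sim_k u_0\ldots u_n w^\omega$ for all $n \geq N$. This already gives the second conclusion $u \sim_k vw^\omega$ once $v$ is chosen as a suitable $\sim_k$-representative of some such prefix. The key move for the length bound and the ``infinitely many prefixes'' clause is to look at the prefixes $p_n := u_0\ldots u_n$ for $n \geq N$. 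By Proposition~\ref{prop:rep}, each $p_n$ is $\sim_k$-equivalent to a word of length at most $|\Sigma|(2|\Sigma|+2)^k \in 2^{O(|\mathcal{P}|k)}$, and crucially the number of distinct $\sim_k$-classes of finite words is itself finite and bounded (it is at most doubly exponential, but what matters is that it is finite). Hence among the infinitely many prefixes $\{p_n : n \geq N\}$ some single $\sim_k$-class must be hit infinitely often by the pigeonhole principle. Let $v$ be the short representative (from Proposition~\ref{prop:rep}) of that class, so $|v| \in 2^{O(|\mathcal{P}|k)}$ and $v \sim_k p_n = u'$ for infinitely many prefixes $u'$ of $u$, giving the first conclusion.

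It remains to verify $u \sim_k v w^\omega$. Pick one prefix $p_n$ (with $n \geq N$) in the infinitely-hit class, so $v \sim_k p_n$. Since $w^\omega \sim_k w^\omega$ trivially, Proposition~\ref{prop:comp} (compositionality: $u_1 \sim_k v_1$ and $u_1' \sim_k v_1'$ imply $u_1 u_1' \sim_k v_1 v_1'$) yields $v w^\omega \sim_k p_n w^\omega$. But Proposition~\ref{prop:bound} gave $u \sim_k p_n w^\omega$, so transitivity of $\sim_k$ delivers $u \sim_k v w^\omega$, as required.

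The only real subtlety—and the place I would be most careful—is the interaction between the two clauses: I must select the single $\sim_k$-class $v$ so that it simultaneously (a) is hit by infinitely many prefixes and (b) can be composed with $w^\omega$ to recover $u$. The pigeonhole argument handles (a), but (b) relies on the chosen prefix lying beyond the threshold $N$ from Proposition~\ref{prop:bound}; since there are infinitely many prefixes in the chosen class, infinitely many of them exceed $N$, so I am free to pick one with $n \geq N$ and the composition step goes through. A minor point worth stating explicitly is that Proposition~\ref{prop:comp} as stated composes a finite word with an infinite word, which is exactly the shape $v \cdot w^\omega$ and $p_n \cdot w^\omega$ we need, so no extra machinery is required. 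I expect no genuine obstacle here; the work is entirely in assembling the three prior propositions in the right order and checking that the prefix witnessing the ``infinitely many'' clause can be taken past the bound $N$.
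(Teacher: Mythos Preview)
Your proposal is correct and follows essentially the same approach the paper indicates: the paper simply says the lemma follows by ``combining Proposition~\ref{prop:comp} and Proposition~\ref{prop:bound}'' (with Proposition~\ref{prop:rep} implicit for the length bound), and you have filled in precisely those details, including the pigeonhole step on the finitely many $\sim_k$-classes to secure the ``infinitely many prefixes'' clause.
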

\subsection{$\FO2$ and temporal logic}

We now examine the relationship between $\fotwo$ and   $\UTL$. Again we will be summarizing
previous  results while adding some new ones about the complexity of translation.

As mentioned previously, Etessami, Vardi and Wilke~\cite{fo2_utl} have studied the
expressiveness and complexity of $\fotwo$ on words.  
They show that $\fotwo$ has the same expressiveness as
unary temporal logic $\UTL$, giving a linear translation of $\UTL$ into
$\FO2$, and an exponential translation in the reverse direction.

\begin{lem}[\cite{fo2_utl}] \label{thm:fo2utl}
Every $\FOtwo$ formula $\varphi(x)$ can be converted to an equivalent
$\UTL$ formula $\varphi'$ with $|\varphi'| \in
2^{O(|\varphi|(\mathit{qdp}(\varphi)+1))}$ and $\mathit{odp}(\varphi')
\leq 2\,\mathit{qdp}(\varphi)$.  The translation 
runs in time polynomial in the size of the output.
\end{lem}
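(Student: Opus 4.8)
The plan is to route the translation through the \emph{$k$-types} of Section~\ref{sec:infrastructure}. Since a formula $\varphi(x)$ of quantifier depth $k$ is, by the definition of $\tau_k$, true at $(u,i)$ exactly when $\varphi \in \tau_k(u,i)$, its truth depends only on $\tau_k(u,i)$. Hence it suffices to build, for each realizable $k$-type $\tau$, a $\UTL$ formula $\xi_\tau$ that defines precisely the pointed words of type $\tau$; then $\varphi' := \bigvee_{\tau \,:\, \varphi \in \tau} \xi_\tau$ is equivalent to $\varphi$, and the real content becomes the construction of the $\xi_\tau$ with controlled operator depth and size. The equivalence $\FOtwo \equiv \UTL$ itself is classical~\cite{fo2_utl}, so the emphasis throughout is on making the two quantitative bounds explicit.

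I would construct the $\xi_\tau$ by induction on $k$, driven by the inductive characterisation of types in Proposition~\ref{prop:ind-char}. The base case $k=0$ is a propositional formula fixing the letter $u_i \in \Sigma = 2^{\mathcal P}$. For the step, Proposition~\ref{prop:ind-char} reduces $\tau_k(u,i)$ to (i) the letter at $i$, (ii) the set of $(k-1)$-types occurring strictly before $i$, and (iii) the set occurring strictly after. Given the formulas $\xi_\sigma$ for the $(k-1)$-types $\sigma$, the assertion ``$\sigma$ occurs strictly in the past'' is $\CIRCM\DIAMONDM\xi_\sigma$ and ``$\sigma$ occurs strictly in the future'' is $\CIRC\DIAMOND\xi_\sigma$, so a $k$-type is pinned down by a conjunction specifying the correct letter together with, for each $(k-1)$-type $\sigma$, the presence or absence of $\sigma$ on each side. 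As every $\xi_\sigma$ is wrapped in exactly two unary modalities, we get $\mathit{odp}(\xi_\tau) \le \mathit{odp}(\xi_\sigma)+2$, whence $\mathit{odp}(\varphi') \le 2k = 2\,\mathit{qdp}(\varphi)$ by induction; this is precisely the origin of the factor $2$. Two points must be grafted on: the successor predicate, which is absent from the $\FOtwoLT$ types of Section~\ref{sec:infrastructure}, is handled by additionally recording the letters (equivalently, the lower types) at the immediately adjacent positions, captured by the nesting-free $\CIRC$ and $\CIRCM$; and the whole construction is performed level by level, emitting each distinct subformula once, so that it runs in time polynomial in the size of the output.

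For the size bound I would count the distinct subformulas produced. The number of $(k-1)$-types that must be enumerated is controlled by the type-counting of Proposition~\ref{prop:rep} (and its extension to realizability across all words), giving $2^{O(|\mathcal P|(k-1))}$ candidates, so each $\xi_\tau$ is a conjunction over single-exponentially many previously-built formulas. Summing the distinct subformulas generated over the $k$ levels, and using $|\mathcal P|,k \le |\varphi|$, yields the target $2^{O(|\varphi|(\mathit{qdp}(\varphi)+1))}$. Finally, taking the disjunction over all $k$-types containing $\varphi$ produces $\varphi'$ without disturbing either bound.

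The step I expect to be the main obstacle is the bookkeeping that keeps the operator depth at exactly $2\,\mathit{qdp}(\varphi)$ \emph{and} the size single-exponential at the same time. The operator-depth budget is tight: a naive expansion of a $\mathrm{suc}$-atom (``some position at distance at least two'') threatens to cost three nested modalities rather than two, so the delicate part is to fold the successor information into the inductive type so that each level still spends only two modalities. For the size, the danger is that enumerating, at every level, which $(k-1)$-types are present and which absent compounds \emph{multiplicatively} across the $k$ levels into a double-exponential; avoiding this is exactly where the polynomial-in-$|\Sigma|$ (for fixed depth) bound on the number of realizable types from Proposition~\ref{prop:rep} is essential, and where I would take most care to verify that the per-level blow-up contributes additively, rather than multiplicatively, to the exponent.
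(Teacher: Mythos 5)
The paper does not reprove this lemma---it is imported wholesale from Etessami, Vardi and Wilke \cite{fo2_utl}, whose argument is a syntax-directed quantifier elimination: one repeatedly takes an innermost subformula $\exists y\,\chi(x,y)$, splits $\chi$ into a disjunction over the possible order types of $(x,y)$ (equality, successor, strictly before/after), and replaces each disjunct $\exists y\,(\delta(x,y)\wedge\beta(x)\wedge\gamma(y))$ by a $\UTL$ formula spending at most two modalities (e.g.\ $\CIRC\DIAMOND\gamma$); the exponential size comes from the disjunctive-normal-form conversion performed at each elimination step. Your route through $k$-types is genuinely different, and it founders on the size bound.

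The gap is the count of types. Proposition~\ref{prop:rep} bounds the number of $k$-types occurring \emph{along a single word} by $|\Sigma|(2|\Sigma|+2)^k$; it says nothing about the number of $k$-types realizable \emph{across all pointed words}, and it is the latter that your construction enumerates, both in the outer disjunction $\bigvee_{\tau\ni\varphi}\xi_\tau$ (which must cover every model of $\varphi$) and inside each $\xi_\tau$ (which must assert, for \emph{every} $(k-1)$-type $\sigma$, whether $\sigma$ occurs in the past and in the future). Already for $k=1$ the global count is doubly exponential in $|\mathcal{P}|$: by Proposition~\ref{prop:ind-char} a $1$-type is exactly a triple $(a,A,B)$ with $a\in\Sigma$ and $A,B\subseteq\Sigma$ the sets of letters occurring strictly before and strictly after the distinguished position, essentially every such triple is realized by some pointed word, and $\Sigma=2^{\mathcal{P}}$, so there are $|\Sigma|\cdot 2^{\Theta(|\Sigma|)}=2^{\Theta(2^{|\mathcal{P}|})}$ of them. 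Since $|\varphi|$ need only be polynomial in $|\mathcal{P}|$, this already exceeds $2^{O(|\varphi|(\mathit{qdp}(\varphi)+1))}$, and the recursion compounds the problem at higher $k$. A secondary worry: the types of Section~\ref{sec:infrastructure} are for $\FOtwoLT$, and a successor-augmented inductive characterisation must distinguish ``the type at position $i+1$'' from ``types at positions $\geq i+2$'', which naively costs $\CIRC\CIRC\DIAMOND$, i.e.\ three modalities per quantifier level rather than two. To salvage a type-based argument one would have to work with types relative to the subformulas of $\varphi$ only (as Translation~II does), but then Proposition~\ref{prop:ind-char} no longer drives the induction and the construction has to be rebuilt from scratch.
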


With regard to complexity,~\cite{fo2_utl} shows that satisfiability
for $\fotwo$ over finite words or $\omega$-words is NEXP-complete.
The NEXP upper bound follows immediately from their ``small model''
theorem (see Proposition \ref{shortWord} stated earlier).
NEXP-hardness is by reduction from a tiling problem.  This reduction
requires either the use of the successor predicate, or consideration
of models where an arbitrary Boolean combination of predicates can
hold, that is, they consider words over an alphabet of the form
$\Sigma = 2^{\{P_1, P_2, \ldots, P_n\}}$.

The NEXP-hardness result for $\FOtwoLT$ does not carry over from
satisfiability to model checking since the collection of alphabet
symbols that can appear in a word generated by the system being
checked is bounded by the size of the system.  However the complexity
of model checking is polynomially related to the complexity of
satisfiability when the latter is measured as a function of both
formula size and alphabet size.  Hence in the rest of the section we
will deal with words over alphabet $\Sigma = \{ P_0, P_1, \ldots, P_n
\}$, i.e., in which a unique proposition holds in each position.  We
call this the \emph{unary alphabet restriction}.

 One obvious approach to obtaining upper bounds for model checking
 $\FOtwoLT$ would be to give a polynomial translation to $\TL$, and
 use logic-to-automata translation for $\TL$.  Without the unary
 alphabet restriction an exponential blow-up in translating from
 $\FOtwoLT$ to $\TL$ was shown necessary by Etessami, Vardi, and
 Wilke:

\begin{prop}[\cite{fo2_utl}]
There is a sequence $(\psi_n)_{n \ge 1}$ of $\FOtwoLT$ sentences
over $\{P_1, P_2$, $\ldots, P_n\}$ of size ${\rm{O}}(n^2)$ such that the
shortest temporal logic formula equivalent to $\psi_n$ has size
$2^{\Omega(n)}$.
\end{prop}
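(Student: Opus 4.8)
The plan is to realise the gap with the agreeing-identifiers family from the introduction and then to prove the temporal lower bound by \emph{communication complexity}, which is what keeps the exponent alive.

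First I would take, over the alphabet $\Sigma = 2^{\{P_1,\ldots,P_n\}}$, the $\FOtwoLT$ sentence
\[ \psi_n \;:=\; \exists x \, \exists y \, \Bigl( x < y \wedge \bigwedge_{1 \le i \le n} \bigl(P_i(x) \leftrightarrow P_i(y)\bigr) \Bigr), \]
which asserts that two distinct positions carry the same $n$-bit label, i.e.\ that \emph{some letter repeats}. Each biconditional is of constant size, so $|\psi_n| = O(n)$, comfortably within the claimed $O(n^2)$; this settles the upper half. The point to beat is that the obvious temporal equivalent $\bigvee_{a \in \Sigma} \DIAMOND\bigl(\chi_a \wedge \CIRC \DIAMOND \chi_a\bigr)$, with $\chi_a$ the size-$n$ characteristic formula of the letter $a$, has $2^n$ disjuncts and thus only bounds the translation from above.

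For the lower bound I would fix the universe $U = \{0,1\}^n$ and reduce set disjointness to membership in $L(\psi_n)$: Alice encodes $S_A \subseteq U$ as a word $w_A$ listing its elements once each, Bob encodes $S_B$ as $w_B$, and since the elements within each of $w_A, w_B$ are distinct, $w_A w_B$ has a repeated letter exactly when $S_A \cap S_B \neq \emptyset$. The crucial structural fact is that any $\UTL$ formula $\varphi$ defining $L(\psi_n)$ yields a \emph{deterministic} protocol across the split between $w_A$ and $w_B$ of cost $\mathrm{poly}(|\varphi|)$: the truth of each subformula at a position reduces to the types at the boundary together with a handful of ``holds somewhere in my part'' summaries, and the players compute these by exchanging, bottom-up through the operator nesting, the $O(|\varphi|)$-bit vectors recording which subformulas hold on each side. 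Hence the deterministic communication complexity of disjointness over $U$, which is $\Omega(|U|) = \Omega(2^n)$, satisfies $\Omega(2^n) \le \mathrm{poly}(|\varphi|)$, forcing $|\varphi| = 2^{\Omega(n)}$.

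The step I expect to be the main obstacle is precisely the one that rules out the naive route. A temporal formula of size $s$ compiles only to a nondeterministic automaton with $2^{O(s)}$ states, and $L(\psi_n)$ genuinely has automata of size $2^{\Theta(n)}$, so any state lower bound collapses to the useless $|\varphi| = \Omega(n)$. The exponent is preserved only by arguing against formulas directly, and the communication argument does so because formula size controls the information crossing a single cut \emph{polynomially} rather than exponentially. The delicate bookkeeping is to verify that the ``holds-somewhere'' summaries are computable with no circular cross-cut dependency --- which holds because the dependencies respect operator nesting for the $\DIAMOND, \DIAMONDM, \CIRC, \CIRCM$ fragment --- and, should one want the bound against unrestricted $\LTL$, to extend the summaries to $\mathcal{U}$ and $\mathcal{S}$ while keeping the protocol cost polynomial in $|\varphi|$.
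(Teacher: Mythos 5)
Your proposal is correct in its essentials, but it is a genuinely different route from the one the paper takes: the paper offers no proof of this proposition at all, citing~\cite{fo2_utl} and merely displaying the witness used there, $\psi_n = \forall x\,\forall y\,(\bigwedge_{i<n}(P_i(x)\leftrightarrow P_i(y)) \rightarrow (P_n(x)\leftrightarrow P_n(y)))$, whereas you take the existential ``some letter repeats'' sentence from the introduction and prove the lower bound by reduction from set disjointness. The load-bearing claim in your argument --- that a $\UTL$ formula of size $s$ can be evaluated across a single cut of the word with $O(s)$ bits of deterministic communication --- is true and is worth isolating as a lemma: processing subformulas bottom-up, each $\DIAMOND\theta$ or $\DIAMONDM\theta$ costs one ``$\theta$ holds somewhere on my side'' bit, each $\CIRC\theta$ or $\CIRCM\theta$ one boundary bit, and there is no circularity because each player already knows the truth values of $\theta$ at all of its own positions before the enclosing operator is handled. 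The extension to $\mathcal{U}$ and $\mathcal{S}$, which you flag as delicate but which the statement does require (``temporal logic'' here includes until and since), is in fact painless: until is purely future-looking, so the right-hand player evaluates it locally and contributes a single bit recording its truth at the first position of its block, and dually for since. Combined with the fooling-set bound $D(\mathrm{DISJ}_N)\ge N$ this gives $|\varphi|=\Omega(2^{n})$, which is even slightly stronger than claimed, and the argument is robust to any polynomial overhead in the protocol. What your approach shares with the cited one is the essential use of the exponential alphabet $2^{\{P_1,\ldots,P_n\}}$: like the proof in~\cite{fo2_utl}, yours does not survive the unary alphabet restriction (your reduction needs $2^{\Omega(n)}$ distinct letters), which is consistent with the paper's remark to that effect and with its separate treatment of that case via Lemmas~\ref{lem:FO2long} and~\ref{lem:TLletshort}. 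One small repair is needed: the paper works over $\omega$-words, so padding $w_Aw_B$ to an infinite word must not create a spurious repetition; it suffices to reserve $P_1$ as a ``data'' marker, encode the universe as $\{1\}\times\{0,1\}^{n-1}$, add the conjunct $P_1(x)$ to $\psi_n$, and pad with $\emptyset^{\omega}$.
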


The sequence given in~\cite{fo2_utl} to prove the above theorem is
$$\psi_n = \forall x ~ \, \forall y ~\,
(\bigwedge_{i<n}(P_i(x) \leftrightarrow P_i(y)) \rightarrow
(P_n(x) \leftrightarrow P_n(y))).$$ In particular, their proof does not apply
under the unary alphabet restriction.  However below we show that the
exponential blow-up is necessary even in this restricted setting.  Our
proof is indirect; it uses the following result about extensions of
$\FOtwo$ with let definitions:

\begin{lem}\label{lem:FO2long}
There is a sequence $(\varphi_n)_{n \ge 1}$ of $\FOtwoLTlet$ sentences
mentioning predicates $\{P_1, P_2, \ldots, P_n\}$ such that the shortest
model of $\varphi_n$ under the unary alphabet restriction has size
$2^{\Omega(|\varphi_n|)}$.
\end{lem}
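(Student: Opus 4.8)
My plan is to exploit the combinatorics of $\FOtwoLT$ types rather than to build an explicit counter. The naive idea—writing a formula that forces a long increasing chain of positions, or that forces consecutive blocks to encode successive values of a counter—runs into a fundamental obstruction: over two variables without the successor predicate, one cannot express "betweenness'' (there is no position strictly between $x$ and $y$), so neither block-adjacency nor interval-splitting is available. Consequently any self-referential definition of the form $\Theta_{j}(x) := \exists y\,(y<x \wedge \Theta_{j-1}(y)) \wedge \exists y\,(y>x \wedge \Theta_{j-1}(y))$ merely asserts ``$\geq j$ positions on each side'' and collapses to a linear bound, because the recursively demanded witnesses can be reused across the implicit tree. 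The escape is that $\FOtwoLT$ itself admits exponentially many $k$-types: the bound in the proof of Proposition~\ref{prop:rep} is essentially tight, so there are words realising $2^{\Omega(k)}$ distinct $k$-types, and such words must have length $2^{\Omega(k)}$. The plan is to use let-definitions to name a single such ``expensive'' type succinctly, and then assert that it is realised.

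Concretely I would proceed as follows. First, over an alphabet with $\Theta(n)$ letters I would construct a self-similar word $u_n$ for which the bound of Proposition~\ref{prop:rep} is met up to a constant in the exponent, so that $u_n$ realises $2^{\Omega(n)}$ pairwise distinct $n$-types; here the construction should double the number of boundary points (first/last occurrences of $(n-1)$-types) at each level. Let $\sigma_n$ be the $n$-type of a position $x^\ast$ of $u_n$ whose left-set $\{\tau_{n-1}(u_n,i):i<x^\ast\}$ consists of \emph{all} $(n-1)$-types occurring in $u_n$. Second, using the inductive characterisation of Proposition~\ref{prop:ind-char}, I would define by let a hierarchy of unary predicates $\Theta_0,\dots,\Theta_n$, where $\Theta_j(x)$ holds exactly when $\tau_j(\cdot,x)$ is the relevant level-$j$ type of the family, the definition of $\Theta_j$ referring to $\Theta_{j-1}$ at positions below and above $x$. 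The two-point substitution built into the let mechanism (substituting at both $x$ and $y$) is precisely what mirrors the left/right recursion of Proposition~\ref{prop:ind-char}, and it keeps each definition of constant size; the whole sentence $\varphi_n := \exists x\,\Theta_n(x)$ then has $|\varphi_n| \in O(n)$. This is the essential use of let: the equivalent $\FOtwoLT$ formula (the Hintikka formula of $\sigma_n$) has length $2^{\Omega(n)}$, yet its quantifier depth is only $\Theta(n)$, and let compresses it back to linear size.

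For correctness, satisfiability is witnessed by $u_n$ itself, and the lower bound is the heart of the argument: if $w \models \varphi_n$ then some position $x^\ast$ has $\Theta_n(x^\ast)$, and by Proposition~\ref{prop:ind-char} the word $w$ must realise, strictly to the left of $x^\ast$, every $(n-1)$-type in the family; since these are $2^{\Omega(n)}$ pairwise distinct types, $w$ has at least that many positions. The step I expect to be the main obstacle is making this last count genuinely \emph{forced}: one must show that the $2^{\Omega(n)}$ subordinate types demanded by $\sigma_n$ cannot be conflated, i.e.\ that no shorter model arises by the collapse of Proposition~\ref{prop:collapse}. This is exactly the point where the two-variable, no-successor limitation bites, and it has to be handled through the type combinatorics (an Ehrenfeucht--Fra\"iss\'e / $\sim_k$ analysis showing the family of types is rigid) rather than by any direct counting argument; dovetailing the self-similar construction of $u_n$ with the let-compressed definition of $\Theta_n$ so that both the distinctness and the $O(n)$ size hold simultaneously is the delicate part of the proof.
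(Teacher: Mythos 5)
There is a genuine gap, and it sits exactly where you predicted it would: the count is never forced. Your plan is to use a linear chain of let-definitions $\Theta_0,\dots,\Theta_n$, each referring only to its predecessor, to pin down a single ``expensive'' $n$-type $\sigma_n$ whose left-set contains $2^{\Omega(n)}$ distinct $(n-1)$-types, and then to conclude that any model realising $\sigma_n$ has exponential length. But by Proposition~\ref{prop:ind-char}, asserting ``$x$ has type $\sigma_n$'' requires saying that the set of $(n-1)$-types realised to the left of $x$ is \emph{exactly} the left-set of $\sigma_n$; since that set has exponentially many pairwise distinct elements, any formula expressing this must name exponentially many distinct subordinate types. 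A chain of $n$ let-definitions yields only $n$ named predicates, so the bodies can refer to at most a linear number of subordinate types in total; the only assertions such a chain can make are of the weak form you yourself identify as collapsing to linear models (witnesses reused across the implicit tree, cf.\ Proposition~\ref{prop:collapse}). Appealing to ``type combinatorics'' or an Ehrenfeucht--Fra\"{\i}ss\'e analysis does not repair this, because the obstruction is not about whether suitably rigid words exist but about whether a linear-size $\FOtwoLTlet$ sentence can \emph{force} their realisation---and no forcing mechanism is supplied. Relatedly, the claim that the let-compressed $\Theta_n$ expands to the Hintikka formula of $\sigma_n$ is unsubstantiated: expanding $n$ let-definitions produces a formula generated by a linear-size DAG of distinct subformula bodies, whereas that Hintikka formula needs exponentially many structurally distinct conjuncts.

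The paper's proof supplies precisely the missing mechanism, and it is the ``direct counting argument'' you ruled out prematurely. You correctly observe that without successor one cannot force \emph{consecutive} blocks to encode successive counter values, but the paper does not need adjacency: it defines $R_i(x)$ by ``$\exists y\,(y\le x \wedge P_i(y) \wedge \bigwedge_{k<i}(R_k(x)\leftrightarrow R_k(y)))$'' (each $R_i$ referring to \emph{all} earlier $R_k$, not just $R_{i-1}$), and the body of the sentence asserts that for every $x$ and every $i\le n$ some $y$ realises the vector $(R_1,\dots,R_n)$ with exactly bit $i$ flipped. Since the single-bit-flip graph on $\{0,1\}^n$ is connected, this closure condition forces all $2^n$ vectors to be realised somewhere in the word, hence length at least $2^n$; satisfiability is then witnessed by the explicit recursive word $w^{(k+1)}=w^{(k)}P_{n-k}w^{(k)}$, along which the vector counts down in binary. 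Everything is expressible in size polynomial in $n$ because distinctness of positions is never asserted directly---it follows by pigeonhole from the forced distinctness of the vectors.
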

\begin{proof}
We define $\varphi_n$ as follows.  
\begin{table}[h!]
$\begin{array}{l}
\varphi_n = 
\olet ~R_1(x) ~\obe~ \exists y \,(y \le x \wedge P_1(y))~\oin \\
\olet ~R_2(x) ~\obe~ \exists y \,(y \le x \wedge P_2(y) \wedge (R_1(x) 
\leftrightarrow R_1(y)))~\oin \\
\ldots \\
\olet ~ R_n(x) ~\obe~ \exists y\left(y \le x \wedge P_n(y) \wedge 
\displaystyle\bigwedge_{k=1}^{n-1} (R_k(x) \leftrightarrow R_k(y))\right)\\
\oin \,  \forall x \, 
\displaystyle\bigwedge_{i=1}^n 
\exists y  \, \left(
( \neg(R_i(x) \leftrightarrow R_i(y)) \wedge
\displaystyle\bigwedge_{j \neq i} (R_j(x) \leftrightarrow R_j(y))\right)
\end{array}$
\end{table}

\noindent The body of the nested sequence of let definitions states
that for all $x$ and for all $1 \leq i \leq n$ there exists $y$ such
that the vector of formulas $(R_1(x), R_2(x), \ldots, R_n(x))$ has the
same truth value as the vector $(R_1(y), R_2(y), \ldots, R_n(y))$ in
all but position $i$.  Hence the vector $(R_1(x), R_2(x), \ldots,
R_n(x))$ must take all $2^n$ possible truth values as $x$ ranges over
all positions in the word, i.e., any model of $\varphi_n$ must have
length at least $2^n$.

We now claim that $\varphi_n$ is satisfiable.  To show this,
recursively define a sequence of words $w^{(k)}$ over alphabet $\Sigma
= \{P_0,P_1,\ldots,P_n\}$ by $w^{(0)} = \varepsilon$ and $w^{(k+1)} =
w^{(k)} P_{n-k} w^{(k)}$, where $0 \leq k < n$.  Finally write $w =
w_n P_0$.  Then the vector of truth values $(R_1(x), R_2(x), \ldots,
R_n(x))$ counts down from $2^n-1$ to $0$ in binary as one moves along
$w$.
\end{proof}

In contrast, we show that basic temporal logic enhanced with let
definitions has the small model property:

\begin{lem}\label{lem:TLletshort}
There is a polynomial $poly$ such that every satisfiable $\TLlet$ formula
$\varphi$ has a model of size $poly(|\varphi|)$.
\end{lem}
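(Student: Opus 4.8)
The plan is to reduce to pure $\TL$ and then exploit the fact that the eventually/once operators induce a monotone structure on truth regions. First I would pass from the given $\TLlet$ formula $\varphi$ to the equivalent $\TL$ formula $T(\varphi)$; by Lemma~\ref{lem:temp_closure} the set $\subf(\varphi)=\subf(T(\varphi))$ has size $N$ linear in $|\varphi|$. I then fix any model $u \in \Sigma^\omega$ with $(u,0)\models T(\varphi)$, recalling that under the unary alphabet restriction $|\Sigma|=n+1$ is itself polynomial in $|\varphi|$. For a position $i$ I define its \emph{type} $t(i)=\{\theta\in\subf(\varphi):(u,i)\models\theta\}$, and its \emph{profile} $\pi(i)$ to be the restriction of $t(i)$ to the subformulas of the form $\DIAMOND\psi$ or $\DIAMONDM\psi$.

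The key structural observation is that, since $\TL$ contains neither $\CIRC$/$\CIRCM$ nor $\mathcal{U}$/$\mathcal{S}$, every subformula is atomic, a Boolean combination, or a single eventually/once applied to a subformula. Hence for $\DIAMOND\psi$ the set of positions satisfying it is a prefix $\{0,\dots,r\}$ (or all of $\mathbb{N}$), and for $\DIAMONDM\psi$ it is a suffix $\{l,l+1,\dots\}$; each such region is monotone and so changes value at most once. With at most $N$ eventually/once subformulas there are therefore at most $N$ switch points, partitioning the positions into at most $N+1$ maximal intervals on which $\pi$ is constant. A straightforward induction on subformulas shows that $t(i)$ is determined by the pair $(u_i,\pi(i))$: atoms are read off $u_i$, Booleans follow inductively, and the eventually/once values are recorded in $\pi(i)$. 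Consequently at most $(N+1)\cdot|\Sigma|$ distinct types occur along $u$, which is a polynomial bound.

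Next I would collapse $u$ to bounded length. I claim that whenever $i<j$ with $t(i)=t(j)$ one may delete the positions strictly between $i$ and $j$ and still satisfy $T(\varphi)$ at position $0$; this is the $\TL$ analogue of Proposition~\ref{prop:collapse}, proved by induction on subformulas. The only nontrivial case is a $\DIAMOND\psi$ evaluated at a surviving $p\le i$ whose sole witness lay in the deleted block; but then $\DIAMOND\psi\in t(i)=t(j)$ forces a further witness at some position $\ge j$, which survives, and symmetrically for $\DIAMONDM\psi$ at a surviving $p\ge j$. Applying this exhaustively to the transient part of the word (everything up to the last switch point), any two surviving positions of equal type become adjacent, so each type occurs at most twice; the transient part therefore has length at most $2(N+1)|\Sigma|$.

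It remains to shorten the infinite tail, on which $\pi$ has stabilised to some $\pi_\infty$ and types depend only on letters. Here I would replace the tail by $w^\omega$, where $w$ lists each letter of $\mathrm{inf}(u)$ exactly once, so that the set of letters occurring infinitely often is unchanged; combined with the collapsed prefix $v$ this yields an ultimately periodic word $vw^\omega$ with $|v|+|w|$ polynomial in $|\varphi|$. The main obstacle, present both in the collapse and in this tail replacement, is the circular dependency that a position's type is pinned down by its profile while the profile is justified by the existence of witnesses elsewhere in the word. I would resolve it exactly as in the proof of Proposition~\ref{prop:bound}, by an induction on operator depth showing that preserving the transient prefix together with $\mathrm{inf}(u)$ preserves every $\DIAMOND$/$\DIAMONDM$ value, and hence the type at position $0$. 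This produces a model $vw^\omega$ of polynomial size, as required.
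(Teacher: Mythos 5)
Your proposal is correct and follows essentially the same route as the paper: use Lemma~\ref{lem:temp_closure} to bound $|\subf(\varphi)|$ linearly despite the let definitions, observe that each $\DIAMOND$/$\DIAMONDM$ subformula is monotone and so switches value at most once (giving polynomially many types under the unary alphabet restriction), and then collapse repeated types and replace the tail by a short periodic word. The only difference is that the paper delegates the type-collapsing and lasso construction to Section~5 and Lemma~4 of~\cite{fo2_utl}, whereas you re-derive them directly.
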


\begin{proof}
In~\cite[Section 5]{fo2_utl}, Etessami, Vardi, and Wilke prove a small
model property for $\TL$, which follows the same lines as the one
given for $\FO2$, but with polynomial rather than exponential bounds
on sizes. Instead of using types based on quantifier-rank, the notion
of type is based on the nesting of modalities; they thus look at modal
$k$-type, where $k$ is the nesting of modalities in $\varphi$.  It was
shown how to collapse infinite $\omega$-words in order to get
"smaller" $\omega$-words with essentially the same type
structure. Then in Lemma $4$ of \cite{fo2_utl} it is shown that for
each $u \in \Sigma^\omega$ there are strings $v$, $w$ such that the
type of $u$ at position $0$ is equal to the type of $vw^\omega$ at
position $0$ and the length of both $v$ and $w$ is less than
$(t+1)^2$, where $t$ is number of types occurring along $u$ (that is,
a polynomial version to Proposition \ref{shortWord}).
 
 The type is determined by the predicate and the combination of
 temporal subformulas of $\varphi$ holding at the given position. Each
 temporal subformula, i.e. subformula which starts with
 $\DIAMOND$~or~$\DIAMONDM$, can change its truth value at most once
 along the infinite word. Therefore there are at most polynomially
 many (in $|\Sigma|$ and in number of temporal subformulas of
 $\varphi$) different combinations and so also types along $u$.

Lemma \ref{lem:temp_closure} tells us that number of temporal subformulas of $\varphi$ is 
linear in $|\varphi|$, and therefore the number of types
$t$ occurring along any word is polynomial in $|\varphi|$.  Thus
applying the above-mentioned type-collapsing argument of~\cite{fo2_utl}
we conclude that there is a polynomial size model of $\varphi$.
\end{proof}
 
The small model property for $\TLlet$ will allow the lifting of NP
model-checking results to this language. Most relevant to our
discussion of succinctness, it can be combined with the previous
result to show that $\FOtwoLT$ is succinct with respect to $\TL$:

\begin{prop}\label{thm:nofo2tl}
Even assuming the unary alphabet restriction, there is no polynomial
translation from $\FOtwoLT$ formulas to equivalent $\TL$-formulas.
\end{prop}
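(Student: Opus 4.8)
The plan is to argue by contradiction, squeezing between the two preceding lemmas. Lemma~\ref{lem:FO2long} supplies $\FOtwoLTlet$ sentences whose every model is exponentially large, whereas Lemma~\ref{lem:TLletshort} guarantees that every satisfiable $\TLlet$ formula has a polynomial-size model; a polynomial translation from $\FOtwoLT$ to $\TL$ would let us pass from the first world to the second, which is impossible.

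So suppose, for contradiction, that there is a translation $t$ sending each $\FOtwoLT$ formula to an equivalent $\TL$ formula (equivalent over unary-alphabet words) of polynomial size. The first step is to \emph{lift} $t$ to a polynomial translation from $\FOtwoLTlet$ to $\TLlet$. A let binding is semantically transparent: the construct $\olet ~ P_i(x) ~ \obe ~ \varphi_1(x) ~ \oin ~ \varphi_2$ simply names, by the fresh predicate $P_i$, the set of positions at which $\varphi_1$ holds, so replacing $\varphi_1$ and $\varphi_2$ by logically equivalent formulas produces an equivalent formula. Hence I would apply $t$ to each let-bound $\FOtwoLT$ definition $\varphi_1(x)$ and to the $\FOtwoLT$ body $\varphi_2$ --- treating every already-bound predicate $R_j$ as an ordinary unary predicate --- and leave the surrounding nest of let bindings untouched. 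Since each of the linearly many components is translated with only polynomial blow-up, the resulting $\TLlet$ formula has size polynomial in that of the input.

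Applying this lifted translation to the sequence $(\varphi_n)$ of Lemma~\ref{lem:FO2long} yields $\TLlet$ formulas $\varphi_n'$ with $|\varphi_n'|$ polynomial in $|\varphi_n|$ and agreeing with $\varphi_n$ on all unary-alphabet words. The proof of Lemma~\ref{lem:FO2long} exhibits an explicit (unary-alphabet) model of $\varphi_n$, so $\varphi_n'$ is satisfiable over unary-alphabet words; the type-collapsing construction underlying Lemma~\ref{lem:TLletshort} only merges and deletes positions and hence preserves the alphabet, so it furnishes a unary-alphabet model of $\varphi_n'$ of size $poly(|\varphi_n'|)$, which is polynomial in $|\varphi_n|$. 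This model of $\varphi_n'$ is also a model of $\varphi_n$, since the two agree on unary-alphabet words. But Lemma~\ref{lem:FO2long} forces every model of $\varphi_n$ to have size $2^{\Omega(|\varphi_n|)}$, which exceeds the polynomial bound for all large $n$ --- a contradiction. Therefore no polynomial translation from $\FOtwoLT$ to $\TL$ exists, even under the unary alphabet restriction.

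The main obstacle is the first step, verifying that $t$ lifts through the let bindings while preserving equivalence. The delicate point is that in the body $\varphi_2$ a let-bound predicate $P_i$ may be applied to both variables, as $P_i(x)$ and $P_i(y)$, while its $\TL$ translation speaks only of the current position; one must check, against the expansion function $T$, that substituting the single $\TL$ definition for both occurrence-patterns still yields, after expansion, a formula equivalent to expanding the original $\FOtwoLTlet$ formula. This goes through because $T$ builds the two substitution instances $T(\varphi_1)$ and $T(\varphi_1)[x \mapsto y]$ from one and the same defining formula, so equivalence of the defining $\FOtwoLT$ and $\TL$ formulas propagates to both instances; the remaining verification is routine structural bookkeeping.
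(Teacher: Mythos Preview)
Your proof is correct and follows essentially the same approach as the paper: assume a polynomial translation, lift it componentwise through the let bindings to obtain a polynomial $\FOtwoLTlet \to \TLlet$ translation, and derive a contradiction between Lemmas~\ref{lem:FO2long} and~\ref{lem:TLletshort}. You supply considerably more detail than the paper (which dispatches the argument in three sentences), particularly on the unary-alphabet bookkeeping and the $P_i(x)$ versus $P_i(y)$ substitution issue, but the structure is identical.
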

\begin{proof}
Proof by contradiction. Assuming there were such a polynomial
translation, we could apply it locally to the body of every let
definition in an $\FOtwoLTlet$ formula.  This would allow us to
translate an $\FOtwoLTlet$ formula to a $\TLlet$ formula of polynomial
size.  Therefore it would follow from Lemma \ref{lem:TLletshort} that
every $\FOtwoLTlet$ formula that is satisfiable has a polynomial sized
model, which is a~contradiction of Lemma \ref{lem:FO2long}.
\end{proof}

Proposition~\ref{thm:nofo2tl} shows that we cannot obtain better
bounds for $\FOtwoLT$ merely by translation to $\TL$.  Weis
\cite{WeisPhd} showed an NP-bound on satisfiability of $\FOtwoLT$
under the unary alphabet restriction (compared to NEXP-completeness of
satisfiability in the general case).  His approach is to show that
models realise only polynomially many types.  We will later show that
the approach of Weis can be extended to obtain complexity bounds for
model checking $\FOtwoLT$ that are as low as one could hope, i.e.,
that match the complexity bounds for the simplest temporal logic,
$\TL$.  We do so by building sufficiently small unambiguous B\"uchi
automata for $\FOtwoLT$ formulas.

\section{Translations}
\label{sec:trans}

This section contains a key contribution of this paper---three
logic-to-automata translations for $\UTL$, $\FO2$, and $\FOtwoLT$.  We
will later use these translations to obtain upper complexity bounds
for model checking both non-deterministic and probabilistic
systems. As we will show, for most of the problems it is sufficient to
translate a given formula to an unambiguous B\"uchi automaton.  Our
first translation produces such an automaton from a given $\UTL$
formula.  This is then lifted to full $\FO2$ via a standard syntactic
transformation from $\FO2$ to $\UTL$.  Our second translation goes
directly from the stuffer-free fragment $\FOtwoLT$ to unambiguous
B\"{u}chi automata, and is used to obtain optimal bounds for this
fragment.  Our third translation constructs a deterministic parity
automaton from an $\FO2$ formula.  Having a deterministic automaton is
necessary for solving two-player games and quantitative model checking
of Markov decision processes.

\subsection{Translation I: From UTL to unambiguous B\"uchi automata}
\label{subsec:utl2ba}
We begin with a translation that takes $\UTL$ formulas to B\"{u}chi
automata.  Combining this with the standard syntactic transformation
of $\FO2$ to $\UTL$, we obtain a translation from $\FO2$ to B\"{u}chi
automata.

Recall from the preliminaries section that a B\"{u}chi automaton $A$
is said to be \emph{deterministic in the limit} if all accepting
states and their descendants are deterministic, and that $A$ is
\emph{unambiguous} if each word has at most one accepting run.

We will aim at the following result:
\begin{thm}
\label{BW:detLimit}
Let $\varphi$ be a {UTL} formula over set of propositions $\mathcal{P}$ with
operator depth $n$ with respect to $\CIRC$ and $\CIRCM$.  Given an
alphabet $\Sigma\subseteq 2^{\mathcal{P}}$, there is a family of at most
$2^{|\varphi|^2}$ B\"{u}chi automata $\{ A_i\}_{i \in I}$ such that
(i)~$\{ w \in \Sigma^\omega : w \models \varphi\}$ is the disjoint
union of the languages $L(A_i)$; (ii)~$A_i$ has at most
$O(|\varphi||\Sigma|^{n+1})$ states; (iii)~$A_i$ is unambiguous and
deterministic in the limit; (iv) there is a polynomial-time procedure
that outputs $A_i$ given input $\varphi$ and index $i \in I$.
\end{thm}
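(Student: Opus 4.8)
The plan is to exploit the defining structural feature of the unary operators. Since $\DIAMOND\psi$ asserts the existence of a present-or-future witness for $\psi$, and $\DIAMONDM\chi$ the existence of a present-or-past witness, any subformula whose outermost connective is $\DIAMOND$ or $\DIAMONDM$ changes its truth value \emph{at most once} as the evaluation position moves rightward: a $\DIAMOND$-formula is true on an initial segment and false thereafter, and a $\DIAMONDM$-formula is false on an initial segment and true thereafter. Call these subformulas the \emph{eventualities} of $\varphi$; there are at most $|\varphi|$ of them. For a fixed $\omega$-word each eventuality has a well-defined (at most one) \emph{switch point}, and the switch points occur in a definite order. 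I would therefore let the index set $I$ enumerate \emph{profiles}: a linear order in which some subset of the eventualities switch, together with a Boolean assignment recording, for each eventuality, its truth value in the infinite tail (i.e. whether it is ``eventually true''). The number of such profiles is $2^{O(|\varphi|\log|\varphi|)} \le 2^{|\varphi|^2}$, giving the cardinality claim; crucially, every $\omega$-word induces exactly one profile, which is what will force the union to be disjoint.

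Fixing a profile $i$, I would build a generalised B\"uchi automaton $A_i$ whose states are pairs consisting of a \emph{phase} and a sliding window of recently read letters. The phase is an index into the switch-order of profile $i$, ranging over at most $|\varphi|+1$ values, and within a single phase every eventuality has a fixed, profile-determined truth value. Because the operator depth of $\varphi$ with respect to $\CIRC$ and $\CIRCM$ is $n$, once each eventuality is replaced by its phase-constant the truth value of any subformula at the current position becomes a predicate over a bounded window of letters; reading with bounded delay $n$, a window covering the letters needed to resolve all $\CIRC$/$\CIRCM$-subformulas suffices, and there are $O(|\Sigma|^{n+1})$ such windows. This yields the state bound $O(|\varphi||\Sigma|^{n+1})$ of clause~(ii). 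On reading a letter the automaton either stays in its phase or advances to the next, the advance being permitted exactly at the position that profile $i$ designates for the corresponding switch; advancing past a $\DIAMOND$-eventuality commits that its body never holds again (a safety condition, locally checkable since all eventualities are now constants), and dually for $\DIAMONDM$. The acceptance family $\mathcal{F}$ contains, for each eventuality declared eventually true of the infinitely-often kind, a set forcing its body to hold infinitely often in the final phase, together with a set forcing the final phase to be reached and revisited.

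For correctness I would show that for a word $w$ of profile $i$ satisfying $\varphi$ at position $0$ there is \emph{exactly one} run of $A_i$: the phase advances are forced to occur precisely at the true switch points (the automaton cannot advance early, or a later witness would violate the safety commitment, nor late, or a body value would be misevaluated), and the window evolves deterministically. This gives unambiguity (at most one accepting run), and combined with the disjointness of profiles it gives the disjoint-union property~(i); for a word of a different profile either a safety commitment or an infinitely-often obligation fails, so $A_i$ rejects it. Clause~(iii), deterministic in the limit, is immediate: in the final phase there are no remaining phase-advance choices, so every state reachable from it (in particular every accepting state) is deterministic, and unambiguity follows as above. For~(iv), the states, transitions and acceptance sets of $A_i$ are all specified by local conditions on the window and the fixed profile $i$, each decidable in polynomial time, so $A_i$ is output in polynomial time from $\varphi$ and $i$.

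The single most delicate point — and the one I expect to be the main obstacle — is obtaining a state count that is \emph{linear} in $|\varphi|$. A naive construction that tracks the subset of currently-true eventualities would be exponential; the device that avoids this is to push the \emph{order} of the switches into the index $i$, so that within one automaton a single phase counter, rather than a subset, pins down all eventuality truth values. Making this sound requires verifying that each switch obligation — a safety condition after a $\DIAMOND$-switch and a B\"uchi condition for the eventually-true eventualities — can be checked locally even under arbitrary nesting of eventualities; this is exactly what fixing every eventuality value per phase secures, since it collapses each subformula to a window predicate.
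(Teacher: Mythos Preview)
Your approach is essentially the paper's, recast more directly: what you call a \emph{profile} is exactly a maximal path through the SCC DAG of Wolper's subformula-type automaton $A^\Sigma_\varphi$, and your phase counter is the position along that path. The paper observes (its Lemma~\ref{lem:observations}) that two states lie in the same SCC iff they agree on all $\DIAMOND$- and $\DIAMONDM$-subformulas, so each SCC is one of your phases, and restricting $A^\Sigma_\varphi$ to a single SCC path yields exactly your $A_i$. The paper handles $\CIRC/\CIRCM$ by composing with a deterministic $n$-letter-buffer transducer rather than your inline window, but that is cosmetic.

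There is, however, a real gap in your transition discipline. You list only two switch obligations: a safety commitment \emph{after} a $\DIAMOND$-switch (the body never holds again) and, dually, one \emph{before} a $\DIAMONDM$-switch. These alone do not pin down the switch position, so unambiguity fails. Take $\varphi=\DIAMOND P$ and $w=\{P\}\,\emptyset^\omega$: with only your post-switch safety the automaton may advance from the ``$\DIAMOND P$ true'' phase to the ``$\DIAMOND P$ false'' phase at \emph{any} position $\ge 1$, and every such run is accepting. Your clause ``nor late, or a body value would be misevaluated'' does not rescue this, because the automaton never \emph{checks} an eventuality against the word --- it merely uses the phase-dictated constant. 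The missing ingredient is that the advance across a $\DIAMOND\psi$-switch must be permitted only from a position where $\psi$ currently holds (and dually a $\DIAMONDM\chi$-switch only into a position where $\chi$ holds). The paper gets this for free from Wolper's transition rule~(ii): $\DIAMOND\psi\in\vs$ and $\psi\notin\vs$ force $\DIAMOND\psi\in\vt$, so $\DIAMOND\psi$ cannot be dropped until $\psi$ has just been seen. Add that constraint and your construction goes through and coincides with the paper's.
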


We first outline the construction of the family $\{A_i\}$.  Let
$\varphi$ be a formula of $\TL$ over set of atomic
propositions~$\mathcal{P}$.  Following Wolper's construction
\cite{wolper}, define $\mathit{cl}(\varphi)$, the \emph{closure} of
$\varphi$, to consist of all subformulas of $\varphi$ (including
$\varphi$) and their negations, where we identify $\neg\neg\psi$ with
$\psi$.  Furthermore, say that $\vs \subseteq \mathit{cl}(\varphi)$ is
a \emph{subformula type} if (i)~for each formula
$\psi\in\mathit{cl}(\varphi)$ precisely one of $\psi$ and $\neg\psi$
is a member of $\vs$; (ii)~$\psi \in \vs$ implies $\DIAMOND \psi,
\DIAMONDM \psi \in \vs$; (iii)~$\psi_1 \wedge \psi_2 \in \vs$ iff
$\psi_1 \in \vs$ and $\psi_2 \in \vs$.  Given subformula types $\vs$
and $\vt$, write $\vs \sim \vt$ if $\vs$ and $\vt$ agree on all
formulas whose outermost connective is a temporal operator, i.e., for
all formulas $\psi$ we have $\DIAMOND\psi\in\vs$ iff
$\DIAMOND\psi\in\vt$, and $\DIAMONDM\psi\in\vs$ iff
$\DIAMONDM\psi\in\vt$.  Note that these types are different from the
types based on modal depth considered before.

Fix an alphabet $\Sigma \subseteq 2^{\mathcal{P}}$ and write
$\mathit{tp}^\Sigma_\varphi$ for the set of subformula types $\vs
\subseteq \mathit{cl}(\varphi)$ with $\vs \cap P \in \Sigma$.  In
subsequent applications $\Sigma$ will arise as the set of
propositional labels in a structure to be model checked.
Following~\cite{wolper} we define a generalised B\"{u}chi automaton
$A^\Sigma_{\varphi} = (\Sigma,S,S_0,\Delta,\lambda,\mathcal{F})$ such
that $L(A^\Sigma_{\varphi}) = \{ w \in \Sigma^\omega : (w,0) \models
\varphi\}$.  The set of states is $S=\mathit{tp}^\Sigma_\varphi$, with
the set $S_0$ of initial states comprising those $\vs \in
\mathit{tp}^\Sigma_\varphi$ such that (i)~$\varphi \in \vs$ and
(ii)~$\DIAMONDM\psi \in\vs$ if and only if $\psi\in \vs$ for any
formula $\psi$.  The state labelling function $\lambda : S \rightarrow
\Sigma$ is defined by $\lambda(\vs) = \vs \cap P$.  The transition
relation $\Delta$ consists of those pairs $(\vs,\vt)$ such that
\begin{enumerate}[(i)]
\item[(i)]
$\DIAMONDM \psi \in \vt$ iff either
$\psi\in\vt$ or $\DIAMONDM\psi \in \vs$;
\item[(ii)]
$\DIAMOND\psi\in\vs$ and
$\psi\not\in\vs$ implies
$\DIAMOND\psi \in \vt$;
\item[(iii)]
$\neg\DIAMOND\psi\in\vs$ implies
$\neg\DIAMOND\psi\in\vt$.
\end{enumerate}
The collection of accepting sets is $\mathcal{F} = \{
F_{\DIAMOND\psi} : \DIAMOND\psi \in
\mathit{cl}(\varphi) \}$, where $F_{\DIAMOND\psi} = \{
\vs : \psi\in\vs \mbox{ or }
\DIAMOND\psi\not\in\vs \}$.

A run of $A^\Sigma_{\varphi}$ on a word $u \in \Sigma^\omega$ yields a
function $f : \mathbb{N} \rightarrow 2^{\mathit{cl}(\varphi)}$.
Moreover it can be shown that if the run is accepting then for all
formulas $\psi \in \mathit{cl}(\varphi)$, $\psi \in f(i) \Rightarrow
(u,i) \models \psi$~\cite[Lemma 2]{wolper}.  But since $f(i)$ contains
each subformula or its negation, we have $\psi \in f(i)$ if and only
if $(u,i) \models \psi$ for all $\psi \in \mathit{cl}(\varphi)$.  We
conclude that $A^\Sigma_{\varphi}$ is unambiguous and accepts the
language $L(\varphi)$.  The following lemma summarises some structural
properties of the automaton $A^\Sigma_{\varphi}$.

\begin{lem}
Consider the automaton $A^\Sigma_{\varphi}$ as a directed graph with
set of vertices $S$ and set of edges $\Delta$.  Then (i)~states $\vs$
and $\vt$ are in the same strongly connected component iff $\vs \sim
\vt$; (ii)~each strongly connected component has size at most
$|\Sigma|$; (iii)~the dag of strongly connected components has depth
at most $|\varphi|$ and outdegree at most $2^{|\varphi|}$; (iv)
$A^\Sigma_{\varphi}$ is deterministic within each strongly connected
component, i.e., given transitions $(\vs,\vt)$ and $(\vs,\vu)$ with
$\vs,\vt$ and $\vu$ in the same strongly connected component, we have
$\vt=\vu$ if and only if $\lambda(\vt)=\lambda(\vu)$.
\label{lem:observations}
\end{lem}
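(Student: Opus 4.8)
The four claims describe the graph structure of the automaton $A^\Sigma_\varphi$, and each is a direct consequence of the definition of $\Delta$ and the equivalence $\sim$ on subformula types. My overall strategy is to analyze how a transition $(\vs,\vt)$ constrains the temporal subformulas $\DIAMOND\psi$ and $\DIAMONDM\psi$, and to observe that once the ``temporal profile'' of a state stabilizes, the only remaining freedom is the propositional label $\vs\cap P$. I would prove the claims in the order (i), (ii), (iv), (iii), since (ii) and (iv) follow quickly from (i), and (iii) is cleanest once the SCC structure is understood.

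\textbf{Claims (i) and (ii).} First I would show that along any transition, the set of $\DIAMONDM$-formulas can only grow and the set of $\DIAMOND$-formulas can only shrink (in a suitable sense). Concretely, from clause~(i) of the definition of $\Delta$, if $\DIAMONDM\psi\in\vs$ then $\DIAMONDM\psi\in\vt$, so the collection $\{\psi:\DIAMONDM\psi\in\vs\}$ is nondecreasing along any path; and from clauses~(ii),(iii), the collection $\{\psi:\neg\DIAMOND\psi\in\vs\}$ is nondecreasing as well. Hence if $\vs$ and $\vt$ lie in a common cycle (i.e.\ are in the same SCC), both collections must be constant around the cycle, which forces $\vs\sim\vt$. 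For the converse, given $\vs\sim\vt$, I would exhibit a path from $\vs$ to $\vt$: since all temporal components agree, I can move freely between the two states (and through any intermediate state with the same temporal profile) by checking that the transition clauses impose no obstruction once $\DIAMOND$- and $\DIAMONDM$-data are fixed. This establishes~(i). Claim~(ii) is then immediate: within one SCC all states share the same temporal profile, so a state is determined by its propositional component $\vs\cap P\in\Sigma$, giving at most $|\Sigma|$ states per SCC.

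\textbf{Claims (iv) and (iii).} For (iv), fix transitions $(\vs,\vt)$ and $(\vs,\vu)$ with all three states in one SCC. By~(i) they share the same temporal profile, so $\vt$ and $\vu$ can differ only in $\vt\cap P$ versus $\vu\cap P$, i.e.\ in their labels; thus $\vt=\vu$ iff $\lambda(\vt)=\lambda(\vu)$, which is exactly determinism-within-SCC. For~(iii), I would use the monotonicity observations from~(i): passing from one SCC to a strictly later one in the dag, either the $\DIAMONDM$-set strictly grows or the $\neg\DIAMOND$-set strictly grows, and each such set is a subset of the (at most $|\varphi|$-many) subformulas. Since each strict move increases a nonnegative integer quantity bounded by $|\varphi|$, the depth of the dag is at most $|\varphi|$. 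The outdegree bound $2^{|\varphi|}$ is a crude count: a successor SCC is determined by its temporal profile, and there are at most $2^{|\mathit{cl}(\varphi)|}\le 2^{|\varphi|}$ such profiles (up to the chosen constant).

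\textbf{Main obstacle.} The delicate part is the converse direction of~(i)---that $\vs\sim\vt$ actually implies they lie in the same SCC, rather than merely that same-SCC implies $\sim$. I would need to verify carefully that when the temporal profiles coincide, one can construct an actual path in the automaton realizing the round trip, which requires checking that the consistency conditions defining subformula types together with the transition clauses admit the needed intermediate states. This amounts to confirming that no clause of $\Delta$ secretly blocks a transition between two states with identical temporal data, so I expect the bulk of the work to be this path-existence argument.
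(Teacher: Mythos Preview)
Your approach is essentially the same as the paper's, and is correct. The one point worth flagging is that you have misjudged where the work lies in part~(i). What you call the ``main obstacle''---showing that $\vs\sim\vt$ implies $\vs$ and $\vt$ are in the same SCC---is in fact immediate: if $\vs\sim\vt$ then the pair $(\vs,\vt)$ satisfies all three clauses in the definition of $\Delta$ directly (using, for clause~(i), that $\psi\in\vt$ forces $\DIAMONDM\psi\in\vt$ by the subformula-type axioms). So $(\vs,\vt)\in\Delta$ and symmetrically $(\vt,\vs)\in\Delta$, and the two states lie on a $2$-cycle. No path through intermediate states is needed. The paper dispatches this direction in a single sentence.

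Conversely, the direction you treat as routine---that same-SCC implies $\vs\sim\vt$---is the one that actually requires the monotonicity observations you made, though it is still short. Your argument for~(iii), using monotonicity of both the $\DIAMONDM$-set and the $\neg\DIAMOND$-set, is fine and arguably a bit more careful than the paper's version, which mentions only the $\DIAMOND$-formulas; either way the depth bound follows from the total count of temporal subformulas being at most $|\varphi|$.
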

\begin{proof}
(i)
If $\vs \sim \vt$ then by definition of the transition relation
$\Delta$ we have that $(\vs,\vt) \in \Delta$.  Thus $\vs$ and $\vt$
are in the same connected component.  Conversely, suppose that $\vs$
and $\vt$ are in the same connected component.  By clauses (i) and
(iii) in the definition of the transition relation $\Delta$ we have
that $\DIAMONDM \psi \in \vs$ iff $\DIAMONDM \psi \in \vt$ and
likewise $\neg\DIAMOND \psi \in \vs$ iff $\neg\DIAMOND \psi \in \vt$.
But for each formula $\psi \in
\mathit{cl}(\varphi)$ either $\vs$ contains $\psi$ or its negation,
and similarly for $\vt$; it follows that $\vs \sim \vt$.

(ii) If $\vs \sim \vt$, then $\vs = \vt$ if and only if
  $\lambda(\vs)=\lambda(\vt)$.  Thus the number of states in an SCC is at
  most the number $|\Sigma|$ of labels.

(iii) Suppose that $(\vs,\vt) \in \Delta$ is an edge connecting two
  distinct SCC's, i.e., $\vs \not\sim \vt$.  Then there is a
  subformula $\DIAMOND \psi \in \vs$ such that
  $\neg\DIAMOND \psi \in \vt$.  Note that $\neg\DIAMOND \psi$ lies in
  all states reachable from $\vt$ under $\Delta$.  Since there at most
  $|\varphi|$ such subformulas, we conclude that the depth of the DAG
  of SCC's is at most $|\varphi|$.

(iv) This follows immediately from (i).
\end{proof}

We proceed to the proof of Theorem~\ref{BW:detLimit}.
\begin{proof}
We first treat the case $n=0$, i.e., $\varphi$ does not mention
$\CIRC$ or $\CIRCM$.

Let $A^\Sigma_{\varphi} = (\Sigma,S,S_0,\Delta, \lambda, \mathcal{F})$
be the automaton corresponding to $\varphi$, as defined above.  For
each path $\pi=C_0,C_1,\ldots,C_k$ of SCC's in the SCC dag of
$A^\Sigma_{\varphi}$ we define a sub-automaton $A_\pi$ as follows.
$A_\pi$ has set of states $S_\pi = C_0 \cup C_1 \cup \cdots \cup C_k$;
its set of initial states is $S_0 \cap S_\pi$; its transition relation
is $\Delta_\pi = \Delta \cap (S_\pi \times S_\pi)$, i.e., the
transition relation of $A^\Sigma_{\varphi}$ restricted to $S_\pi$; its
collection of accepting states is $\mathcal{F}_\pi = \{ F \cap C_k : F
\in \mathcal{F} \}$.

It follows from observations (ii) and (iii) in
Lemma~\ref{lem:observations} that $A_\pi$ has at most
$|\varphi||\Sigma|$ states, and from observation (iii) that there are
at most $2^{|\varphi|^2}$ such automata.  Since $A^\Sigma_{\varphi}$
is unambiguous, each accepting run of $A^\Sigma_{\varphi}$ yields an
accepting run of $A_\pi$ for a unique path $\pi$, and so the
$L(A_\pi)$ partition $L(A^\Sigma_{\varphi})$.

Finally, $A_\pi$ is deterministic in the limit since all accepting
states lie in a bottom strongly connected component, and all states in
such a component are deterministic by 
Lemma~\ref{lem:observations}(iv).  If we convert $A_\pi$ from a
generalised B\"{u}chi automaton to an equivalent B\"{u}chi automaton
(using the construction from~\cite{wolper}), then the resulting
automaton remains unambiguous and deterministic in the limit.  This
transformation touches only the bottom strongly connected component of
$A_\pi$, whose size will become at most quadratic.

This completes the proof in case $n=0$.  The general case can be
handled by reduction to this case.  A {UTL} formula $\varphi$ can be
transformed to a normal form such that all next-time $\CIRC$ and
last-time $\CIRCM$ operators are pushed inside the other Boolean and
temporal operators.  Now the formula can be regarded as a $\TL$
formula $\varphi'$ over an extended set of propositions $\{ \CIRC^i P,
\CIRCM^i P : 0 \leq i \leq n, P \in \mathcal{P}\}$.  Applying the case
$n=0$ to $\varphi'$ we obtain a family of automata $\{A'_i\}$ over
alphabet $\Sigma' = 2^{\mathcal{P}'}$ such that
$L(A^{\Sigma'}_{\varphi'}) = \bigcup_i L(A'_i)$, $A'_i$ is unambiguous
and deterministic in the limit, and $A'_i$ has at most
$O(|\varphi'||\Sigma'|)=O(|\varphi||\Sigma|^n)$ states.

Now we can construct a deterministic transducer $T$ with $|\Sigma|^n$
states that transforms (in the natural way) an $\omega$-word over
alphabet $\Sigma$ into an $\omega$-word over alphabet $\Sigma'$.  Such
a machine can be made deterministic by having $T$ produce its
output $n$ positions behind the input.  To do this we maintain an
$n$-place buffer in the states of $T$, which requires $|\Sigma|^n$
states.

We construct automaton $A_i$ over alphabet $\Sigma$ by composing
$A_i'$ with $T$, i.e., by synchronising the output of $T$ with the
input of $A'_i$.  The number of states of the composition is the
product of the number of states of $A'_i$ and $T$ which are consistent
with respect to their label in $\Sigma'$.  Thus the product has at most
$O(|\varphi||\Sigma|^{n+1})$ states.

This completes the proof of Theorem~\ref{BW:detLimit}.
\end{proof}

From Theorem~\ref{BW:detLimit} we can get a translation of $\FOtwo$
to automata with bounds as stated below:

\begin{thm}
\label{thm:FO2_aut}
Given an $\FOtwo$ formula $\varphi$, there is a collection of
$2^{2^{\mathit{poly}(|\varphi|)}}$ generalised B\"uchi automata $A_i$,
each of size at most $2^{\mathit{poly}(|\varphi|)}$ such that the
languages they accept partition the language $\{w\in \Sigma^\omega : w
\models \varphi \}$. Moreover, each automaton $A_i$ is unambiguous and
can be constructed by a non-deterministic Turing machine in polynomial
time in its size.
\end{thm}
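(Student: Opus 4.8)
The plan is to obtain the $\FOtwo$ translation by composing two pieces we already have: the exponential translation from $\FOtwo$ to $\UTL$ (Lemma~\ref{thm:fo2utl}) and the $\UTL$-to-automata translation (Theorem~\ref{BW:detLimit}). First I would take the given $\FOtwo$ formula $\varphi$ and apply Lemma~\ref{thm:fo2utl} to produce an equivalent $\UTL$ formula $\varphi'$ with $|\varphi'| \in 2^{O(|\varphi|(\mathit{qdp}(\varphi)+1))} = 2^{\mathit{poly}(|\varphi|)}$ and operator depth $\mathit{odp}(\varphi') \leq 2\,\mathit{qdp}(\varphi)$, hence $\mathit{odp}(\varphi')$ is only \emph{linear} in $|\varphi|$. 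This size/depth bookkeeping is the crux of getting the stated bounds, so I would track it carefully.

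Next I would feed $\varphi'$ into Theorem~\ref{BW:detLimit}. That theorem, applied to a $\UTL$ formula of size $m := |\varphi'|$ and $\CIRC,\CIRCM$-depth $n := \mathit{odp}(\varphi')$ over alphabet $\Sigma \subseteq 2^{\mathcal{P}}$, yields a family of at most $2^{m^2}$ B\"uchi automata $\{A_i\}$, each unambiguous (and deterministic in the limit), each with at most $O(m\,|\Sigma|^{n+1})$ states, whose accepted languages partition $L(\varphi') = L(\varphi)$. Substituting $m \in 2^{\mathit{poly}(|\varphi|)}$ gives the number-of-automata bound $2^{m^2} \in 2^{2^{\mathit{poly}(|\varphi|)}}$, matching the double-exponential count claimed. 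For the per-automaton size I would note $|\Sigma| \leq 2^{|\mathcal{P}|} \leq 2^{|\varphi|}$, so $|\Sigma|^{n+1} \in 2^{O(n|\varphi|)} \subseteq 2^{\mathit{poly}(|\varphi|)}$ since $n$ is linear in $|\varphi|$; multiplying by $m \in 2^{\mathit{poly}(|\varphi|)}$ keeps each $A_i$ of size $2^{\mathit{poly}(|\varphi|)}$.

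For the constructibility claim, I would observe that Lemma~\ref{thm:fo2utl} produces $\varphi'$ in time polynomial in its (exponential) output size, so $\varphi'$ can be written down by a nondeterministic (indeed deterministic) machine in time $2^{\mathit{poly}(|\varphi|)}$; and Theorem~\ref{BW:detLimit}(iv) gives a polynomial-time procedure outputting each $A_i$ from $\varphi'$ and an index $i$. Composing these, a nondeterministic machine can guess the index $i$ and produce $A_i$ in time polynomial in $|\varphi'| + |A_i| = 2^{\mathit{poly}(|\varphi|)}$, i.e.\ polynomial in the output size $|A_i|$, which is exactly the stated guarantee. Unambiguity of each $A_i$ is inherited directly from Theorem~\ref{BW:detLimit}(iii), and the partition property is inherited from Theorem~\ref{BW:detLimit}(i) together with $L(\varphi') = L(\varphi)$.

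The main obstacle is purely the nesting of the exponential estimates: one must verify that the \emph{squaring} of $m$ in the $2^{m^2}$ count still collapses into a single $2^{2^{\mathit{poly}(|\varphi|)}}$ bound (it does, since $m^2 \in 2^{\mathit{poly}(|\varphi|)}$), and that the exponent $n+1$ on $|\Sigma|$ does not escalate the per-automaton size beyond a single exponential --- this is precisely where the linear bound $\mathit{odp}(\varphi') \leq 2\,\mathit{qdp}(\varphi)$ from Lemma~\ref{thm:fo2utl} is essential, since a merely exponential depth bound would have pushed the automaton size to a double exponential. I would therefore highlight that the favourable depth bound in the $\FOtwo$-to-$\UTL$ translation is what makes the single-exponential automaton size achievable.
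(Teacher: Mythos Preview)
Your proposal is correct and follows essentially the same approach as the paper: compose Lemma~\ref{thm:fo2utl} with Theorem~\ref{BW:detLimit} and track the size and depth bounds. In fact you spell out the arithmetic more carefully than the paper does, including the key observation that the linear bound on operator depth (rather than merely the exponential bound on formula size) is what keeps each $A_i$ singly exponential.
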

\begin{proof}
First we apply Lemma \ref{thm:fo2utl} to translate the $\FOtwo$
formula $\varphi$ to an equivalent UTL formula $\varphi'$.  We then
apply Theorem \ref{BW:detLimit} to $\varphi'$, noting that the size of
$\varphi'$ is exponential in the size of $\varphi$, while the operator
depth of $\varphi'$ is polynomial in the quantifier depth of
$\varphi$. Finally, we apply Theorem \ref{BW:detLimit} to $\varphi'$.
\end{proof}

\subsection{Translation II: From $\FOtwoLT$ to unambiguous B\"uchi automata}
The previous translation via $\UTL$ will be useful for giving bounds
on verifying both $\UTL$ and $\FO2$.  However it does not give insight
into the sublanguage $\FOtwoLT$.  We will thus give another
translation specific to this fragment.  The main idea for getting upper
bounds on verification problems for $\FOtwoLT$ will be to show that
for any $\FOtwoLT$ formula $\varphi$, the number of one-variable
subformula types realised along a finite or infinite word is
polynomial in the size of $\varphi$. Informally these subformula types
are the collections of one-variable subformulas of $\varphi$ that
might hold at a given position. Note that the types we consider here
are collections of $\FOtwoLT$ formulas, not temporal logic formulas as
in the last section. Also note the contrast with the $k$-types of
Proposition \ref{shortWord}, which consider all formulas of a given
quantifier rank.

Recall that the \emph{domain} of a word $u \in
\Sigma^* \cup \Sigma^\omega$ is the set $\mathrm{dom}(u) = \{ i \in
\mathbb{N} : 0 \leq i < |u| \}$ of positions in $u$.  
Given an $\FOtwoLT$-formula $\varphi$,
let $\mathrm{cl}(\varphi)$ denote the set of all subformulas of
$\varphi$ with at most one free variable (including atomic predicates).  Given a finite or infinite word
$u \in \Sigma^* \cup \Sigma^\omega$, a position
$i \in \mathrm{dom}(u)$, we define
the \emph{subformula type of $u$ at position $i$} to be the set of $\FOtwoLT$
formulas
\[ \tau(u,i) = \{ \psi : \psi \in \mathrm{cl}(\varphi) \mbox{ and }
                                 (u,i) \models \psi \} \, .\] We have omitted $\varphi$ in this notation 
since it will be fixed for the remainder of the proof.

\myparagraph{Few Types Property for $\FOtwoLT$}
We will base our result on the following theorem of
Weis~\cite{WeisPhd}, showing that $\FOtwoLT$ formulas divide a finite
word into a small number of segments based on subformula type:

\begin{prop}[\cite{WeisPhd}] 
\label{thm:fewtypesfin}
Let $\varphi$ be an $\FOtwoLT$-formula. A string $u \in \Sigma^*$ can
be written $u=v_1\ldots v_n$, where $v_i \in \Sigma^*$, $n$ is
polynomial in $|\varphi|$ and $|\Sigma|$, and for any two positions
$i, j$ lying in the same factor $v_k$ having the same symbol,
$\tau(u,i) = \tau(u,j)$.
\end{prop}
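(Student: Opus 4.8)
The plan is to build the factorisation by an induction that processes the one-variable subformulas of $\varphi$ in order of increasing complexity, maintaining a factorisation $u = v_1\cdots v_n$ with the invariant that within each current factor any two positions carrying the same letter agree on the truth of all subformulas processed so far. The whole point is to show that the boundary set grows by only a controlled, polynomial amount, so that in the end $n$ is $\mathit{poly}(|\varphi|,|\Sigma|)$. The first observation, which localises where boundaries can come from, is that \emph{only existential subformulas} can force a new cut. Indeed, the truth of an atomic $P$ at a position depends solely on the letter there, and the truth of a Boolean combination at a position is a function of its immediate constituents at that same position; hence two positions in a factor that share a letter and already agree on all proper subformulas automatically agree on every atomic or Boolean subformula built from them. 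Thus it suffices to control, over the at most $|\varphi|$ subformulas of the form $\psi = \exists y\,\chi(x,y)$, the cuts each one contributes.

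The second ingredient is a locality lemma in the spirit of Proposition~\ref{prop:ind-char}. Because $\FOtwoLT$ lacks the successor relation, the matrix $\chi(x,y)$ splits along the three order cases $y<x$, $y=x$, $y>x$, and in each case becomes a Boolean combination of subformulas evaluated at $x$ alone and at $y$ alone. The case $y=x$ is local, i.e. determined by the proper-subformula type at $i$. For the case $y<x$, whether $(u,i)\models \exists y<x\,\chi$ is determined by the proper-subformula type at $i$ together with the set $L_i$ of proper-subformula types occurring strictly to the left of $i$; symmetrically the case $y>x$ is determined by the type at $i$ and the set $R_i$ of types occurring strictly to the right. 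The crucial structural feature is \emph{monotonicity}: as $i$ increases, $L_i$ can only grow and $R_i$ can only shrink. Consequently, if one fixes a condition $\gamma$ on proper subformulas and asks the predicate ``some earlier position satisfies $\gamma$'', this predicate is globally nondecreasing in $i$ and so flips from false to true at a single, globally well-defined switch point; dually, ``some later position satisfies $\gamma$'' has a single switch point from true to false. These switch points are intrinsic to $u$ and $\gamma$ and do not depend on the factorisation being built.

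The main obstacle, and the genuinely hard part of Weis's theorem, is to convert these ``at most one switch per condition'' statements into a \emph{polynomial} bound on $n$ rather than an exponential one. The naive accounting couples the number of realised proper-subformula types to the number of factors already produced: each existential subformula seems to contribute on the order of (number of lower types) $\le$ (number of current factors)$\cdot|\Sigma|$ fresh cuts, and iterating this over the quantifier depth---which may be linear in $|\varphi|$---yields only a bound like $(2|\Sigma|+1)^{|\varphi|}$. Breaking this coupling is the crux. The route is to phrase the left/right information not through the set of all lower types but through the \emph{conditions actually arising from} $\varphi$: after normalising each matrix $\chi$ within each order case, one obtains for $\psi$ a family of monotone ``exists-to-the-left'' and ``exists-to-the-right'' predicates, each with a single intrinsic switch point as above, and one takes the global boundary set to be the union of all these switch points over the $O(|\varphi|)$ existential subformulas. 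Between two consecutive global boundaries, every existential subformula is then constant on each letter-class, while all atomic and Boolean subformulas are letter-determined by the induction hypothesis, so the invariant holds and the factorisation is valid.

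The delicate technical heart, where the two-variable restriction is indispensable, is to show that the normalisation of the $\chi$'s produces only $\mathit{poly}(|\varphi|,|\Sigma|)$ distinct such monotone predicates, so that the global boundary set has polynomial size. A crude conversion of $\chi$ to disjunctive normal form would reintroduce an exponential number of conjunctive $y$-conditions, re-creating the blow-up; the resolution exploits that with only the variables $x$ and $y$ the $y$-side conditions are themselves $\FOtwoLT$ formulas of strictly smaller complexity, so that the inductive hypothesis can be applied to them and the relevant conditions can be organised into a polynomially bounded family rather than an exponential one. I expect this bookkeeping---establishing that the monotone predicates needed at each level number polynomially many, independently of the factorisation produced at lower levels---to be the step requiring real care, and it is precisely this decoupling that upgrades the argument from the easy exponential bound to the polynomial bound asserted in the statement.
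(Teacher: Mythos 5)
A preliminary remark on the comparison you were asked to survive: the paper does not prove Proposition~\ref{thm:fewtypesfin} at all --- it is imported as a black box from Weis's thesis \cite{WeisPhd} --- so there is no in-paper argument to measure your attempt against; it has to be judged as a free-standing proof.

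Judged that way, your proposal correctly identifies the architecture a proof must have: only existential subformulas can create boundaries; in $\FOtwoLT$ the matrix of $\exists y\,\chi$ splits along the order cases $y<x$, $y=x$, $y>x$ into Boolean combinations of one-variable formulas; and predicates of the form ``some position to the left (resp.\ right) satisfies $\gamma$'' are monotone in the position, hence contribute a single switch point each. All of that is right. But the proposition \emph{is} the claim that the total number of switch points is polynomial in $|\varphi|$ and $|\Sigma|$, and on exactly that claim your text stops where the proof would have to begin. You diagnose the obstruction accurately --- a DNF expansion of $\chi$ yields exponentially many conjunctive $y$-conditions, and indexing the monotone predicates by realised proper-subformula types is circular, since the number of realised types is what is being bounded --- and then assert that ``the relevant conditions can be organised into a polynomially bounded family'' without exhibiting that family, proving it is polynomially bounded, or showing that knowledge of its switch points determines the truth of every existential subformula on each letter-class within a factor. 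Your final sentence concedes that this decoupling is ``the step requiring real care''; but that step is the entire content of the theorem, since the exponential analogue follows from a routine first/last-occurrence count in the style of Proposition~\ref{prop:rep}. As written, the proposal is an accurate map of where the difficulty lies rather than an argument that surmounts it, so the central quantitative claim remains a genuine gap.
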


We will need an extension of this result to infinite words:
\begin{prop}
\label{thm:fewtypesinf}
Let $\varphi$ be an $\FOtwoLT$-formula. A string $u \in \Sigma^\omega$
can be written $u=v_1\ldots v_n$, where $v_k \in \Sigma^*$ for $k<n$
and $v_n \in \Sigma^\omega$, $n$ is
polynomial in $|\varphi|$ and $|\Sigma|$, and for any two positions $i, j$ lying
within the same factor and having the same symbol we have $\tau(u,i)
= \tau(u,j)$.
\end{prop}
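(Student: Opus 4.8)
The plan is to reduce the infinite-word statement to the finite-word result (Proposition~\ref{thm:fewtypesfin}) by exploiting the eventual stabilisation of subformula types, in analogy with how Proposition~\ref{prop:bound} was derived from the finite-word small-model results earlier in the paper. First I would observe that along any $u \in \Sigma^\omega$ the subformula type $\tau(u,i)$ is built from one-variable subformulas of $\varphi$, and the truth of each such subformula at position $i$ depends, via the inductive semantics of $\FOtwoLT$, only on the letter $u_i$ together with the \emph{sets} of types realised strictly to the left and strictly to the right of $i$ (this is the $\FOtwoLT$ analogue of Proposition~\ref{prop:ind-char}, now phrased for subformula types rather than $k$-types). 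The crucial monotonicity is that as $i$ increases, the set of types occurring strictly to the left can only grow, while the set occurring strictly to the right can only shrink; since $\mathrm{cl}(\varphi)$ is finite, both sets stabilise after some finite position $N$.

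Next I would use this stabilisation to split $u$. Take $N$ large enough that for all positions $i \geq N$ the set of types realised to the right of $i$ equals $\mathrm{inf}(u)$'s stable "tail set" of types, and that every letter occurring infinitely often has already appeared before $N$. Write the prefix $u_0 \ldots u_{N-1}$ as a finite word and apply Proposition~\ref{thm:fewtypesfin} to a suitable finite word capturing the prefix together with a representative of the infinite tail, obtaining a factorisation of the prefix region into polynomially many factors $v_1, \ldots, v_{n-1}$ with the stated constant-type-per-symbol property. The infinite suffix $v_n := u_N u_{N+1}\ldots$ is then a single additional factor. The key point is that beyond $N$ the type $\tau(u,i)$ depends only on $u_i$: since both the left-type-set and the right-type-set are constant there, two positions $i, j \geq N$ with $u_i = u_j$ satisfy $\tau(u,i) = \tau(u,j)$ by the inductive characterisation. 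This gives exactly the required property on the final factor, and contributes only a constant (at most $|\Sigma|$) to the count, so $n$ remains polynomial in $|\varphi|$ and $|\Sigma|$.

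The main obstacle I anticipate is making the reduction to the finite case rigorous: Proposition~\ref{thm:fewtypesfin} is stated for $\tau$ computed \emph{relative to the finite word}, whereas here I need the types $\tau(u,i)$ computed relative to the infinite word $u$ to agree with types of some finite word on the prefix region. I would handle this by choosing a finite word of the form $u_0 \ldots u_{M}\, w$, where $w$ lists the infinitely-occurring letters and $M \geq N$, and arguing — again via the inductive characterisation and the stabilisation of the right-type-set past $N$ — that for positions in the prefix the subformula type computed in this finite word coincides with the one computed in $u$. This matching step is the delicate part, since subformula truth can depend on existential quantification reaching arbitrarily far to the right; stabilisation of the right-hand type-set is precisely what guarantees the agreement. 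Once this is established, applying Proposition~\ref{thm:fewtypesfin} to the finite word and appending $v_n$ as the lone infinite factor completes the argument, keeping $n$ polynomial as claimed.
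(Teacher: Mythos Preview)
Your proposal is correct and follows essentially the same route as the paper: split $u$ into a finite prefix and an infinite tail using a stabilisation argument in the style of Proposition~\ref{prop:bound}, apply the finite-word result (Proposition~\ref{thm:fewtypesfin}) to the prefix region, and take the infinite tail as a single extra factor. The paper applies Proposition~\ref{thm:fewtypesfin} directly to the prefix $v$ of $u$ and then asserts (with a one-line inductive remark) that subformula types computed in $v$ agree with those computed in $u$ on that region; you instead apply it to the prefix extended by a list $w$ of the infinitely-occurring letters, which makes the matching step you flag as ``delicate'' somewhat more transparent, since the auxiliary finite word already mimics the right-context of $u$. Both arguments hinge on the same observation you state explicitly---that past the stabilisation point the left- and right-type-sets are constant, so the type depends only on the current letter---and both leave the matching step at roughly the same level of detail.
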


\begin{proof}
We note that for any $u \in \Sigma^\omega$, from some position
onwards, the subformula type is determined only by the current
symbol. In fact, the proof of Proposition \ref{prop:bound} shows that
we have $u = vw$ for some prefix $v \in \Sigma^*$ of $u$ and
$w \in \Sigma^\omega$ such that for any two positions $i, j$ of $vw$
such that $i,j>|v|$ having the same symbol $\tau(vw,i)=\tau(vw,j)$.

Given an infinite $u$, we can take a finite prefix $v$ as above and
apply Proposition~\ref{thm:fewtypesfin} to it, adding on the infinite
interval $w$ as one additional member of the partition. Now if $i$ and
$j$ are in the final partition, then agreement on the same symbol
determines the entire set of formulas, and hence we are done.
Otherwise, fix any two positions $i, j \le |v|$ in $u$ with the symbol
$a \in \Sigma$ holding at both $i$ and $j$, and lying in the same
partition within $v$. We claim that the subformula types $\tau(u,i)$
and $\tau(u,j)$ contain the same set of formulas.  An atomic predicate
$\psi \in \rm{cl(\varphi)}$ holds at position $i$ iff it holds at $j$
by assumption, since there is only one symbol true at each position.
Positions $i$ and $j$ then by assumption satisfy the same subformula
type within $v$. But using the hypothesis on $v$ we can easily see
inductively that a subformula holds on a position within $v$ iff it
holds at that position within $vw$.
\end{proof}

We now present a result showing that the few subformula types property 
can be used to get  a better translation
to automata:

\begin{thm}
Assume the unary alphabet restriction. Then
given an $\FOtwoLT$ formula $\varphi$, there is a collection of 
$2^{\mathit{poly}(|\varphi|,|\Sigma|)}$ generalised B\"uchi automata $A_i$ 
(each of polynomial size in $|\varphi|$ and $|\Sigma|$) such that
the languages they accept are disjoint and the union of these languages is 
exactly  $\{w\in \Sigma^\omega : w \models \varphi \}$. 
Moreover, each automaton $A_i$ is unambiguous and deterministic in the limit 
and can be constructed by a non-deterministic Turing machine in polynomial-time.
\label{FO2automaton}
\end{thm}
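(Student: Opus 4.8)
The plan is to build the family by guessing, for each word, the way its one-variable subformula types are distributed. First I would record the inductive characterisation of subformula types: for $u \in \Sigma^\omega$ and $i \in \mathrm{dom}(u)$, the type $\tau(u,i)$ is determined by the letter $u_i$ together with the sets $L_i = \{\tau(u,i') : i' < i\}$ and $R_i = \{\tau(u,i') : i' > i\}$ of types occurring strictly to the left and strictly to the right. Proposition~\ref{prop:ind-char} is stated for $k$-types, but the identical argument---splitting the witness for an existential quantifier according to whether it lies before, at, or after the current position, and noting that $\mathrm{suc}$ is absent---yields the analogous statement for subformula types. By the few-types property (Proposition~\ref{thm:fewtypesinf}) the set $T$ of types realised along $u$ has size $\mathit{poly}(|\varphi|,|\Sigma|)$. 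Since $L_i$ grows and $R_i$ shrinks by at most one type as $i$ advances by one, the pair $(L_i,R_i)$ changes at only $\le 2|T|+1$ positions and is constant on each of the resulting (polynomially many) \emph{blocks}; on a block the type is a function of the current letter alone. I would index the family $\{A_i\}$ by guesses of this monotone \emph{context profile}, a sequence $(L^{(0)},R^{(0)}),\dots,(L^{(m)},R^{(m)})$ with $m \in \mathit{poly}$; the number of such profiles is $2^{\mathit{poly}(|\varphi|,|\Sigma|)}$.

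Given a profile, the automaton $A_i$ has states $(b,\tau)$ where $b \le m$ is the current block index and $\tau$ is the current type, hence $\mathit{poly}$ many states. Reading letter $a$ in block $b$, it assigns the type $g(a,L^{(b)},R^{(b)})$ computed from the inductive characterisation (computable along the lines of Proposition~\ref{prop:compute}), and performs only \emph{local} checks: that the atomic part of the type matches $a$; that each past subformula $\exists\,y<x\,\theta$ is evaluated correctly against $L^{(b)}$, each $y=x$ subformula against $\tau$ itself, and each negated future subformula $\neg\exists\,y>x\,\theta$ against $R^{(b)}$. Crucially, because each block boundary is a single-position step at which $L$ gains at most one type and $R$ loses at most one, advancing from $b$ to $b+1$ only requires that the one new left-type be realised exactly at the boundary, and that the one dropped right-type occur exactly there; no subset of the (up to $|\Sigma|$) types inside a block is tracked, which is what keeps $A_i$ polynomial in $|\Sigma|$.

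Acceptance is a generalised B\"uchi condition enforced on the final, infinite block: there the context $(L^{(m)},R^{(m)})$ is constant, the type is a deterministic function of the letter, and I require each type of $R^{(m)}=\mathrm{inf}$ to recur infinitely often. This makes $A_i$ deterministic in the limit. The guessed profile is forced to be the true one: over-claiming $R$ leaves a positive promise $\exists\,y>x\,\theta$ undischarged and fails the B\"uchi condition, while under-claiming $R$ (or mis-stating an $L$-increment) violates one of the safety checks above. Hence every word is accepted by exactly one $A_i$ along exactly one run, giving unambiguity and disjointness; adding the requirement $\varphi \in \tau(u,0)$ makes the union equal to $\{w : w \models \varphi\}$. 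Finally, converting each generalised B\"uchi automaton to an ordinary one, and checking that the construction from the index runs in polynomial time, is routine and preserves unambiguity and determinism in the limit exactly as in the proof of Theorem~\ref{BW:detLimit}.

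The hard part will be the bookkeeping that pins the guessed profile down without state-space blow-up. The key idea that must be gotten right is the block granularity: defining blocks as the maximal intervals on which $(L_i,R_i)$ is constant turns each boundary into a single type event, so correctness of the context can be verified by local transitions rather than by remembering which of a block's types have already appeared (the naive approach, which would cost $2^{|\Sigma|}$ states). Coupled with this is the careful treatment of each existential subformula by its $y<x$, $y=x$, and $y>x$ cases, discharging the genuinely future case as a liveness (B\"uchi) promise while ruling out spurious witnesses by safety---it is exactly the interplay of these two that certifies $R^{(b)}$ is the \emph{exact} set of future types. Everything else (the type counts from Proposition~\ref{thm:fewtypesinf}, the generalised-to-ordinary B\"uchi step, and the complexity of the construction) follows the template already established for Translation~I.
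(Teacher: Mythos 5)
Your proposal is correct and is essentially the paper's own construction in different clothing: your ``context profiles'' $(L^{(0)},R^{(0)}),\dots,(L^{(m)},R^{(m)})$ are exactly the paths through the SCC dag of the automaton the paper builds from states $(\boldsymbol{s},\tau,\boldsymbol{t})$, with $\boldsymbol{s},\boldsymbol{t}$ playing the roles of $L,R$, your blocks corresponding to its strongly connected components, and your liveness/safety discharge of future promises matching its acceptance sets $F_\tau$ and transition conditions. The supporting ingredients --- the few-types bound from Proposition~\ref{thm:fewtypesinf}, uniqueness of the type labelling for unambiguity, and determinism of the final block for determinism in the limit --- are likewise the ones the paper uses.
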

\begin{proof}
We say that $\tau \subseteq \mathit{cl}(\varphi)$ is a
\emph{subformula pre-type} if: (i)~if $\varphi_1 \wedge \varphi_2 \in
\mathit{cl}(\varphi)$, then $\varphi_1 \wedge \varphi_2 \in \tau$ iff
$\varphi_1\in\tau$ and $\varphi_2 \in \tau$; (ii)~if $\varphi_1 \vee
\varphi_2 \in \mathit{cl}(\varphi)$, then $\varphi_1 \vee \varphi_2
\in \tau$ iff $\varphi_1\in\tau$ or $\varphi_2 \in \tau$; (iii)~if
$\neg\psi \in \mathit{cl}(\varphi)$, then $\neg\psi \in \tau$ iff
$\psi\not\in \tau$.

This notion is similar to the notion of ``subformula type of a node''
used in the prior results, except that a collection of formulas
satisfying the above property may not be consistent, since the
semantics of existential quantifiers is not taken into account.

In general the formulas in a (subformula) pre-type
$\tau$ can have either $x$ or $y$ as free variables.  We write
$\tau(x)$ for the subformula pre-type obtained by interchanging $x$
and $y$ in all formulas in $\tau$ with $y$ as free variable.  Thus all
formulas in $\tau(x)$ have free variable $x$.  We similarly define
$\tau(y)$.

An \emph{order formula} is an atomic formula
\[ \order ::= x<y \, \mid \, y<x \, \mid \, x=y \, . \]
Given $m,n \in \mathbb{N}$ let $\order_{m,n}$ denote the unique
order formula satisfied by the valuation $x,y \mapsto m,n$.

Given a pair of pre-types $\tau_1, \tau_2$, an order formula $\order$,
and a subformula $\theta$ of $\varphi$, we write
$\tau_1(x),\tau_2(y),\order \models \theta(x,y)$ to denote that when
$\theta$ is transformed by replacing top-level subformulas by their
truth values as specified by $\tau_1(x)$, $\tau_2(y)$, or $\order$,
then the resulting Boolean combination evaluates to true.  Note that
this implies that if word $w$ and positions $i,j$ satisfy $\tau_1(x)
\cup \tau_2(y) \cup \{ \order \}$, then they also satisfy $\theta$.

A \emph{closure labelling} is a function $f : \mathbb{N} \rightarrow
2^{\mathit{cl(\varphi)}}$ such that \begin{enumerate}[(1)]
\item $f(n)$ is a  pre-type for each
$n\in \mathbb{N}$ and \item 
for each $n\in \mathbb{N}$, if $\exists
y\,\theta \in \mathit{cl}(\varphi)$ then $\exists y\theta \in f(n)$
iff there exists $m \in \mathbb{N}$ such that
$f(n)(x),f(m)(y),\order_{n,m} \models \theta$.
\end{enumerate}

\noindent It is easy to see that an $\omega$-word $w:\mathbb{N}\rightarrow
\Sigma$ has a unique extension to a closure labelling $f : \mathbb{N}
\rightarrow 2^{\mathit{cl}(\varphi)}$.  Namely, $f$ is defined by
$f(n)=\{ \psi \in \mathit{cl}(\varphi) : w,n\models \psi\}$.

We now define a generalised B\"{u}chi automaton $A_{\varphi}$
corresponding to $\varphi$.  
\begin{defi}
The alphabet of $A_{\varphi}$ is $\Sigma$, and the other components of
$A_{\varphi}$ are as follows:

\emph{States.}  The states of $A_{\varphi}$ are tuples
$(\boldsymbol{s},\tau,\boldsymbol{t})$, where $\tau \subseteq
\mathit{cl}(\varphi)$ is a pre-type and $\boldsymbol{s},\boldsymbol{t}
\subseteq 2^{\mathit{cl}(\varphi)}$ are sets of pre-types of size at
most $p(|\varphi|,|\Sigma|)$, where $p$ is the polynomial from
Proposition~\ref{thm:fewtypesinf}, such that the following
\emph{consistency condition} holds: for each formula $\exists y\theta
\in \tau$ we have that either $\tau(x),\tau(y),x=y \models \theta$,
$\tau(x),\tau'(y),x<y \models \theta$ for some $\tau'\in
\boldsymbol{t}$, or $\tau'(y),\tau(x),y<x \models \theta$ for some
$\tau'\in \boldsymbol{s}$.  (This condition corresponds to the second
clause in the definition of closure labelling.)  Informally, a state
consists of an assertion about the subformula pre-types seen in the
past, the current subformula pre-type, and the subformula pre-types to
be seen in the future.

\emph{Initial State.}  A state $(\boldsymbol{s},\tau,\boldsymbol{t})$
is initial if $\boldsymbol{s}=\emptyset$ and $\varphi \in \tau$.

\emph{Accepting States.}
There is a set of accepting states $F_\tau$ for each  pre-type $\tau$.
We have $(\boldsymbol{s},\tau',\boldsymbol{t}) \in F_\tau$ if and only
if $\tau=\tau'$ or $\tau\not\in \boldsymbol{t}$.  

\emph{Transitions.}  For each $a \in \Sigma$ there is an
$a$-labelled transition from
$(\boldsymbol{s},\tau,\boldsymbol{t})$ to
$(\boldsymbol{s}',\tau',\boldsymbol{t}')$ iff (i) for the unique
proposition $P_i(x)$ in $\tau$, $P_i=a$;
(ii)~$\boldsymbol{s}'=\boldsymbol{s}\cup\{\tau\}$; (iii)~$\tau'\in
\boldsymbol{t}$; (iv)~either $\boldsymbol{t}'=\boldsymbol{t}$ or
$\boldsymbol{t}'=\boldsymbol{t}\setminus \{\tau'\}$.
\end{defi}

The following proposition, whose proof follows straightforwardly from
Proposition \ref{thm:fewtypesinf}, shows that the automaton captures
the formula:

\begin{prop}
If $(\boldsymbol{s}_0,\tau_0,\boldsymbol{t}_0),
(\boldsymbol{s}_1,\tau_1,\boldsymbol{t}_1),
(\boldsymbol{s}_2,\tau_2,\boldsymbol{t}_2),\ldots$ is an accepting run
of $A_{\varphi}$, then the function $f : \mathbb{N} \rightarrow
2^{\mathit{cl(\varphi)}}$ defined by $f(n)=\tau_n$ is a closure
labelling.  Moreover every closure labelling $f$ such that $\varphi\in
f(0)$ arises from a run of $A_{\varphi}$ in this manner.
\end{prop}

We now analyze the automaton $A_{\varphi}$. Because of the polynomial
restriction on the number of pre-types, the automaton has at most
exponentially many states. But by Proposition~\ref{thm:fewtypesinf},
any accepting run goes through only polynomially many states. For
every path $\pi$ in the DAG of strongly-connected components, we take
the subautomaton $A_\pi$ of $A_{\varphi}$ obtained by restricting to
the components in this path.  We claim that this is the required
decomposition of $A_{\varphi}$.  Note that an NP machine can construct
these restrictions by iteratively making choices of successor
components that are strictly lower in the DAG.  Clearly the automata
corresponding to distinct paths accept disjoint languages, since they
correspond to different collections of pre-types holding in the word.
One can show that for any word satisfying the formula, the unique
accepting run is the one in which the state at a position corresponds
to the pre-types seen before the positions, the pre-type seen at the
position, and the pre-types seen after the position. In particular,
this shows that each automaton is unambiguous.  Finally, because the
only nondeterministic choice is whether to leave an SCC or not, upon
reaching the bottom SCC the automaton is deterministic---hence each
automaton is deterministic in the limit.  Thus this decomposition
witnesses Theorem \ref{FO2automaton}.
\end{proof}

The above translation of $\FOtwoLT$ formulas to unambiguous B\"uchi
automata can be extended to handle formulas with successor, i.e., the
full logic $\FOtwo$, at the same time removing the unary alphabet
restriction.  Given an $\FOtwo$ formula $\varphi$ over set of
predicates $\mathcal{P}$, we can consider an ``equivalent'' $\FOtwoLT$
formula $\varphi'$ over a set of new predicates
$2^{|\varphi||\mathcal{P}|}$.  Intuitively each predicate in
$\mathcal{P}'$ specifies the truth values of all predicates in
$\mathcal{P}$ in a neighbourhood of radius $|\varphi|$ around the
current position.  Applying Theorem~\ref{FO2automaton} to $\varphi'$
we obtain a collection of double-exponentially many automata $A_i$,
each of size exponential size in $\varphi$ and $\Sigma$.  Thus, we get
a weaker version of Theorem \ref{thm:FO2_aut} of the previous
subsection, in which the size bound on the component automata has an
exponential dependence on the alphabet as well as the formula size.

\subsection{Translation III: From $\FOtwo$ to deterministic parity automata}
\label{subsec:trans}
While the previous translations are useful for relating $\FO2$ to
unambiguous automata, for some problems it is useful to have
deterministic automata.  We now give a translation of $\FO2$ formulas
to ``small'' deterministic parity automata.  We give the translation
first for the fragment $\FO2[<]$ without successor and show later how
to handle the full logic.
Specifically, we will show:
\begin{thm}
\label{DRA}
Given an $\FO2 [<]$ formula $\varphi$ over set of predicates ${\mathcal{P}}$ with
quantifier depth $k$, there exists a deterministic parity automaton
$\mathcal{A}_\varphi$ accepting the language $L(\varphi)$ such that
$\mathcal{A}_\varphi$ has $2^{2^{O(|\mathcal{P}|k)}}$ states, $2^{O(|\mathcal{P}|)}$
priorities, and can be computed from $\varphi$ in time
$|\varphi|^{O(1)} \cdot 2^{2^{O(|\mathcal{P}|k)}}$.
\end{thm}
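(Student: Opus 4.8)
The plan is to exploit that, since $\varphi$ has quantifier depth $k$, membership of an $\omega$-word $u$ in $L(\varphi)$ depends only on its $\sim_k$-class: if $u \sim_k u'$ then $(u,0)\models\varphi$ iff $(u',0)\models\varphi$. Thus $L(\varphi)$ is a union of $\sim_k$-classes of $\omega$-words, and it suffices to build a deterministic automaton that, in the limit, recovers enough of the $\sim_k$-class of its input to decide $\varphi$. First I would build the deterministic transition structure $D$ whose states are the $\sim_k$-classes of finite words, represented canonically by $\mathrm{Rep}_k(\Sigma)$ from Corollary~\ref{corl:rep}. Since $\sim_k$ is a right congruence on finite words (a standard pebble-game composition argument, as for Proposition~\ref{prop:comp}), reading a letter $a$ in the class $[v]$ deterministically yields $[va]$, and the initial state is $[\varepsilon]$. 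By Proposition~\ref{prop:rep} there are at most $2^{O(|\mathcal{P}|k)}$ $k$-types along any word, so $\sim_k$ has index $2^{2^{O(|\mathcal{P}|k)}}$, giving the required state bound, and Proposition~\ref{prop:compute} computes each transition within the target time.

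The heart of the construction is the acceptance condition. On input $u$, the run of $D$ visits some set $S_\infty$ of classes infinitely often, and every class in $S_\infty$ is the class of infinitely many prefixes of $u$, in particular of prefixes longer than the threshold $N$ of Proposition~\ref{prop:bound}. Hence by Proposition~\ref{prop:bound} (equivalently Lemma~\ref{lem:parity}), for \emph{any} recurrent class $[v]\in S_\infty$ we have $u \sim_k v w^\omega$, where $w$ lists the letters of $\mathrm{inf}(u)$; consequently $(u,0)\models\varphi$ iff $(v w^\omega,0)\models\varphi$, a value independent of the choice of recurrent class. Thus acceptance is a Muller condition depending only on (i)~$\mathrm{inf}(u)\subseteq\Sigma$ and (ii)~any one recurrent $\sim_k$-class.

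To turn this Muller condition into a parity condition without a state blow-up, I would run a Latest Appearance Record over the alphabet $\Sigma$ (not over the huge state set of $D$) in product with $D$. The LAR maintains an ordering of $\Sigma$ by recency together with a hit-pointer; as usual, the largest hit-depth occurring infinitely often equals $|\mathrm{inf}(u)|$, and at the corresponding cycle-completion moments the set of top letters is exactly $\mathrm{inf}(u)$. At each completion of a set $T$ of size $h$, I read off the current class $[v]$ (which is recurrent for late enough positions) and assign priority $2h$ if $(v w_T^\omega,0)\models\varphi$ and $2h-1$ otherwise, where $w_T$ lists $T$; all other transitions get priority $0$. The dominant priority seen infinitely often is then $2m$ or $2m-1$ with $m=|\mathrm{inf}(u)|$ and top set $T^*=\mathrm{inf}(u)$, so the highest priority is even iff $(v w_{T^*}^\omega,0)\models\varphi$ iff $(u,0)\models\varphi$. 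This uses $O(|\Sigma|)=2^{O(|\mathcal{P}|)}$ priorities and multiplies the state count only by $|\Sigma|!\,|\Sigma| = 2^{2^{O(|\mathcal{P}|)}}$, which is absorbed into $2^{2^{O(|\mathcal{P}|k)}}$.

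For the time bound, enumerating $\mathrm{Rep}_k(\Sigma)$ and the transition table costs $2^{2^{O(|\mathcal{P}|k)}}$ by Corollary~\ref{corl:rep} and Proposition~\ref{prop:compute}, and the priority of each completion state is obtained by evaluating $\varphi$ on the ultimately periodic word $v w_T^\omega$, whose period is $2^{O(|\mathcal{P}|k)}$, in time $|\varphi|^{O(1)} 2^{O(|\mathcal{P}|k)}$ per state. The main obstacle is exactly the Muller-to-parity step: a naive appearance record over $D$'s states would be triply exponential, so the decisive observation is that acceptance factors through the pair $(\mathrm{inf}(u),\text{one recurrent class})$, which lets the appearance record range only over $\Sigma$ while the class-dependent accept/reject bit is folded into the priority. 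Since the statement concerns $\FOtwoLT$ only, the successor predicate is not treated here and is deferred to the subsequent buffering reduction for full $\FO2$.
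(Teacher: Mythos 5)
Your proposal is correct and follows essentially the same route as the paper: a deterministic transition structure on $\mathrm{Rep}_k(\Sigma)$ tracking the $\sim_k$-class of the prefix (via Proposition~\ref{prop:comp} and Corollary~\ref{corl:rep}), composed with a last-appearance record over $\Sigma$ with a hit pointer, and priorities obtained by evaluating $\varphi$ on the lasso formed from the current representative and the recent-letter suffix, with correctness resting on Proposition~\ref{prop:bound}/Lemma~\ref{lem:parity}. The only cosmetic difference is that you assign the informative priority only at cycle-completion moments (padding elsewhere with $0$) whereas the paper assigns $2|\ell_i\ldots\ell_j|$ or $2|\ell_i\ldots\ell_j|+1$ at every state; both variants yield the same dominant priority.
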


The definition of the automaton $\mathcal{A}_{\varphi}$ in
Theorem~\ref{DRA} relies on the small-model property, as stated in
Proposition ~\ref{shortWord}.  By Lemma~\ref{lem:parity}, to know
whether $u \in \Sigma^\omega$ satisfies an $\FOtwoLT$-formula of
quantifier depth $k$ it suffices to know some $k$-type such that
infinitely many prefixes of $u$ have that type, as well as which
letters occur infinitely often in $u$. We will translate $\varphi$ to
a deterministic parity automaton $\mathcal{A}_\varphi$ that detects
this information. As $\mathcal{A}_\varphi$ reads an input string $u$
it stores a representative of the $k$-type of the prefix read so far.
By Proposition~\ref{shortWord}(i) the number of such representatives
is bounded by $2^{2^{O(|\mathcal{P}|k)}}$.  Applying
Lemma~\ref{lem:parity}, we use a parity acceptance condition to
determine whether $u$ satisfies $\varphi$, based on which
representatives and input letters occur infinitely often.

We are now ready to formally define $\mathcal{A}_\varphi$.  To this
end, define the \emph{last appearance record} of a finite string $u =
u_0\ldots u_n \in \Sigma^*$ to be the substring $\mathrm{LAR}(u) :=
u_{i_1}u_{i_2}\ldots u_{i_m}$ such that for all $k \in
\mathrm{dom}(u)$ there exists a unique $i_j \geq k$ such that
$u_{i_j}=u_k$.  Thus we obtain $\mathrm{LAR}(u)$ from $u$ by keeping
only the last occurrence of each symbol from $u$.  Write
$\mathrm{LAR}(\Sigma)$ for the set $\{ \mathrm{LAR}(u) : u \in
\Sigma^*\}$ of all possible last appearance records.  Recall also the
set of strings $\mathrm{Rep}_k(\Sigma)$ from Corollary~\ref{corl:rep}
that represent the different $k$-types of strings in $\Sigma^*$.

\begin{defi}
Let $\varphi(x)$ be an $\FO2[<]$-formula of quantifier depth $k$.  We
define a deterministic parity automaton $\mathcal{A}_\varphi$ as follows.

\begin{iteMize}{$\bullet$}
\item $\mathcal{A}_\varphi$ has set of states $\mathrm{Rep}_k(\Sigma)
  \times \mathrm{LAR}(\Sigma) \times \{0,1,\ldots,|\Sigma|\}$.
\item The initial state is $(\varepsilon,\varepsilon,0)$.
\item The transition function maps a state $(s,\ell,i)$, where
  $\ell=\ell_1\ldots \ell_j$, and input letter $a \in \Sigma$ to the
  unique state $(t,\ell',j')$ such that $sa \sim_k t$, $\ell' =
  \mathrm{LAR}(\ell a)$, $j'=0$ if $a$ does not occur in $\ell$ and
  otherwise $\ell_{j'}=a$.
\item The set of priorities is $0,1,\ldots,2|\Sigma|+1$.
\item The priority of state $(s,\ell,i)$ where $\ell=\ell_1\ell_2\ldots
  \ell_j$ is given by
\[ pr(s,\ell,i) = \left\{ \begin{array}{ll}
              2 \cdot |\ell_i\ldots \ell_j| 
              & \mbox{if $(s(\ell_i\ldots\ell_j)^\omega,0) \models \varphi$}\\
              2 \cdot |\ell_i\ldots \ell_j|+1  & \mbox{otherwise.}
\end{array} \right . \]
\end{iteMize}\smallskip
\end{defi}

\noindent It follows from Proposition~\ref{prop:comp} that in a run of
$\mathcal{A}_\varphi$ on a finite word $u =u_0u_1 \ldots
u_n \in \Sigma^*$ the last state $(s,\ell,i)$ is such that $s$ has the
same $k$-type as $u$.  Also we note that $\ell$ is the LAR of $u$ and
$i$ is the position in the previous LAR of $u_n$.

The following two results prove Theorem~\ref{DRA}:

\begin{prop}
$L(\mathcal{A}_\varphi) = \{ u \in \Sigma^\omega : (u,0) \models \varphi\}$.
\end{prop}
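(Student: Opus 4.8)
The plan is to follow the run of $\mathcal{A}_\varphi$ on a fixed input $u = u_0u_1\ldots \in \Sigma^\omega$ and show that the largest priority occurring infinitely often is even precisely when $(u,0)\models\varphi$. Write $m = |\mathrm{inf}(u)|$ for the number of letters occurring infinitely often (note $m \ge 1$, as $\Sigma$ is finite and $u$ infinite). I would use throughout the observation already recorded just after the definition of $\mathcal{A}_\varphi$: after reading the prefix $u_0\ldots u_n$ the automaton sits in the state $(s,\ell,i)$ with $s \sim_k u_0\ldots u_n$, with $\ell = \mathrm{LAR}(u_0\ldots u_n)$, and with $i$ the position of $u_n$ in the previous last-appearance record. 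Unwinding the transition rule, the factor $\ell_i\ldots\ell_j$ that determines the priority of this state (where $j=|\ell|$) is exactly the list, in last-appearance order, of the distinct letters read since the previous occurrence of $u_n$, including $u_n$ itself.

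First I would analyse the dynamics of the component $\ell$. A letter outside $\mathrm{inf}(u)$ is read only finitely often, so after its last occurrence it is never again moved to the back and drifts monotonically toward the front of $\ell$; hence there is a position $M$ beyond which no letter of $\Sigma\setminus\mathrm{inf}(u)$ occurs, every letter of $\mathrm{inf}(u)$ has already recurred, and the final $m$ symbols of $\ell$ are always a permutation of $\mathrm{inf}(u)$. Consequently, for $n > M$ the window $\ell_i\ldots\ell_j$ consists only of letters of $\mathrm{inf}(u)$, so its length is at most $m$; and since for each $a\in\mathrm{inf}(u)$ infinitely many windows between consecutive occurrences of $a$ contain all of $\mathrm{inf}(u)$, the length equals $m$ — with $\ell_i\ldots\ell_j$ enumerating all of $\mathrm{inf}(u)$ — for infinitely many $n$. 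Thus the largest suffix length occurring infinitely often is exactly $m$, so the largest priority occurring infinitely often is either $2m$ or $2m+1$.

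The crux is a uniformity statement: for all sufficiently large $n$ at which the suffix length is $m$, the word $s(\ell_i\ldots\ell_j)^\omega$ used to assign the priority is $\sim_k$-equivalent to $u$, so all such states agree on whether it satisfies $\varphi$. Fix $N$ as supplied by Proposition~\ref{prop:bound} for $u$ and $k$. Whenever $n \ge \max(M,N)$ and the suffix length is $m$, the factor $\ell_i\ldots\ell_j$ lists $\mathrm{inf}(u)$, so $(\ell_i\ldots\ell_j)^\omega$ satisfies $\mathrm{inf}=\mathrm{ran}=\mathrm{inf}(u)$; Proposition~\ref{prop:bound} then gives $u \sim_k u_0\ldots u_n (\ell_i\ldots\ell_j)^\omega$, and combining $s \sim_k u_0\ldots u_n$ with Proposition~\ref{prop:comp} yields $s(\ell_i\ldots\ell_j)^\omega \sim_k u_0\ldots u_n (\ell_i\ldots\ell_j)^\omega \sim_k u$. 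Since $\varphi$ has quantifier depth $k$, it follows that $(s(\ell_i\ldots\ell_j)^\omega,0) \models \varphi$ iff $(u,0)\models\varphi$; that is, every such state receives priority $2m$ if $(u,0)\models\varphi$ and priority $2m+1$ otherwise. This step packages the small-model reasoning of Lemma~\ref{lem:parity} and Proposition~\ref{shortWord}, and I expect it to be the main obstacle, chiefly in the bookkeeping that guarantees the window is genuinely a list of all of $\mathrm{inf}(u)$.

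Finally I would assemble the pieces. If $(u,0)\models\varphi$, then beyond $\max(M,N)$ every suffix-length-$m$ state carries priority $2m$, so $2m$ recurs infinitely often; any larger priority either equals $2m+1$ (needing suffix length $m$ together with failure of $\varphi$, excluded for large $n$) or is at least $2m+2$ (needing suffix length $\ge m+1$, excluded for $n>M$), so nothing above $2m$ recurs infinitely often, the highest infinitely recurring priority is the even number $2m$, and $u$ is accepted. If $(u,0)\not\models\varphi$, the same states all carry priority $2m+1$, so $2m+1$ recurs infinitely often, while priority $2m$ occurs only finitely often (it would need a large-$n$ suffix-length-$m$ state satisfying $\varphi$) and priorities $\ge 2m+2$ only finitely often; hence the highest infinitely recurring priority is the odd number $2m+1$ and $u$ is rejected. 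This establishes $L(\mathcal{A}_\varphi) = \{u\in\Sigma^\omega : (u,0)\models\varphi\}$.
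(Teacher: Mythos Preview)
Your proof is correct and follows essentially the same route as the paper: both arguments identify states where the suffix $\ell_i\ldots\ell_j$ enumerates exactly $\mathrm{inf}(u)$, then apply Proposition~\ref{prop:bound} followed by Proposition~\ref{prop:comp} to get $s(\ell_i\ldots\ell_j)^\omega \sim_k u$. The paper's proof is terser, asserting without detail that the maximal recurring priority is witnessed at such states; you spell out the LAR dynamics explicitly, which is helpful.

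One small correction is needed in that LAR analysis. You justify that the suffix length equals $m$ infinitely often by claiming that \emph{for each} $a\in\mathrm{inf}(u)$ infinitely many windows between consecutive occurrences of $a$ contain all of $\mathrm{inf}(u)$. This is false in general: in $u=(abac)^\omega$ every gap between consecutive $a$'s contains only one other letter. The standard fix is to observe that, after your threshold $M$, the letter currently at the front of the last $m$ positions of the LAR must eventually be read again (it lies in $\mathrm{inf}(u)$), and at that moment the hit index points to that frontmost slot, producing a suffix of length exactly $m$; this situation recurs forever. With this repaired justification your argument goes through unchanged.
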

\begin{proof}
Let $u \in \Sigma^\omega$ and let $N$ be as in
Proposition~\ref{prop:bound}.  Suppose that the highest infinitely
often occurring priority in a run of $\mathcal{A}_{\varphi}$ on $u$ is
even.  Then there exists $n \geq N$ such that $\mathcal{A}_\varphi$ is
in state $(s,\ell,i)$ after reading $u_0u_1\ldots u_n$, where 
$\ell=\ell_1\ell_2\ldots \ell_j$, $\{\ell_i,\ldots,\ell_j\} =
\mathrm{inf}(u)$ and $(s(\ell_i\ldots\ell_j)^\omega,0) \models
\varphi$.  Now
\begin{eqnarray*}
 u & \sim_k & u_0u_1\ldots u_n(\ell_i\ldots\ell_j)^\omega \qquad
     \mbox{by Proposition~\ref{prop:bound}}\\
  & \sim_k & s(\ell_i\ldots\ell_j)^\omega \qquad
      \mbox{by Proposition~\ref{prop:comp}} \, .
\end{eqnarray*}
We conclude that $(u,0) \models \varphi$.

Similarly we can show that if  the highest infinitely often
occurring priority in a run of $\mathcal{A}_{\varphi}$ on $u$ is odd then
$(u,0) \not\models \varphi$.
\end{proof}

\begin{prop}
If $\varphi$ over set of monadic predicates $\mathcal{P}$ has
quantifier depth $k$, then $\mathcal{A}_\varphi$ has number of states
at most $2^{2^{O(|\mathcal{P}|k)}}$ and can be computed from $\varphi$
in time $|\varphi|^{O(1)} \cdot 2^{2^{O(|\mathcal{P}|k)}}$.
\end{prop}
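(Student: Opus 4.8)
The plan is to bound the number of states by bounding each of the three coordinates of a state independently and multiplying, and then to bound the running time by separately accounting for the cost of precomputing $\mathrm{Rep}_k(\Sigma)$, the cost of filling in each transition, and the cost of computing each priority. I will freely use that $|\Sigma| \le 2^{|\mathcal{P}|}$.

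First I would bound $|\mathrm{Rep}_k(\Sigma)|$. By Corollary~\ref{corl:rep} every representative is a string over $\Sigma$ of length at most $L := |\Sigma|(2|\Sigma|+2)^k$, and substituting $|\Sigma| \le 2^{|\mathcal{P}|}$ gives $L \in 2^{O(|\mathcal{P}|k)}$. Hence $|\mathrm{Rep}_k(\Sigma)| \le |\Sigma|^{L+1} = 2^{|\mathcal{P}|(L+1)} \in 2^{2^{O(|\mathcal{P}|k)}}$, the last step because multiplying the inner exponent $L$ by $|\mathcal{P}|$ only adds $\log|\mathcal{P}|$ to it. Next, a last appearance record is a sequence of pairwise distinct letters of $\Sigma$, so $|\mathrm{LAR}(\Sigma)| \le (|\Sigma|+1)! \in 2^{O(|\Sigma|\log|\Sigma|)} = 2^{2^{O(|\mathcal{P}|)}}$, which is at most $2^{2^{O(|\mathcal{P}|k)}}$. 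The third coordinate ranges over $\{0,\ldots,|\Sigma|\}$, contributing only a factor $2^{O(|\mathcal{P}|)}$. Taking the product of the three bounds, the number of states is $2^{2^{O(|\mathcal{P}|k)}}$, since the product of three double-exponentials of this shape is again one (the exponents add, and a constant-factor increase of a term $2^{O(|\mathcal{P}|k)}$ is absorbed).

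For the running time I would first invoke Corollary~\ref{corl:rep} to compute the state component $\mathrm{Rep}_k(\Sigma)$ in time $2^{2^{O(|\mathcal{P}|k)}}$. To tabulate the transition function I would, for each of the $2^{2^{O(|\mathcal{P}|k)}}$ states $(s,\ell,i)$ and each letter $a \in \Sigma$, compute $\mathrm{LAR}(\ell a)$ and the new index $j'$ in time $O(|\Sigma|)$, and then locate the unique representative $t$ with $sa \sim_k t$ by comparing $sa$ against each element of $\mathrm{Rep}_k(\Sigma)$ using Proposition~\ref{prop:compute}. As all strings involved have length $2^{O(|\mathcal{P}|k)}$, each such comparison costs $2^{O(|\mathcal{P}|k)}$, so scanning all representatives costs $2^{2^{O(|\mathcal{P}|k)}}$; summed over all state--letter pairs this remains within $2^{2^{O(|\mathcal{P}|k)}}$.

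The step I expect to be the main obstacle is computing the priority $pr(s,\ell,i)$, which requires deciding whether the ultimately periodic word $s(\ell_i\ldots\ell_j)^\omega$ satisfies $\varphi$. The key is to reduce this infinite-word query to a bounded finite computation. Following the proof of Proposition~\ref{prop:bound}, once the prefix $s$ has been read the set of $(k-1)$-types occurring in the periodic tail $(\ell_i\ldots\ell_j)^\omega$ stabilises after at most $k$ copies of the period (each copy already realises all of its letters, i.e.\ all of $\mathrm{inf}$), so $\tau_k(s(\ell_i\ldots\ell_j)^\omega,0)$ is determined by a finite unfolding $s(\ell_i\ldots\ell_j)^m$ of length $2^{O(|\mathcal{P}|k)}$. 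On that unfolding the type-labelling procedure of Proposition~\ref{prop:compute} computes $\tau_k$, and hence decides membership of $\varphi$, in time $|\varphi|^{O(1)} \cdot 2^{O(|\mathcal{P}|k)}$, where the $|\varphi|^{O(1)}$ factor absorbs parsing and evaluating $\varphi$. Carrying this out for every state contributes total time $|\varphi|^{O(1)} \cdot 2^{2^{O(|\mathcal{P}|k)}}$, which together with the preceding bounds yields the claimed state count and running time and thereby, with the previous proposition, completes the proof of Theorem~\ref{DRA}.
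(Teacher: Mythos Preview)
Your proposal is correct and follows essentially the same approach as the paper: bound the state set via Corollary~\ref{corl:rep}, compute transitions via Proposition~\ref{prop:compute}, and compute priorities by model checking $\varphi$ on the lasso $s(\ell_i\ldots\ell_j)^\omega$. You are more explicit than the paper about bounding the $\mathrm{LAR}$ and index components of the state space and about how the transition lookup works, whereas the paper simply invokes Corollary~\ref{corl:rep} for the state count and treats ``model checking $\varphi$ on a lasso of length $2^{O(|\mathcal{P}|k)}$'' as a black box with the stated cost, rather than unfolding via Proposition~\ref{prop:bound} as you do.
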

\begin{proof}
The set of states $\mathrm{Rep}_k(\Sigma)$ has size at most
$2^{2^{O(|\mathcal{P}|k)}}$ and can be constructed in time at most
$2^{2^{O(|\mathcal{P}|k)}}$ by Corollary~\ref{corl:rep}.  We can establish the
existence of a transition between any pair of states of
$\mathcal{A}_\varphi$ in time at most $2^{O(|\mathcal{P}|k)}$ by
Proposition~\ref{prop:compute}.  Finally we can compute the priority
of a state $(s,\ell,i)$ by model checking $\varphi$ on a lasso of
length at most $2^{O(|\mathcal{P}|k)}$, which can be done in time
$|\varphi|^{O(1)} \cdot 2^{O(|\mathcal{P}|k)}$.
\end{proof}

\myparagraph{Extension to $\FO2$ with successor} We now extend to
successor using the same approach as in the proof of Theorem
\ref{BW:detLimit}.  By Lemma~\ref{thm:fo2utl}, given an $\FO2$ formula
$\varphi$ of quantifier depth $k$ there is an equivalent UTL formula
$\varphi'$ of at most exponential size and operator depth at most
$2k$. Moreover, $\varphi'$ can be transformed to a normal form such
that all next-time $\CIRC$ and last-time $\CIRCM$ operators are pushed
inside the other operators.  Again, we consider $\varphi'$ also as a
$\TL$-formula over an extended set of predicates $\mathcal{P}' =
\{\CIRC^iP_j, \CIRCM^iP_j \mid P_j \in P, i \le k\}$.  By a
straightforward transformation we get an equivalent $\FO2[<]$ formula
$\varphi ''$ over $P'$. Overall, this transformation creates
exponentially larger formulas, but the quantifier depth is only
doubled and the set of predicates is quadratic.  Applying Theorem
\ref{DRA} for $\varphi''$ over set of predicates $\mathcal{P}'$ gives:

\begin{thm}
\label{DRASUC}
Given an $\FO2$ formula $\varphi$ with quantifier depth $k$, there is
a deterministic parity automaton having $2^{2^{O(k^2|\mathcal{P}|)}}$ states and
$2^{O(k|\mathcal{P}|)}$ priorities that accepts the language $L(\varphi)$.
\end{thm}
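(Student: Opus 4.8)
The plan is to reduce the successor case back to the successor-free case already settled by Theorem~\ref{DRA}, mimicking the alphabet-blowup strategy from the proof of Theorem~\ref{BW:detLimit}. First I would invoke Lemma~\ref{thm:fo2utl} to replace the input formula $\varphi$ of quantifier depth $k$ by an equivalent $\UTL$ formula $\varphi'$; by that lemma $|\varphi'|$ is at most exponential in $|\varphi|$ while its operator depth is at most $2k$. Next I would put $\varphi'$ into a normal form in which every $\CIRC$ and $\CIRCM$ has been driven inward until it is applied directly to a proposition, so that the only temporal operators with genuine nesting are $\DIAMOND$ and $\DIAMONDM$. This lets me regard $\varphi'$ as a $\TL$ formula over the enlarged proposition set $\mathcal{P}' = \{\CIRC^i P_j, \CIRCM^i P_j : P_j \in \mathcal{P},\ i \le k\}$, where each shifted proposition is treated as a fresh atom; note that $|\mathcal{P}'| = O(k|\mathcal{P}|)$.

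I would then translate this $\TL$ formula into an equivalent $\FOtwoLT$ formula $\varphi''$ over $\mathcal{P}'$ via the standard linear encoding $\DIAMOND\psi \mapsto \exists y\,(x<y \wedge \psi(y))$ and $\DIAMONDM\psi \mapsto \exists y\,(y<x \wedge \psi(y))$, which uses only the order relation and introduces one quantifier per temporal operator, so $\mathit{qdp}(\varphi'') \le 2k = O(k)$. Applying Theorem~\ref{DRA} to $\varphi''$ over the predicate set $\mathcal{P}'$ yields a deterministic parity automaton $\mathcal{A}'$ over alphabet $\Sigma' = 2^{\mathcal{P}'}$ with $2^{2^{O(|\mathcal{P}'| k)}}$ states and $2^{O(|\mathcal{P}'|)}$ priorities; since $|\mathcal{P}'| = O(k|\mathcal{P}|)$, these are $2^{2^{O(k^2|\mathcal{P}|)}}$ and $2^{O(k|\mathcal{P}|)}$ respectively, which are exactly the bounds claimed.

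It remains to pass from $\Sigma'$ back to the original alphabet $\Sigma = 2^{\mathcal{P}}$, and here I would reuse the deterministic transducer idea from Theorem~\ref{BW:detLimit}. A word $w \in \Sigma^\omega$ determines a unique $w' \in (\Sigma')^\omega$ in which $\CIRC^i P_j$ (resp.\ $\CIRCM^i P_j$) holds at position $n$ iff $P_j$ holds at $n+i$ (resp.\ $n-i$); this relabelling is computed by a finite-state transducer $T$ that buffers a window of $O(k)$ symbols, made deterministic by emitting its output $k$ positions behind its input, at a cost of $|\Sigma|^{O(k)} = 2^{O(k|\mathcal{P}|)}$ states. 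Composing $\mathcal{A}'$ with $T$ gives a deterministic parity automaton over $\Sigma$ whose run on $w$ simulates $\mathcal{A}'$ on $w'$, with the priority inherited from the $\mathcal{A}'$-component; since $w \models \varphi$ iff $w' \models \varphi''$, this automaton recognises $L(\varphi)$. The extra transducer factor $2^{O(k|\mathcal{P}|)}$ is absorbed into the double-exponential state count, and the priorities are unchanged, so the final bounds remain $2^{2^{O(k^2|\mathcal{P}|)}}$ states and $2^{O(k|\mathcal{P}|)}$ priorities.

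The step I expect to be the main obstacle is the bookkeeping that keeps the two governing parameters under control: one must verify that pushing $\CIRC/\CIRCM$ inward and re-encoding as $\FOtwoLT$ leaves the quantifier depth at $O(k)$ and the proposition set at $O(k|\mathcal{P}|)$, so that the nested product $|\mathcal{P}'|\cdot\mathit{qdp}(\varphi'')$ fed into Theorem~\ref{DRA} is $O(k^2|\mathcal{P}|)$ rather than something larger; the exponential blow-up in formula \emph{size} coming from Lemma~\ref{thm:fo2utl} is harmless, because the state bound of Theorem~\ref{DRA} depends only on $k$ and $|\mathcal{P}'|$ and not on $|\varphi''|$. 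The one genuinely delicate point is the correctness of the delayed transducer at the left boundary (positions $n-i<0$, where the past-shifted atoms must be forced false) together with the check that the constant output delay does not disturb the limit-based parity acceptance condition.
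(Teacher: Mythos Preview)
Your proposal is correct and follows essentially the same approach as the paper: convert to $\UTL$ via Lemma~\ref{thm:fo2utl}, push $\CIRC/\CIRCM$ inward, view the result as $\TL$ over the enlarged predicate set $\mathcal{P}'$ of size $O(k|\mathcal{P}|)$, re-encode as $\FOtwoLT$ with quantifier depth $O(k)$, and apply Theorem~\ref{DRA}. You are in fact more explicit than the paper on one point---the paper's proof simply ends with ``Applying Theorem~\ref{DRA} for $\varphi''$ over set of predicates $\mathcal{P}'$ gives'' the stated bounds, whereas you spell out the deterministic finite-window transducer needed to pull the resulting automaton over $\Sigma'=2^{\mathcal{P}'}$ back to an automaton over $\Sigma=2^{\mathcal{P}}$ and note that its $2^{O(k|\mathcal{P}|)}$ states are absorbed into the double-exponential bound.
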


\section{Models considered} \label{sec:models}

Next we collect together definitions of the various different types of
state machine that we consider in this paper. For non-deterministic
machines we will be interested in the existence of an accepting path
through the machine that satisfies a formula, while for probabilistic
models we want to know the probability of such paths.

\myparagraph{Kripke Structures, Hierarchical and Recursive State
  Machines} Our most basic model of non-deterministic computation is a
Kripke structure, which is just a graph with an additional set of
nodes (the initial states), and a labelling of nodes with a subset of
a collection of propositions.  The behavior represented by such a
structure is the set of paths through the graph, where paths can be
seen as $\omega$-words.

We will look also at more expressive and succinct structures for
representing behaviours.  A recursive state machine (RSM) $\M$ over a
set of propositions $\mathcal{P}$ is given by a tuple
$(M_1,\ldots,M_k)$ where each component state machine $M_i= (N_i \cup
B_i, Y_i, X_i, En_i, Ex_i, \delta_i)$ contains
\begin{iteMize}{$\bullet$}
\item a set $N_i$ of \emph{nodes} and a disjoint set $B_i$ of \emph{boxes};
\item an indexing function $Y_i : B_i \mapsto \{1, \ldots , k\}$ that
assigns to every box an index of one of the component machines,
$M_1, \ldots , M_k$; 
\item a labelling function $X_i: N_i \mapsto
2^{\mathcal{P}}$; 
\item a set of \emph{entry nodes} $En_i \subseteq N_i$ and a set
of \emph{exit nodes} $Ex_i \subseteq N_i$; 
\item a \emph{transition
relation} $\delta_i$, where transitions are of the form $(u, v)$ where
the source $u$ is either a node of $N_i$, or a pair $(b, x)$, where
$b$ is a box in $B_i$ and $x$ is an exit node in $Ex_j$ for $j =
Y_i(b)$.  We require that the destination $v$ be either a node in
$N_i$ or a pair $(b, e)$, where $b$ is a box in $B_i$ and $e$ is an
entry node in $En_j$ for $j = Y_i(b)$.
\end{iteMize}

\noindent Informally, an RSM represents behaviors that can transition
through a box into the entry node of the machine called by the box,
and can transition via an exit node back to the calling box, as with
function calls.  The semantics can be found in \cite{rsm}.  A
hierarchical state machine (HSM) is an RSM in which the dependency
relation between boxes is acyclic.  HSMs have the same expressiveness
as flat state machines, but can be exponentially more succinct.

\myparagraph{Markov Chains} The basic probabilistic model
corresponding to a Kripke structure is a (labelled) \emph{Markov
  chain}, specified as $\M=(\Sigma,X,V,E,M,\rho)$, consisting of an
\emph{alphabet} $\Sigma$, a set $X$ of \emph{states}; a
\emph{valuation} $V : X \rightarrow \Sigma$; a set $E \subseteq X
\times X$ of \emph{edges}; a \emph{transition probability} $M_{xy}$
for each pair of states $(x,y)\in E$ such that for each state $x$,
$\sum_y M_{xy} = 1$; an \emph{initial probability distribution} $\rho$
on the set of states $X$.

A Markov chain defines a probability distribution on
trajectories---paths through the chain.  Given a language $L \subseteq
\Sigma^\omega$, we denote by $P_\M(L)$ the probability of the set of
trajectories of $\M$ whose image under $V$ lies in $L$.  We consider
the complexity of the following model checking problem: Given a Markov
chain $\M$ and an LTL- or FO$^2$-formula $\varphi$, calculate
$P_\M(L(\varphi))$.  There is a decision version of this problem that
asks whether this probability exceeds a given rational threshold.

\myparagraph{Recursive Markov Chains} Recursive Markov chains (RMCs)
are the analog of RSMs in the probabilistic context. They are defined
as RSMs, except that the transition relation consists of triples $(u,
p_{u,v}, v)$ where $u$ and $v$ are as with RSMs, and the $p_{u,v}$ are
non-negative reals with $\Sigma_v p_{u,v}=1$ or $0$ for every $u$. As
with Markov chains, these define a probability distribution on
trajectories, but now trajectories are paths which must obey the
box-entry/box-exit discipline of an RSM.  The semantics of an RMC can
be found in \cite{rmc}.  A hierarchical Markov chain (HMC) is the
probabilistic analog of an HSM, that is, an RMC in which the calling
graph is acyclic. An HMC can be converted to an ordinary Markov chain
via unfolding, possibly incurring an exponential blow-up.
An example of an RMC is shown in Figure \ref{fig:rmc}.

\begin{figure}[h]
\begin{center}
\scalebox{1}{

\includegraphics{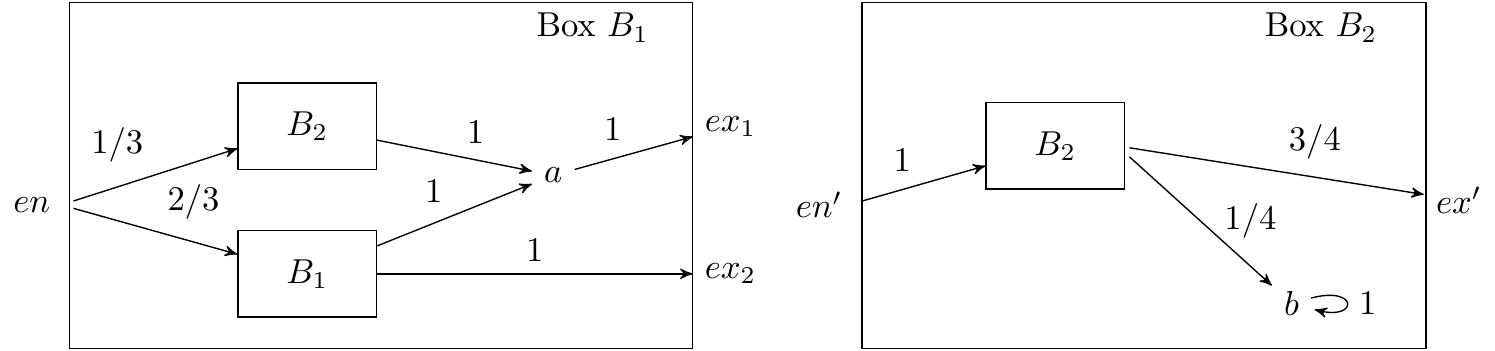}

}
\end{center}
\caption{A sample Recursive Markov Chain}
\label{fig:rmc}
\end{figure}

\myparagraph{Markov Decision Processes} We will also deal with
verification problems related to control of a probabilistic process by
a scheduler.  A \emph{Markov decision process (MDP)}
$\M=(\Sigma,X,N,R,V,E,M,\rho)$ consists of an \emph{alphabet}
$\Sigma$, a set $X$ of \emph{states}, which is partitioned into a set
$N$ of \emph{non-deterministic states} and a set $R$ of
\emph{randomising states}; a \emph{valuation} $V : X \rightarrow
\Sigma$, a set $E \subseteq X \times X$ of \emph{edges}, a
\emph{transition probability} $M_{xy}$ for each pair of states
$(x,y)\in E$, $x \in R$ such that $\sum_y M_{xy} = 1$; an
\emph{initial probability distribution} $\rho$.  This model is
considered in~\cite{CY95} under the name \emph{Concurrent Markov
  chain}.

We can view non-deterministic states as being controlled by the
scheduler, which given a trajectory leading to a non-deterministic
state $s$ chooses a transition out of $s$.  There are two basic
qualitative model checking problems: the \emph{universal problem}
($\forall$) asks that a given formula be satisfied with probability
one for all schedulers; the \emph{existential problem} ($\exists$)
asks that the formula be satisfied with probability one for some
scheduler.  The latter corresponds to the problem of designing a
system that behaves correctly in a probabilistic environment.  In the
\emph{quantitative model checking problem}, we ask for the maximal
probability for the formula to be satisfied on a given MDP when the
scheduler chooses optimal moves in the non-deterministic states.

\myparagraph{Two-player Games}
A \emph{two-player game} $G=(\Sigma,X,X_1,X_2,V,E,x_0)$ consists of 
an \emph{alphabet} $\Sigma$;
a set $X$
of \emph{states}, which is partitioned into a set $X_1$ of states
controlled by \emph{Player~I} and a set $X_2$ controlled
by \emph{Player II}; a set of $E \subseteq X \times X$
of \emph{transitions}; a \emph{valuation} $V:X \rightarrow \Sigma$; an
\emph{initial state} $x_0$.

The game starts in the initial state and then the player who controls
the current state, taking into account the whole history of the game,
chooses one of the possible transitions.  The verification problem of
interest is whether Player I has a strategy such that for all infinite
plays the induced infinite word $u \in \Sigma^\omega$ satisfies a
given formula $\varphi$.

\myparagraph{Stochastic Two-player Games} A \emph{Stochastic
  two-player game} ($2\frac{1}{2}$-player game)
$G=(X,X_1,X_2,R,V,E,M,p_0)$ consists of a set $X$ of \emph{states},
which is partitioned into a set $X_1$ of states controlled by
\emph{the first player}, a set $X_2$ controlled by \emph{the second
  player} and a set $R$ of \emph{randomising states}; a
\emph{valuation} $V:X \rightarrow \Sigma$; a set of $E \subseteq X
\times X$ of \emph{transitions}, a \emph{transition probability}
$M_{xy}$ for each pair of states $(x,y)\in E$, $x \in R$ such that
$\sum_y M_{xy} = 1$; an \emph{initial probability distribution}
$\rho$. See Figure \ref{fig:stochasticgame} for an example.

The \emph{universal} ($\forall$) qualitative model checking problem asks if the first player can 
enforce that the infinite word $u$, induced by the path 
through the game, satisfies $\varphi$ with probability one.

\begin{figure}[h]
\begin{center}
\scalebox{1}{

\includegraphics{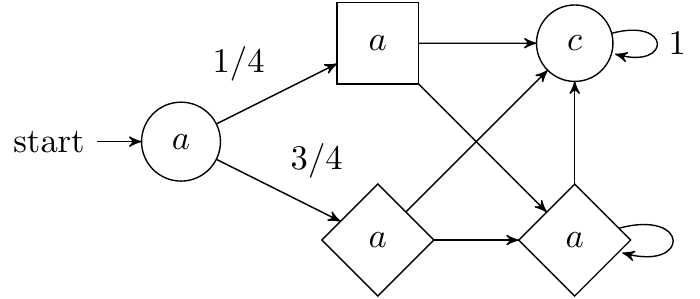}

}
\end{center} 
\caption{A sample Stochastic Two-player Game. Diamonds are states of the first player, squares are states of the second player and circles represent randomising states.}
\label{fig:stochasticgame}
\end{figure}

\section{Verifying non-deterministic systems}
\label{sec:nondet}
Model checking for traditional Kripke Structures is fairly
well-understood. All of our logics subsume propositional logic, and
the model checking problems we deal with generalise propositional
satisfiability---hence they are all NP-hard.  LTL and UTL are both
PSPACE-complete \cite{SC85}, while ($\TL$) is NP-complete.

Translation I shows how to convert an $\fotwo$ formula to a union of
exponential sized automata.  A NEXPTIME algorithm can guess such an
automaton, take its product with a given Kripke Structure, and then
determine non-emptiness of the resulting product. Coupled with the
hardness argument in \cite{fo2_utl}, this gives an alternative proof
of the result of Etessami, Vardi, and Wilke:
\begin{thm} \label{thm:kripke} 
\cite{fo2_utl}
$\fotwo$ model-checking is complete for
NEXPTIME.
\end{thm}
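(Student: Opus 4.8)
The plan is to prove the two matching bounds separately, deriving the upper bound from Translation~I and the lower bound from the satisfiability hardness of~\cite{fo2_utl}.

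For membership in \nexptime I would appeal directly to Theorem~\ref{thm:FO2_aut}. Given an input $\FOtwo$ formula $\varphi$ and a Kripke structure $K$ over alphabet $\Sigma$, that theorem guarantees that $L(\varphi)$ is the union of the languages of generalised B\"uchi automata $A_i$, each of size at most $2^{\mathit{poly}(|\varphi|)}$ and each constructible by a nondeterministic machine in time polynomial in its (exponential) size. The algorithm nondeterministically guesses one index $i$ together with the automaton $A_i$; guessing and verifying this object takes nondeterministic exponential time. It then forms the synchronous product of $A_i$ with $K$---an automaton of size $|K| \cdot 2^{\mathit{poly}(|\varphi|)}$---and tests nonemptiness by the usual search for a reachable accepting cycle, which runs in time polynomial in the product size and hence in exponential time overall. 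Since the languages $L(A_i)$ cover $L(\varphi)$, the structure $K$ admits a path whose labelling satisfies $\varphi$ if and only if some guessed product is nonempty; for the upper bound only the covering property, not the disjointness or unambiguity, is needed. This places the problem in \nexptime.

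For \nexptime-hardness I would reduce from satisfiability of $\FOtwo$, which is \nexptime-complete by~\cite{fo2_utl}. The idea is to turn satisfiability into model checking by building, for a formula over alphabet $\Sigma$, the \emph{universal} Kripke structure over $\Sigma$: a graph with one node per letter, each node labelled by its own letter, with an edge between every ordered pair of nodes and every node taken as initial, so that its set of path-labellings is exactly the set of all $\omega$-words over $\Sigma$. Then $\varphi$ is satisfiable precisely when this structure has a path satisfying $\varphi$, so existential model checking is at least as hard as satisfiability.

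The main obstacle is controlling the alphabet size, since the universal structure has one node per letter and would be exponentially large over the full powerset alphabet $2^{\mathcal{P}}$. As observed in the preceding discussion, however, the hardness reduction of~\cite{fo2_utl} for full $\FOtwo$ can be carried out using the successor predicate over a polynomially sized (indeed unary) alphabet, rather than relying on arbitrary Boolean combinations of predicates. Restricting to such a unary alphabet makes the universal structure polynomial in $|\varphi|$, so the reduction introduces no superpolynomial blow-up and the NEXPTIME lower bound transfers to model checking. Combining the two bounds yields completeness for \nexptime.
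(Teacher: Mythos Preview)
Your proposal is correct and matches the paper's approach: the upper bound via Translation~I (guess one $A_i$, form the product with the Kripke structure, test nonemptiness) is exactly what the paper does, and for the lower bound the paper likewise defers to the hardness argument in~\cite{fo2_utl}. Your added detail---that the reduction from satisfiability needs the alphabet to be polynomial, which is available for full $\FOtwo$ because successor lets one simulate rich letters with short windows---is a sound and useful elaboration of what the paper leaves implicit.
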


Below we extend these results to give a comparison of the complexity
of model checking for recursive state machines and two-player games,
applying all of the translations in the previous section.

\subsection{Recursive State Machines}
Using Translation II, we show that $\FOtwoLT$ model checking  can be done as efficiently as for $\TL$ on
non-deterministic systems, and in particular for RSMs.

\begin{prop} \label{npbound}
Model checking $\FOtwoLT$ properties on Kripke structures,
hierarchical and recursive state machines is in NP.
\end{prop}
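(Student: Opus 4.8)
The plan is to use the automaton family of Theorem~\ref{FO2automaton} together with the classical automata-theoretic approach to model checking, deploying nondeterminism only to select the right member of that family.

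I would begin by reducing to the unary alphabet restriction. In a Kripke structure, HSM, or RSM each node carries a label in $2^{\mathcal{P}}$; treating each distinct label occurring in the system as a single letter yields an alphabet $\Sigma$ with $|\Sigma|$ bounded by the size of the system, under which exactly one letter holds at each position. As noted before the statement, this is the regime in which model checking lives, and it makes $|\Sigma|$ polynomial in the input. The (existential) model-checking question is whether the system $M$ admits a path whose induced $\omega$-word satisfies $\varphi$, i.e.\ whether $L(M)\cap L(\varphi)\neq\emptyset$, where $L(M)$ is the set of $\omega$-words generated by $M$.

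Next I would apply Theorem~\ref{FO2automaton} to $\varphi$, obtaining a family $\{A_i\}_{i\in I}$ of generalised B\"uchi automata whose languages are disjoint with union $L(\varphi)$, each of size polynomial in $|\varphi|$ and $|\Sigma|$ and hence polynomial in the input, and each producible by a nondeterministic polynomial-time machine. Since $L(\varphi)=\bigcup_i L(A_i)$, we have $L(M)\cap L(\varphi)\neq\emptyset$ iff $L(M)\cap L(A_i)\neq\emptyset$ for some $i$. The NP procedure is therefore: nondeterministically run the construction of Theorem~\ref{FO2automaton} to emit one automaton $A_i$; convert it to an ordinary B\"uchi automaton (a polynomial step, since $A_i$ mentions only polynomially many pre-types, so all but polynomially many of the accepting sets $F_\tau$ contain every state of $A_i$ and impose no constraint); form the product with $M$; and test the product for non-emptiness, accepting iff some branch produces a non-empty product. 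Correctness is immediate from the displayed equivalence, and the only nondeterministic part is the guess of $A_i$.

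The remaining point is that, once $A_i$ is fixed, the product-and-emptiness test runs in deterministic polynomial time for each model class. For Kripke structures this is standard: the product has polynomial size and B\"uchi non-emptiness (a reachable accepting state lying on a cycle) is decidable in polynomial time. For RSMs the product of the machine with a B\"uchi automaton is again an RSM, and non-emptiness of the B\"uchi condition over an RSM is decidable in polynomial time by the reachability and repeated-reachability algorithms of Alur et al.~\cite{rsm}; HSMs are the acyclic special case and are covered a fortiori. I expect this recursive non-emptiness step to be the only genuine obstacle, and it is discharged entirely by citing~\cite{rsm}. Note that the unambiguity and determinism-in-the-limit guarantees of Theorem~\ref{FO2automaton} play no role here---plain non-emptiness suffices for the existential question---and are reserved for the probabilistic problems considered later. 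Combining the polynomial-time guess of $A_i$ with the polynomial-time verification yields the desired NP bound.
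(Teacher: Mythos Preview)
Your proof is correct and follows essentially the same route as the paper: reduce to the unary alphabet restriction, invoke Theorem~\ref{FO2automaton}, guess one automaton $A_i$ from the family, and test emptiness of the product with the RSM in polynomial time via the summary-edge construction of~\cite{rsm}. Your additional remarks---that only polynomially many of the accepting sets $F_\tau$ are non-trivial, and that unambiguity and determinism-in-the-limit are irrelevant here---are accurate and slightly more explicit than the paper's account.
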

\begin{proof}
We give the upper bound for RSMs only, since the other classes are special cases.
We describe an NP algorithm that checks satisfiability of an $\FOtwoLT$
sentence $\varphi$ on the language of RSM $\M$.   Model checking
the structure involves only combinations of propositions occurring in the structure,
and hence by expanding out these combinations explicitly, we can assume that
the unary alphabet restriction holds.  Thus we can
apply Translation II, from $\FOtwoLT$ to B\"uchi
Automata, Theorem
\ref{FO2automaton}. It suffices to
check
that one of the
 automata $A_i$ produced by the translation
accepts a word produced by $\M$. We can thus guess such
an $A_i$ 
and can then check intersection
of $A_i$ with $\M$ in polynomial time, 
by forming the product and checking that we can reach an accepting bottom strongly 
connected component. This reachability analysis can be done efficiently using
the ``summary edge construction''---see, e.g.,  \cite{rsm}.
\end{proof}

In the same way, we can obtain the result for model checking full $\FOtwo$ on RSMs, 
but now using the $\FO2$ to automata translation in Translation 1, Theorem \ref{thm:FO2_aut}.
Again we guess an automata $A_i$, which is now of exponential size. Thus we have:

\begin{prop} \label{rsmfo2}
$\FOtwo$ model checking of RSMs can be done in NEXPTIME.
\end{prop}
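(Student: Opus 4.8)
The plan is to mirror the NP argument of Proposition~\ref{npbound}, replacing Translation~II by the $\FOtwo$-to-automata construction of Theorem~\ref{thm:FO2_aut} and absorbing the resulting exponential blow-up into the NEXPTIME budget. As in Proposition~\ref{npbound}, I would first observe that model checking $\M$ only ever involves the finitely many propositional labellings that actually occur on nodes of $\M$; by expanding these combinations out explicitly we may assume the unary alphabet restriction, so that the $\omega$-word generated along any call/return-respecting path of $\M$ ranges over a fixed alphabet $\Sigma$ whose size is bounded by $|\M|$. The verification question is then precisely whether $L(\M)\cap L(\varphi)\neq\emptyset$, i.e.\ whether $\M$ has an infinite computation whose label word satisfies $\varphi$.

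Next I would apply Theorem~\ref{thm:FO2_aut} to $\varphi$, obtaining a family $\{A_i\}$ of unambiguous generalised B\"uchi automata, each of size $2^{\mathit{poly}(|\varphi|)}$, constructible by a nondeterministic machine in time polynomial in its (exponential) size, and with $\bigcup_i L(A_i)=L(\varphi)$. Since the languages $L(A_i)$ cover $L(\varphi)$, the machine $\M$ generates some word satisfying $\varphi$ if and only if there is an index $i$ with $L(A_i)\cap L(\M)\neq\emptyset$. The algorithm therefore \emph{guesses a single} $A_i$, which costs nondeterministic time exponential in $|\varphi|$, and then tests emptiness of $L(A_i)\cap L(\M)$. (For this upper bound only the covering property is needed; disjointness and unambiguity are not used here, though they are available.)

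For the emptiness test I would form the product of $A_i$ with $\M$, which is again an RSM whose size is polynomial in $|A_i|$ and $|\M|$, hence exponential in $|\varphi|$ and polynomial in $|\M|$. One then checks for an accepting run obeying the box-entry/box-exit discipline. Exactly as in Proposition~\ref{npbound}, the automata produced by the translation (via Theorem~\ref{BW:detLimit}) are deterministic in the limit, with all accepting states confined to a bottom strongly connected component, so this reduces to a reachability query for an accepting bottom SCC, solvable in time polynomial in the product size by the summary-edge construction for RSMs~\cite{rsm}; if one prefers, one may first convert the generalised B\"uchi condition to an ordinary one, a polynomial-in-$|A_i|$ step, and invoke the generic RSM B\"uchi-emptiness algorithm. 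The total running time is thus nondeterministic and exponential in $|\varphi|$ but polynomial in $|\M|$, yielding the NEXPTIME bound.

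I do not expect a genuinely new difficulty here: the proposition is essentially a corollary of Theorem~\ref{thm:FO2_aut} together with the polynomial-time RSM B\"uchi-emptiness machinery already used in Proposition~\ref{npbound}. The only points requiring care are bookkeeping in nature. First, the RSM-emptiness subroutine that was polynomial in the $\FOtwoLT$ setting is now applied to an \emph{exponentially larger} automaton, so I must confirm that its polynomial dependence on the product size keeps the overall cost within NEXPTIME rather than pushing it higher. Second, I should verify that the structural guarantees that made the intersection test a plain reachability query---unambiguity and the confinement of accepting states to a bottom SCC---survive the lift from $\FOtwoLT$ to full $\FOtwo$; these do hold, since the $A_i$ of Theorem~\ref{thm:FO2_aut} are built by applying Theorem~\ref{BW:detLimit} to the translated $\UTL$ formula and so inherit exactly these properties.
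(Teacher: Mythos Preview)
Your proposal is correct and follows essentially the same approach as the paper: guess one of the exponential-size automata $A_i$ from Theorem~\ref{thm:FO2_aut}, form the product with the RSM, and test emptiness via the summary-edge construction, exactly as in Proposition~\ref{npbound} but with an exponential rather than polynomial automaton. One minor remark: the unary-alphabet reduction you invoke is needed for Translation~II (Theorem~\ref{FO2automaton}) but not for Theorem~\ref{thm:FO2_aut}, whose size bound is already $2^{\mathit{poly}(|\varphi|)}$ over the full alphabet, so that preliminary step is harmless but unnecessary here.
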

This result matches the known result for ordinary Kripke structures.

\subsection{Two-player games with FO$^2$ winning condition}
Two-player games are known to be in 2EXPTIME for LTL \cite{PnueliR89}.
We now show that the same is true for $\FO2$, making use of 
Translation III in the previous section,
which translates to deterministic parity automata.  We also utilise the fact that a
parity game with $n$ vertices, $m$ edges and $d$ priorities can be
solved in time $O(dmn^d)$~\cite{J00}.

From these two results we easily conclude the 2EXPTIME upper bound:
\begin{prop}
\label{2playerFO2game}
Two-player games with $\FO2$ winning conditions are solvable in 2EXPTIME.
\end{prop}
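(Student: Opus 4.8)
The plan is to reduce the problem to solving a single parity game, exploiting the deterministic parity automaton furnished by Translation III. First I would apply Theorem~\ref{DRASUC} to the winning-condition formula $\varphi$ (of quantifier depth $k=\mathit{qdp}(\varphi)$), obtaining a deterministic parity automaton $\cA_\varphi=(\Sigma,S,s_0,\Delta,Pr)$ with $L(\cA_\varphi)=L(\varphi)$, having $2^{2^{O(k^2|\mathcal{P}|)}}$ states and $2^{O(k|\mathcal{P}|)}$ priorities. Because $\cA_\varphi$ is deterministic, it can be run synchronously alongside any play of the game without introducing fresh choices, so the only nondeterminism remaining is the players' strategic choices.

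The second step is to form the synchronous product of the game $G=(\Sigma,X,X_1,X_2,V,E,x_0)$ with $\cA_\varphi$. The product is itself a parity game: its vertices are pairs $(x,s)\in X\times S$, with $(x,s)$ controlled by whichever player controls $x$ in $G$; the initial vertex is $(x_0,\Delta(s_0,V(x_0)))$; from $(x,s)$ the owning player chooses an edge $(x,x')\in E$ and the product moves to $(x',\Delta(s,V(x')))$; and the priority of $(x,s)$ is $Pr(s)$. Since $\cA_\varphi$ is deterministic, the automaton component visited along a play records exactly the run of $\cA_\varphi$ on the induced word $V(x_0)V(x_1)\cdots$, so the highest priority seen infinitely often in the product play agrees with the parity of that run. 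Hence the product parity condition holds along a play if and only if the induced word lies in $L(\varphi)$, and Player~I wins the product game from the initial vertex precisely when Player~I has a strategy in $G$ forcing the induced word to satisfy $\varphi$.

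Finally, I would solve this parity game with the algorithm of~\cite{J00}, which runs in time $O(dmn^d)$ for a game with $n$ vertices, $m$ edges and $d$ priorities. Here $n=|X|\cdot 2^{2^{O(k^2|\mathcal{P}|)}}$ is doubly exponential, $m\le n^2$, and $d=2^{O(k|\mathcal{P}|)}$ is only singly exponential. The one computation to verify carefully is the exponent arithmetic $n^d=\bigl(2^{2^{O(k^2|\mathcal{P}|)}}\bigr)^{2^{O(k|\mathcal{P}|)}}=2^{2^{O(k^2|\mathcal{P}|)}\cdot 2^{O(k|\mathcal{P}|)}}=2^{2^{O(k^2|\mathcal{P}|)}}$, which shows the singly exponential exponent $d$ is absorbed into the doubly exponential base. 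Thus $O(dmn^d)$ is doubly exponential in $|\varphi|$ and polynomial in $|G|$, and adding the (doubly exponential) cost of constructing $\cA_\varphi$ and the (polynomial) cost of forming the product keeps the total within 2EXPTIME. I do not expect a genuine obstacle here: both key ingredients—the deterministic automaton and the parity-game solver—are already in hand, so the work reduces to confirming that the product faithfully transfers the winning condition to the parity acceptance condition and to checking the bound above.
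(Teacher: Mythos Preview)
Your proposal is correct and follows essentially the same approach as the paper: apply Theorem~\ref{DRASUC} to obtain a deterministic parity automaton of doubly exponential size with singly exponentially many priorities, take the product with the game graph, and solve the resulting parity game using the $O(dmn^d)$ algorithm of~\cite{J00}. The paper additionally remarks that restricting the automaton's alphabet to the labels actually occurring in the game yields only polynomially many priorities, but this refinement is not needed for the 2EXPTIME bound, and your explicit exponent arithmetic showing that $n^d$ remains doubly exponential is a fine substitute.
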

\begin{proof}
Using Theorem \ref{DRASUC}, we construct in 2EXPTIME a deterministic
parity automaton for the $\FO2$ formula $\varphi$ with doubly
exponentially many states and at most exponentially many priorities.
By taking the product of this automaton with the graph of the game, we
get a parity game with doubly exponentially many states but only
exponentially many priorities. (In fact if we define the automaton
over an alphabet $\Sigma \subseteq 2^{\mathcal{P}}$ containing only sets of
propositions that occur as labels of states in the game, then
polynomially many priorities suffice.)  We can then determine the
winner in double exponential time, again applying the $O(dmn^d)$ bound
for solving games of ~\cite{J00} mentioned above.
\end{proof}

Combining this with the result by Alur, La Torre, and Madhusudan, who
showed that two-player games are
2EXPTIME-hard \cite{gamesBoxesDiamonds} already for the simplest $\TL$, along with the
fact that we can convert UTL formula to $\FO2$ formula in polynomial
time, we get 2EXPTIME-completeness:
\begin{cor}
Deciding two-player games with $\FOtwo$ winning conditions is  complete for 2EXPTIME.
\end{cor}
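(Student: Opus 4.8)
The plan is to combine the upper bound already established in Proposition~\ref{2playerFO2game} with a matching lower bound obtained by importing the known hardness result for the weakest of our logics. For the upper bound there is nothing further to do: Proposition~\ref{2playerFO2game} already places two-player games with $\FOtwo$ winning conditions in 2EXPTIME. Hence the entire content of the corollary reduces to exhibiting a polynomial-time reduction establishing 2EXPTIME-hardness.

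For the lower bound I would start from the theorem of Alur, La Torre and Madhusudan~\cite{gamesBoxesDiamonds}, which shows that deciding two-player games is 2EXPTIME-hard already when the winning condition is a formula of $\TL$, the stutter-free unary fragment. The key observation is that any $\TL$ formula is in particular a $\UTL$ formula, and $\UTL$ embeds into $\FOtwo$ via the linear translation of Etessami, Vardi and Wilke~\cite{fo2_utl}. Thus, given an instance of the $\TL$ game problem consisting of a game graph $G$ and a $\TL$ winning condition $\psi$, I would translate $\psi$ into an equivalent $\FOtwo$ formula $\varphi$ of polynomial size while keeping the game graph $G$ unchanged. Since $\varphi$ and $\psi$ define the same $\omega$-language, the word induced by any infinite play satisfies $\varphi$ if and only if it satisfies $\psi$; therefore Player I has a winning strategy in the game $(G,\varphi)$ if and only if Player I has one in $(G,\psi)$. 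This is a polynomial-time many-one reduction from the 2EXPTIME-hard $\TL$ game problem to the $\FOtwo$ game problem, yielding 2EXPTIME-hardness of the latter.

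There is essentially no obstacle here, since both the upper bound and the source hardness result are already in hand; the only point requiring (routine) care is to confirm that the $\UTL$-to-$\FOtwo$ translation is genuinely polynomial and genuinely equivalence-preserving at position~$0$, so that it may be applied as a black box inside the reduction without disturbing either the game graph or the identity of the winner. Combining the matching upper and lower bounds then yields the stated 2EXPTIME-completeness.
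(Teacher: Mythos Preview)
Your proposal is correct and follows exactly the paper's argument: the upper bound is Proposition~\ref{2playerFO2game}, and the lower bound is inherited from the 2EXPTIME-hardness of $\TL$ games~\cite{gamesBoxesDiamonds} via the linear translation of $\UTL$ into $\FOtwo$ from~\cite{fo2_utl}.
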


The table below summarises both the known results and the results from this paper (in bold) concerning non-deterministic systems. All bounds are tight.
{
\renewcommand{\tabcolsep}{1mm}
\begin{small}
\begin{figure}[h!]
\begin{center}
\begin{tabular}{|l|c|c|c|c|c|}
\hline
       &$\TL$&$\UTL$&$\FOtwoLT$&$\FOtwo$&$\LTL$\\
\hline
Kripke Structure        & NP  & PSPACE & \CC NP  & NEXP          &PSPACE\\
HSM   & NP & PSPACE &  \CC NP & \CC NEXP & PSPACE\\
RSM          &  NP & EXP  &  \CC NP& \CC NEXP & EXP \\
Two pl. games          &  2EXP & 2EXP  &  \CC 2EXP& \CC 2EXP & 2EXP \\
\hline
\end{tabular}
\end{center}
\end{figure}
\end{small}
}

The PSPACE bound for model checking LTL on HSMs follows by expanding
the HSMs to `flat' Kripke structures and recalling that model checking
LTL on Kripke structures can be done in space polynomial in the
logarithm of the model size. Additionally, the complexity of model
checking UTL and LTL on RSMs is
EXPTIME-complete \cite{PDA-Reachability}, and model checking $\TL$ on
RSMs is NP-complete \cite{LTLonRSM}.

\section{Verifying probabilistic systems}
\label{sec:prob}
We now turn to probabilistic systems. Here we will make use of two key
properties of the automata produced by the first two
translations---unambiguity and determinism in the limit.  We will need
two lemmas, which show that the complexity bounds for model checking
unambiguous B\"uchi automata on various probabilistic systems are the
same as the bounds for deterministic B\"uchi automata on these
systems. First, following~\cite{ltltosepaut}, we note the following
property of unambiguous automata:

\begin{lem}
\label{BW:MCHlemma}
Given a Markov chain $\M=(\Sigma,X,V,E,M,\rho)$ and a generalised
B\"{u}chi automaton $A=(\Sigma,S,S_0,\Delta,\lambda,\mathcal{F})$ that is
unambiguous, $P_\M(L(A))$ can be computed
in time polynomial in $\M$ and $A$.
\end{lem}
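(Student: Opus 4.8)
The plan is to analyse the synchronous product of the Markov chain $\M$ and the automaton $A$, following the approach of \cite{ltltosepaut}, and to exploit unambiguity to reduce the computation of $P_\M(L(A))$ to a polynomial-size system of linear equations, thereby avoiding the exponential cost of determinising $A$. First I would build the product graph $G$ whose states are the reachable pairs $(x,s) \in X \times S$, with an edge $(x,s) \to (y,t)$ carrying weight $M_{xy}$ whenever $(x,y) \in E$, $(s,t) \in \Delta$, and the label of $t$ is compatible with $V(y)$; this graph has size polynomial in $\M$ and $A$. The essential point is that $G$ is \emph{not} a Markov chain: since $A$ is nondeterministic, the weights on the edges leaving a state may sum to more than one. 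The crucial consequence of unambiguity is that for any fixed trajectory $\pi$ of $\M$ the word $V(\pi)$ has at most one accepting run in $A$, so the path of $G$ over $\pi$ that corresponds to an accepting run is unique. This lets me treat the events ``the accepting run passes through $(y,t_1)$'' and ``through $(y,t_2)$'' for distinct $\Delta$-successors of a common product state as \emph{disjoint}, which is precisely what turns conditional probabilities into additive linear relations.

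Next I would handle the generalised B\"uchi condition by decomposing $G$ into strongly connected components. Almost every trajectory of $\M$ eventually enters a bottom SCC $B$ of $\M$ and visits every state of $B$ infinitely often, so it suffices to analyse the product over each such $B$ and to weight by the (polynomially computable) probability of reaching $B$. I would call a product-SCC $C$ projecting onto $B$ \emph{accepting} if it meets every acceptance set $F \in \mathcal{F}$ and is closed under the dynamics of $\M$, in the sense that every $\M$-move out of a state of $C$ can be matched by a product-edge that stays inside $C$. Using unambiguity one then argues a zero-one law: conditioned on the trajectory being recurrent in $B$ and on the unique run lying in an accepting SCC $C$, the run remains in $C$ and visits all of its accepting states infinitely often, so $V(\pi)$ is accepted with probability one. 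Consequently $P_\M(L(A))$ equals the probability that the unique run reaches an accepting SCC, and by the disjointness noted above these reachability events never overlap.

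Finally I would set up linear equations for the variables $z(x,s)$, the probability that the run from $(x,s)$ is accepting, fixing $z \equiv 1$ on the accepting SCCs (justified by the zero-one law) and imposing the recurrence $z(x,s) = \sum_{y} M_{xy} \sum_{t} z(y,t)$ obtained from the disjointness argument; this system has polynomial size and is solved by Gaussian elimination, after which $P_\M(L(A))$ is read off at the initial distribution $\rho$. The hard part will be the correctness of this reduction rather than its efficiency: because only unambiguity is assumed, the product is genuinely nondeterministic, so the delicate step is to prove that the candidate linear system has the true accepting-run probabilities as its relevant solution --- that is, to establish the zero-one law for recurrent accepting SCCs and to verify that the additive decomposition across successor states never double-counts a word's acceptance. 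This is exactly where unambiguity is indispensable and constitutes the crux of the argument; the remaining ingredients (SCC decomposition, reachability probabilities, and the linear solve) are routine and of polynomial size.
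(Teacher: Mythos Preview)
Your proposal is correct and follows essentially the same approach as the paper: build the product $\M \otimes A$, declare a product SCC \emph{accepting} when it meets every $F \in \mathcal{F}$ and is closed under the $\M$-dynamics, set the acceptance probability to $1$ there, and propagate by the linear recurrence $\xi_{x,s} = \sum_{(s,t)\in\Delta}\sum_{y:V(y)=\lambda(t)} M_{xy}\,\xi_{y,t}$, whose additive form is justified exactly by unambiguity. The paper's proof is terser (it does not first factor through bottom SCCs of $\M$ as you do, and it simply asserts uniqueness of the solution), but the product construction, the definition of accepting SCC, and the linear system are the same; your identification of the zero-one law and the no-double-counting step as the crux is spot on.
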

\proof
We define a directed graph $\M \otimes A$ representing the
synchronised product of $\M$ and $A$.  The vertices of $\M \otimes A$
are pairs $(x,s) \in X \times S$ with matching propositional labels,
i.e., such that $V(x)=\lambda(s)$; the set of directed edges is
$\{((x,s),(y,t)) : (x,y) \in E \mbox{ and } (s,t) \in \Delta\}$.  We
say that a strongly connected component (SCC) of $\M \otimes A$ is
\emph{accepting} if (i) for each set of accepting states $F \in
\mathcal{F}$ it contains a pair $(x,s)$ with $s \in F$ and (ii) for
each pair $(x,s)$ and each transition $(x,y) \in E$, there exists
$(s,t) \in \Delta$ such that $(y,t)$ is in the same SCC as $(x,s)$.
This guarantees that we can stay in the SCC and visit each of its
states infinitely often.

Let $L(A,s)$ denote the set of words accepted by $A$ starting in state $s$.
For each vertex $(x,s)$ of $\M \otimes A$ we have a variable
$\xi_{x,s}$ representing the probability $P_{\M,x}(L(A,s))$ of all runs
of $\M$ starting in state $x$ that are in $L(A,s)$.  These
probabilities can be computed as the unique solution of the following
linear system of equations:
\begin{eqnarray*}
\xi_{x,s}  & = & 1 \qquad \mbox{$(x,s)$ in an accepting SCC}\\
\xi_{x,s}  & = & 0 \qquad \mbox{$(x,s)$ in a non-accepting SCC}\\
\xi_{x,s}  & = & \sum_{(s,t)\in\Delta}\;
                \sum_{y:V(y)=\lambda(t)}
               M_{xy} \cdot \xi_{y,t} \qquad \mbox{ otherwise.}
               \end{eqnarray*}
The correctness of the third equation follows from the following calculation:
\begin{eqnarray*}
P_{\M,x}(L(A,s))
          & = & P_{\M,x}(\;\bigcup_{(s,t)\in\Delta} 
                \lambda(s)\cdot L(A,t)\;)\\
          & = & \sum_{(s,t)\in \Delta} P_{\M,x}(\lambda(s)\cdot L(A,t))
\;\;\mbox{ (since $A$ is unambiguous)} \\
          & = & \sum_{(s,t)\in\Delta}\,\sum_{y:V(y)=\lambda(t)}
                M_{xy} \cdot P_{\M,y}(L(A,y)) \, .
\rlap{\hbox to 115 pt{\hfill\qEd}}
\end{eqnarray*}

\noindent For an RMC $\M$, we can compute reachability probabilities
$q_{(u,ex)}$ of exiting a component $M_i$ starting at state $u \in
V_i$ going to exit $ex \in Ex_i$.  Etessami and Yannakakis \cite{rmc}
show that these probabilities are the unique solution of a system of
non-linear equations which can be found in polynomial space using a
decision procedure for the existential theory of the reals. Following
\cite{rmc} for every vertex $u \in V_i$ we let $ne(u)=1-\sum_{ex \in
  Ex_i}q_{(u,ex)}$ be the probability that a trajectory beginning from
node $u$ never exits the component $M_i$ of $u$.  Etessami and
Yannakakis~\cite{rmc_mc} also show that one can check properties
specified by deterministic B\"uchi automata in PSPACE, while for
non-deterministic B\"uchi automata they give a bound of EXPSPACE.
Thus the prior results would give a bound of EXPSPACE for UTL and
2EXPSPACE for $\FO2$. We will improve upon both these bounds.  We
observe that the technique of~\cite{rmc_mc} can be used to check
properties specified by non-deterministic B\"uchi automata that are
unambiguous in the same complexity as deterministic ones. This will
then allow us to apply our logic-to-automata translations.

\begin{prop} \label{thm:rmcsepba} 
Given an unambiguous B\"uchi automaton $A$ and a RMC $\M$, we can compute
the probability that $A$ accepts a trajectory of $\M$ in PSPACE.
\end{prop}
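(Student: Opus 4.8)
The plan is to adapt the decision procedure of Etessami and Yannakakis~\cite{rmc_mc} for deterministic B\"uchi automata, replacing its reliance on determinism with the weaker unambiguity hypothesis. Their procedure forms the synchronised product of the RMC $\M$ with the automaton $A$: since a deterministic $A$ induces a unique run along each trajectory of $\M$, the product is again an RMC, and the acceptance probability becomes a property of this product RMC expressible by a system of (nonlinear) equations over the exit probabilities $q_{(u,ex)}$ and never-exit probabilities $ne(\cdot)$ of~\cite{rmc}. Such a system is a sentence of the existential theory of the reals and is solvable in PSPACE.

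First I would form the product $\M \otimes A$, whose vertices are pairs $(x,s)$ of an RMC state and an automaton state with matching labels, and whose box/entry/exit discipline is inherited from $\M$ while the automaton component tracks the run of $A$. Because $A$ is merely unambiguous rather than deterministic, this product is no longer an RMC---a given trajectory of $\M$ may be compatible with several automaton transitions. However, the key observation, exactly as in the Markov chain case of Lemma~\ref{BW:MCHlemma}, is that unambiguity makes the languages $\lambda(s)\cdot L(A,t)$ pairwise disjoint as $(s,t)$ ranges over the transitions out of $s$. Hence the acceptance probability decomposes as a sum over automaton transitions without overcounting, which is precisely the algebraic identity needed to write correct fixpoint equations even though the product is not itself an RMC.

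Concretely, I would introduce for each product vertex a variable for the probability of an accepting run from that vertex, together with auxiliary exit/never-exit probabilities for each product component, and impose equations combining the recursive structure of the RMC (the nonlinear equations for box entries and exits, as in~\cite{rmc}) with the per-transition decomposition justified by unambiguity. The B\"uchi acceptance condition is handled by isolating product components that are never exited and checking, within them, that an accepting state is visited infinitely often with probability one---again using the unambiguity-based disjointness so that these conditional probabilities satisfy the same equations as in the deterministic case. The resulting system has size polynomial in $|\M|$ and $|A|$, so it can be evaluated in PSPACE.

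The main obstacle will be proving correctness of these equations in the recursive setting: I must verify that the sum-over-transitions decomposition of Lemma~\ref{BW:MCHlemma} continues to hold when trajectories obey the box-entry/box-exit discipline and may fail to exit a component, so that both the nonlinear RMC equations and the B\"uchi acceptance analysis remain sound under unambiguity rather than determinism. Once this disjointness of runs over trajectories is established, the reduction to the existential theory of the reals and the PSPACE bound follow by the same argument Etessami and Yannakakis use in the deterministic case.
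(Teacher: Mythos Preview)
Your proposal is correct and follows essentially the same approach as the paper: form the product $\M \otimes A$, set up the nonlinear summary-edge equations of Etessami and Yannakakis, justify the sum over automaton successors by the disjointness that unambiguity provides, and decide the resulting system via the existential theory of the reals. The one simplification the paper adds that you do not mention is to assume without loss of generality that $A$ has a total transition function, so the never-exit probability from a product vertex $(x,s)$ equals $ne(x)$ computed in $\M$ alone; this avoids introducing separate never-exit variables for the product.
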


\begin{proof}
Let $A$ be an unambiguous B\"uchi automaton with set of states $Q$,
transition function $\Delta$ and labelling function $\lambda$.  Let $\M$
be an RMC with valuation $V$.  We define a product RMC $\M \otimes A$
with component and call structure coming from $\M$ whose states are
pairs $(x,s)$, with $x$ a state of $\M$ and $s$ a state of $A$ such
that $V(x)=\lambda(s)$ (i.e., $x$ and $s$ have the same label).  Such a
pair $(x,s)$ is accepting if $s$ is an accepting state of $A$.  A run
through the product chain is accepting if at least one of the
accepting states is visited infinitely often.  Note that a path
through $\M$ may expand to several runs in $\M \otimes A$ since $A$ is
non-deterministic.

For each $i$, for each vertex $x\in V_i$, exit $ex \in Ex_i$ and
states $s,t \in Q$ we define $p(x,s \rightarrow ex,t)$ to be the
probability that a trajectory in RMC $\M$ that begins from a
configuration with state $x$ and some non-empty context (i.e. not at
top-level) expands to an accepting run in $\M \otimes A$ from $(x,s)$ to
$(ex,t)$.  

Just as in the case of deterministic automata, we can compute $p(x,s
\rightarrow ex,t)$ as the solution of the following system of
non-linear equations:

If $x \in V_i$ is not entrance of the box we have:
$$p(x,s \rightarrow ex,t) = \sum_{x':(x,M_{xx'},x') \in \delta_i}
M_{xx'} \sum_{s':(s,s') \in \Delta \wedge \lambda(s')=V(x')} p(x',s'
\rightarrow ex,t)
$$

If $x \in V_i$ is entrance of the box $b \in B_i$ then we include the
equations:
$$p(x,s \rightarrow ex,t) = \sum_{j, s' \in Q} 
p((b,en),s \rightarrow (b,ex_j),s') p((b,ex_j),s' \rightarrow ex,t)$$
where $p((b,en),s \rightarrow (b,ex_j),s') = p(en_{Y_i(b)},s \rightarrow ex_j, s')$ and $ex_j \in Ex_{Y_i(b)}$.

The justification for these equations is as follows.  Since $A$ is
unambiguous, each trajectory of $\M$ expands to at most one accepting
run of $\M \otimes A$.  Thus in summing over automaton states $s'$ in
the two equations above we are summing probabilities over disjoint
events which correctly gives us the probability of the union of these
events.
 
We now explain how these probabilities can be used to compute the
probability of acceptance.  We assume without loss of generality that
the transition function of $A$ is total.

We construct a finite-state \emph{summary chain} for the product
$\M \otimes A$ exactly as in the case of deterministic
automata~\cite{rmc_mc}.  For each component $M_i$ of $\M$, vertex $x$
of $M_i$, exit $ex \in Ex_i$ and for each pair of states $s,t$ of $A$
the probability to transition from $(x,s)$ to $(ex,t)$ in the summary
chain is calculated from $p(x,s \rightarrow ex,t)$ after adjusting for 
probability $ne(x)$ that $\M$ never exits $M_i$ starting at vertex $x$.
Note that since automaton $A$ is non-blocking, the probability of
never exiting the current component of $\M \otimes A$ starting at
$(x,s)$ is the same as $ne(x)$ (the probability of never exiting the
current component from vertex $x$ in the RMC $\M$ alone).  

To summarise, we first compute reachability probabilities $q_{(u,ex)}$
and probabilities $ne(u)$ for the RMC $\M$. Then we consider the
product $\M \otimes A$ and solve a system of non-linear equations to
compute the probabilities of summary transitions $p(x,s \rightarrow
ex,t)$.  From these data we build the summary chain, identify
accepting SCCs and compute the resulting probabilities in the same way
as in \cite{rmc_mc}. All these steps can be expressed as a formula and
its truth value can be decided using existential theory of the reals
in PSPACE.
\end{proof}

\subsection{Markov chains} \label{subsec:markov_chains}

We are now ready to prove a new bound for the model checking problem
on our most basic probabilistic system,  Markov chains.
  Courcoubetis and Yannakakis \cite{CY95} showed that
one can determine if an LTL formula holds with non-zero probability in
a Markov chain in PSPACE\@.  This gives a PSPACE upper bound for $\TL$
and an EXPSPACE upper bound for $\FOtwo$.  We will show how to get
better bounds, even in the
quantitative case, using the logic-to-automata translations.

\begin{prop}
\label{thm:fo2_mc}
Model checking $\TL$ or $\FOtwoLT$ on Markov chains is in \#P.
\end{prop}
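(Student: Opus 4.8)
The plan is to reduce the probabilistic model-checking problem to counting accepting structures, exploiting the automata from Translation II together with the unambiguity lemma for Markov chains (Lemma~\ref{BW:MCHlemma}). Concretely, given a Markov chain $\M$ and a $\FOtwoLT$ formula $\varphi$, I want to show that the probability $P_\M(L(\varphi))$ can be expressed as a $\#P$-style sum, so that the probability-computation problem lies in $FP^{\#P}$ and is $\#P$-hard in the sense discussed in the preliminaries.

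First I would invoke Theorem~\ref{FO2automaton}: after expanding the finitely many propositional combinations occurring in $\M$ so that the unary alphabet restriction holds, $\varphi$ translates into a collection $\{A_i\}_{i \in I}$ of unambiguous generalised B\"uchi automata, each of \emph{polynomial} size in $|\varphi|$ and $|\Sigma|$, whose languages are pairwise disjoint and whose union is exactly $L(\varphi)$. By disjointness we have
\[
P_\M(L(\varphi)) = \sum_{i \in I} P_\M(L(A_i)) \, ,
\]
and by Lemma~\ref{BW:MCHlemma} each summand $P_\M(L(A_i))$ is computable in time polynomial in $\M$ and $A_i$, hence in polynomial time overall for a single $i$. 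The index set $I$ has size $2^{\mathit{poly}(|\varphi|,|\Sigma|)}$, so the sum ranges over exponentially many terms, but each individual term is produced by an NP guess of $A_i$ followed by a polynomial-time deterministic computation. This is precisely the shape of a $\#P$ computation: a nondeterministic polynomial-time machine guesses $i$, constructs $A_i$, and the contribution of each accepting branch encodes $P_\M(L(A_i))$.

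The one subtlety to address is that $\#P$ counts the \emph{number} of accepting paths, an integer, whereas each $P_\M(L(A_i))$ is a rational obtained by solving a linear system. I would handle this exactly as the paper flags in the preliminaries: the computation fits into $FP^{\#P}$, and by abuse of terminology we call the problem $\#P$-complete. The standard device is to put all the $P_\M(L(A_i))$ over a common denominator (determined by $\M$ and the polynomial bound on automaton size), so that the total probability is a sum of integers divided by a fixed integer, and the numerator sum is computed by a single $\#P$ oracle call. The main obstacle, then, is not the upper bound but verifying the representation is polynomially bounded: one must check that the denominators arising from Lemma~\ref{BW:MCHlemma}'s linear systems across all the polynomially-sized $A_i$ share a common bound of polynomial bit-length, so that the aggregate numerator is genuinely a $\#P$ function rather than merely an $FP^{\#P}$ one. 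Since each $A_i$ is polynomial-sized and deterministic in the limit, the linear systems have polynomial dimension with entries drawn from $\M$'s transition probabilities, keeping the bit-complexity controlled. Matching $\#P$-hardness would follow from a reduction from $\#\mathrm{SAT}$, encoding satisfying assignments as trajectories, but the interesting direction and the crux of the argument is the upper bound via the disjoint unambiguous decomposition.
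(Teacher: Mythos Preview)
Your proposal is correct and follows essentially the same approach as the paper: decompose $L(\varphi)$ via Theorem~\ref{FO2automaton} (and Theorem~\ref{BW:detLimit} for $\TL$) into a disjoint union of polynomially-sized unambiguous automata, compute each $P_\M(L(A_i))$ in polynomial time using Lemma~\ref{BW:MCHlemma}, and sum. You are in fact more careful than the paper about the integer-versus-rational issue, which the paper handles only by the blanket remark in the preliminaries.
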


\begin{proof}
Let $\varphi$ be a $\TL$ or $\FOtwoLT$ formula and $\M$ a Markov chain.  Using
Theorem~\ref{BW:detLimit} in case of $\TL$ and
Theorem~\ref{FO2automaton} in case of $\FOtwoLT$, we have that for formula $\varphi$ there
is a family $\{A_i\}$ comprising at most
$2^{\mathit{poly}(|\varphi|,|\Sigma|)}$ unambiguous generalised
B\"{u}chi automata, whose languages partition $\{ w \in \Sigma^\omega
: w \models \varphi\}$.  Moreover, each $A_i$ has at most
$|\varphi||\Sigma|$ states and can be generated in polynomial time
from $\varphi$ and index $i$.  By Lemma 
\ref{BW:MCHlemma}
we can
further compute the probability $p_i$ of $\M$ satisfying $A_i$ in
polynomial time in the sizes of $\M$ and $A_i$.  Since each $p_i$ is
computable in polynomial time we can determine $\sum_i p_i$ in $\#P$.
\end{proof}

\begin{prop} \label{thm:mcfo2mc}
The threshold problem for model checking $\FOtwo$  on
Markov chains is in PEXP.
\end{prop}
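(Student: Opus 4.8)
The plan is to reduce the threshold problem to counting, mirroring the $\#P$ argument for Proposition~\ref{thm:fo2_mc} but scaled up by one exponential. By Theorem~\ref{thm:FO2_aut} the language $L(\varphi)$ is the disjoint union of the languages $L(A_i)$ of a family $\{A_i\}_{i \in I}$ of $2^{2^{\mathit{poly}(|\varphi|)}}$ unambiguous generalised B\"uchi automata, each of size at most $2^{\mathit{poly}(|\varphi|)}$. Since the languages are pairwise disjoint, $P_\M(L(\varphi)) = \sum_{i \in I} p_i$ where $p_i := P_\M(L(A_i))$, and by Lemma~\ref{BW:MCHlemma} each $p_i$ is computable in time polynomial in $\M$ and $A_i$, hence in exponential time. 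The threshold question ``$P_\M(L(\varphi)) > \theta$'' thus becomes ``$\sum_{i} p_i > \theta$'', a comparison of a sum of doubly-exponentially many exponentially-computable rationals against the rational input $\theta = a/b$.

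To place this in PEXP I would first fix a \emph{single} common denominator $D$ for all the $p_i$. Let $\Delta$ be the product of the denominators of the transition probabilities of $\M$ (a fixed integer of polynomial bit-length). For each $i$ the probability $p_i$ is obtained in Lemma~\ref{BW:MCHlemma} as the solution of a linear system over the product $\M \otimes A_i$, whose dimension is at most $n^\ast := |\M|\,2^{\mathit{poly}(|\varphi|)}$; by Cramer's rule the denominator of $p_i$ divides $\Delta^{O(n^\ast)}$. Hence $D := \Delta^{O(n^\ast)}$ is a common denominator for every $p_i$, of \emph{exponential} bit-length, and each $c_i := D\,p_i$ is a nonnegative integer of exponential bit-length computable in exponential time from $A_i$. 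Writing $\theta = a/b$ with $a,b$ positive integers, the threshold question is equivalent to the integer inequality $b\sum_i c_i > aD$.

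I would then build a nondeterministic exponential-time machine $T$ whose number of accepting computations equals $b\sum_i c_i$: $T$ guesses a path $\pi$ in the SCC-DAG underlying Theorem~\ref{thm:FO2_aut} (this is exactly the index $i$; distinct paths yield distinct, language-disjoint automata, so each $A_i$ is produced along a unique branch), deterministically constructs $A_i$ and computes $c_i$, then branches to realise $c_i$ accepting paths --- since $c_i$ has exponentially many bits this is done by guessing that many bits $z$ and accepting iff $z < c_i$ --- and finally multiplies the count by $b$ by an analogous guess. Thus $\#\mathrm{acc}(T) = b\sum_i c_i$, all within exponential time. A standard ``counterweight'' gadget then converts the threshold ``$\#\mathrm{acc}(T) > aD$'' into the PEXP acceptance condition: pad $T$ to $2^m$ branches and, on an equal number of fresh branches, accept on exactly $2^m - aD$ of them (by guessing $m$ bits and comparing against $aD$, with $m$ chosen exponential so that $2^m$ dominates every quantity above); then more than half of all $2^{m+1}$ branches accept if and only if $\#\mathrm{acc}(T) > aD$.

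The main obstacle is the bookkeeping that keeps all of this within exponential time: establishing the \emph{uniform} exponential-size common denominator $D$ (so that the $c_i$ are genuine integers of bounded size, rather than a doubly-exponential $\mathrm{lcm}$ of the individual denominators), and verifying that the enumeration of the family $\{A_i\}$ by SCC-DAG paths is a bijection, so that $\sum_i c_i$ counts each automaton exactly once. Once these points are settled, the counting realisation and the conversion to the majority-acceptance condition are routine, and the entire procedure runs nondeterministically in exponential time, witnessing membership in PEXP.
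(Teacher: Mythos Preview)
Your high-level approach is exactly the paper's: the proof of Proposition~\ref{thm:mcfo2mc} there is a single sentence, saying that the argument of Proposition~\ref{thm:fo2_mc} carries over once Theorem~\ref{thm:FO2_aut} (doubly-exponentially many unambiguous automata of exponential size) replaces the polynomial-size family used for $\FOtwoLT$. The counting-plus-counterweight reduction to the PEXP majority condition that you spell out is the standard way to make such a statement precise, and your bijective-enumeration worry is easily discharged since the index is an SCC-DAG path guessed step by step.

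There is, however, a concrete error in your resolution of the common-denominator obstacle. Cramer's rule does \emph{not} show that the denominator of $p_i$ divides $\Delta^{O(n^\ast)}$; it shows only that this denominator divides the integer $\det\bigl(\Delta\,(I-P_i)\bigr)$, which depends on $i$ and need not be a power of $\Delta$ at all. Already a $2\times 2$ instance with $\Delta=2$ breaks the claim: for $I-P=\left(\begin{smallmatrix}1&-1/2\\-1/2&1\end{smallmatrix}\right)$ one has $\det(2(I-P))=3$, and the solution of $(I-P)x=(0,1)^T$ is $(2/3,4/3)$. Since the automata $A_i$ yield genuinely different system matrices, the integers $\det(\Delta(I-P_i))$ vary with $i$ and could be pairwise coprime, so no single $D$ of exponential bit-length of the form you propose is apparent; the na\"ive lcm is doubly exponential. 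You rightly flag this as ``the main obstacle,'' but the sentence ``Hence $D:=\Delta^{O(n^\ast)}$ is a common denominator for every $p_i$'' does not settle it. (To be fair, the paper's one-line proof elides this issue entirely, and its preliminaries explicitly warn that the $\#P$/PEXP claims are informal abuses of notation.)
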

\begin{proof}
The result follows by the same argument as in
Proposition \ref{thm:fo2_mc}, as we are essentially in the same situation,
but now by Theorem \ref{thm:FO2_aut} we have a collection of doubly-exponentially many automata, each of
exponential size. 
\end{proof}

\subsection{Hierarchical and Recursive Markov chains}

Similarly, we get the following results for recursive Markov chains (and in particular
for  hierarchical Markov chains):

\begin{prop} \label{thm:rmctl} The probability of a $\TL$ or $\FOtwoLT$ formula holding on a recursive Markov chain can be computed in PSPACE.
\end{prop}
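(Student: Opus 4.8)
The plan is to mirror the proof of Proposition~\ref{thm:fo2_mc}, replacing the Markov-chain probability computation of Lemma~\ref{BW:MCHlemma} by its recursive analogue, Proposition~\ref{thm:rmcsepba}. First I would reduce to the unary alphabet restriction exactly as in the proof of Proposition~\ref{npbound}: since model checking an RMC $\M$ only ever refers to the finitely many combinations of propositions that actually label configurations of $\M$, we may expand these combinations out explicitly and assume that a unique proposition holds at each position. Under this restriction I would then apply Translation~I (Theorem~\ref{BW:detLimit}, specialised to $n=0$ for a $\TL$ formula) or Translation~II (Theorem~\ref{FO2automaton} for an $\FOtwoLT$ formula) to obtain a family $\{A_i\}_{i\in I}$ of at most $2^{\mathit{poly}(|\varphi|,|\Sigma|)}$ generalised B\"uchi automata, each of size polynomial in $|\varphi|$ and $|\Sigma|$, each unambiguous, and whose languages $L(A_i)$ partition $\{w\in\Sigma^\omega : w\models\varphi\}$. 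Since the languages are disjoint, $P_\M(L(\varphi)) = \sum_{i\in I} P_\M(L(A_i))$.

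The second step is to compute a single term of this sum. Each $A_i$ is unambiguous, so Proposition~\ref{thm:rmcsepba} applies directly: the probability $p_i := P_\M(L(A_i))$ that $A_i$ accepts a trajectory of $\M$ can be computed in space polynomial in $|\M|$ and $|A_i|$, hence in space polynomial in $|\M|$ and $|\varphi|$. Concretely, $p_i$ is realised as the value of a designated variable in the solution of the polynomial system over the reals built from the summary chain of the product $\M\otimes A_i$, and the associated threshold query is decidable in PSPACE via the existential theory of the reals.

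It remains to combine the terms, and this is where the real work lies. We cannot simply form the disjoint union $\bigsqcup_i A_i$ and feed it to Proposition~\ref{thm:rmcsepba}: although that union is again unambiguous (each word lands in at most one $L(A_i)$, and in at most one run within it), it has exponentially many states, so the direct bound would only be EXPSPACE. Instead I would enumerate the indices $i\in I$ using a polynomial-size counter---legitimate because $|I|\le 2^{\mathit{poly}}$ and each $A_i$ is uniformly constructible from $i$ in polynomial time---and accumulate $\sum_i p_i$ for comparison against the rational threshold. The main obstacle I expect is that the $p_i$ are in general irrational algebraic numbers, so the accumulation must be carried out symbolically rather than by manipulating explicit bit-representations (this is precisely the point where the $\#P$ argument of Proposition~\ref{thm:fo2_mc} cannot simply be copied). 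I would resolve it by phrasing the whole threshold question ``$\sum_{i\in I} p_i \bowtie \theta$'' as one statement in the theory of the reals: the per-automaton systems of Proposition~\ref{thm:rmcsepba} are uniform in $i$ and of polynomial size, and the index $i$ can itself be encoded by polynomially many Boolean-valued real variables, so that summation over an index set enumerable in polynomial space keeps the decision within PSPACE (using PSPACE-decidability of the existential theory of the reals together with the closure of PSPACE under such polynomially-indexed summation, in the spirit of $\#\mathrm{PSPACE}=\mathrm{FPSPACE}$). Putting the three steps together yields the claimed PSPACE bound, and specialising the argument to the acyclic case gives the same bound for hierarchical Markov chains.
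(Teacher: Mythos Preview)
Your approach is essentially identical to the paper's: apply Theorem~\ref{BW:detLimit} (for $\TL$) or Theorem~\ref{FO2automaton} (for $\FOtwoLT$) to obtain exponentially many polynomial-size unambiguous automata, invoke Proposition~\ref{thm:rmcsepba} on each, and sum the resulting probabilities while iterating over the index set in polynomial space. The paper's own proof is in fact terser than yours and does not address the irrationality concern you raise---it simply asserts that one can ``sum these probabilities for each automaton'' in polynomial space---so your discussion of that step (and your explicit reduction to the unary alphabet restriction) goes beyond what the paper provides, even if your $\#\mathrm{PSPACE}=\mathrm{FPSPACE}$ justification would still need to be made fully precise.
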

\begin{proof}
By Theorem~\ref{BW:detLimit} in case of $\TL$ and
by Theorem~\ref{FO2automaton} in case of $\FOtwoLT$, we can convert a formula $\varphi$
into an equivalent disjoint union of exponentially many unambiguous
automata of polynomial size (in $|\varphi|$ and $|\Sigma|$) and the RMC.  Using
polynomial space we can generate each automaton, calculate
the probability that the RMC generates an accepting trajectory by Proposition~\ref{thm:rmcsepba} , and
sum these probabilities for each automaton.
\end{proof}

\begin{cor} \label{cor:rmctl} The probability of a $\TL$ or $\FOtwoLT$ formula holding on a 
hierarchical Markov chain can be computed in PSPACE.
\end{cor}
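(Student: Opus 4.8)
The plan is to observe that the claimed bound is an immediate consequence of Proposition~\ref{thm:rmctl}, exploiting the fact that hierarchical Markov chains form a subclass of recursive Markov chains. Recall from Section~\ref{sec:models} that an HMC is defined to be precisely an RMC in which the calling graph between boxes is acyclic. Hence every HMC is in particular an RMC, and any model-checking procedure that works for RMCs applies to HMCs without modification.

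Concretely, I would simply invoke Proposition~\ref{thm:rmctl} with the given HMC $\M$ regarded as an RMC. The translation of the $\TL$ or $\FOtwoLT$ formula $\varphi$ into a disjoint union of exponentially many unambiguous automata of polynomial size (Theorem~\ref{BW:detLimit} for $\TL$, Theorem~\ref{FO2automaton} for $\FOtwoLT$) is unaffected, the PSPACE computation of the acceptance probability on each product via Proposition~\ref{thm:rmcsepba} proceeds identically, and summing the per-automaton probabilities remains within polynomial space. Since nothing in the RMC algorithm relies on cyclicity of the calling graph, acyclicity imposes no additional requirements, and the same PSPACE bound is inherited.

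There is essentially no obstacle; the only thing to verify is the definitional inclusion of HMCs within RMCs, which holds by construction. As an alternative one could instead unfold the HMC into a flat Markov chain and apply the \#P bound of Proposition~\ref{thm:fo2_mc}, but the unfolding incurs an exponential blow-up in the model size and therefore yields a weaker guarantee, making the direct subsumption argument the preferred route.
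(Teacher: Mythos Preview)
Your proposal is correct and matches the paper's approach exactly: the corollary is stated without proof immediately after Proposition~\ref{thm:rmctl}, since HMCs are by definition a subclass of RMCs and the PSPACE bound is inherited directly.
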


\begin{prop} \label{thm:rmcfo2} The probability of an  $\FO2$ formula holding on an RMC can be computed in EXPSPACE.
\end{prop}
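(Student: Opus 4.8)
The plan is to follow the proof of Proposition~\ref{thm:rmctl} essentially verbatim, replacing the polynomial-size unambiguous automata supplied there by Theorem~\ref{FO2automaton} with the exponential-size unambiguous automata supplied by the $\FOtwo$-to-automata translation of Theorem~\ref{thm:FO2_aut}, and paying one extra exponential in every resource. This is the recursive-chain analogue of the passage from Proposition~\ref{thm:fo2_mc} to Proposition~\ref{thm:mcfo2mc} in the Markov-chain setting.

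First I would apply Theorem~\ref{thm:FO2_aut} to the input formula $\varphi$, obtaining a family $\{A_i\}_{i \in I}$ of $2^{2^{\mathit{poly}(|\varphi|)}}$ unambiguous generalised B\"uchi automata, each of size at most $2^{\mathit{poly}(|\varphi|)}$, whose languages are pairwise disjoint and partition $\{w \in \Sigma^\omega : w \models \varphi\}$. Each $A_i$ is produced by a procedure running in time polynomial in its (exponential) size, and an index $i \in I$ can be written down using exponential space. Since the $L(A_i)$ partition $L(\varphi)$, the target quantity is $P_\M(L(\varphi)) = \sum_{i \in I} P_\M(L(A_i))$.

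Next, for a single index $i$ I would invoke Proposition~\ref{thm:rmcsepba}: because $A_i$ is unambiguous, the probability $P_\M(L(A_i))$ that the RMC $\M$ generates a trajectory accepted by $A_i$ can be expressed and decided through the existential theory of the reals in space polynomial in $|A_i| + |\M|$. As $|A_i|$ is exponential in $|\varphi|$, this is $\expspace$ in the size of the input. Thus each individual summand is $\expspace$-computable, and since the indices are enumerable in exponential space the automata can be generated and analysed one at a time, reusing the same work space between them.

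The main obstacle, precisely as for the PSPACE bound of Proposition~\ref{thm:rmctl}, is to organise the summation of the (here doubly-exponentially many) algebraic probabilities $P_\M(L(A_i))$ so that the whole computation remains within $\expspace$ rather than slipping up to doubly-exponential space. The key point is that one never forms all the products $\M \otimes A_i$ simultaneously, nor the single disjoint-union automaton $\bigsqcup_i A_i$; instead the sum is accumulated by iterating over $i$ with reused space, so that the total space demand is dominated by the per-automaton cost of Proposition~\ref{thm:rmcsepba} together with an exponential-size index counter, both of which are already $\expspace$. Verifying that this accumulation can indeed be carried out within $\expspace$ --- exactly one exponential above the corresponding argument in Proposition~\ref{thm:rmctl} --- is the only delicate step; everything else is a direct transcription of the earlier proof.
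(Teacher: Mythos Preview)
Your proposal is correct and follows essentially the same approach as the paper: apply Theorem~\ref{thm:FO2_aut} in place of Theorem~\ref{FO2automaton}, invoke Proposition~\ref{thm:rmcsepba} on each of the (now exponential-size) unambiguous automata, and iterate over the doubly-exponential family reusing space. The paper's own proof is just a two-sentence pointer to this argument, and your discussion of why the summation stays within $\expspace$ is, if anything, more careful than the original.
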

\begin{proof}
The result follows by the same argument as in Proposition \ref{thm:rmctl}, but now by Theorem \ref{thm:FO2_aut}
we have family of doubly exponentially many automata each of exponential size, with a non-deterministic
exponential time algorithm for building each automaton.
  Therefore applying 
Proposition \ref{thm:rmcsepba} we immediately obtain upper bounds for
$\FO2$.
\end{proof}

For an ordinary Markov chain, calculating the probability of an LTL
formula can be done in PSPACE \cite{mihalisprivate}, while we have
seen previously that we can calculate the probability of an $\FO2$
formula in PEXP. One can achieve the same bounds for LTL and $\FO2$ on
hierarchical Markov chains.  In each case we expand the HMC into
an ordinary Markov chain and then use the model checking algorithm for
a Markov chain.  This does not impact the complexity, since the space
complexity is only polylog in the size of the machine for LTL and the
time complexity is only polynomial in the machine size for $\FOtwo$. We
thus get:

\begin{prop} \label{thm:hmcutlfo2} 
The probability of a $\FO2$ formula holding on a HMC can be computed
in PEXP, while for an LTL formula it can be computed in PSPACE.
\end{prop}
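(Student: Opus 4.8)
The plan is to reduce model checking on an HMC to model checking on an ordinary Markov chain by unfolding, and then invoke the Markov chain algorithms already established, checking carefully that the exponential blow-up of unfolding stays within the target complexity classes. First I would expand the HMC $\M$ into an equivalent flat Markov chain $\M'$ whose states are the reachable configurations of $\M$. Because the calling graph of a hierarchical machine is acyclic, every configuration is a stack of boxes of depth bounded by the number of components, and hence admits a representation of size polynomial in $|\M|$; the number of such configurations, however, may be exponential, so $|\M'| \in 2^{\mathit{poly}(|\M|)}$, matching the unfolding blow-up noted in Section~\ref{sec:models}. The key point is that a single state of $\M'$, together with its outgoing transitions and their probabilities, can be generated on demand in space (and time) polynomial in $|\M|$, so we never need to materialise $\M'$ explicitly.

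For the LTL bound I would invoke the PSPACE procedure for computing the probability of an LTL formula on an ordinary Markov chain \cite{mihalisprivate, CY95}. The essential observation is that this procedure uses space only polylogarithmic in the size of the chain (and polynomial in $|\varphi|$). Running it on $\M'$ and supplying states of $\M'$ on the fly via the polynomial-size configuration encoding, the space used is $\mathit{poly}(|\varphi|) + \mathit{polylog}(|\M'|) = \mathit{poly}(|\varphi|) + \mathit{poly}(|\M|)$, which is polynomial in the input. Hence the computation stays in PSPACE.

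For the $\FO2$ bound I would instead invoke Proposition~\ref{thm:mcfo2mc}, which places the threshold problem for $\FO2$ on an ordinary Markov chain in PEXP. Here the relevant observation is complementary: the running time of that algorithm is exponential in $|\varphi|$ but only polynomial in the size of the chain. Running it on $\M'$ the time becomes $\mathit{poly}(|\M'|) \cdot 2^{\mathit{poly}(|\varphi|)} = 2^{\mathit{poly}(|\M|)} \cdot 2^{\mathit{poly}(|\varphi|)}$, still exponential in the whole input; and since the reduction does not disturb the probabilistic acceptance convention (more than half of the computation paths accept), the resulting procedure is a nondeterministic exponential-time machine of the required kind. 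Thus $\FO2$ model checking of an HMC lies in PEXP.

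The main obstacle, and the crux of the argument, is precisely the interaction between the exponential unfolding blow-up and the complexity class: the bound survives only because the two resources scale benignly with the machine size --- space polylogarithmically for LTL and time polynomially for $\FO2$ --- so that plugging in a chain of exponential size keeps us in PSPACE and PEXP respectively. A secondary technical point to handle with care is that for the LTL case one cannot afford to write down $\M'$; one must argue that the on-the-fly generation of configurations and transition probabilities is itself feasible in polynomial space, which relies on the acyclicity of the hierarchical calling structure bounding the stack depth.
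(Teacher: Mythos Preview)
Your proposal is correct and follows essentially the same approach as the paper: unfold the HMC into a flat Markov chain of at most exponential size, then invoke the Markov-chain algorithms, using for LTL that the space complexity is polylogarithmic in the chain and for $\FOtwo$ that the time complexity is polynomial in the chain. Your added remarks on on-the-fly generation of configurations and on the PEXP acceptance convention are reasonable elaborations, but the core argument is identical to the paper's.
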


\subsection{Markov decision processes} 
\label{subsec:markov_decision_processes}
Courcoubetis and Yannakakis~\cite{CY95} have shown that the maximal
probability with which a scheduler can achieve an UTL objective on an
MDP can be computed in 2EXPTIME.  It follows from results
of~\cite{gamesBoxesDiamonds} that even the qualitative problem of
determining whether every scheduler achieves probability~$1$ is
2EXPTIME-hard.  Combining the 2EXPTIME upper bound with the
exponential translation from $\FO2$ to UTL~\cite{fo2_utl} yields a
3EXPTIME bound for $\FO2$.  Below we see that using our
$\FO2$-to-automaton construction we are able to improve this bound to
2EXPTIME.

We begin with universal formulation of qualitative model checking
MDPs. To deal with MDP's, we will make use of  determinism in the limit.
\begin{prop}
\label{thm:qual_mdp_fo2}
Determining whether for all schedulers a $\FOtwoLT$-formula $\varphi$
holds almost surely on a Markov decision process $\M$ is
co-NP-complete.
\end{prop}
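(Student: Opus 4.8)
The plan is to prove the two bounds separately: membership in co-NP by showing that the complementary problem lies in NP, and co-NP-hardness by a reduction from $\TL$ path-model-checking on Kripke structures. For membership I would work with the complement of the statement, namely that \emph{there exists} a scheduler $\sigma$ with $P_\M^\sigma(L(\varphi)) < 1$, which is the same as $P_\M^\sigma(L(\neg\varphi)) > 0$. Exactly as in the proof of Proposition~\ref{npbound}, only the propositional combinations actually labelling states of $\M$ are relevant, so by expanding these out I may assume the unary alphabet restriction and apply Translation~II (Theorem~\ref{FO2automaton}) to the $\FOtwoLT$ formula $\neg\varphi$. This produces $2^{\mathit{poly}(|\varphi|,|\Sigma|)}$ unambiguous, deterministic-in-the-limit generalised B\"uchi automata $B_j$, each of size polynomial in $|\varphi|$ and $|\Sigma|$, whose languages partition $L(\neg\varphi)$. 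Consequently $P_\M^\sigma(L(\neg\varphi)) > 0$ holds for some $\sigma$ iff $P_\M^\sigma(L(B_j)) > 0$ holds for some index $j$ and some $\sigma$.

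This gives the NP algorithm for the complement: guess $j$ (polynomially many bits), guess the polynomial-time nondeterministic construction of $B_j$ granted by Theorem~\ref{FO2automaton}, and then verify in deterministic polynomial time that some scheduler makes $P_\M^\sigma(L(B_j))$ positive. The positive-probability check is carried out on the product $\M \otimes B_j$, treating the automaton's transitions as further non-deterministic choices; the product has polynomial size. I would then invoke the standard end-component analysis: a scheduler achieves the generalised B\"uchi condition with positive probability iff an \emph{accepting} end component---one meeting every acceptance set, from which the scheduler can remain inside forever---is reachable from an initial state, which is decidable in polynomial time by computing the maximal end components together with a reachability query.

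The main obstacle, and the point that must be argued carefully, is the soundness of this reduction, which is precisely where determinism in the limit is essential: for a general non-deterministic automaton, resolving the automaton's guesses as scheduler moves is \emph{unsound} for MDP model checking. Here it works because every accepting end component visits an accepting state infinitely often, and all states reachable from accepting states are deterministic, so within such a component the automaton behaves deterministically; thus projecting a product scheduler to $\M$ preserves the acceptance probability, and conversely an $\M$-scheduler of positive acceptance probability can be lifted (its residual, finite non-determinism being resolvable before the run settles into the deterministic region). This places the complement in NP and hence the problem itself in co-NP.

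For co-NP-hardness I would reduce from the NP-hard problem of deciding whether \emph{some} path of a Kripke structure $K$ satisfies a $\TL$ formula $\chi$; recall that model checking $\TL$ on Kripke structures is NP-complete, and that $\chi$ is translatable to an $\FOtwoLT$ formula in polynomial time. View $K$ as an MDP $\M$ in which every state is non-deterministic and there are no randomising states, and set $\varphi := \neg\chi$. Since a scheduler on a randomisation-free MDP fixes a single path of probability one (and a randomised scheduler is a distribution over such paths), there is a scheduler with $P_\M^\sigma(L(\neg\varphi)) = P_\M^\sigma(L(\chi)) > 0$ iff some path of $K$ satisfies $\chi$. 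Hence the complement of our problem is NP-hard, so the problem is co-NP-hard, which together with the membership argument yields co-NP-completeness.
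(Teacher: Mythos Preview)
Your proposal is correct and follows essentially the same approach as the paper. For membership you guess one of the polynomial-size, deterministic-in-the-limit automata from Theorem~\ref{FO2automaton} for $\neg\varphi$ and then test positive probability on the product MDP; the paper does exactly this, simply citing Courcoubetis and Yannakakis~\cite{CY95} for the polynomial-time positive-probability check where you spell out the end-component argument and the role of determinism in the limit. For hardness the paper invokes co-NP-hardness of $\TL$ validity rather than your reduction from NP-hardness of $\TL$ path-model-checking on Kripke structures, but these are essentially the same idea (a Kripke structure viewed as a randomisation-free MDP), so the difference is cosmetic.
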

\begin{proof}
The corresponding complement problem asks whether there exists a
scheduler $\sigma$ such that the probability of $\neg \varphi$ is
greater than 0. For this problem, there is an NP algorithm, as we now
explain.  In Courcoubetis and Yannakakis \cite{CY95}, there is a
polynomial time algorithm for qualitative model checking deterministic
B\"{u}chi automata on MDPs. As noted there, the algorithm applies to
automata that are deterministic in the limit as well.  Therefore we
can just guess a particular automaton $A_i$ from the family of
automata corresponding to $\neg\varphi$, as described in
Theorem~\ref{FO2automaton}.  The theorem guarantees that this
automaton will be deterministic in the limit.

It is easy to see that the co-NP is tight, even for $\TL$, since
qualitative model checking for MDPs generalises validity for both
$\TL$ formulas, which is co-NP hard.
\end{proof}

\begin{prop} \label{thm:qual_mdp_utl_fo2}
Determining whether for all schedulers a UTL-formula $\varphi$ holds
almost surely on a Markov decision process $\M$ is in EXPTIME.  For
$\FO2$ the problem is in co-NEXPTIME.
\end{prop}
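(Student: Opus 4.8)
The plan is to mirror the co-NP argument of Proposition~\ref{thm:qual_mdp_fo2}, attacking the complementary problem in both cases. The complement of ``for all schedulers $\varphi$ holds almost surely'' is ``there is a scheduler $\sigma$ with $P_\sigma(\neg\varphi)>0$''. Since $\neg\varphi$ is again a $\UTL$ (resp.\ $\FO2$) formula, I would apply the appropriate logic-to-automata translation to $\neg\varphi$, obtaining a family $\{A_i\}$ of generalised B\"uchi automata whose languages partition $L(\neg\varphi)$ and which are, crucially, deterministic in the limit. Because the languages partition, for any scheduler $\sigma$ we have $P_\sigma(\neg\varphi)=\sum_i P_\sigma(L(A_i))$, so a scheduler witnessing the complement exists iff for some index $i$ there is a scheduler making $A_i$ accept with positive probability. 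The latter is exactly the existential positive-probability qualitative model-checking problem for a single automaton deterministic in the limit, which by Courcoubetis and Yannakakis~\cite{CY95} is solvable in time polynomial in the sizes of $A_i$ and $\M$.

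For the $\UTL$ case I would invoke Theorem~\ref{BW:detLimit}, which yields at most $2^{|\varphi|^2}$ automata, each unambiguous and deterministic in the limit and of size $O(|\varphi||\Sigma|^{n+1})$, where $n$ is the operator depth with respect to $\CIRC$ and $\CIRCM$. Since $n$ may be as large as $|\varphi|$, each automaton can be of exponential size, so this family is unsuitable for an NP guess as in the stutter-free case. Instead I would iterate deterministically over all of the (exponentially many) automata, construct each one in exponential time, and run the polynomial-time positive-probability check on it; exponentially many iterations of an exponential-time subroutine stay within exponential time, so the complement is in EXPTIME. As EXPTIME is closed under complementation, the original universal problem is in EXPTIME as well.

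For the $\FO2$ case I would instead use Theorem~\ref{thm:FO2_aut}, which produces $2^{2^{\mathit{poly}(|\varphi|)}}$ automata, each of size $2^{\mathit{poly}(|\varphi|)}$, unambiguous and (being obtained through Theorem~\ref{BW:detLimit}) deterministic in the limit, with each automaton constructible by a nondeterministic machine in time polynomial in its size. Here nondeterminism buys a better bound: a NEXPTIME machine can guess one automaton $A_i$ directly---writing down its $2^{\mathit{poly}(|\varphi|)}$ bits in exponential time via the guaranteed construction---and then deterministically run the Courcoubetis--Yannakakis positive-probability algorithm, which on an exponential-size automaton takes exponential time. The machine accepts iff $A_i$ admits a scheduler of positive acceptance probability. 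This places the complement in NEXPTIME, and hence the universal $\FO2$ problem in co-NEXPTIME.

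The main obstacle, and the point requiring care, is the determinism-in-the-limit guarantee: the polynomial-time MDP algorithm of~\cite{CY95} applies only to automata deterministic in the limit, not to arbitrary unambiguous ones, so I must verify that both translations deliver this property---immediate for $\UTL$ from Theorem~\ref{BW:detLimit}, and inherited for $\FO2$ because Theorem~\ref{thm:FO2_aut} is built on that construction. The secondary subtlety is the resource accounting that separates the two bounds: the $\UTL$ automata are individually of exponential size, forcing a deterministic sweep over the whole family (EXPTIME), whereas the $\FO2$ family, though doubly-exponentially large, has the feature that a single member can be guessed and built nondeterministically, which is precisely what reduces the complement from a deterministic double-exponential sweep to NEXPTIME.
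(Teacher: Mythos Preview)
Your proposal is correct and follows essentially the same approach as the paper: complement to the existential positive-probability problem, apply Theorem~\ref{BW:detLimit} (for $\UTL$) or Theorem~\ref{thm:FO2_aut} (for $\FO2$) to $\neg\varphi$, and use the Courcoubetis--Yannakakis algorithm for automata deterministic in the limit---iterating over the exponentially many $\UTL$ automata for EXPTIME, and guessing a single exponential-size $\FO2$ automaton for NEXPTIME. Your explicit discussion of why determinism in the limit is inherited in the $\FO2$ case (Theorem~\ref{thm:FO2_aut} does not state it, but it follows from the construction via Theorem~\ref{BW:detLimit}) and your careful resource accounting are more detailed than the paper's own proof, but the argument is the same.
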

\begin{proof}
The result for $\FO2$ follows along the lines of the proof of
Proposition \ref{thm:qual_mdp_fo2}, but now we guess an automaton $A_i$ of
exponential size (using Theorem \ref{thm:FO2_aut}).

Similarly, for UTL we can use Theorem \ref{BW:detLimit}. We still have
exponential sized automata $A_i$, but only exponentially many of them,
so we can iterate over all of them, which gives us a single
exponential algorithm.
\end{proof}

Note that here the $\FO2$ problem is \emph{easier} than the
corresponding LTL problem, which is known to be 2EXPTIME-complete.

For the existential case of the qualitative model-checking problem, an
upper bound of 2EXPTIME for all of our languages will follow from the
quantitative case below. On the other hand the arguments
from~\cite{gamesBoxesDiamonds} can be adapted to get a 2EXPTIME lower
bound (see Proposition \ref{lower_tl_mdp}) even for qualitative model-checking $\TL$ in the existential
case.  Hence we have:
\begin{prop} \label{thm:qualexistsmdp} Determining if there is a scheduler
that enforces a formula with probability one is 2EXPTIME-complete for 
each of $\TL$, $\rm{UTL}$, $\rm{LTL}$, $\FOtwoLT$ and  $\FOtwo$.
\end{prop}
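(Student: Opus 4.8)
The plan is to establish 2EXPTIME-completeness by proving a 2EXPTIME upper bound for the two most expressive logics ($\FOtwo$ and $\LTL$) and a matching 2EXPTIME lower bound already for the weakest logic $\TL$. Since $\TL \subseteq \UTL \subseteq \LTL$ and $\TL \subseteq \FOtwoLT \subseteq \FOtwo$, membership then propagates downward from the expressive logics and hardness propagates upward from $\TL$, so that all five logics are caught at exactly 2EXPTIME.

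For the upper bound I would reduce the existential qualitative problem to the quantitative model-checking problem treated later in this subsection, whose core ingredient is Translation III. By Theorem~\ref{DRASUC}, an $\FOtwo$ formula $\varphi$ of quantifier depth $k$ yields a deterministic parity automaton $\mathcal{A}_\varphi$ with $2^{2^{O(k^2|\mathcal{P}|)}}$ states and $2^{O(k|\mathcal{P}|)}$ priorities; for $\LTL$ a standard determinisation gives a deterministic parity automaton of doubly-exponential size with exponentially many priorities as well. Taking the synchronous product of $\mathcal{A}_\varphi$ with the MDP $\M$ yields an MDP equipped with a parity winning condition, of doubly-exponential size. The question whether some scheduler enforces this parity condition with probability one is the standard almost-sure parity problem on MDPs, solvable in time $n^{O(d)}$ in the number of states $n$ and priorities $d$. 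Since both are at most doubly-exponential in $|\varphi|$, the product $n^{O(d)}$ is $2^{2^{O(k^2|\mathcal{P}|)}}$, so the whole procedure runs in 2EXPTIME.

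For the lower bound I would adapt the 2EXPTIME-hardness construction of Alur, La Torre, and Madhusudan~\cite{gamesBoxesDiamonds}, which already establishes 2EXPTIME-hardness of two-player games for $\TL$. The adaptation (deferred to Proposition~\ref{lower_tl_mdp}) replaces every move of the adversarial player by a uniformly randomising state, so that a scheduler enforcing $\varphi$ with probability one in the resulting MDP corresponds exactly to a winning strategy for Player~I in the original game. Because the MDP is finite and every outgoing edge of a randomising state carries positive probability, the event ``$\varphi$ fails'' has positive measure precisely when some adversary strategy defeats $\varphi$; hence almost-sure enforcement under some scheduler coincides with victory along every branch of the game tree. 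This transfers 2EXPTIME-hardness to the existential qualitative MDP problem already for $\TL$.

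The main obstacle I anticipate is this lower-bound reduction: one must argue that the probabilistic ``almost surely'' semantics on the MDP faithfully captures the universal ``for all adversary strategies'' semantics of the game, and do so while keeping the winning condition expressible in $\TL$ (with no successor or next connectives available). Establishing that the set of $\varphi$-violating trajectories has positive probability exactly when an adversary strategy violates $\varphi$ — using finiteness of the state space and positivity of every randomising edge — is the delicate point, whereas the upper-bound side is comparatively routine once Theorem~\ref{DRASUC} and the almost-sure parity MDP algorithm are in hand.
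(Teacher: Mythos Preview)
Your proposal is essentially the same as the paper's: the upper bound is obtained by reduction to the quantitative case via Translation~III (Theorem~\ref{DRASUC}), and the lower bound by adapting the two-player-game hardness of~\cite{gamesBoxesDiamonds} to MDPs, replacing the adversary by uniform randomisation (deferred to Proposition~\ref{lower_tl_mdp}).

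One remark on your lower-bound justification: the general principle you invoke---``finiteness of the state space plus positivity of every randomising edge implies that $\varphi$-failure has positive measure exactly when some adversary strategy defeats $\varphi$''---is not valid for arbitrary $\omega$-regular games, since an adversary's winning strategy may require an infinite sequence of specific choices whose randomised replay has measure zero. The paper's argument in Proposition~\ref{lower_tl_mdp} instead exploits a specific feature of the alternating-EXPSPACE reduction in~\cite{gamesBoxesDiamonds}: if the machine rejects, then after \emph{finitely} many transitions either the scheduler has cheated or a rejecting sink is reached, so failure is witnessed by a finite prefix and hence has positive probability. You correctly flag this step as the delicate one, but the finitary structure of the underlying reduction---not mere finiteness of the MDP---is what makes it go through.
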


We now turn to the quantitative case.  We apply the translation
from $\FO2$ to deterministic parity automata from
Subsection~\ref{subsec:trans}, along with the result that the value of a
Markov decision process with parity winning objective can be computed
in polynomial time~\cite{surveyStochGames}.  Using Theorem~\ref{DRASUC}
we immediately get bounds for $\FO2$ that match the known bounds for
LTL:

\begin{prop} \label{thm:quant_mc_fo2}
We can compute the maximum probability of an $\FO2$ formula $\varphi$
over all schedulers on a Markov decision processes $\M$ in 2EXPTIME.
\end{prop}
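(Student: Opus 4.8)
The plan is to combine the deterministic-parity-automaton translation of Theorem~\ref{DRASUC} with the polynomial-time algorithm for solving Markov decision processes under a parity objective from~\cite{surveyStochGames}. First I would apply Theorem~\ref{DRASUC} to the $\FO2$ formula $\varphi$ of quantifier depth $k$, obtaining a deterministic parity automaton $\mathcal{A}_\varphi$ with $2^{2^{O(k^2|\mathcal{P}|)}}$ states and $2^{O(k|\mathcal{P}|)}$ priorities that recognises $L(\varphi)$. Since $k$ and $|\mathcal{P}|$ are bounded by $|\varphi|$, the automaton has doubly-exponentially many states and exponentially many priorities in $|\varphi|$, and can be constructed within the same time bound.

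Next I would form the synchronised product of $\mathcal{A}_\varphi$ with the MDP $\M$. The states of the product are pairs $(x,s)$ consisting of an MDP state $x$ and an automaton state $s$; a product state is non-deterministic or randomising exactly according to the type of its $\M$-component, and its parity priority is inherited from $s$. The crucial point --- and the reason we need a genuinely deterministic automaton here rather than the unambiguous or deterministic-in-the-limit automata of Translations~I and~II --- is that with a deterministic $\mathcal{A}_\varphi$ the automaton component advances in a unique way determined solely by the valuations read along a trajectory. Consequently the only non-determinism in the product is the scheduler's, and the product is again an MDP; schedulers on $\M$ correspond bijectively to schedulers on the product, and under any fixed scheduler the probability that the induced word lies in $L(\varphi)$ equals the probability that the parity acceptance condition is met in the product. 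Maximising over schedulers therefore reduces the problem to computing the optimal value of an MDP with a parity objective.

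Finally I would invoke the result of~\cite{surveyStochGames} that the value of such an MDP is computable in time polynomial in its number of states, edges, and priorities. The product has $|\M|\cdot 2^{2^{O(k^2|\mathcal{P}|)}} \in 2^{2^{O(k^2|\mathcal{P}|)}}$ states and $2^{O(k|\mathcal{P}|)}$ priorities, so a polynomial-time procedure on the product runs in time $2^{2^{O(k^2|\mathcal{P}|)}}$, which is doubly exponential in $|\varphi|$. This yields the claimed 2EXPTIME bound.

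I expect the main conceptual obstacle to be the justification that the product is a well-formed MDP faithfully capturing the maximisation: one must check that using a non-deterministic (even merely unambiguous) automaton would spuriously allow the automaton's guesses to be resolved in favour of acceptance, inflating the value, whereas determinism forces the automaton run to be a function of the observed trajectory. The remaining work --- the complexity bookkeeping and the correctness of the product --- is routine given Theorem~\ref{DRASUC} and~\cite{surveyStochGames}.
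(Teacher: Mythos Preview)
Your proposal is correct and follows exactly the approach the paper takes: apply Theorem~\ref{DRASUC} to obtain a doubly-exponential deterministic parity automaton, take the product with $\M$, and invoke the polynomial-time algorithm for MDPs with parity objectives from~\cite{surveyStochGames}. Your write-up is in fact more detailed than the paper's, which dispatches the result in a single sentence before the proposition; your observation about why determinism (rather than unambiguity or determinism-in-the-limit) is essential here is a useful clarification that the paper leaves implicit.
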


\subsection{Stochastic two-player games with $\FOtwo$ winning condition}

We can reduce the qualitative case of stochastic two-player games to the case of ordinary
two-player games using the following result of Chatterjee, Jurdzinski and Henzinger:

\begin{prop}[\cite{ChatterjeeSimpleStoch}]
\label{StochasticParity}
Every (universal) qualitative simple stochastic parity game with $n$
vertices, $m$ edges and $d$ priorities can be translated to a simple
parity game with the same set of priorities, with O($d n$) vertices
and O($d (m+n)$) edges, and hence it can be solved in time O($d (m +
n) (nd) ^ {d/2}$).
\end{prop}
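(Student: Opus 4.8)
The plan is to establish this as a gadget-based reduction from the qualitative (almost-sure) stochastic parity game to an ordinary two-player parity game, and then to read off the running time by feeding the reduced game into the known parity-game solver. The result thus factors into a correctness statement for the reduction and a routine size bookkeeping. First I would recall the qualitative characterisation of almost-sure winning: once both players fix strategies, the induced Markov chain has, with probability one, an infinitely-often-visited set that is a bottom (closed, recurrent) set, and the parity condition holds almost surely exactly when every reachable such recurrent set has even maximal priority. The crucial probabilistic fact is a \emph{fairness lemma}: at a randomising vertex visited infinitely often, each successor carries positive probability and so is itself taken infinitely often with probability one. Hence, in the qualitative limit, a randomising vertex behaves like a vertex all of whose outgoing edges are taken infinitely often.

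Next I would define the gadget that replaces each randomising vertex $v$. The idea is to let the adversary (the minimising player) choose which successor to follow, but to superimpose a priority-indexed ``ladder'' of $O(d)$ auxiliary vertices that forces the adversary to play \emph{fairly}: if from some point on the adversary refuses ever to visit some available successor, the ladder triggers a priority favourable to Player~I, so that cheating loses. A randomising vertex of priority $p$ together with its successors is therefore expanded into $O(d)$ new vertices while preserving the set of priorities, which accounts for the stated $O(dn)$ vertices and $O(d(m+n))$ edges. Let $\mathcal{G}'$ denote the resulting deterministic parity game.

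The heart of the argument is the correctness equivalence: Player~I wins $v$ almost surely in the stochastic game if and only if Player~I wins $v$ in $\mathcal{G}'$. For the forward direction I would take an almost-surely winning strategy and replay it against the constrained adversary in $\mathcal{G}'$: either the adversary respects the fairness enforced by the ladder, in which case the induced plays coincide with positive-probability trajectories of the stochastic game and inherit the even-maximal-priority condition, or the adversary cheats and loses through the ladder priorities. For the converse I would project a winning strategy in $\mathcal{G}'$ back to the stochastic game and invoke the fairness lemma to argue that the random moves almost surely realise exactly the fair adversarial choices, so that the parity condition holds with probability one.

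I expect this correctness equivalence to be the main obstacle, in particular matching the \emph{deterministic} fairness enforced by the priority ladder against the \emph{probabilistic} ``infinitely often'' behaviour of the randomising vertices, while controlling the amount of memory the strategies require (here the fact that pure memoryless strategies suffice on both sides keeps the projection clean). The remaining steps are then straightforward: the size count above shows the reduced game has $O(dn)$ vertices, $O(d(m+n))$ edges and the same $d$ priorities, and applying the cited $O(d\,m\,n^{d/2})$ algorithm for parity games to $\mathcal{G}'$ yields exactly the stated bound $O(d(m+n)(nd)^{d/2})$.
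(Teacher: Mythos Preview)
The paper does not prove this proposition at all: it is quoted verbatim as a result of Chatterjee, Jurdzi\'nski and Henzinger and used as a black box to derive the subsequent corollary. So there is no ``paper's own proof'' to compare against; your proposal is effectively a sketch of the argument from the cited source.

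That said, your outline is broadly faithful to the Chatterjee--Jurdzi\'nski--Henzinger reduction and identifies the right ingredients: a per-vertex gadget of size $O(d)$ replacing each randomising vertex, a fairness argument linking almost-sure behaviour of random choices to infinitely-often behaviour in the deterministic game, and the final appeal to a standard parity-game solver. One point worth sharpening: in the actual construction the gadget is not simply ``let the adversary pick a successor while a ladder punishes unfairness''. Both players participate in the gadget---Player~I first selects a level in the priority ladder (intuitively a guess at the relevant threshold priority), and only then does Player~II choose a successor, with the auxiliary vertices carrying carefully tuned priorities so that a correct guess by Player~I is rewarded. Your description elides this interaction, which is exactly what makes the correctness direction from the deterministic game back to the stochastic game go through without extra memory. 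If you intend to reconstruct the proof rather than cite it, that is the place where the details need to be pinned down.
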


Now combining the reduction with our results for two-player games, we
ascertain the complexity of stochastic two-player games:

\begin{cor}
The universal qualitative model checking problem for Stochastic
two-player games ($2\frac{1}{2}$-player game) with $\FOtwo$ winning
condition is 2EXPTIME-complete.
\end{cor}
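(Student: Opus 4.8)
The plan is to prove membership in 2EXPTIME by composing Translation~III with the de-randomisation of Proposition~\ref{StochasticParity}, and to get the matching lower bound by observing that stochastic two-player games subsume ordinary two-player games. First I would take the $\FOtwo$ winning condition $\varphi$ and apply Theorem~\ref{DRASUC} to construct a deterministic parity automaton $\mathcal{A}_\varphi$ recognising $L(\varphi)$. Exactly as in the proof of Proposition~\ref{2playerFO2game}, I would define $\mathcal{A}_\varphi$ over the alphabet $\Sigma \subseteq 2^{\mathcal{P}}$ consisting only of those sets of propositions that actually label states of the game, so that $\mathcal{A}_\varphi$ has doubly-exponentially many states but only polynomially many priorities (in the size of the game). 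Taking the synchronous product of $\mathcal{A}_\varphi$ with the stochastic game $G$ then yields a universal qualitative simple stochastic parity game $G \otimes \mathcal{A}_\varphi$: the partition into Player~I, Player~II and randomising states is inherited from $G$, the number of vertices is the product of the state counts of $G$ and $\mathcal{A}_\varphi$ (hence doubly exponential), and the priorities are those of $\mathcal{A}_\varphi$ (hence polynomially many). By construction the first player wins this parity game almost surely precisely when he can enforce $\varphi$ with probability one in $G$.

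Next I would eliminate randomisation. Applying Proposition~\ref{StochasticParity} to $G \otimes \mathcal{A}_\varphi$ produces an ordinary simple parity game with the same priority set, $O(dn)$ vertices and $O(d(m+n))$ edges, where $n,m$ are doubly exponential and $d$ is polynomial; the result therefore still has doubly-exponentially many vertices and edges and polynomially many priorities. Solving it via the $O(d(m+n)(nd)^{d/2})$ bound of Proposition~\ref{StochasticParity} (equivalently the $O(dmn^d)$ bound of~\cite{J00}) costs time doubly exponential in the input, since with $d$ polynomial and $n$ doubly exponential the factor $(nd)^{d/2}$ remains doubly exponential. This decides the universal qualitative problem in 2EXPTIME.

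For the lower bound I would note that an ordinary two-player game is the special case of a stochastic game with empty set of randomising states: there ``satisfies $\varphi$ with probability one'' collapses to ``the induced word satisfies $\varphi$'', so the universal qualitative stochastic problem restricts exactly to the two-player game problem of Proposition~\ref{2playerFO2game}. Since that problem is 2EXPTIME-hard already for $\TL$ by~\cite{gamesBoxesDiamonds}, and $\TL$ embeds into $\FOtwo$, 2EXPTIME-hardness transfers immediately, giving completeness.

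The main obstacle is the complexity bookkeeping through the product and the de-randomisation: one must ensure the priority count stays small so that the $(nd)^{d/2}$ factor does not push the running time past 2EXPTIME. This is exactly why restricting the alphabet of $\mathcal{A}_\varphi$ to the labels occurring in the game is essential, since the general bound of Theorem~\ref{DRASUC} allows exponentially many priorities, which I would need to verify is still tolerable (it is, but the alphabet restriction makes the argument cleanest).
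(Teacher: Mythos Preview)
Your proposal is correct and follows essentially the same approach as the paper: the paper's proof is a two-line sketch citing Proposition~\ref{2playerFO2game} and Proposition~\ref{StochasticParity} for membership and the 2EXPTIME-hardness of two-player $\FOtwo$ games for the lower bound, and you have simply unpacked these citations into the explicit product-then-derandomise argument with the accompanying complexity bookkeeping. Your observation that even the unrestricted (exponential) priority bound from Theorem~\ref{DRASUC} would still keep the running time doubly exponential is a useful addendum that the paper does not spell out.
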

\begin{proof}
Hardness follows from 2EXPTIME-hardness for two-player games with
$\FOtwo$ winning conditions. Membership is a consequence of the above
reduction and our bounds for two-player games (see Proposition
\ref{2playerFO2game} and Proposition \ref{StochasticParity}).
\end{proof}

\subsection{Lower bounds}
We can get corresponding tight lower bounds for most of the probabilistic model
checking problems.

\begin{prop}
The quantitative model checking problem for a $\TL$ formula $\psi$ on
a Markov chain $\M$ is \#P-hard.\label{UTLonMChardness}
\end{prop}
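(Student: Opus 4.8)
The plan is to prove \#P-hardness by a polynomial-time reduction from \#SAT, the problem of counting the satisfying assignments of a Boolean formula, which the preliminaries recall is \#P-complete. Given a Boolean formula $\phi$ over variables $x_1,\ldots,x_n$, I would build a Markov chain $\M$ that samples a uniformly random truth assignment, together with a $\TL$ formula $\psi$ that holds exactly on those trajectories encoding a satisfying assignment. The crucial quantitative point is that each assignment is generated with probability $2^{-n}$, so that $P_{\M}(L(\psi))$ equals the number of satisfying assignments divided by $2^n$; recovering the integer count from the probability is then a trivial post-processing step.

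Concretely, $\M$ proceeds through $n$ stages. In stage $i$ it occupies one of two states: one labelled with a fresh proposition $t_i$ (encoding $x_i=\text{true}$) and one carrying no proposition (encoding $x_i=\text{false}$). From either stage-$i$ state it moves with probability $\frac{1}{2}$ to each of the two stage-$(i{+}1)$ states, and after stage $n$ it enters an absorbing sink carrying no proposition. The initial distribution places mass $\frac{1}{2}$ on each stage-$1$ state. Thus trajectories are in bijection with assignments, each arising with probability $2^{-n}$.

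The formula is obtained by the substitution $\psi := \phi[x_i \mapsto \DIAMOND t_i]$, replacing each variable by the assertion that $t_i$ occurs somewhere from the current position onward. Since $t_i$ labels exactly one state of $\M$, namely the true-state of stage $i$, and appears nowhere else, evaluation at position $0$ gives $(u,0)\models \DIAMOND t_i$ iff the sampled assignment sets $x_i$ to true (here the reflexive semantics of $\DIAMOND$, ranging over positions $j\ge i$, handles the stage-$1$ choice at position $0$ cleanly). Because $\DIAMOND t_i$ and $\neg\DIAMOND t_i$ partition the trajectories exactly as $x_i$ and $\neg x_i$ partition assignments, we get $(u,0)\models\psi$ iff the assignment satisfies $\phi$. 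The formula uses only $\DIAMOND$ and Boolean connectives, so it is a genuine $\TL$ formula, and both $\M$ and $\psi$ have size linear in $\phi$.

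The main obstacle, and the reason for care, is the weak expressiveness of $\TL$: lacking both the next operator and until, it cannot count positions or navigate to ``the $i$-th letter.'' The reduction sidesteps this by giving each variable its own proposition $t_i$, so that a single application of $\DIAMOND$ recovers that variable's value independently of where in the word the choice was made; ensuring $t_i$ occurs at no other position, in particular not in the sink, is precisely what makes $\DIAMOND t_i$ faithful. Combining everything, an oracle for $P_{\M}(L(\psi))$ yields $2^{n}\,P_{\M}(L(\psi))$, the number of satisfying assignments of $\phi$, in polynomial time, so the quantitative model-checking problem is \#P-hard.
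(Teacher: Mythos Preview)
Your proposal is correct and takes essentially the same approach as the paper: a reduction from \#SAT via a Markov chain that samples a uniform assignment, with $\DIAMOND$ used to recover each variable's value. The only cosmetic difference is that the paper uses two propositions $a_i,a_i'$ per variable (replacing $\neg x_i$ by $\DIAMOND a_i'$) and an explicit initial vertex, whereas you use a single proposition $t_i$ and rely on $\neg\DIAMOND t_i$ for the false case; both encodings work.
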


\begin{proof}
The proof is by reduction from \#SAT. Let $\varphi$ be a propositional
formula over literals $a_1, a_2, \ldots a_n$. We construct a Markov
chain $\M$ such that each trajectory generated by $\M$ corresponds to an
assignment of truth values to literals $a_1, \ldots a_n$, with each of
the $2^n$ possible truth assignments arising with equal probability.
We also construct a $\TL$ formula $\psi$ such that only trajectories
of $\M$ that encode satisfying valuations contribute to the probability
$P_\M(L(\psi))$. Therefore the number of satisfying assignments of the
original propositional formula $\varphi$ is $2^nP_\M(L(\psi))$.

See Figure~\ref{mcUTLX} for a depiction of the Markov chain $\M$ in case
$n=3$. All probabilities equal $1/2$, except those on transitions
leading to the final vertex~$f$. A path going through vertex $a_i$
corresponds to assigning true to the literal $a_i$ and a path through
$a_i'$ to an assignment of false.
We construct the $\TL$ formula $\psi$ corresponding to the
propositional formula $\varphi$ by replacing each positive literal
$a_i$ in $\varphi$ with $\DIAMOND a_i$ and each negative literal $\neg
a_i$ in $\varphi$ with $\DIAMOND a_i'$.

\begin{figure}[htb]
\begin{center}

\includegraphics{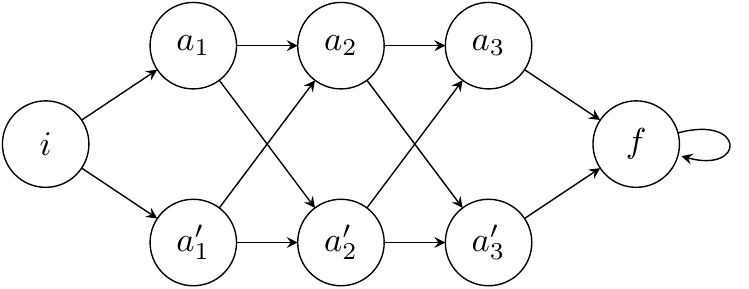}

\caption{Markov chain $\M$ for $n=3$}
\label{mcUTLX}
\end{center}
\end{figure}

\noindent Recalling the upper bound from Proposition \ref{thm:fo2_mc}, we
conclude that the quantitative model checking problem for $\TL$ on
Markov chains is \#P-complete.
\end{proof}

\begin{prop}
\label{MCFO2hard}
The quantitative model checking problem for $\FOtwo$ on Markov chains
is PEXP-hard.
\end{prop}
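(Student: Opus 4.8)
The plan is to establish PEXP-hardness by an exponential lift of the $\#P$-hardness argument of Proposition~\ref{UTLonMChardness}, reducing from the canonical PEXP-complete problem: majority acceptance of a nondeterministic machine $M$ running in time $2^{p(n)}$ on input $w$ (equivalently, a succinctly presented \textsc{MajSat} instance). Writing $g(n):=2^{p(n)}$ for the number of nondeterministic bits guessed along a computation, there are $2^{g(n)}$ computation paths, and the task is to decide whether more than half of them accept. I would build, in polynomial time, a Markov chain $\M$ and an $\FOtwo$ formula $\varphi$ so that $P_\M(L(\varphi))$ equals the fraction of accepting paths; then the threshold question ``$P_\M(L(\varphi))>\tfrac12$'' is exactly the majority question, and since this threshold problem already lies in PEXP by Proposition~\ref{thm:mcfo2mc}, such a reduction witnesses PEXP-hardness.

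The engine of the construction is a polynomial-size Markov chain that nonetheless simulates $g(n)=2^{p(n)}$ independent fair coins; here I exploit that trajectories are \emph{infinite}. I let $\M$ emit at each position a uniformly random $p(n)$-bit \emph{address} (using fresh propositions $b_1,\ldots,b_{p(n)}$) together with one uniformly random \emph{guess bit} (a proposition $v$), all bits independent. Along almost every trajectory each of the $2^{p(n)}=g(n)$ addresses occurs, so I read off the guess string by declaring the guess at step $a\in\{0,1\}^{p(n)}$ to be $v$ evaluated at the \emph{first} occurrence of address $a$. Conditioning on the (address-determined) first-occurrence positions leaves the guess bits there i.i.d.\ uniform, so the induced distribution on guess strings in $\{0,1\}^{g(n)}$ is exactly uniform, each arising with probability $2^{-g(n)}$. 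Thus $P_\M(L(\varphi))$ will equal $\#\{\text{accepting paths}\}\cdot 2^{-g(n)}$ provided $\varphi$ holds on a trajectory iff the encoded guess string drives $M$ to accept.

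It remains to write $\varphi$. The succinctness example from the introduction, $\exists y\,(y<x\wedge\bigwedge_{i\le p(n)} b_i(x)\leftrightarrow b_i(y))$, shows that two variables and a formula of size $O(p(n))$ suffice to compare polynomially-long addresses; I use this to define, as an $\FOtwo$ formula with one free variable, the predicate ``$x$ is the first occurrence of its address,'' and hence to refer to the guess bit of any addressed cell. On top of this addressing layer I would encode the run of $M$ exactly as in the Etessami--Vardi--Wilke NEXPTIME-hardness reduction~\cite{fo2_utl}: a configuration of $M$ is a word of exponential length whose cells are named by addresses, and a polynomial-size $\FOtwo$ formula asserts that the sequence of configurations respects the initial condition, the (guess-bit-dependent) transition function, and reaches an accepting state. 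Combining this acceptance formula with the first-occurrence reading of the guess bits yields the required $\varphi$, of size polynomial in $|M|$, $n$, and $|w|$.

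The hard part will be this last step: faithfully expressing, in polynomial-size $\FOtwo$, that the computation \emph{selected by the trajectory's guess bits} is accepting, while keeping the two available variables enough to check local consistency between adjacent addressed cells and between consecutive configurations. This is precisely the succinct tableau-encoding that underlies NEXPTIME-hardness of $\FOtwo$ satisfiability, and adapting it here amounts to threading the guess bits (read at their first occurrences) through the transition checks. A secondary point to handle carefully is the probability bookkeeping: I must confirm that the measure-zero set of trajectories on which some address never appears can be discarded, and that the encoding introduces no bias, so that the equality $P_\M(L(\varphi))=\#\{\text{accepting paths}\}\cdot 2^{-g(n)}$ holds exactly and the majority threshold is preserved.
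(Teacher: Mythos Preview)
Your reduction has a real gap at precisely the point you flag as ``the hard part.'' You want $P_\M(L(\varphi))$ to equal the fraction of accepting guess strings, which forces $\varphi$ to hold on a trajectory iff the guess string it encodes (via first occurrences) drives $M$ to acceptance. But your Markov chain emits only addresses and guess bits: the trajectory contains the guess string and nothing else. An $\FOtwo$ formula---indeed any first-order formula---can only inspect what is written in the word; it cannot simulate $M$ for $2^{p(n)}$ steps to decide whether a given guess string leads to acceptance. The Etessami--Vardi--Wilke tableau encoding you invoke works because the \emph{tableau itself} is laid out in the word and the formula merely checks local consistency; in your trajectory there is no tableau to check, only the oracle bits.

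If instead you augment the chain so that each position also carries a randomly generated tape symbol (so that a putative tableau is present for the formula to check), your probability claim collapses: for each guess string there is exactly one consistent tableau, but the chain produces it with probability roughly $|\Gamma|^{-2^{2p(n)}}$, a doubly-exponentially small number that cannot serve as a polynomial-size threshold. The paper takes this second route---the Markov chain is essentially a directed clique generating uniform random strings over the full tableau alphabet---and then neutralises the mass on ``junk'' strings by a symmetry trick (spelled out in the proof of Proposition~\ref{fo2_let_mc_hard}): two disjoint copies $\M_{\mathit{left}},\M_{\mathit{right}}$ are run with equal initial probability, and $\varphi$ is arranged so that non-encoding strings satisfy it in exactly one copy. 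The junk mass then contributes precisely $\tfrac12$, whence $P_\M(L(\varphi))>\tfrac12$ reduces to $P(A)>P(R)$ and, by symmetry between accepting and rejecting encodings, to $|A|>|R|$. Your first-occurrence device for extracting unbiased bits is neat, but it does not address this accounting, which is where the actual work of the reduction lies.
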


\begin{proof}
PEXP-hardness is by reduction from the problem of whether a strict
majority of computation paths of a given non-deterministic EXPTIME
Turing machine $T$ on a given input $I$ are accepting.  The Markov
chain generates a uniform distribution over strings of the appropriate
length, and the formula checks whether a given string encodes an
accepting computation of $\M$. The ability of $\FOtwo$ to check
validity of such a string has already been exploited in the
NEXPTIME-hardness proof for $\FO2$ satisfiability in \cite{fo2_utl}.
The details of this approach can be found in the proof of Proposition
\ref{fo2_let_mc_hard}.

Combining with the upper bound from Proposition \ref{thm:mcfo2mc}, the quantitative model checking problem for $\FOtwo$ on Markov chains
is PEXP-complete.
\end{proof}

Turning to lower bounds for MDPs, note that co-NEXPTIME-hardness for $\FO2$ is
inherited from the lower bound for Markov chains. On the other hand, we can show
that the EXPTIME bound for UTL is tight:

\begin{prop} \label{thm:utlmdpexp}
Determining whether for all schedulers a UTL-formula $\varphi$ holds
almost surely on a Markov decision process $\M$ is EXPTIME-hard.
\end{prop}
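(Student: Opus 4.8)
The plan is to establish EXPTIME-hardness by reduction from the acceptance problem for alternating Turing machines running in polynomial space, which is EXPTIME-complete. The guiding intuition comes from contrasting this result with Proposition~\ref{thm:qual_mdp_fo2}, where the same universal qualitative problem for $\TL$ (equivalently $\FOtwoLT$, which lacks the next operator) is only co-NP: there the automata produced by Theorem~\ref{BW:detLimit} have polynomial size, so an NP machine can guess a single one. For $\UTL$ the next operator forces the buffer construction in the proof of Theorem~\ref{BW:detLimit} to track a window of the last $n$ input letters, producing automata of size $|\Sigma|^{n}$. It is exactly this exponential window that I would exploit: the reduction will arrange that a single automaton state encodes an entire tape configuration of the machine.

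Concretely, I would fix an alternating machine $T$ using space $p(n)$ on an input $I$ of length $n$, and build an MDP $\M$ together with a $\UTL$ formula $\varphi$ so that the complement problem ``there is a scheduler $\sigma$ with $P_\sigma(\neg\varphi)>0$'' holds if and only if $T$ accepts $I$. Trajectories of $\M$ spell out sequences of configurations of $T$, each of length $p(n)$ and separated by a marker symbol. Where the control state of $T$ is existential the scheduler resolves a non-deterministic state of $\M$, choosing the successor configuration; where it is universal the choice is made by a randomising state of $\M$. The formula $\varphi$ is the negation of the assertion that the trajectory encodes a legal accepting computation. Legality of a single step is a purely local condition relating cell $j$ of one configuration to cells $j-1,j,j+1$ of the next, and since consecutive configurations lie exactly $p(n)+1$ positions apart this can be written using $\CIRC^{p(n)+1}$, yielding a formula of size polynomial in $n$.

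Correctness rests on characterising the qualitative MDP problem via end-components of the product of $\M$ with the deterministic-in-the-limit automata for $\neg\varphi$ produced by Theorem~\ref{BW:detLimit} (as used for the upper bound in Proposition~\ref{thm:qual_mdp_utl_fo2}): there is a scheduler making $\neg\varphi$ hold with positive probability precisely when the product has a reachable accepting end-component. Here the alternation of $T$ is matched exactly by the structure of end-components, since at a non-deterministic state an end-component may retain a single outgoing edge (an existential choice) whereas at a randomising state it must retain all outgoing edges (a universal choice). Because the automaton's window records the $p(n)$ cells of the current configuration, the reachable part of the product is essentially the configuration graph of $T$, of size $2^{O(p(n))}$, and a reachable accepting end-component in it corresponds to an accepting computation of $T$. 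Combining this with the EXPTIME upper bound of Proposition~\ref{thm:qual_mdp_utl_fo2} and closure of EXPTIME under complement then yields EXPTIME-completeness.

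The main obstacle I anticipate is the faithful bookkeeping of this encoding. One must ensure that the local next-window checks performed by $\varphi$ genuinely force the automaton to carry the whole configuration, so that the product graph really is the configuration graph and no shorter automaton can shortcut it; and one must reconcile the finite accepting computation trees of $T$ with the infinitary, recurrence-based acceptance condition of an end-component. The latter I would handle by placing self-loops at accepting halting configurations and marking them as the Büchi targets, so that any branch reaching acceptance yields admissible recurrent behaviour, while invalid or rejecting branches either violate the local consistency clauses of $\varphi$ or never reach an accepting mark and are therefore excluded from every accepting end-component.
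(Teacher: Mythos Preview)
Your overall strategy---reduce from acceptance of an alternating polynomial-space machine, encode configurations as length-$p(n)$ blocks along the trajectory, let the scheduler resolve existential steps and randomising states resolve universal steps, and exploit $\CIRC^{p(n)+1}$ to compare adjacent configurations---matches the paper's. The gap is in the last step, where you propose to handle the $\omega$-acceptance condition by putting self-loops on accepting halting configurations and marking those as the B\"uchi targets.

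With that design the only accepting end-components of the product are the trivial self-loops at accepting sinks, and your own characterisation (``$\exists\sigma\,P_\sigma(\neg\varphi)>0$ iff some accepting end-component is reachable'') reduces the whole question to plain graph reachability of an accepting configuration. But graph reachability ignores the existential/universal distinction: as soon as \emph{some} branch of the computation tree of $T$ accepts, the scheduler can steer toward the existential choices on that branch and, with positive probability, the random player will happen to make the matching universal choices. Concretely, if $T$ has a single universal root with one accepting and one rejecting child, then $T$ rejects, yet every scheduler already achieves probability $\tfrac12$ of $\neg\varphi$. Your reduction therefore answers ``yes'' on rejecting instances.

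The paper fixes exactly this point by \emph{restarting} the simulation after each visit to the accepting vertex and letting rejecting halts fall into an absorbing non-accepting sink; the formula then demands that the accepting vertex be visited infinitely often. If $T$ rejects, every scheduler has in each round a fixed positive probability of being driven to the rejecting sink, so the probability of surviving all rounds is $0$; if $T$ accepts, the scheduler follows its alternating winning strategy and reaches the accepting vertex in every round with probability $1$. In end-component language, the restart forces any accepting end-component to contain the initial configuration and to be closed under \emph{all} successors at randomising (universal) states---precisely the alternation requirement that your self-loop construction loses.
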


\begin{proof}
The argument is based on the idea of Courcoubetis and Yannakakis for
lower bounds in the LTL case.  We reduce the acceptance problem for an
alternating PSPACE Turing machine to the problem of whether there is a
scheduler that enforces that a UTL formula $\varphi$ holds with
positive probability.  Thus we reduce to the complement of the problem
of interest.

Consider an alternating PSPACE Turing machine $T$ with input $I$.
Without loss of generality we assume that each configuration of $T$
has exactly two successors and that $T$ uses space at most $n$ on an
input $I$ of length $n$.  Then we can encode a branch of the
computation tree of $T$ as a finite string in which each configuration
is represented by a consecutive block of $n+1$ letters: one bit to
represent the choice to branch left or right, and $n$ letters to
represent the configuration.  Let $L_{T(I)}$ be the language of
infinite strings, each of which is an infinite concatenation of finite
strings that encode accepting computations.  It is standard that one
can write a UTL formula $\varphi$ that captures $L_{T(I)}$.

Next we describe the MDP $\M$.  Intuitively the goal of the scheduler
is to choose a path through $\M$ so as to generate a word in $L_{T(I)}$.
A high-level depiction of $\M$ is given in Figure~\ref{mc5}. The
boxes \emph{init-conf} and
\emph{next-conf} contain gadgets that are used by the scheduler
to generate the initial configuration and all successive
configurations of $T$ as strings of length $n$.  The number of such
strings is exponential in $n$, but clearly the gadgets can be
constructed using only linearly many states.  After producing an
existential configuration of the Turing Machine, the scheduler sends
control to the state \emph{sch}, where it decides whether $T$
should branch left or right.  After  generating a universal
configuration, an~honest scheduler sends control to \emph{pro}, the
only randomising state in $\M$, where the branching direction $T$ is
selected uniformly at random.  When the scheduler has successfully
generated an accepting computation it visits \emph{acc}, which is the~only 
accepting state of $\M$, and the simulation starts over again from
the beginning.  Only those computations that visit \emph{acc}
infinitely often and in which the scheduler behaves honestly satisfy
$\varphi$.

We claim that there exists a scheduler such that $P_\M(L(\varphi))>0$
if and only if $T$ accepts its input.

If the Turing Machine $T$ accepts its input, then the scheduler can
simply follow the strategy from the alternating
computation of $T$.  Regardless of the choice made by the
probabilistic opponent, the scheduler can always go to an accepting
vertex with probability $1$. Therefore even if we repeat the whole
simulation, for this scheduler $P_M(L(\varphi))=1$, which is greater
than $0$ as required.

The infinite repetition is important in the second case, when the Turing
Machine $T$ rejects its input. If the process ran only once, it could
happen that in the probabilistic choice, only one option would lead to
a rejecting state, but it would not be chosen if the probabilistic
opponent of the scheduler were unlucky.  Therefore we repeat this
process infinitely many times and thus guarantee that with probability
$1$ we will reach the rejecting vertex and then stay there forever,
i.e. $P_\M(L(\varphi))$ will be $0$ as required.

Combining with the upper bound from Proposition
\ref{thm:qual_mdp_utl_fo2}, determining whether for all schedulers a
UTL-formula holds with probability one on a Markov decision process is
EXPTIME-complete.
\begin{figure}[h!]
\begin{center}

\includegraphics{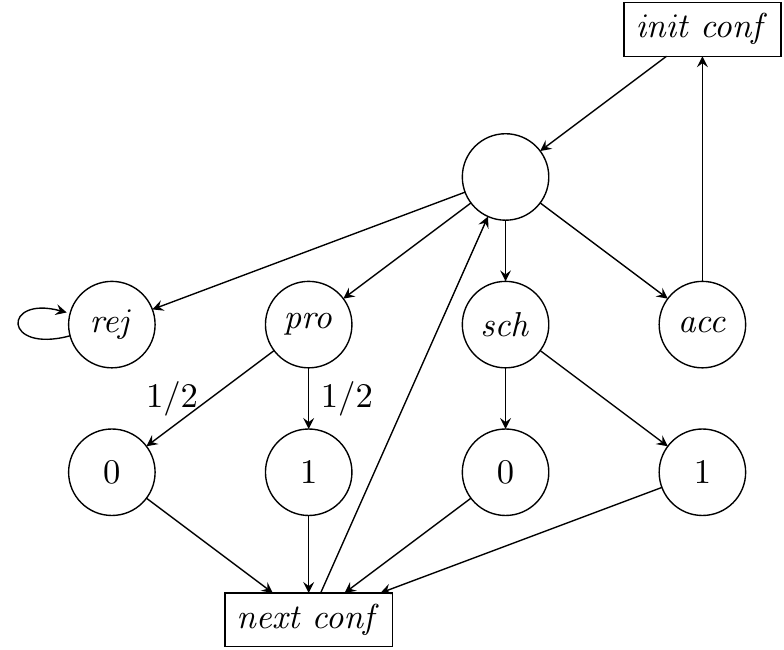}

\caption{Sketch of the Markov decision process $\M$}
\label{mc5}
\end{center}
\end{figure}
\end{proof}

The above was a lower bound for checking whether all schedulers enforce the
property with probability 1.
We now show a tight lower-bound for the existence of a probability one scheduler:

\begin{prop}
\label{lower_tl_mdp}
Given a Markov decision process and a $\TL$ formula, determining whether 
the formula holds with probability one for some scheduler is 2EXPTIME-hard.
\end{prop}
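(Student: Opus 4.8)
The plan is to adapt the 2EXPTIME-hardness argument of Alur, La Torre and Madhusudan~\cite{gamesBoxesDiamonds} for two-player games with $\TL$ winning conditions, combining it with the infinite-repetition idea already used in the proof of Proposition~\ref{thm:utlmdpexp}. Recall that \cite{gamesBoxesDiamonds} reduces acceptance of an alternating exponential-space (equivalently, 2EXPTIME) Turing machine $T$ on input $I$ to a two-player game with a $\TL$ objective $\varphi$: the word generated along a play encodes a branch of the computation tree of $T$, Player~I resolves the existential branching, Player~II resolves the universal branching, and $\varphi$ asserts that the generated word is a valid accepting computation. Because a configuration of $T$ has exponential length, positions inside a configuration are addressed by an explicit binary counter written into the word, and consecutive configurations are compared using only $\DIAMOND$ and $\DIAMONDM$; this keeps $\varphi$ stutter-free and of size polynomial in $T$ and $I$. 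Player~I wins this game iff $T$ accepts $I$.

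To move to the MDP setting, I would build an MDP $\M$ from this construction by replacing each of Player~II's states with a randomising state carrying the uniform distribution over its successors, so that the scheduler-controlled (non-deterministic) states are exactly Player~I's former states. As in the proof of Proposition~\ref{thm:utlmdpexp}, the simulation is made to repeat forever: after a simulated computation completes, control returns to the initial configuration, and $\varphi$ is strengthened to require that an accepting marker recur infinitely often while the correctness constraints hold throughout. I then claim that some scheduler enforces $\varphi$ with probability one on $\M$ iff $T$ accepts $I$.

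The forward direction is immediate: if $T$ accepts then Player~I has a surely winning strategy, which wins against every Player~II play and hence against every outcome of the random choices, so the corresponding scheduler satisfies $\varphi$ surely, and in particular with probability one. The converse is where the repetition is essential and is the main obstacle. If $T$ rejects then, by determinacy, for every scheduler there is a sequence of (now randomised) universal choices that produces a violation within a single round; since each round re-exposes all of these choices afresh and every random branch is taken with a fixed positive probability, with probability one some round realises a violating branch, the accepting marker fails to recur, and $\varphi$ is falsified --- so no scheduler can achieve probability one. The delicate points to verify are that a single \emph{dishonest} round already falsifies $\varphi$, so that honest rounds have a uniformly bounded-below per-round failure probability and the product of the per-round success probabilities vanishes, and that the counter, configuration succession, honesty, and the B\"uchi-style recurrence are all expressible in stutter-free $\TL$ (only $\DIAMOND$ and $\DIAMONDM$), which follows from the encodings of~\cite{gamesBoxesDiamonds,fo2_utl}. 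Together with the matching 2EXPTIME upper bound this establishes the 2EXPTIME-hardness asserted in Proposition~\ref{thm:qualexistsmdp}.
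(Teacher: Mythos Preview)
Your core reduction---taking the two-player game of Alur, La Torre and Madhusudan and replacing Player~II's vertices by uniformly randomising states---is exactly what the paper does. However, you add an unnecessary layer: the infinite-repetition mechanism and the strengthened recurrence formula borrowed from Proposition~\ref{thm:utlmdpexp}.

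The repetition was essential in Proposition~\ref{thm:utlmdpexp} because that result concerns the \emph{universal} problem, whose complement asks whether some scheduler achieves \emph{positive} probability; there one must drive the probability all the way to~$0$ when $T$ rejects, and repetition is what accomplishes this. The present proposition, by contrast, asks directly whether some scheduler achieves probability \emph{exactly one}. To refute this when $T$ rejects it suffices to exhibit, for every scheduler, a single finite prefix of positive probability after which $\varphi$ is already violated---and the original single-round game already provides that: Player~II's winning strategy in the game of~\cite{gamesBoxesDiamonds} is a finite-depth object, and the random player realises it with positive probability, forcing either a detected cheat or a rejecting sink. So the paper simply reuses the game and formula from~\cite{gamesBoxesDiamonds} unchanged, observes that acceptance gives a sure-winning scheduler, and that rejection gives every scheduler probability strictly less than~$1$.

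Your route is not wrong, but the repetition forces you to re-engineer the $\TL$ formula to cope with concatenated computations (resetting the exponential position counters, bounding $\DIAMOND/\DIAMONDM$ comparisons to a single round), which you flag as ``delicate'' without carrying out. The paper's observation that no repetition is needed here removes that obligation entirely.
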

\begin{proof}
The proof is an adaptation of the 2EXPTIME-hardness proof of Alur
\emph{et.\ al.} for model checking $\TL$ formulas on two-player games
in \cite{gamesBoxesDiamonds}. The proof there is based on a reduction
from the membership problem for an alternating exponential-space
Turing machine, where a game graph and a $\TL$ formula are constructed
such that the Turing machine accepts the given input if and only if the
existential player has a winning strategy in the game.

We can adapt the proof by assigning the existential vertices of the
game graph to a scheduler and assigning the universal vertices from
the game graph to the probabilistic player (by setting the uniform
outgoing probabilities from these vertices).  When the Turing machine
accepts its input we are guaranteed that there is a corresponding
scheduler that leads to acceptance with the probability 1. On the
other hand, if the Turing machine does not accept its input then after
some finite number of transitions in the Markov decision process,
either the scheduler ``cheats'' (does not follow the Turing machine
transition function or cell numbering) or we get to a rejecting
state. In both cases, the probability of acceptance is less than 1.
\end{proof}

Table \ref{table:probmc} summarises the known results and the results
from this paper (in bold) on probabilistic systems.  An asterisk
indicates bounds that are not known to be tight.  Note that for the
more complex verification problems, from strategy synthesis for MDPs
onwards, all problems are 2EXP-complete.  Intuitively the complexity
of the model overwhelms the difference in the respective logics.  Similarly,
we see 
that in the stutter-free case the extra succinctness of
$\FOtwoLT$ comes at ``no cost'' over $\TL$--- at least, for the complexity
classes we consider, and where we can establish tight bounds, the respective columns
are identical.

\begin{figure}[h]
\begin{center}
\begin{tabular}{|l|c|c|c|c|c|}
\hline
  & $\TL$ & UTL       & $\FOtwoLT$&$\FOtwo$ & LTL \\
\hline
Markov chain           & \CC \#P & PSPACE &  \CC \#P& \CC PEXP                   &PSPACE\\
HMC & PSPACE${}^*$ & PSPACE &  \CC PSPACE${}^*$  &\CC PEXP &  PSPACE  \\
RMC & \CC PSPACE${}^*$ &  EXPSPACE${}^*$ &\CC PSPACE${}^*$   & \CC EXPSPACE${}^*$ &  EXPSPACE${}^*$\\
MDP $(\forall)$    &  \CC co-NP &\CC EXP & \CC co-NP & \CC co-NEXP &2EXP\\
MDP $(\exists)$   & 2EXP & 2EXP & \CC 2EXP & \CC 2EXP &2EXP\\
MDP (quant)              & 2EXP & 2EXP & \CC 2EXP& \CC 2EXP &2EXP\\
$2\frac{1}{2}$-game $(\forall)$              & 2EXP & 2EXP & \CC 2EXP& \CC 2EXP &2EXP\\
\hline
\end{tabular}
\end{center}
\label{table:probmc}
\end{figure}

\section{Model checking $\FOtwoLTL$}
\label{sec:fo2ltl}
We now turn to combining $\FOtwo$ with automata-based techniques for
LTL, examining verification of the hybrid language $\FOtwoLTL$. As was
done with $\FOtwo$, we first show that we can translate $\FOtwoLTL$
into temporal logic with exponential blow-up in the size of the
formula, giving a simple upper bound. While for $\FOtwo$ the
translation was to unary temporal logic, in this case we have a
translation to $\LTLlet$.

We can look at every $\FOtwoLTL$ formula as being rewritable using
let definitions such that every let definition involves either a
pure $\FOtwo$ formula or a pure $\LTL$ operator.  We get this form
by introducing a let definition for every subformula with one free variable. 
For example, rewriting the formula $\varphi = $ $((\exists y \, (suc(x,y) \wedge P_1(x))) \mathrel{\mathcal{U}} P_0)(x)$ with \emph{let definitions} yields
\begin{table}[h!]

$\begin{array}{rl}

\varphi_{\, \olet} = & \olet ~R_0(x) ~\obe~ P_0(x) ~\oin~ \\ 
 & \olet ~R_1(x) ~\obe~ P_1(x) ~\oin~ \\
 & \olet~ R_2(x) ~\obe~ \exists y \, (suc(x,y) \wedge R_1(x)) ~\oin ~ \\
 &(R_2 \mathrel{\mathcal{U}} R_0)(x)
\end{array}$
\end{table}

Note that although the above uses a combination of $\FOtwo$ and
$\LTL$, each individual definition is either ``pure $\FOtwo$'', or
``pure $\LTL$'', and we can apply the translation of $\FOtwo$ to
$\UTL$ in Lemma \ref{thm:fo2utl} to each $\FOtwo$ definition.  This
gives the following result:

\begin{lem}\label{lem:fo2ltl2ltl}
Given an $\FOtwoLTL$ formula $\varphi$, we can convert it to an equivalent $\LTLlet$ formula $\psi$ such that $|\psi| = \rm{O(2^{|\varphi|^2}})$.
\end{lem}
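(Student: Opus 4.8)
The plan is to follow the informal recipe sketched above the statement: first rewrite $\varphi$ as a semantically equivalent $\FOtwoLTLlet$ formula in which every let-definition body is either a \emph{pure} $\FOtwo$ first-order formula or a pure temporal formula, and then eliminate the $\FOtwo$ bodies one at a time using the $\FOtwo$-to-$\UTL$ translation of Lemma~\ref{thm:fo2utl}.

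First I would perform the decomposition. Traversing $\varphi$ top-down, I introduce a fresh unary predicate $R_j$ together with a let definition $\olet~R_j(x)~\obe~\ldots~\oin~\ldots$ for each subformula with exactly one free variable that sits at a boundary between the first-order and temporal syntactic classes, exactly as in the worked example $\varphi_{\olet}$. The structural fact that makes this always possible is that in two-variable logic every quantified subformula $\exists z\,\gamma$ has at most one free variable, since quantification removes one of the only two available variables. Hence every maximal first-order block, once its embedded temporal subformulas have been replaced by the corresponding abstracting predicates $R_j$, is a legitimate $\FOtwo$ formula with a single free variable and no free occurrence of the predicate it defines, so it is a valid let-definition body; dually, each maximal temporal block becomes a pure $\LTL$ body over the $R_j$. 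Ordering the definitions innermost-subformula-first guarantees the side condition that $R_j$ does not occur free in its own body, and unwinding the translation function $T$ of Section~\ref{sec:prelims} shows that the resulting $\FOtwoLTLlet$ formula $\varphi_{\olet}$ is equivalent to $\varphi$. This step adds only one let binder and one abstracting predicate per subformula, so $|\varphi_{\olet}|$ is $O(|\varphi|)$.

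Next I would apply Lemma~\ref{thm:fo2utl} to each pure-$\FOtwo$ body, treating the abstracting predicates $R_j$ as ordinary atomic propositions. This replaces each $\FOtwo$ body by an equivalent $\UTL$ formula, and since $\UTL$ is the $\mathcal{U}$- and $\mathcal{S}$-free fragment of $\LTL$, the result is an $\LTLlet$ formula $\psi$. Equivalence is preserved because Lemma~\ref{thm:fo2utl} is an equivalence at the level of the abstracted predicates and the let semantics substitutes equivalents for equivalents. For the size bound, each $\FOtwo$ body has size at most $|\varphi|$ and quantifier depth at most $|\varphi|$, so by Lemma~\ref{thm:fo2utl} its $\UTL$ translation has size $2^{O(|\varphi|(\mathit{qdp}+1))} = 2^{O(|\varphi|^2)}$; since there are at most $|\varphi|$ bodies, summing gives $|\psi| = |\varphi|\cdot 2^{O(|\varphi|^2)} = O(2^{|\varphi|^2})$.

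I expect the main obstacle to be the first step rather than the size accounting: one has to argue carefully that the let decomposition is both well-formed (each body genuinely lies in one syntactic class, has a single free variable, and respects the no-free-$P_i$ side condition) and meaning-preserving, which amounts to verifying that abstracting a one-free-variable subformula by a fresh let-bound predicate commutes with the translation $T$. The observation that every quantified $\FOtwo$ subformula has at most one free variable is what drives this and should be stated explicitly. The subsequent application of Lemma~\ref{thm:fo2utl} and the exponential bookkeeping are then routine.
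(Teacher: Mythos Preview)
Your proposal is correct and follows exactly the approach the paper intends: introduce a let definition for every one-free-variable subformula so that each body is pure $\FOtwo$ or pure $\LTL$, then apply Lemma~\ref{thm:fo2utl} to each $\FOtwo$ body. The paper in fact offers only the informal sketch preceding the lemma as its proof, so your more careful treatment of well-formedness and of the size bookkeeping simply fleshes out what the paper leaves implicit.
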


We could then translate the let definitions away for $\LTL$, to get an
ordinary $\LTL$ formula---thus showing that $\FOtwoLTL$ and $\LTL$
have the same expressiveness.  However, there is no need to perform
this second transformation to get a bound on the complexity of model
checking.  Let definitions do not increase complexity for model
checking $\LTL$, since non-deterministic B\"uchi automata for $\LTL$
and $\LTLlet$ have the same asymptotic size:

\begin{lem}
\label{lemma:ltl_let}
Given an $\LTLlet$ formula $\varphi$, there is an unambiguous B\"uchi automaton $A$ with at most 
$\rm{O(2^{|\varphi|^2})}$ states accepting exactly the language $\{ w \in \Sigma^\omega : w \models \varphi \}$. Moreover this automaton can be constructed in polynomial time in its size. 
\end{lem}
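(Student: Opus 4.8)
The plan is to build the automaton directly as a Wolper-style tableau for the (possibly exponentially large) $\LTL$ formula $T(\varphi)$, exactly as in the $n=0$ case of Theorem~\ref{BW:detLimit}, but now retaining the full set of temporal operators $\mathcal{U}$ and $\mathcal{S}$ rather than only $\DIAMOND$ and $\DIAMONDM$. Since $\varphi$ is by definition semantically equivalent to $T(\varphi)$, we have $L(\varphi)=L(T(\varphi))$, so it suffices to produce an automaton for $T(\varphi)$. Following Wolper~\cite{wolper} I would take as states the \emph{atoms}: subsets $\vs$ of the closure $\mathit{cl}(T(\varphi))$ (all subformulas of $T(\varphi)$ and their negations) containing exactly one of $\psi,\neg\psi$ for each subformula and respecting the Boolean constructors. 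Transitions encode the usual one-step unfolding: $\CIRC\psi \in \vs$ iff $\psi\in\vt$; $\CIRCM\psi \in \vt$ iff $\psi\in\vs$; $\psi_1\mathcal{U}\psi_2 \in \vs$ iff $\psi_2\in\vs$ or ($\psi_1\in\vs$ and $\psi_1\mathcal{U}\psi_2\in\vt$); and dually for $\mathcal{S}$ looking backward, with the boundary conditions at position $0$ (no $\CIRCM$-formula holds, and $\psi_1\mathcal{S}\psi_2$ reduces to $\psi_2$) built into the set of initial states. The generalised B\"uchi acceptance condition has one set per until-subformula, requiring that infinitely often either the eventuality $\psi_2$ is met or the obligation $\psi_1\mathcal{U}\psi_2$ is absent, exactly as $\DIAMOND$ was handled in Theorem~\ref{BW:detLimit}.

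For the state bound, the states are atoms over $\mathit{cl}(T(\varphi))$, so their number is at most $2^{|\mathit{cl}(T(\varphi))|}=2^{2|\subf(\varphi)|}$. By Lemma~\ref{lem:temp_closure} the quantity $|\subf(\varphi)|$ is linear in $|\varphi|$, so the number of atoms is $2^{O(|\varphi|)}$, comfortably within the claimed $O(2^{|\varphi|^2})$ bound; converting the generalised B\"uchi automaton to an ordinary one multiplies the state count only by the number $O(|\varphi|)$ of acceptance sets and so stays within the bound.

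Unambiguity is the heart of the argument, and I would import it from the reasoning in Theorem~\ref{BW:detLimit}. By~\cite[Lemma 2]{wolper} any accepting run $f:\mathbb{N}\to 2^{\mathit{cl}(T(\varphi))}$ is \emph{sound}: $\psi\in f(i)$ implies $(u,i)\models\psi$ for every $\psi\in\mathit{cl}(T(\varphi))$. Here the generalised B\"uchi condition is precisely what guarantees soundness for the until-obligations, and the boundary conditions guarantee it for the since-obligations (which need no fairness, since the past is finite). Because each $f(i)$ is an atom it contains $\psi$ or $\neg\psi$ for every subformula, so applying soundness to both yields the biconditional $\psi\in f(i) \iff (u,i)\models\psi$. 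Hence the accepting run is forced to be $f(i)=\{\psi : (u,i)\models\psi\}$, there is at most one of them, and the automaton is unambiguous (indeed deterministic in the limit, as in Theorem~\ref{BW:detLimit}, though only unambiguity is claimed here).

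Finally, for the polynomial-time construction I would never materialise the exponentially large $T(\varphi)$. Instead I would compute the set $\subf(\varphi)$ compositionally, using exactly the recurrence behind Lemma~\ref{lem:temp_closure} (for a let-definition, $\subf$ of the whole is assembled from $\subf(\varphi_1)$ and $\subf(\varphi_2)$, with the bound predicate occurrences identified with $T(\varphi_1)$), which runs in time polynomial in $|\varphi|$. With $\subf(\varphi)$ in hand, the atoms, transitions, initial states and acceptance sets are enumerated and checked by local tests taking time polynomial in $|\mathit{cl}(T(\varphi))|$ per state or pair of states, so the whole construction is polynomial in the number of states, i.e.\ in the size of the output automaton. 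The step I expect to be the main obstacle is verifying that Wolper's soundness lemma genuinely survives the passage from the $\DIAMOND,\DIAMONDM$ fragment of Theorem~\ref{BW:detLimit} to full $\mathcal{U},\mathcal{S}$ — in particular that the interaction of the since-operators' boundary conditions with the until-operators' fairness sets still pins the run down uniquely; this is routine for $\LTL$ tableaux but must be checked against the precise transition relation chosen.
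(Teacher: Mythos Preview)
Your proof is correct and follows essentially the same route as the paper: the paper's argument is simply to combine Lemma~\ref{lem:temp_closure} (linear subformula count for $\LTLlet$) with the black-box citation of Couvreur et al.\ (Lemma~\ref{lemma:ltl_sub}) giving an unambiguous automaton of size $O(|\Sigma|\,|\subf(\varphi)|\,2^{|\subf(\varphi)|})$, and you have unpacked that citation by spelling out the Wolper tableau and the unambiguity argument.

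One caveat: your parenthetical remark that the automaton is ``indeed deterministic in the limit, as in Theorem~\ref{BW:detLimit}'' does not carry over to full $\LTL$. The SCC analysis of Lemma~\ref{lem:observations}, on which determinism in the limit rests, exploits the monotone behaviour of $\DIAMOND$ and $\DIAMONDM$ along a run; with $\mathcal{U}$ present an until-formula can toggle within a single SCC, so the tableau automaton is in general not deterministic in the limit. This does not affect your proof, since only unambiguity is claimed and your soundness-plus-completeness argument for unambiguity is valid for the full set of temporal operators.
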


This follows from the fact that the number of 
subformulas of  $\LTLlet$ formulas is linear in the 
formula size
(Lemma~\ref{lem:temp_closure}) and from the following
result of Couvreur et al:

\begin{lem}[\cite{ltltosepaut}]
\label{lemma:ltl_sub}
Given an $\LTL$ formula $\varphi$, there is an unambiguous B\"uchi automaton $A$ with at most 
$\text{O}(|\Sigma||\subf(\varphi)|2^{|\subf(\varphi|)})$ states accepting exactly the language $\{ w: w \in \Sigma^\omega \wedge w \models \varphi \}$. Moreover this automaton can be constructed in polynomial time in its size. 
\end{lem}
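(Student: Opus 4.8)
The plan is to follow the classical tableau construction of Vardi and Wolper, exactly as in the proof of Theorem~\ref{BW:detLimit}, but now accommodating the binary operators $\mathcal{U}$ and $\mathcal{S}$ in addition to the unary ones. First I would form the closure $\mathit{cl}(\varphi)$ consisting of all subformulas of $\varphi$ together with their negations (identifying $\neg\neg\psi$ with $\psi$), and take as states the \emph{atoms}: the maximal subsets $\vs \subseteq \mathit{cl}(\varphi)$ that decide every formula (contain exactly one of $\psi$, $\neg\psi$) and respect the Boolean connectives. Each atom fixes a label via $\vs \cap \mathcal{P}$, so there are at most $O(|\Sigma|\,2^{|\subf(\varphi)|})$ atoms.

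Next I would define the transition relation by local one-step consistency conditions: $\CIRC\psi \in \vs$ iff $\psi \in \vt$ for a transition $(\vs,\vt)$; $\CIRCM\psi \in \vt$ iff $\psi \in \vs$, with $\CIRCM\psi \notin \vs_0$ at the initial position; the fixpoint unfolding $\psi_1 \mathrel{\mathcal{U}} \psi_2 \in \vs$ iff $\psi_2 \in \vs$ or ($\psi_1 \in \vs$ and $\psi_1 \mathrel{\mathcal{U}} \psi_2 \in \vt$); and dually $\psi_1 \mathrel{\mathcal{S}} \psi_2 \in \vt$ iff $\psi_2 \in \vt$ or ($\psi_1 \in \vt$ and $\psi_1 \mathrel{\mathcal{S}} \psi_2 \in \vs$), with the since-obligation grounded at position~$0$. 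Initial atoms are those containing $\varphi$ and no unfulfilled past obligation, and $\DIAMOND,\DIAMONDM$ are treated as the corresponding until/since with true. Because the past operators are anchored at the origin and read off the prefix deterministically, they impose no fairness requirement; only the future eventualities do. So I would take a generalised B\"uchi condition with one acceptance set $F_{\psi_1 \mathrel{\mathcal{U}} \psi_2} = \{\vs : \psi_2 \in \vs \text{ or } \psi_1 \mathrel{\mathcal{U}} \psi_2 \notin \vs\}$ for each until-subformula, ensuring that a postponed eventuality is eventually discharged.

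Correctness and---crucially---unambiguity would then be established by the single induction at the heart of the Wolper argument: for any accepting run $\vs_0\vs_1\cdots$ on a word $w$ and any $\psi \in \mathit{cl}(\varphi)$, one shows $\psi \in \vs_i$ iff $(w,i)\models \psi$. The Boolean and next/previous cases are immediate from the local conditions; for $\psi_1 \mathrel{\mathcal{U}} \psi_2$ the ``false negative'' direction is ruled out because asserting $\neg(\psi_1 \mathrel{\mathcal{U}} \psi_2)$ propagates forward and forbids $\psi_2$ forever, clashing with the actual label, while the ``false positive'' direction is ruled out by the acceptance set $F_{\psi_1 \mathrel{\mathcal{U}} \psi_2}$, which cannot be visited infinitely often if the obligation is carried but never met. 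Since every atom decides every formula, this forces $\vs_i$ to equal the truth set $\{\psi : (w,i)\models\psi\}$, so the accepting run is unique and the automaton is unambiguous. Converting the generalised B\"uchi automaton to an ordinary one by the standard round-robin counter over the at most $|\subf(\varphi)|$ acceptance sets multiplies the state count by $O(|\subf(\varphi)|)$ and preserves unambiguity (the counter advances deterministically), yielding the claimed bound $O(|\Sigma|\,|\subf(\varphi)|\,2^{|\subf(\varphi)|})$; and since membership of an atom in each component, a transition, and an acceptance set is checkable locally, the whole automaton is produced in time polynomial in its size.

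The main obstacle I anticipate is the unambiguity argument in the presence of both future and past operators simultaneously: I must verify that the past-obligation propagation is genuinely deterministic and correctly grounded, so that no spurious run can satisfy a since or previously clause, and that the interaction of the until fixpoint condition with the generalised B\"uchi acceptance pins the run down uniquely rather than merely guaranteeing nonemptiness. Getting the initial-state conditions and the origin anchoring exactly right, so that the forward induction closes in \emph{both} directions, is where the care is needed.
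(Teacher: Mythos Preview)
The paper does not actually prove this lemma: it is quoted verbatim from~\cite{ltltosepaut} and used as a black box. Your proposal is nonetheless a correct way to establish it, and it is precisely the natural extension to full $\LTL$ of the argument the paper itself gives for $\TL$/$\UTL$ in the proof of Theorem~\ref{BW:detLimit}: form atoms over $\mathit{cl}(\varphi)$, impose the one-step unfolding conditions for $\mathcal{U}$ and $\mathcal{S}$ (and $\CIRC$, $\CIRCM$), take generalised B\"uchi acceptance sets $F_{\psi_1\mathcal{U}\psi_2}$ for each until-subformula, and observe that in any accepting run the state at position $i$ must equal $\{\psi \in \mathit{cl}(\varphi) : (w,i)\models\psi\}$, whence unambiguity. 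The round-robin conversion to ordinary B\"uchi and the state count are as you say.

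One small point worth tightening: your bound of $O(|\Sigma|\,2^{|\subf(\varphi)|})$ atoms is justified only if the propositional part of an atom is constrained to lie in $\Sigma$ (as is done explicitly in the paper via the set $\mathit{tp}^\Sigma_\varphi$); otherwise the number of atoms is simply $2^{|\subf(\varphi)|}$ with no $|\Sigma|$ factor. Either way the final bound holds, but make the dependence on $\Sigma$ explicit. Your concern about past operators is not a real obstacle: the initial-state condition (no unfulfilled $\mathcal{S}$ or $\CIRCM$ obligation at position~$0$) together with the backward unfolding clause makes the past component of each atom a deterministic function of the prefix, so no extra fairness condition is needed and the bi-implication $\psi\in\vs_i \Leftrightarrow (w,i)\models\psi$ goes through by the same induction.
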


As a corollary of Lemmas \ref{lem:fo2ltl2ltl} and \ref{lemma:ltl_let}
we see that we can convert from an $\FOtwoLTL$ formula to an
unambiguous B\"uchi automaton in doubly exponential time, giving a
doubly-exponential bound on the complexity of model-checking.
However, just as in the previous section, we show that we can do
better by direct analysis than via this translation approach.

We begin by looking at the translation given in Lemma
\ref{lem:fo2ltl2ltl} from a different perspective.  Let us extend the
set of atomic propositions $\mathcal{P}$ and alphabet $\Sigma =
2^{\mathcal{P}}$ by adding new atomic propositions $\mathcal{R}$ for
every predicate created in that translation.  Thus we have an extended
alphabet $\Sigma' = 2 ^ {\mathcal{P} \cup \mathcal{R}}$. There is an
obvious restriction mapping taking an infinite word $w'$ over
$\Sigma'$ to a word over $\Sigma$, simply by ignoring all propositions
in $\mathcal{R}$; we denote this by $\text{restrict}(w', \Sigma)$.

\begin{lem}
\label{decomposition}
Given an $\FOtwoLTL$ formula $\varphi$ alphabet $\Sigma$, there is an
$\FOtwo$ formula $\varphi_F$ and an $\LTL$ formula $\varphi_L$ over
$\Sigma'$ having the following two properties for all
$w \in \Sigma^\omega$: (i)~if $w \models \varphi$ then there is a
unique extension $w'$ of $w$ such that
$w' \models \varphi_L \wedge \varphi_F$; (ii)~if
$w \not \models \varphi$ then there is no extension to $w'$ such that
$w' \models \varphi_L \wedge \varphi_F$.  Moreover, $|\varphi_L|,
|\varphi_F| = \rm{O}(|\varphi|^2)$
\end{lem}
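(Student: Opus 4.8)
The plan is to build $\varphi_F$ and $\varphi_L$ from the let-normal form of $\varphi$ discussed just before Lemma~\ref{lem:fo2ltl2ltl}, but \emph{without} applying the $\FOtwo$-to-$\UTL$ translation of Lemma~\ref{thm:fo2utl}; instead I keep each first-order definition as a genuine $\FOtwo$ formula and use the fresh propositions in $\mathcal{R}$ to record the truth values of the intermediate one-free-variable subformulas. First I would rewrite $\varphi$ by introducing, for every subformula with at most one free variable, a fresh predicate $R_i \in \mathcal{R}$, defined by a let whose body is obtained from that subformula by replacing each maximal proper one-free-variable subformula with its own predicate. As observed before Lemma~\ref{lem:fo2ltl2ltl}, each resulting definition $R_i(x) \obe \psi_i(x)$ is then \emph{pure}: its top-level operator is either a temporal operator ($\mathrel{\mathcal{U}}$, $\mathrel{\mathcal{S}}$, $\CIRC$, $\CIRCM$), in which case $\psi_i$ is an $\LTL$ formula over $\mathcal{P}\cup\mathcal{R}$, or a first-order construct ($\exists$, $<$, $\mathrm{suc}$), in which case $\psi_i$ is an $\FOtwo$ formula over $\mathcal{P}\cup\mathcal{R}$ (the two-free-variable material such as $x<y$ and $\mathrm{suc}(x,y)$ stays inline, while every one-free-variable part becomes a predicate). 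Purely Boolean or atomic definitions may be assigned to either group. Let $R_{\mathrm{top}}$ denote the predicate introduced for $\varphi$ itself.

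Next I would assemble the two formulas. Set
\[
\varphi_F \;=\; \bigwedge_{i \in \mathrm{FO}} \forall x\,\bigl(R_i(x) \iff \psi_i(x)\bigr),
\qquad
\varphi_L \;=\; R_{\mathrm{top}} \;\wedge\; \bigwedge_{i \in \mathrm{LTL}} \BOX\bigl(R_i \iff \psi_i\bigr),
\]
where $\mathrm{FO}$ and $\mathrm{LTL}$ index the first-order and temporal definitions respectively. Each conjunct of $\varphi_F$ is a genuine $\FOtwo$ formula over $\Sigma' = 2^{\mathcal{P}\cup\mathcal{R}}$, so $\varphi_F$ is $\FOtwo$; each conjunct of $\varphi_L$ uses only Boolean and temporal operators over the propositions $\mathcal{P}\cup\mathcal{R}$ (with $\BOX$ a derived $\LTL$ operator), so $\varphi_L$ is $\LTL$. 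Since there are $O(|\varphi|)$ definitions and each inline body has size $O(|\varphi|)$, we get $|\varphi_F|, |\varphi_L| = O(|\varphi|^2)$ as required; crucially the $\FOtwo$ definitions are copied verbatim, so no exponential blow-up occurs here.

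For correctness, the central observation is that the biconditional constraints pin down the fresh propositions uniquely. Given $w \in \Sigma^\omega$, define the \emph{canonical extension} $w'$ by letting $R_i$ hold at position $n$ exactly when the subformula of $\varphi$ associated to $R_i$ holds at $(w,n)$. Since the let-nesting makes the dependency among the $R_i$ acyclic, an easy induction following this order shows that $w'$ is the unique extension of $w$ satisfying every biconditional, and hence the unique extension satisfying all conjuncts of $\varphi_F$ and all conjuncts of $\varphi_L$ except possibly $R_{\mathrm{top}}$. By construction $R_{\mathrm{top}}$ holds at position $0$ of $w'$ iff $(w,0)\models\varphi$, i.e.\ iff $w \models \varphi$. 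This yields both clauses: if $w \models \varphi$, then $w'$ also satisfies the conjunct $R_{\mathrm{top}}$ and is the unique extension satisfying $\varphi_L \wedge \varphi_F$ (clause~(i)); if $w \not\models \varphi$, then any extension satisfying the biconditionals must equal $w'$, but $w'$ fails the conjunct $R_{\mathrm{top}}$, so no extension satisfies $\varphi_L\wedge\varphi_F$ (clause~(ii)).

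I expect the only delicate point to be the bookkeeping in the rewriting step: checking that every definition really is pure after introducing a predicate for each one-free-variable subformula, and respecting the $\FOtwo$ variable convention—namely that a one-free-variable subformula occurring under $\exists y$ with free variable $y$ is replaced by $R_i(y)$, using the substitution $R_i(y)\mapsto \psi_i[x\mapsto y]$ built into the let semantics. Once this setup is in place, the uniqueness argument and the two implications are routine inductions, and the size bound follows immediately from counting the definitions and their inline bodies.
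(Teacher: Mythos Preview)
Your proposal is correct and follows essentially the same approach as the paper: rewrite $\varphi$ into let-normal form with pure $\FOtwo$ and pure $\LTL$ definitions, turn each definition into a biconditional constraint ($\forall x\,(R_i(x)\leftrightarrow\psi_i(x))$ or $\BOX(R_i\leftrightarrow\psi_i)$), and collect these into $\varphi_F$ and $\varphi_L$. Your write-up is in fact more explicit than the paper's on two points---the placement of the top-level assertion $R_{\mathrm{top}}$ in $\varphi_L$ and the uniqueness-of-extension argument via the canonical extension and acyclic dependency---but these are elaborations of the same construction rather than a different route.
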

\begin{proof}
We use the translation in Lemma \ref{lem:fo2ltl2ltl}, but
consider it simply returning the collection of
let definitions. Corresponding to each
definition is a conjunct stating that
$R_i$ holds iff $\varphi_i$ holds. 
We now examine the form of this conjunct.

Each $\varphi_i$ is either a basic two-variable formula or an $\LTL$
atomic formula.  If $\varphi_i$ is in $\LTL$ then the iff can be
expressed again in $\LTL$: $\BOX (R_i \leftrightarrow \varphi_i)$.  If
$\varphi_i$ is in $\FOtwo$ then the iff above can be expressed as
$\forall x. (R_i(x) \leftrightarrow \varphi_i(x))$.  We can simply let
$\varphi_F$ be th $\FOtwo$ conjuncts and $\varphi_L$ be the $\LTL$
conjuncts to obtain the desired conclusion.

The upper bounds for lengths $|\varphi_L|$ and $|\varphi_F|$ follow
from the fact that $k \le |\varphi|$ and $|\varphi_i| \le |\varphi|$.
\end{proof}

For the formula from the example at the beginning of this section we get following formulas $\varphi_L$ and $\varphi_F$ over $\Sigma'$:
\begin{eqnarray*}
 \varphi_L &=& (R_2 \mathrel{\mathcal{U}} R_0)(x) \wedge \BOX (R_0(x) \leftrightarrow P_0(x)) \wedge \\
 & & \BOX (R_1(x) \leftrightarrow P_1(x)) \\
 \varphi_F &=& \forall x.  (R_2(x) \leftrightarrow \exists y.(\text{suc}(x, y) \wedge R_1(x)))
\end{eqnarray*}

\subsection{Combining automata constructions for $\FOtwo$ and $\LTL$}

Given $\FOtwoLTL$ formula $\varphi$, we can apply Lemma
\ref{decomposition} to obtain an equisatisfiable formula $\varphi_L
\wedge \varphi_F$, where $\varphi_L$ is an $\LTL$ formula and
$\varphi_F$ is an $\FOtwo$ formula over the extended alphabet
$\Sigma'$.  Now we can build a B\"uchi automaton $B_L$ for $\varphi_L$
using the construction from Lemma \ref{lemma:ltl_sub}, as well as a
collection of $2^{2^{\mathit{poly}(|\varphi_F|)}}$ B\"uchi automata
$B_{F_i}$ for $\varphi_F$, using Theorem \ref{thm:FO2_aut}.

For each $i$ we build a product automaton $A_i = B_L \otimes B_{F_i}$
synchronising on the truth values of the newly introduced atomic
propositions $R_i$.  We claim that each product automaton $A_i$ is
unambiguous, the languages they accept are disjoint, and their union
is exactly $\{w\in \Sigma^\omega : w \models \varphi \}$.  This
follows from the fact that each word over $\Sigma$ has only one
extension to a word over $\Sigma'$ for which $B_L$ accepts, along with
the fact that the languages accepted by the $B_{F_i}$ are disjoint.

After producing the synchronised cross product, we can restrict the
input alphabet back to $\Sigma$, because the values of all newly
introduced atomic propositions $p_i \in \Sigma' \setminus \Sigma $ are
fully determined by the truth values of atomic predicates $P_i$ and
the relations defined by $\varphi$.

Therefore we get the following theorem:

\begin{thm}
\label{thm:fo2_ltl:prob}
 $\FOtwoLTL$ formula $\varphi$, there is a collection of doubly
exponentially many (in $|\varphi|$) generalized B\"uchi automata
$A_i$, each of exponential size in $|\varphi|$, such that the
languages they accept are disjoint and the union of these languages is
exactly $\{w\in \Sigma^\omega : w \models \varphi \}$. Moreover, each
automaton $A_i$ is unambiguous and can be constructed by a
non-deterministic Turing machine in polynomial time in its size.
\end{thm}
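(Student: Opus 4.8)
The plan is to assemble the automata directly from the decomposition of Lemma~\ref{decomposition}, combined with the two available logic-to-automata constructions, rather than first collapsing $\FOtwoLTL$ all the way down to $\LTL$. First I would apply Lemma~\ref{decomposition} to $\varphi$, obtaining an $\LTL$ formula $\varphi_L$ and an $\FOtwo$ formula $\varphi_F$, both over the extended alphabet $\Sigma' = 2^{\mathcal{P} \cup \mathcal{R}}$, with the key properties that every $w \models \varphi$ has a \emph{unique} extension $w'$ satisfying $\varphi_L \wedge \varphi_F$, that no $w \not\models \varphi$ has such an extension, and that $|\varphi_L|, |\varphi_F| = \mathrm{O}(|\varphi|^2)$. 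To $\varphi_L$ I would apply Lemma~\ref{lemma:ltl_sub} to get a single unambiguous B\"uchi automaton $B_L$ over $\Sigma'$ accepting $L(\varphi_L)$; since $|\varphi_L| = \mathrm{O}(|\varphi|^2)$ and $|\subf(\varphi_L)|$ is linear in $|\varphi_L|$, this automaton has $2^{\mathrm{O}(|\varphi|^2)}$ states, i.e.\ it is of exponential size in $|\varphi|$. To $\varphi_F$ I would apply Theorem~\ref{thm:FO2_aut}, obtaining a family $\{B_{F_i}\}_{i \in I}$ of $2^{2^{\mathit{poly}(|\varphi|)}}$ unambiguous B\"uchi automata over $\Sigma'$, each of size $2^{\mathit{poly}(|\varphi|)}$, whose languages partition $L(\varphi_F)$.

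Next, for each $i$ I would form the synchronised product $A_i = B_L \otimes B_{F_i}$ over $\Sigma'$, whose number of states is the product of the two factors and hence still exponential in $|\varphi|$, and of which there are the required doubly-exponentially many. The three structural claims then follow from the decomposition: the languages of the $A_i$ are pairwise disjoint because the $L(B_{F_i})$ already partition $L(\varphi_F)$ and every $A_i$ refines $B_{F_i}$; the union $\bigcup_i L(A_i)$ equals $L(\varphi_L \wedge \varphi_F)$, since a $\Sigma'$-word is accepted by $B_L$ and by \emph{some} $B_{F_i}$ exactly when it lies in $L(\varphi_L) \cap L(\varphi_F)$; and each $A_i$ is unambiguous because an accepting run of the product projects to an accepting run in each factor, and both factors are unambiguous.

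Finally I would project the input alphabet of each $A_i$ from $\Sigma'$ back to $\Sigma$ by dropping the $\mathcal{R}$-coordinates of every transition label, yielding automata over $\Sigma$ whose union is exactly $\{w \in \Sigma^\omega : w \models \varphi\}$. I expect this last step to be the one needing the most care, and to be the main obstacle: projection ordinarily destroys unambiguity, since several $\Sigma'$-letters collapse to the same $\Sigma$-letter and transitions that were previously distinguishable may now compete on a common input symbol. The point that rescues the argument is precisely the uniqueness clause of Lemma~\ref{decomposition}: each $w \in L(\varphi)$ has one and only one extension $w'$ satisfying $\varphi_L \wedge \varphi_F$, so along any accepting run over $\Sigma$ the hidden $\mathcal{R}$-labels are forced, leaving at most one accepting run per word and thereby preserving both the partition into the $A_i$ and the unambiguity of each. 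The nondeterministic polynomial-time constructibility is inherited directly: $B_L$ is built in time polynomial in its size by Lemma~\ref{lemma:ltl_sub}, each $B_{F_i}$ by a nondeterministic machine in time polynomial in its size by Theorem~\ref{thm:FO2_aut}, and the product followed by the alphabet projection adds only polynomial overhead.
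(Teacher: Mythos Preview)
Your proposal is correct and follows essentially the same route as the paper: decompose via Lemma~\ref{decomposition}, build $B_L$ from Lemma~\ref{lemma:ltl_sub} and the $B_{F_i}$ from Theorem~\ref{thm:FO2_aut}, take synchronised products, and then restrict the alphabet back to $\Sigma$, using the uniqueness clause of Lemma~\ref{decomposition} to justify that unambiguity and disjointness survive the projection. If anything, your treatment of the projection step is more explicit than the paper's, which simply asserts that the restriction is harmless because the values of the $R_i$ are determined.
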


This translation will now allow us to read off bounds for many
$\FOtwoLTL$ verification problems.

\subsection{Model Checking $\FOtwoLTL$}

Comparing Theorem \ref{thm:FO2_aut} with Theorem \ref{thm:fo2_ltl:prob}, we can easily see 
that automata for $\FOtwo$ in isolation and $\FOtwoLTL$ have the same asymptotic size. 
We can therefore use all automata-based bounds on verification results for $\FOtwo$, provided
that they rely only on unambiguity of the resulting automata.
This allows us to replace $\FOtwo$ with $\FOtwoLTL$ in the results of the
previous sections, giving the following:

\begin{prop}
Model checking $\FOtwoLTL$ properties on Kripke structures,
hierarchical and recursive state machines is in the complexity class NEXP.
\end{prop}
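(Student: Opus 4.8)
The plan is to follow exactly the route used for $\FOtwo$ on recursive state machines in Proposition~\ref{rsmfo2}, replacing the $\FOtwo$-to-automata translation by the $\FOtwoLTL$-to-automata translation of Theorem~\ref{thm:fo2_ltl:prob}. Since Kripke structures and hierarchical state machines are special cases of RSMs, I would treat only the RSM case. Model checking amounts to deciding whether the RSM $\M$ generates some trajectory whose induced $\omega$-word satisfies $\varphi$, i.e.\ whether the language of $\M$ intersects $L(\varphi)=\{w\in\Sigma^\omega : w\models\varphi\}$.

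The algorithm I would give runs on a nondeterministic exponential-time machine. By Theorem~\ref{thm:fo2_ltl:prob}, $L(\varphi)$ is the disjoint union of the languages $L(A_i)$ of at most doubly-exponentially many unambiguous generalized B\"uchi automata $A_i$, each of size exponential in $|\varphi|$, and each constructible by a nondeterministic Turing machine in time polynomial in its (exponential) size. The NEXP algorithm would first nondeterministically guess and construct a single such automaton $A_i$, which takes exponential time; it would then form the synchronised product of $A_i$ with $\M$ and check that this product admits an accepting run, exactly as in Proposition~\ref{npbound}, by testing reachability of an accepting bottom strongly connected component via the summary-edge construction for RSMs (see~\cite{rsm}). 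Because $L(\varphi)$ is partitioned by the $L(A_i)$, the machine $\M$ satisfies $\varphi$ on some trajectory if and only if for some $i$ the product of $A_i$ with $\M$ is nonempty, so a single correct guess of $i$ suffices; unambiguity is not even needed here, only that the family covers $L(\varphi)$.

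For the complexity accounting, the only real point to check is that every step fits within NEXP. Constructing $A_i$ costs exponential time; the product of $A_i$ (exponential size) with $\M$ (polynomial size) is again of exponential size; and the reachability/summary-edge analysis is polynomial in the size of the product, hence exponential in $|\varphi|$ and $\M$ overall. Thus the whole computation is a single nondeterministic exponential-time procedure, yielding the claimed NEXP bound, which matches the bound for $\FOtwo$ in isolation. The only mild obstacle is ensuring that the reachability test for RSM products respects the box-entry/box-exit discipline and correctly identifies accepting bottom SCCs in the presence of the recursive structure; this is handled verbatim by the summary-edge reachability analysis already invoked in Proposition~\ref{npbound}, now applied to an exponentially larger but still unambiguous and deterministic-in-the-limit automaton.
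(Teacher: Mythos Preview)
Your proposal is correct and follows exactly the paper's intended route: the paper observes that Theorem~\ref{thm:fo2_ltl:prob} yields automata of the same asymptotic size as Theorem~\ref{thm:FO2_aut}, so the NEXP argument of Proposition~\ref{rsmfo2} (guess an exponential-size $A_i$, take the product with the RSM, and test non-emptiness via summary edges) carries over verbatim. One small slip: Theorem~\ref{thm:fo2_ltl:prob} only guarantees unambiguity, not determinism in the limit, so your closing remark about the automaton being deterministic in the limit is unwarranted---but as you yourself note, neither property is needed for this non-emptiness argument.
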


\begin{prop} \label{thm:mcfo2mcfo2ltl}
The threshold problem for model checking $\FOtwoLTL$ 
on both Markov chains and hierarchical Markov chains is in PEXP.
\end{prop}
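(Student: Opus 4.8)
The plan is to mirror the proof of Proposition~\ref{thm:mcfo2mc}, substituting the automaton construction of Theorem~\ref{thm:fo2_ltl:prob} for that of Theorem~\ref{thm:FO2_aut}. The essential observation is that the two theorems deliver families of automata with identical asymptotic properties: in both cases we obtain at most doubly-exponentially many (in $|\varphi|$) generalised B\"uchi automata $\{A_i\}$, each of exponential size, each unambiguous, with pairwise disjoint languages whose union is $\{ w \in \Sigma^\omega : w \models \varphi\}$, and each $A_i$ constructible by a non-deterministic Turing machine in time polynomial in its own size. Since the PEXP argument for $\FOtwo$ uses only these features of the automata, it should carry over without change.

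Concretely, first I would apply Theorem~\ref{thm:fo2_ltl:prob} to the input $\FOtwoLTL$ formula $\varphi$ to obtain the family $\{A_i\}$. Because the $L(A_i)$ are disjoint and partition $L(\varphi)$, we have $P_\M(L(\varphi)) = \sum_i P_\M(L(A_i))$. For each index $i$, unambiguity of $A_i$ lets me invoke Lemma~\ref{BW:MCHlemma} to compute $p_i = P_\M(L(A_i))$ in time polynomial in $\M$ and $A_i$; as $A_i$ has exponential size, each $p_i$ is a rational with exponentially many bits, computable in exponential time. The threshold question, namely whether $\sum_i p_i$ exceeds the given rational threshold, is then decided exactly as in Proposition~\ref{thm:mcfo2mc}: a non-deterministic exponential-time machine guesses the exponentially long index $i$ (there being doubly-exponentially many), computes $p_i$, and branches so that the fraction of accepting computation paths encodes the comparison of $\sum_i p_i$ with the threshold, placing the problem in PEXP.

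For hierarchical Markov chains I would reduce to the ordinary Markov chain case by unfolding the HMC $H$ into a flat Markov chain $\M'$, incurring at most an exponential blow-up in the number of states. The key point, exactly as in Proposition~\ref{thm:hmcutlfo2}, is that the per-automaton probability computation from Lemma~\ref{BW:MCHlemma} is only polynomial in the size of the chain; hence running the Markov-chain procedure on $\M'$ costs $\mathit{poly}(|\M'|) = 2^{\mathit{poly}(|H|)}$ time per automaton, which is still single-exponential, so the overall threshold problem remains in PEXP.

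I expect the only real content to be the verification that the automata supplied by Theorem~\ref{thm:fo2_ltl:prob} genuinely satisfy all the hypotheses (unambiguity, disjointness, the doubly-exponential/exponential size split, and polynomial-time-in-size construction) on which the PEXP counting argument of Proposition~\ref{thm:mcfo2mc} relies. Once this matching of bounds is in hand, no new combinatorial or probabilistic argument is required, and the result follows by direct transfer of the $\FOtwo$ proof.
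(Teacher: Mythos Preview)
Your proposal is correct and mirrors the paper's approach exactly: the paper observes that Theorem~\ref{thm:fo2_ltl:prob} yields automata with the same asymptotic parameters as Theorem~\ref{thm:FO2_aut}, so every verification bound for $\FOtwo$ that depends only on unambiguity and the size estimates (in particular Propositions~\ref{thm:mcfo2mc} and~\ref{thm:hmcutlfo2}) transfers directly to $\FOtwoLTL$. You have simply made explicit the details the paper leaves implicit.
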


\begin{prop} \label{thm:rmcfo2ltl} The probability of an  $\FOtwoLTL$ formula holding on a recursive Markov chain 
can be computed in EXPSPACE.
\end{prop}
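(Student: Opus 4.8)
The plan is to combine the $\FOtwoLTL$-to-automata translation of Theorem~\ref{thm:fo2_ltl:prob} with the PSPACE procedure for unambiguous B\"uchi automata on recursive Markov chains from Proposition~\ref{thm:rmcsepba}, exactly mirroring the argument used for $\FO2$ in Proposition~\ref{thm:rmcfo2}. By Theorem~\ref{thm:fo2_ltl:prob}, the given $\FOtwoLTL$ formula $\varphi$ yields a collection $\{A_i\}_{i\in I}$ of doubly exponentially many generalised B\"uchi automata, each of size exponential in $|\varphi|$, each unambiguous, whose accepted languages are disjoint and partition $\{w\in\Sigma^\omega : w\models\varphi\}$. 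Hence $P_\M(L(\varphi)) = \sum_{i\in I} P_\M(L(A_i))$, and it suffices to compute each summand and add them up within the space bound.

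For a fixed index $i$, I would first build $A_i$ using a nondeterministic exponential-time (hence exponential-space) procedure; this is guaranteed by the clause of Theorem~\ref{thm:fo2_ltl:prob} that $A_i$ is constructible by a nondeterministic Turing machine in time polynomial in its size, since $A_i$ has exponential size in $|\varphi|$. With $A_i$ in hand and unambiguous, Proposition~\ref{thm:rmcsepba} computes $P_\M(L(A_i))$ in space polynomial in $|A_i|$ and $|\M|$; because $|A_i|$ is exponential in $|\varphi|$, this is space exponential in $|\varphi|$, i.e. EXPSPACE.

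The only point needing care is managing the doubly-exponential index set $I$ within EXPSPACE: rather than materialising all automata at once, I would iterate over the indices one at a time, reusing the same exponential-size workspace both to construct $A_i$ and to run the Proposition~\ref{thm:rmcsepba} computation, and maintain a running sum. The iteration counter ranges over a doubly-exponential set, so it needs only exponentially many bits, which fits in EXPSPACE; and each summand (an algebraic quantity handled via the existential theory of the reals, as in Proposition~\ref{thm:rmcsepba}) is kept in exponential space. I expect the main, essentially bookkeeping, obstacle to be precisely this: verifying that enumerating and summing over doubly-exponentially many automata, together with the real-arithmetic manipulations inherited from Proposition~\ref{thm:rmcsepba}, can all be arranged to reuse a single exponential-size workspace rather than inflating the space bound, so that the overall procedure remains in EXPSPACE.
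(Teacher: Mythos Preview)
Your proposal is correct and mirrors the paper's own approach: the paper simply observes that Theorem~\ref{thm:fo2_ltl:prob} yields automata of the same asymptotic size as Theorem~\ref{thm:FO2_aut} for $\FOtwo$, so the EXPSPACE argument of Proposition~\ref{thm:rmcfo2} (iterate over the doubly-exponential family, apply Proposition~\ref{thm:rmcsepba} to each exponential-size unambiguous automaton, and sum) transfers verbatim to $\FOtwoLTL$. Your additional bookkeeping remarks about the index counter and workspace reuse are more explicit than the paper's treatment but entirely in the same spirit.
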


Now let us consider model checking Markov decision processes. 
Recall that in the proof of the corresponding bound
for $\FO2$,  Theorem \ref{thm:qual_mdp_utl_fo2}, we relied on the fact that
the automata are deterministic in the limit. 
Thus our translation for $\FOtwoLTL$ does not  give us the same bounds
as for $\FO2$. And indeed, the corresponding bound for checking whether
all schedulers achieve probability 1 is worse for $\LTL$ in this case, namely doubly-exponential.
We will show that we can achieve the same bound as for $\LTL$.

\begin{prop}
\label{thm:qual_mdp_fo2ltl}
Determining whether for all schedulers an $\FOtwoLTL$-formula $\varphi$ holds on
a Markov decision process with probability one is in the complexity class 2EXPTIME.
\end{prop}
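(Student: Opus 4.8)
The plan is to mirror the 2EXPTIME algorithm for $\LTL$ on Markov decision processes, reducing the universal qualitative problem to a qualitative stochastic parity game of doubly-exponential size and then invoking Proposition~\ref{StochasticParity}. The obstacle, relative to the $\FOtwo$ case of Proposition~\ref{thm:qual_mdp_utl_fo2}, is that the automata produced by our $\FOtwoLTL$ translation (Theorem~\ref{thm:fo2_ltl:prob}) are only unambiguous, not deterministic in the limit, so we cannot simply guess a component automaton and run the Courcoubetis--Yannakakis analysis; we must determinise. To keep the blow-up a single rather than a double exponential on top of the formula, I would \emph{not} determinise the (already doubly-exponential) unambiguous automaton directly, but instead exploit the decomposition of Lemma~\ref{decomposition}.

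First I would apply Lemma~\ref{decomposition} to obtain, over the extended alphabet $\Sigma'$, an $\LTL$ formula $\varphi_L$ and an $\FOtwo$ formula $\varphi_F$, each of size $O(|\varphi|^2)$, such that $w \in \Sigma^\omega$ satisfies $\varphi$ iff $w$ has an extension $w'$ over $\Sigma'$ with $w' \models \varphi_L \wedge \varphi_F$, this extension being unique when it exists. Since $\varphi_L$ and $\varphi_F$ are each polynomial in $|\varphi|$, I can determinise them \emph{separately} without a triple exponential: a standard $\LTL$-to-deterministic-parity construction gives a deterministic parity automaton $D_L$ for $\varphi_L$ with $2^{2^{O(|\varphi|^2)}}$ states and $2^{O(|\varphi|^2)}$ priorities, and Theorem~\ref{DRASUC} gives a deterministic parity automaton $D_F$ for $\varphi_F$ with the same asymptotic bounds. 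Their synchronous product over $\Sigma'$, with the two parity conditions merged into one by an index-appearance-record construction (which multiplies the priority count and inflates the state count only by a factor exponential in the number of priorities, thus preserving the doubly-exponential state bound and exponential priority bound), yields a single deterministic parity automaton $D$ over $\Sigma'$ recognising $\{w' : w' \models \varphi_L \wedge \varphi_F\}$.

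The key step is to relate the MDP problem over $\Sigma$ to a game over $\Sigma'$. Since $D$ is deterministic over $\Sigma'$ while $\M$ emits only $\Sigma$-letters, the auxiliary $\mathcal{R}$-valuations must be supplied, and I would do this by building a stochastic parity game on the product of $\M$ with $D$: the scheduler of $\M$ is one player, the randomising states of $\M$ provide the probabilistic moves, and a new \emph{extension player} chooses, each time a $\Sigma$-letter is emitted, the $\mathcal{R}$-values completing it to a $\Sigma'$-letter and thereby driving $D$'s deterministic transition; the objective is the parity condition of $D$. Because the satisfying extension is unique when it exists, and because ``some online choice of $\mathcal{R}$-values drives $D$ into an accepting run'' is exactly the assertion that the unambiguous $\Sigma$-projection of $D$ accepts $w$, i.e.\ that $w \models \varphi$, one obtains $P^\sigma_\M(\varphi) = \sup_\rho P^{\sigma,\rho}(\text{parity})$ for every scheduler $\sigma$. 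Hence ``for all schedulers $\varphi$ holds almost surely'' is equivalent, by determinacy of stochastic parity games, to the value-one (universal qualitative) question for the extension player.

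Finally I would invoke Proposition~\ref{StochasticParity}. The game has doubly-exponentially many vertices and edges and exponentially many priorities, so translating it to a simple parity game and solving in time $O(d(m+n)(nd)^{d/2})$ costs $2^{2^{O(\mathit{poly}(|\varphi|))}}$ and is polynomial in $|\M|$, giving the desired 2EXPTIME bound. The main obstacle, needing the most care, is precisely the game reduction of the previous paragraph: one must verify that letting a \emph{causal} extension player guess the future-dependent $\mathcal{R}$-values faithfully captures the per-trajectory existence of a satisfying extension (so that the supremum over $\rho$ is achieved and equals $P^\sigma_\M(\varphi)$), and that the resulting minimax value-one condition coincides with the universal qualitative stochastic parity objective of Proposition~\ref{StochasticParity} --- all while keeping the priority count exponential so that solving the game stays within a double exponential.
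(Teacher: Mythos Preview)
Your reduction to a stochastic parity game does not work, and the gap you yourself flag at the end is fatal rather than a detail to be checked. The equation $P^\sigma_\M(\varphi)=\sup_\rho P^{\sigma,\rho}(\text{parity})$ is false in general, because the $\mathcal{R}$-values introduced by Lemma~\ref{decomposition} may depend on letters the extension player has not yet seen. Concretely, take $\varphi=\CIRC P_1\vee\neg\CIRC P_1$. The decomposition introduces a predicate $R$ with the consistency conjunct $\BOX(R\leftrightarrow\CIRC P_1)$ inside $\varphi_L$. Let $\M$ be the Markov chain (no scheduler) that flips a fair coin for $P_1$ at position~$1$. Then $P_\M(\varphi)=1$, yet in your game the extension player must commit to $R$ at position~$0$ before the coin is revealed; a wrong guess falsifies the consistency conjunct and $D$ rejects, so the game value is $1/2$. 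Unambiguity of the $\Sigma$-projection of $D$ does not rescue you: this very example is unambiguous (each word has exactly one accepting run). What would be needed here is that the projected automaton be \emph{good for games}, and nothing in your construction gives that.

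The paper's argument is both different and considerably simpler; it avoids determinisation and any two-player game entirely. One passes to the complement: decide whether there \emph{exists} a scheduler $\sigma$ with $P^\sigma_\M(\neg\varphi)>0$. Apply Theorem~\ref{thm:fo2_ltl:prob} to $\neg\varphi$, obtaining $2^{2^{\mathit{poly}(|\varphi|)}}$ unambiguous B\"uchi automata $A_i$, each of size $2^{\mathit{poly}(|\varphi|)}$, whose languages partition $L(\neg\varphi)$. Then $\exists\sigma\,P^\sigma_\M(\neg\varphi)>0$ iff $\exists i\,\exists\sigma\,P^\sigma_\M(L(A_i))>0$. For each fixed $i$ the latter is the \emph{positive-probability} existential question for a nondeterministic B\"uchi automaton on an MDP, which---unlike the almost-sure question---is solvable in time polynomial in $|\M|\cdot|A_i|$ by the end-component analysis of Courcoubetis and Yannakakis~\cite{CY95} (the automaton's nondeterminism is handed to the scheduler; positive probability is equivalent to reachability of an accepting end component in the product). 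Iterating over all $i$ gives a 2EXPTIME procedure. The point your proposal misses is precisely this asymmetry: for the $>0$ threshold one can safely fold automaton nondeterminism into the scheduler, so no determinisation and no second player are needed.
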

\begin{proof}
We will decide the corresponding complement problem which asks whether there exists a
scheduler $\sigma$ such that the probability satisfying $\neg \varphi$ is greater than 0.
By applying the translation from Theorem \ref{thm:fo2_ltl:prob}, we get a collection of doubly-exponentially many automata, each of 
exponential size. We can go through all these automata and check if the probability is greater than $0$ for one of them. 
For each automaton, we make a call to the exponential time algorithm for qualitative model checking B\"uchi automata on MDPs from Courcoubetis and Yannakakis \cite{CY95}.
\end{proof}

The following table summarises the results for $\FOtwoLTL$ from this
paper (in bold) concerning both non-deterministic and probabilistic
systems in the context of results for $\FOtwo$ and LTL alone.  An
asterisk indicates bounds that are not known to be tight.  The table
shows that for the models considered in this paper the complexity of
verifying $\FOtwoLTL$ is the maximum of the respective complexities of
$\FOtwo$ and LTL.

\begin{figure}[h!]
\begin{center}
\begin{tabular}{|l|c|c|c|}
\hline
       & $\FOtwoLTL$ &  $\FOtwo$ & $\LTL$ \\
\hline
Kripke structure          & \CC NEXP     & NEXP          &PSPACE\\
HSM   & \CC NEXP &     NEXP & PSPACE\\
RSM          & \CC  NEXP  &   NEXP & EXPTIME \\
\hline
Markov chain   &  \CC PEXP &   PEXP                   &PSPACE\\
HMC & \CC PEXP  &   PEXP &  PSPACE  \\
RMC &  \CC  EXPSPACE${}^*$ & EXPSPACE${}^*$ &  EXPSPACE${}^*$\\
MDP $(\forall)$    &\CC 2EXP &  co-NEXP &2EXP\\
\hline
\end{tabular}
\end{center}
\end{figure}

\section{The impact of Let definitions on model checking}
\label{sec:let}
In the process of examining two-variable logics and their extensions,
we have utilized results on logics extended with Let definitions.  We
now return to considering the impact of Let for several temporal
logics.  First, we note that model checking $\TLlet$, $\UTLlet$ and
$\LTLlet$ properties on both non-deterministic (Kripke structures,
HSMs, RSMs) and probabilistic systems (Markov chains, HMCs, RMCs, MDPs
($\forall$)) has similar computational complexity as for the
corresponding logics without let definitions. We get these results by
simply substituting let definitions to obtain formulas in the base
logic, and then analyze the complexity of model-checking the resulting
formulas.

In the case of $\LTLlet$, we have already noted that the size of the
automaton for $\LTL$ is exponential only in the number of subformulas
(see, e.g. Couvreur et. al. \cite{ltltosepaut})---this leads to Lemma
\ref{lemma:ltl_let}.  Similarly, for $\TLlet$ and $\UTLlet$, we get
the corresponding automata of the same asymptotic size as for $\TL$
and $\UTL$ respectively, because their size depends on the number of
subformulas and the operator depth and not directly on the size of the
formula (see translation in Subsection \ref{subsec:utl2ba}).

In the case of $\FOtwoLET$, we can use
Lemma \ref{thm:fo2utl} to translate the formula to $\UTLlet$ and
then use the result above that the sizes of the automata for $\UTL$
and $\UTLlet$ formulas of the same length are asymptotically equal.
Moreover, since $\LTLlet$ and $\FOtwoLET$ have unambiguous B\"uchi
automata of equal asymptotic size as for $\LTL$ and $\FOtwo$
respectively, we can combine them in the same way as in the proof of
Theorem \ref{thm:fo2_ltl:prob} to get the same complexity upper bounds
for model checking $\FOtwoLTLlet$ as for $\FOtwoLTL$.  Thus we have:

\begin{prop}
For $\LTL$, $\FOtwo$, $\UTL$, $\TL$, and $\FOtwoLTL$, all the upper bounds previously
shown hold also in the presence of Let definitions.
\end{prop}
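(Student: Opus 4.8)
The plan is to observe that every upper bound established in the previous sections was obtained by translating the formula into a family of (generalised) B\"uchi automata, or into a deterministic parity automaton, and that the size of each such automaton is governed not by the raw length of the formula but by its number of distinct subformulas together with its operator or quantifier depth. Since let definitions inflate formula size only through sharing, and hence do not increase the number of distinct subformulas beyond a linear factor (Lemma~\ref{lem:temp_closure}), the automata constructions produce objects of the same asymptotic size, retaining the same structural properties (unambiguity, determinism in the limit) on which the bounds depended. Consequently each model-checking bound transfers verbatim, and the proof reduces to a uniform accounting over the cases $\LTL$, $\UTL$, $\TL$, $\FOtwo$, and $\FOtwoLTL$.

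First I would treat the temporal logics. For $\LTL$ the claim is immediate from Lemma~\ref{lemma:ltl_let}: composing the construction of Lemma~\ref{lemma:ltl_sub}, which is exponential only in $|\subf(\varphi)|$, with the linear bound on $|\subf(\varphi)|$ from Lemma~\ref{lem:temp_closure}, yields for any $\LTLlet$ formula an unambiguous B\"uchi automaton of the same asymptotic size $\mathrm{O}(2^{|\varphi|^2})$ as in the let-free case. For $\UTL$ and $\TL$ I would note that the translation of Subsection~\ref{subsec:utl2ba} (Theorem~\ref{BW:detLimit}) depends on $\varphi$ only through $\mathit{cl}(\varphi)$ and the $\CIRC/\CIRCM$-operator depth, both of which are controlled by the number of subformulas and so stay within a linear factor under let. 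Crucially, that construction also produces automata that are deterministic in the limit, so the MDP bounds relying on this property (as in Proposition~\ref{thm:qual_mdp_utl_fo2}) are preserved.

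Next I would handle $\FOtwo$. Applying Lemma~\ref{thm:fo2utl} locally to the body of each let definition---exactly the device used in Proposition~\ref{thm:nofo2tl}---converts an $\FOtwoLET$ formula into a $\UTLlet$ formula, after which the $\UTL$/$\UTLlet$ equivalence of automaton size from the previous step applies. This reproduces the same family of unambiguous, deterministic-in-the-limit automata used for $\FOtwo$ (Theorems~\ref{thm:FO2_aut} and~\ref{FO2automaton}), so all the $\FOtwo$ bounds carry over to $\FOtwoLET$.

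Finally, for $\FOtwoLTLlet$ I would invoke the combination argument of Theorem~\ref{thm:fo2_ltl:prob}: since $\LTLlet$ and $\FOtwoLET$ admit unambiguous B\"uchi automata of the same asymptotic size as $\LTL$ and $\FOtwo$, the product construction on the extended alphabet $\Sigma'$ goes through unchanged, yielding doubly-exponentially many unambiguous automata of exponential size. The hard part will be the bookkeeping: I must check uniformly that each cited upper bound depended only on features of the automata that survive the passage to the let-extended logic---namely their size, unambiguity, and (where used, e.g.\ for MDPs) determinism in the limit---and not on any syntactic property lost in that passage. Once this verification is carried out across the relevant entries of the tables, the proposition follows.
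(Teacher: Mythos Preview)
Your proposal is correct and follows essentially the same approach as the paper: for each logic you observe that the automaton construction depends only on the number of subformulas and the operator/quantifier depth (invoking Lemma~\ref{lem:temp_closure} and the translation in Subsection~\ref{subsec:utl2ba}), then handle $\FOtwoLET$ by translating each let body to $\UTL$ via Lemma~\ref{thm:fo2utl}, and finally combine the $\LTLlet$ and $\FOtwoLET$ cases for $\FOtwoLTLlet$ exactly as in Theorem~\ref{thm:fo2_ltl:prob}. Your additional explicit attention to determinism in the limit (needed for the MDP bounds) is a useful refinement over the paper's terser treatment.
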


Finally, we will show that, in contrast to the cases above, the
complexity of model checking $\FOtwoLTlet$ is exponentially worse than
that of $\FOtwoLT$ on both non-deterministic and probabilistic
systems. Thus this is the only logic we have considered where the
introduction of let definitions makes a difference in the
computational complexity of model checking. The following two theorems
show the lower bounds on the complexity of model checking
$\FOtwoLTlet$, which match exactly the upper bounds for $\FOtwoLET$
(compare with Proposition \ref{npbound}).

\begin{prop}
The satisfiability of a $\FOtwoLTlet$ formula under the unary alphabet
restriction is NEXP-hard.
\label{FOtwoLTlet_satisfiability_hardness}
\end{prop}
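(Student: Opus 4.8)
The plan is to give a polynomial-time reduction from the \emph{exponential tiling problem}---given a tile set with horizontal and vertical compatibility relations, a corner tile, and $n$ in unary, decide whether the $2^n\times 2^n$ grid can be tiled---which is $\nexp$-complete. The standard obstruction to carrying out such a reduction in $\FOtwoLT$ under the unary alphabet restriction is that one cannot address the exponentially many grid cells: plain $\FOtwoLT$ realises only polynomially many types (Proposition~\ref{thm:fewtypesinf}) and forces no long models. Let definitions remove exactly this obstruction, as witnessed by Lemma~\ref{lem:FO2long}, whose nested definitions build a counter out of marker propositions and force all of its values to occur. First I would reuse precisely this counter to attach to each position $x$ a coordinate $(R_1(x),\dots,R_{2n}(x))\in\{0,1\}^{2n}$, read as a pair $(\mathrm{row}(x),\mathrm{col}(x))$ of $n$-bit numbers.

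Concretely, over a unary alphabet containing the marker propositions used by the counter together with one proposition per tile type, I would designate the tile-bearing positions as the \emph{cells}. The formula $\varphi$ is the nested block of let definitions of Lemma~\ref{lem:FO2long}, suitably doubled to $2n$ bits, with body asserting that: (i)~a cell with coordinate $(0,0)$ exists and carries the corner tile; (ii)~for every cell $x$ with $\mathrm{col}(x)\neq 2^n-1$ there is a cell $y$ with $\mathrm{row}(y)=\mathrm{row}(x)$, $\mathrm{col}(y)=\mathrm{col}(x)+1$, and horizontally compatible tile, and symmetrically in the vertical direction; (iii)~$\forall x\,\forall y$, two cells with equal coordinates carry the same tile. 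Each coordinate comparison, each increment relation $\mathrm{col}(y)=\mathrm{col}(x)+1$, and each boundary test $\mathrm{col}(x)=2^n-1$ is a fixed quantifier-free Boolean combination of the unary predicates $R_i(x),R_i(y)$, so every clause is an $\FOtwoLT$ formula using only $x$ and $y$ and no successor; the whole formula has size polynomial in $n$ and the tile set. The key structural point is that $\FOtwo$ relates a cell to its grid neighbours purely by comparing \emph{addresses}, never by proximity in the word, so the interleaving of marker and tile positions is harmless.

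For correctness I would argue both directions. If a tiling exists, take the witness word of Lemma~\ref{lem:FO2long}, so that every coordinate is realised, and place at each cell the proposition for the tile prescribed by the tiling; surjectivity of the counter guarantees the existential clauses in (ii), while functionality of the tiling guarantees (iii). Conversely, any model of $\varphi$ has, by the counter construction, cells realising all $2^{2n}$ coordinates; clause (iii) makes the tile a well-defined function of the coordinate, and clauses (i)--(ii) force this function to be a legal tiling of the $2^n\times 2^n$ grid.

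I expect the main obstacle to be verifying that the coordinate predicates behave as required under the unary alphabet restriction: that the adapted let-counter is genuinely well-defined and realises every value, while the tile choice at each cell remains an \emph{independent} free parameter of the word. This is essentially the content that Lemma~\ref{lem:FO2long} already establishes for the pure counter, so the remaining work is to check that adding the free tile propositions and the adjacency clauses neither destroys the forced surjectivity nor introduces spurious models---that is, that the words satisfying $\varphi$ are exactly the encodings of valid tilings.
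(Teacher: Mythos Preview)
Your approach is correct and follows the same essential line as the paper's own proof: both reductions rely on the let-defined counter of Lemma~\ref{lem:FO2long} to manufacture an exponential address space under the unary alphabet restriction, and then compare the data at two positions purely through their \emph{addresses} (Boolean combinations of the $R_i$) rather than through word adjacency, so that successor is never needed. The paper reduces from acceptance of a nondeterministic EXPTIME Turing machine rather than from exponential tiling, and it packages the data differently: each letter is a pair (counter marker, tape content) drawn from a polynomial-size product alphabet, so every position simultaneously advances the counter and carries payload. Your variant separates marker positions from tile-bearing ``cell'' positions; this is workable but is precisely the spot you flagged as delicate, since you must argue that the $R_i$ values are stable between a marker and the adjacent cell and that every $2n$-bit address is realised \emph{at a cell}. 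Adopting the paper's product-alphabet encoding removes that bookkeeping entirely and makes clauses~(ii) and~(iii) in your body go through without the interleaving caveat. In short: same key idea, a cosmetic choice of NEXP-complete source problem, and a slightly different---and slightly fiddlier---placement of the payload relative to the counter.
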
 

\begin{proof}
The proof is by reduction from the halting problem of a
non-deterministic EXPTIME Turing machine $T$ on a given input $I$.
Let $\Gamma$ and $Q$ be respectively the tape alphabet and set of
control states of $T$.  We consider infinite strings over alphabet
\[ \Sigma := (\{P_0, P_1, \ldots P_{2n-1}\} \times 
\{ \Gamma \cup (\Gamma \times Q)\}) \cup \{ \# \} \, .\]
An infinite word $u \in \Sigma^\omega$ encodes a computation of $T$ as
follows.  Each configuration is encoded in a block of contiguous
letters in $u$, with successive configurations arranged in successive
blocks.  Each such block comprises $2^n$ symbols denoting the contents
of each tape cell in the configuration.  A symbol encoding a tape cell
consists of: a letter from $\Gamma \cup (\Gamma \times Q)$ to denote
the contents of the tape cell and whether the read head of the Turing
Machine is currently on the cell (and if so, the current control state
of $T$), and a predicate $P_i$ denoting the address of the tape cell
and the configuration number. Here we use the power of Let definitions
to transform the sequence of $2n$ predicates to values of $2n$-bit
counter (see the proof of Lemma \ref{lem:FO2long}), which represent
the address of configuration and tape cell.  Having thus encoded a
computation of $T$ in a finite prefix of $u$ we require that the
remaining infinite tail of $u$ be the string $\#^\omega$.

We can use short $\FOtwoLTlet$ formulas to identify the position in
the string representing the previous or next position of the tape cell
in the same configuration.  We can also use such formulas to identify
the same position of the tape cell in the previous or next
configuration. Thus we can easily check if the tape symbols are
consistent with the transition function of~$T$. Finally, we ensure $T$
is in the accepting state in the last configuration.
\end{proof}

\begin{prop}
\label{fo2_let_mc_hard}
The decision problem of whether a Markov chain $\M$ satisfies an
 $\FOtwoLTlet$-formula $\varphi$ with probability greater than $1/2$ is
 PEXP-hard.
\end{prop}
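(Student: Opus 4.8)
The plan is to lift the \#P-hardness reduction of Proposition~\ref{UTLonMChardness} to the exponential setting, replacing its one-gadget-per-variable chain by the exponential computation encoding already built for Proposition~\ref{FOtwoLTlet_satisfiability_hardness}. I would reduce from the PEXP-complete problem of deciding, for a nondeterministic exponential-time Turing machine $T$ and input $I$, whether strictly more than half of the computation paths of $T$ on $I$ accept. By standard padding I may assume that $T$ makes exactly one binary nondeterministic choice at each step and halts after exactly $2^{q(n)}$ steps on every branch, so that there are exactly $2^{2^{q(n)}}$ computation paths, and each such path is encoded by a computation tableau of one fixed length $L$ over the tableau alphabet $\Sigma$ of Proposition~\ref{FOtwoLTlet_satisfiability_hardness}, followed by $\#^\omega$.

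First I would build a polynomial-size Markov chain that, from a generator state, emits at each step a uniformly random letter of $\Sigma$ and with fixed probability $1-\alpha$ jumps to an absorbing state producing $\#^\omega$; thus a finite $\Sigma$-prefix $s$ of length $k$ is produced with probability $\alpha^{k}(1-\alpha)|\Sigma|^{-k}$. Reusing the Let-defined exponential counter of Proposition~\ref{FOtwoLTlet_satisfiability_hardness} (now with $O(q(n))$ bits, labelling each cell by its configuration number and tape position), I would write an $\FOtwoLTlet$ formula $\varphi_{\mathrm{acc}}$ asserting that the emitted prefix is a syntactically correct tableau encoding a computation of $T$ on $I$ whose final configuration is accepting, together with a formula $\varphi_{\neg\mathrm{rej}}$ asserting the \emph{negation} of ``correct tableau with rejecting final configuration''. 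Both are polynomial in $|T|+|I|$ and use only the local-consistency and counting checks of Proposition~\ref{FOtwoLTlet_satisfiability_hardness}. Since every valid tableau has the same length $L$, each is emitted with one common probability $p_0=\alpha^{L}(1-\alpha)|\Sigma|^{-L}$, and valid tableaux are in bijection with the $2^{2^{q(n)}}$ choice sequences; hence $P_\M(L(\varphi_{\mathrm{acc}}))=\#\mathrm{acc}\cdot p_0$, where $\#\mathrm{acc}$ is the number of accepting paths, and similarly the generator produces a valid rejecting tableau with probability $\#\mathrm{rej}\cdot p_0$.

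The crux is a calibration turning this (doubly-exponentially small) quantity into a genuine $1/2$-threshold test for majority. I would prepend one fair coin to $\M$, recorded by fresh propositions $c_1,c_2$ marking two branches of probability $\tfrac12$ each: on branch~$c_1$ the chain runs the generator and the overall formula $\varphi$ requires $\varphi_{\mathrm{acc}}$, while on branch~$c_2$ it runs the generator and $\varphi$ requires $\varphi_{\neg\mathrm{rej}}$, i.e.\ $\varphi=(c_1\wedge\varphi_{\mathrm{acc}})\vee(c_2\wedge\varphi_{\neg\mathrm{rej}})$. Writing $\#\mathrm{rej}=2^{2^{q(n)}}-\#\mathrm{acc}$, the conditional acceptance probabilities $\#\mathrm{acc}\,p_0$ and $1-\#\mathrm{rej}\,p_0$ combine to
\[ P_\M(L(\varphi)) \;=\; \tfrac12\,\#\mathrm{acc}\,p_0 \;+\; \tfrac12\bigl(1-\#\mathrm{rej}\,p_0\bigr) \;=\; \tfrac12 \;+\; \tfrac12\,p_0\bigl(\#\mathrm{acc}-\#\mathrm{rej}\bigr), \]
so that $P_\M(L(\varphi))>\tfrac12$ if and only if $\#\mathrm{acc}>\#\mathrm{rej}$, i.e.\ exactly when a strict majority of the paths of $T$ accept. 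I expect the main obstacle to be precisely this calibration: one must verify that the chain and both formulas stay polynomial, that every valid tableau (accepting or rejecting) really has a single common length $L$ so that $p_0$ is uniform and the cancellation above is exact, and that the symmetric pair $\varphi_{\mathrm{acc}},\varphi_{\neg\mathrm{rej}}$ can both be expressed in $\FOtwoLTlet$ using only the addressing machinery of Proposition~\ref{FOtwoLTlet_satisfiability_hardness}. Combined with the matching PEXP upper bound inherited from the $\FOtwoLTL$ bound of Proposition~\ref{thm:mcfo2mcfo2ltl} (preserved under Let definitions), this yields PEXP-completeness, paralleling the $\FOtwo$ case of Proposition~\ref{MCFO2hard}.
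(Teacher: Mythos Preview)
Your proposal is correct and follows essentially the same approach as the paper: both reduce from the PEXP-complete majority-of-paths problem for a NEXPTIME machine, generate a uniformly random candidate tableau via a polynomial-size chain, and use the same two-branch calibration trick---one branch testing ``accepting tableau'' and the other testing ``not a rejecting tableau''---so that the non-tableau mass cancels and the threshold $1/2$ is crossed exactly when $\#\mathrm{acc}>\#\mathrm{rej}$. The only differences are cosmetic (your $c_1/c_2$ coin versus the paper's $P_{\mathit{left}}/P_{\mathit{right}}$ copies, and your $\varphi_{\neg\mathrm{rej}}$ is the paper's $\varphi_N\vee\varphi_A$).
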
 

\begin{proof}
The proof is by reduction from the problem of whether a strict
majority of computation paths of a given non-deterministic EXPTIME
Turing machine $T$ on a given input $I$ are accepting.  Without loss
of generality we can assume that any non-halting configuration of $T$
has exactly two successors and that all computations of $T$ on input
$I$ make exactly $2^n$ steps, where $n$ is the length of $I$.

The basic idea, following the proof of NEXPTIME-hardness of
satisfiability for $\FOtwoLTlet$, is to encode computations of $T$ as
strings.  We can define an $\FOtwoLTlet$ formula that is satisfied by
a word $u \in \Sigma^\omega$ precisely when $u$ encodes a legitimate
computation of $T$ on input $I$ according to the encoding scheme used
in Proposition ~\ref{FOtwoLTlet_satisfiability_hardness} Indeed, the
definition is just as described in the proof of NEXPTIME-hardness for
$\FOtwoLTlet$ satisfiability in
Proposition~\ref{FOtwoLTlet_satisfiability_hardness}.

The Markov Chain $\M$ in our reduction is constructed from two copies
of a component $\M'$. The definition of $\M'$ is very simple; it
consists of a directed clique augmented with a single sink state.  In
detail, there is a state $s_\sigma$ for each letter $\sigma \in
\Sigma$; $s_{\#}$ is a sink that makes a transition to itself with
probability $1$; the next-state distribution from $s_\sigma$, $\sigma
\neq \#$, is given by a uniform distribution over all states; finally,
the label of state $s_\sigma$ is $\sigma$.
 
 The Markov chain $\M$ consists of two disjoint copies
 $\M_{\mathit{left}}$ and $\M_{\mathit{right}}$ of $\M'$ that are
 identical except that their states are distinguished by propositions
 $P_{\mathit{left}}$ and $P_{\mathit{right}}$.  The initial state of
 $\M$ is a uniform distribution over all states.

We can partition $\Sigma^\omega$ into three sets $N$, $A$ and $R$,
respectively comprising those strings that don't encode computations
of $T$ on input $I$, those strings that encode accepting computations,
and those strings that encode rejecting computations.  Moreover each
of these sets is definable in $\FOtwoLTlet$ by formulas $\varphi_N$,
$\varphi_A$ and $\varphi_R$ respectively.  

We define the formula $\varphi$ by 

\[ \varphi := ((\forall x\, P_{\mathit{left}}(x)) \wedge 
                      (\varphi_N \vee \varphi_A)) \vee
              ((\forall x\, P_{\mathit{right}}(x)) \wedge \varphi_A) \, .\]
              
To complete the reduction, we claim that $P_\M(L(\varphi)) > 1/2$ if
and only if a strict majority of the computations of Turing Machine
$T$ on input $I$ are accepting.  To see this, observe that if $\M$
produces a trajectory from $N \subseteq \Sigma^\omega$ then that
trajectory is equally likely to have come from $\M_{\mathit{left}}$ or
$\M_{\mathit{right}}$. Using this we can
see that $P_\M(L(\varphi))$ is  $(P_\M(A)+P_\M(N))/2 + P_\M(A)/2$.
Thus $P_\M(L(\varphi)) > 1/2$ iff $2P_\M(A)>1-P_\M(N)$. From this we see that
$P_\M(L(\varphi)) > 1/2$ if and only if $|A| >
|R|$, as required.
\end{proof}

The table below summarises the results for the selected logics. 
An asterisk indicates bounds that are not known to be tight. 
\begin{figure}[h!]
\begin{center}
\begin{tabular}{|l|c|c|c|c|}
\hline
  &  $\TLlet$       & $\FOtwoLTlet$ & $\FOtwoLET$ & $\FOtwoLTLlet$\\
\hline
Kripke structure & NP& NEXP & NEXP & NEXP\\
HSM & NP& NEXP& NEXP & NEXP\\
RSM & NP& NEXP& NEXP & NEXP\\
\hline
Markov chain  &   \#P &   PEXP                   &PEXP & PEXP\\
HMC &  PSPACE${}^*$  &   PEXP &  PEXP  & PEXP\\
RMC &   PSPACE${}^*$ & EXPSPACE${}^*$ &  EXPSPACE${}^*$ & EXPSPACE${}^*$\\
MDP $(\forall)$    & co-NP &  co-NEXP &co-NEXP & 2EXP\\
\hline
\end{tabular}
\end{center}
\end{figure}
\section{Conclusions and ongoing work}
In this paper we have compared the complexity of verifying properties
in the two best-known elementary fragments of monadic first-order
logic on words: LTL and $\FOtwo$.  
 We provided several different
logic-to-automaton constructions that are useful for verification
of $\FOtwo$. One translations allows us to understand
the complexity of verifying full $\FOtwo$ via analysis of
unary temporal logic; a second is useful for the sublanguage
of $\FOtwo$ with only the linear-ordering; the third is
useful for getting deterministic automata, which is needed
for obtaining bounds for certain game-related problems.
We have shown that these translations put
together allow us to understand
the  complexity
 of verification and synthesis problems for both
non-deterministic and probabilistic models transition systems,
including those arising from hierarchical and recursive state
machines.

While LTL is more expressive than $\FOtwo$, $\FOtwo$ can be
exponentially more succinct.  We have shown that the effect of these
opposing factors on the complexity of model checking depends on the
model, e.g., $\FOtwo$ has higher complexity on Markov chains while
$\LTL$ has higher complexity on MDPs.  By contrast, in the
stutter-free case the extra succinctness of $\FOtwoLT$ comes for
free---all verification problems have the same complexity as for
$\TL$.  For the most structured models e.g., two-player games and
quantitative verification of MDPs, the complexity of the model
dominates any difference in the logics.

We are currently examining the succinctness of Let definitions when
added to each of our logics. A number of succinctness
results can be found in
this work, but 
we have left open the succinctness of Let in certain
situations, e.g., for the logic $\FOtwoLTL$.
Finally, we are investigating the extension of the
techniques introduced here from
words to  trees.

{\bf Acknowledgments:} M.~Benedikt is supported in part by EPSRC
grants EP/G004021/1 and EP/H017690/1.  Worrell is supported in part by
EPSRC grant EP/G069727/1.

\bigskip
\bibliographystyle{alpha}
\bibliography{litb}

\newcommand{\etalchar}[1]{$^{#1}$}
\newcommand{\noopsort}[1]{} \newcommand{\printfirst}[2]{#1}
  \newcommand{\singleletter}[1]{#1} \newcommand{\switchargs}[2]{#2#1}
\begin{thebibliography}{ABE{\etalchar{+}}05}

\bibitem[ABE{\etalchar{+}}05]{rsm}
R.~Alur, M.~Benedikt, K.~Etessami, P.~Godefroid, T.~Reps, and M.~Yannakakis.
\newblock Analysis of recursive state machines.
\newblock {\em ACM Trans. Program. Lang. Syst.}, 27:786--818, July 2005.

\bibitem[ATM03]{gamesBoxesDiamonds}
R.~Alur, S.~La Torre, and P.~Madhusudan.
\newblock Playing games with boxes and diamonds.
\newblock In {\em CONCUR}, pages 127--141, 2003.

\bibitem[BEM97]{PDA-Reachability}
A.~Bouajjani, J.~Esparza, and O.~Maler.
\newblock Reachability analysis of pushdown automata: Application to
  model-checking.
\newblock In {\em CONCUR}, pages 135--150, 1997.

\bibitem[BFT98]{pexp}
H.~Buhrman, L.~Fortnow, and T.~Thierauf.
\newblock Nonrelativizing separations.
\newblock In {\em Int'l Conference on Computational Complexity}, pages 8--12,
  1998.

\bibitem[BLW11]{BLW}
M.~Benedikt, R.~Lenhardt, and J.~Worrell.
\newblock Two variable vs. linear temporal logic in model checking and games.
\newblock In {\em CONCUR}, pages 497--511, 2011.

\bibitem[BLW12]{BLWqest}
M.~Benedikt, R.~Lenhardt, and J.~Worrell.
\newblock Verification of two variable logic revisited.
\newblock In {\em QEST}, 2012.

\bibitem[CH12]{surveyStochGames}
K.~Chatterjee and T.~A. Henzinger.
\newblock A survey of stochastic $\omega$-regular games.
\newblock {\em J. Comput. Syst. Sci.}, pages 394--413, 2012.

\bibitem[CJH03]{ChatterjeeSimpleStoch}
K.~Chatterjee, M.~Jurdzinski, and T.~A. Henzinger.
\newblock Simple stochastic parity games.
\newblock In Matthias Baaz and Johann~A. Makowsky, editors, {\em CSL}, volume
  2803 of {\em Lecture Notes in Computer Science}, pages 100--113. Springer,
  2003.

\bibitem[CSS03]{ltltosepaut}
J.-M. Couvreur, N.~Saheb, and G.~Sutre.
\newblock An optimal automata approach to {LTL} model checking of probabilistic
  systems.
\newblock In {\em LPAR}, pages 361--375. 2003.

\bibitem[CY95]{CY95}
C.~Courcoubetis and M.~Yannakakis.
\newblock The complexity of probabilistic verification.
\newblock {\em J. ACM}, 42(4):857--907, 1995.

\bibitem[EVW02]{fo2_utl}
K.~Etessami, M.~Y. Vardi, and T.~Wilke.
\newblock First-order logic with two variables and unary temporal logic.
\newblock {\em Inf. and Comp.}, 179(2):279--295, 2002.

\bibitem[EY05]{rmc}
K.~Etessami and M.~Yannakakis.
\newblock {Recursive Markov Chains}, stochastic grammars, and monotone systems
  of nonlinear equations.
\newblock In {\em STACS}, pages 340--352, 2005.

\bibitem[Jur00]{J00}
M.~Jurdzinski.
\newblock Small progress measures for solving parity games.
\newblock In {\em STACS}, pages 290--301, 2000.

\bibitem[Kam68]{phd-kamp}
H.~W. Kamp.
\newblock {\em Tense Logic and the Theory of Linear Order}.
\newblock PhD thesis, UCLA, 1968.

\bibitem[LTP07]{LTLonRSM}
S.~La~Torre and G.~Parlato.
\newblock On the complexity of {LTL} model-checking of recursive state
  machines.
\newblock In {\em ICALP}, pages 937--948, 2007.

\bibitem[PR89]{PnueliR89}
A.~Pnueli and R.~Rosner.
\newblock On the synthesis of a reactive module.
\newblock In {\em POPL}, pages 179--190, 1989.

\bibitem[SC82]{SC85}
A.~P. Sistla and E.~M. Clarke.
\newblock The complexity of propositional linear temporal logics.
\newblock In {\em STOC}, pages 159--168, 1982.

\bibitem[Sto74]{phd_stockmeyer}
L.~J. Stockmeyer.
\newblock {\em The Complexity of Decision Problems in Automata Theory and
  Logic}.
\newblock PhD thesis, MIT, Cambridge, Massasuchets, USA, 1974.

\bibitem[VW86]{lics1986-VW}
M.~Y. Vardi and P.~Wolper.
\newblock An automata-theoretic approach to automatic program verification.
\newblock In {\em LICS}, pages 332--344, June 1986.

\bibitem[Wei11]{WeisPhd}
P.~Weis.
\newblock {\em Expressiveness and Succinctness of First-Order Logic on Finite
  Words}.
\newblock PhD thesis, University of Massachusetts, 2011.

\bibitem[WI09]{wi}
P.~Weis and N.~Immerman.
\newblock Structure theorem and strict alternation hierarchy for {FO${}^2$} on
  words.
\newblock {\em LMCS}, 5(3), 2009.

\bibitem[Wol01]{wolper}
P.~Wolper.
\newblock Constructing automata from temporal logic formulas: A tutorial.
\newblock In {\em European Educational Forum: School on Formal Methods and
  Performance Analysis}, pages 261--277, 2001.

\bibitem[Yan10]{mihalisprivate}
M.~Yannakakis.
\newblock Personal communication, 2010.

\bibitem[YE05]{rmc_mc}
M.~Yannakakis and K.~Etessami.
\newblock Checking {LTL} properties of {Recursive Markov Chains}.
\newblock In {\em QEST}, pages 155--165, 2005.

\end{thebibliography}
\end{document}